\newif\ifdraft
\newif\ifrealdraft
\newif\iffullversion
\newif\iftikzavailable
\newcommand{\ffootnote}[1]{\footnote{#1}}
\newcommand{\ffootnote}[1]{}
\author{Michael Backes\\CISPA, Saarland University \& MPI-SWS\\Saarland Informatics Campus\\~ \and Robert K\"unnemann\\CISPA, Saarland University\\Saarland Informatics Campus\\~ \and Esfandiar Mohammadi\\ETH Zurich
}
\author{%
    \alignauthor
    Michael Backes
    \affaddr{CISPA, Saarland University \& MPI-SWS}\\
    \affaddr{Saarland Informatics Campus}\\
    \email{backes@cs.uni-saarland.de}
    \alignauthor
    Robert K\"unnemann
    \affaddr{CISPA, Saarland University}\\
    \affaddr{Saarland Informatics Campus}\\
    \email{robert.kuennemann\\@uni-saarland.de}
    \alignauthor
    Esfandiar Mohammadi
    \affaddr{ETH Zurich}\\
    \email{mohammadi@inf.ethz.ch}
}
\tikzset{state/.style={draw,rounded corners}}
\pgfplotsset{ignore legend/.style={every axis legend/.code={\renewcommand\addlegendentry[2][]{}}}}
\newtheorem{corollary}{Corollary}
\newtheorem{lemma}{Lemma}
\newtheorem{theorem}{Theorem}
\newtheorem{definition}{Definition}
\definecolor{darkgreen}{rgb}{0,.5,0}
\definecolor{darkpurple}{rgb}{.3,0,.6}
	\newcommand{\TODOE}[1]{\ifdraft[{\small\color{darkgreen} \textsc{TODO:} ##1}]\fi}
	\newcommand{\TODOR}[1]{\ifdraft[{\small\color{blue} \textsc{TODO:} ##1}]\fi}
	\newcommand{\REMOVEE}[1]{}
	\newcommand{\REMOVEEfor}[3]{}
	\newcommand{\NOTEE}[1]{\ifdraft[{\small\color{darkgreen} \textsc{NOTE:} ##1}]\fi}
\newcommand{\mathcmd}[1]{{\normalfont\ensuremath{#1}}\xspace}
\newcommand{\mathstring}[1]{\mathcmd{\text{\texttt{\smaller{}#1}}}}
\newcommand{\mathname}[1]{\mathcmd{\text{\textrm{#1}}}}
\newcommand{\mathfun}[1]{\mathcmd{\mathit{#1}}}
\newcommand{\mathset}[1]{\mathname{#1}}
\newcommand{\mathsymb}[1]{\mathfun{#1}}
\newcommand{\progcmdraw}[1]{\mathcmd{\mathsf{#1}}}
\newcommand{\progcmd}[1]{\mathcmd{\progcmdraw{#1}~}}
\newcommand{\textop}[1]{\relax\ifmmode\mathop{\text{#1}}\else\text{#1}\fi}
\newcommand{\ie}{i.e.}
\newcommand{\eg}{e.g.}
\newcommand{\etAl}{et al.\xspace}
\newcommand{\invokestatic}{\progcmd{invoke\text{-}static}}
\newcommand{\invokedirectr}{\progcmd{invoke\text{-}direct\text{-}range}}
\newcommand{\Mal}{\ensuremath{\mathit{Mal}}\xspace}
\newcommand{\attacker}{\mathcmd{\mathcal{A}}}
\newcommand{\adv}{\attacker}
\newcommand{\advE}{\attacker}
\newcommand{\advrespDV}{\mathcal{ADVR}}
\newcommand{\advss}{\mathcal{Q}}
\newcommand{\advM}{\mathcal{ADV}}
\newcommand{\set}[1]{\mathcmd{\left\{ #1 \right\}}}
\newcommand{\subst}[1]{\set{#1}}
\newcommand\bit{\lbrace0,1\rbrace}
\newcommand\bits[1]{\bit^{#1}}
\newcommand{\interact}[2]{\ensuremath{\bm{\langle} #1 \| #2 \bm{\rangle}}}
\newcommand{\false}{\mathsf{false}}
\newcommand{\tic}[1][]{\mathrel{\approx^{#1}_\textit{tic}}}
\DeclareRobustCommand{\defeq}{\mathrel{\rlap{%
  \raisebox{0.3ex}{$\m@th\cdot$}}%
  \raisebox{-0.3ex}{$\m@th\cdot$}}%
  =}
\DeclareRobustCommand{\eqdef}{=\mathrel{\rlap{%
  \raisebox{0.3ex}{$\m@th\cdot$}}%
  \raisebox{-0.3ex}{$\m@th\cdot$}}%
  }
\mathchardef\mhyphen="2D
\newcommand{\n}{{\nu}}
\newcommand{\termsof}{\mathit{Out}}
\newcommand{\inputof}{\mathit{In}}
\newcommand{\outmetaof}{\mathit{Out\mhyphen Meta}}
\newcommand{\testsof}{\mathit{K}}
\newcommand{\secpar}{\mathcmd{\eta}}
\newcommand{\voidDV}{\mathcmd{\mathrm{void}}}
\newcommand{\resultUpperDV}{\mathcmd{\mathfun{res}_{\mathit{up}}}}
\newcommand{\resultLowerDV}{\mathcmd{\mathfun{res}_{\mathit{lo}}}}
\newcommand{\lowerDV}{\mathfun{lo}}
\newcommand{\upperDV}{\mathfun{up}}
\newcommand{\dom}{\mathfun{dom}}
\newcommand{\class}{\mathfun{cls}}
\newcommand{\restrict}[2]{\mathcmd{\left.{#1}_{{}}\right|_{{#2}}}}
\newcommand{\embedding}{e}
\newcommand{\AlgoNameString}{E}
\newcommand{\AlgoName}[1][T_H]{\mathcmd{\AlgoNameString_{{#1}(s)}}}
\newcommand{\postext}{pos}
\newcommand{\pos}{\mathcmd{\mathtt{\postext}}}
\newcommand{\reduction}{\rightarrow}
\newcommand{\predRel}[2]{\mathrel{\mathcmd{\xrightarrow{#2}_{#1}}}}
\newcommand{\redRel}[1]{\mathcmd{\predRel{#1}{~}}}
\newcommand{\redRelSt}{\mathrel{\redRel{~}}}
\newcommand{\symtrans}[2][\proto]{\ensuremath{\xrightarrow{#2}_{Sym}}\xspace}
\newcommand{\cospsym}{\rightsquigarrow_\mathit{CoSPs}}
\newcommand{\cospcomp}{\rightsquigarrow_\mathit{CoSPc}}
\newcommand{\advs}{\mathit{as}}
\newcommand{\advsd}{\mathit{as}_\mathit{dmy}}
\newcommand{\stateDV}[1][m\cdot ml,h,pp\cdot ppl,r\cdot rl,\advs]{\left\langle #1 \right\rangle}
\newcommand{\stateDVSt}[1]{\stateDV[m{#1}\cdot ml,h{#1},pp{#1}\cdot ppl,r{#1}\cdot rl,\advs{#1}]}
\newcommand{\stateDVs}{\stateDV[{ml,h,ppl,rl}]}
\newcommand{\stateDVsN}[1]{\stateDV[{ml{#1},h{#1},ppl{#1},rl{#1}}]}
\newcommand{\stateDVsSt}[1]{\stateDV[m{#1}\cdot ml{#1},h{#1},pp{#1}\cdot ppl{#1},r{#1}\cdot rl{#1}]}
\newcommand{\wait}{\mathtt{wait}}
\newcommand{\ADL}{\mathcmd{\text{ADL}}}
\newcommand{\ADLo}{\mathcmd{\text{ADLo}}^\mathit{SS}}
\newcommand{\ADLs}{\mathcmd{\text{ADLs}}}
\newcommand{\ADLss}{\mathcmd{\text{ADL}}^\mathit{SS}}
\newcommand{\ADLsss}{\mathcmd{\text{ADLs}}^\mathit{SS}}
\newcommand{\CoSP}{\mathcmd{\text{CoSP}}}
\newcommand{\executionCategory}{\mathcmd{\text{SS}}}
\newcommand{\equivalent}[1]{\mathcmd{\approx^{#1}_s}}
\newcommand{\notEquivalent}[1]{\mathcmd{{\not\approx}^{#1}_s}}
\newcommand{\indistinguishable}[1]{\mathcmd{\approx^{#1}_c}}
\newcommand{\notIndistinguishable}[1]{\mathcmd{{\not\approx}^{#1}_c}}
\newcommand{\CoSPEquivalent}{\equivalent{\CoSP}}
\newcommand{\notCoSPEquivalent}{\notEquivalent{\CoSP}}
\newcommand{\CoSPIndistinguishable}{\indistinguishable{\CoSP}}
\newcommand{\notCoSPIndistinguishable}{\notIndistinguishable{\CoSP}}
\newcommand{\execEquivalent}{\equivalent{\executionCategory}}
\newcommand{\notExecEquivalent}{\notEquivalent{\executionCategory}}
\newcommand{\execIndistinguishable}{\indistinguishable{\executionCategory}}
\newcommand{\notExecIndistinguishable}{\notIndistinguishable{\executionCategory}}
\newcommand{\ADLEquivalent}{\equivalent{\ADL}}
\newcommand{\notADLEquivalent}{\notEquivalent{\ADL}}
\newcommand{\ADLIndistinguishable}{\indistinguishable{\ADL}}
\newcommand{\SSIndistinguishable}[1]{\indistinguishable{#1}}
\newcommand{\OutN}{\ensuremath{\mathit{OutN}}}
\newcommand{\InN}{\ensuremath{\mathit{InN}}}
\newcommand{\CompN}{\ensuremath{\mathit{CompN}}}
\newcommand{\implementation}{\mathname{Impl}}
\newcommand{\proto}{\mathcmd{\Pi}} 
\newcommand{\protosss}[3][\secpar]{\mathcmd{\proto^{#1}#2\hspace{-0.4ex}\stateDV[{#3}]}}
\newcommand{\protoss}[2][\secpar]{\mathcmd{\protosss[#1]{_#2}{s_{#2}}}}
\newcommand{\protos}[1]{\mathcmd{\protoss[~]{#1}}}
\newcommand{\initState}[1][]{\mathcmd{s}#1}
\newcommand{\initH}[1]{\mathcmd{\initState[_{H,#1}^0]}}
\newcommand{\initL}{\mathcmd{\initState[_L]}}
\newcommand{\stringInput}{\mathstring{in}}
\newcommand{\stringOutput}{\mathstring{out}}
\newcommand{\stringControl}{\mathstring{control}}
\newcommand{\transitionAHIdentifierRaw}{in}
\newcommand{\transitionHAIdentifierRaw}{out}
\newcommand{\transitionHLIdentifierRaw}{LC}
\newcommand{\transitionLHIdentifierRaw}{LR}
\newcommand{\transitionAHIdentifier}{\mathstring{\transitionAHIdentifierRaw}}
\newcommand{\transitionHAIdentifier}{\mathstring{\transitionHAIdentifierRaw}}
\newcommand{\transitionHLIdentifier}{\mathstring{\transitionHLIdentifierRaw}}
\newcommand{\transitionLHIdentifier}{\mathstring{\transitionLHIdentifierRaw}}
\newcommand{\tiAH}{\transitionAHIdentifier}
\newcommand{\tiHA}{\transitionHAIdentifier}
\newcommand{\tiHL}{\transitionHLIdentifier}
\newcommand{\tiLH}{\transitionLHIdentifier}
\newcommand{\tifinal}{\mathstring{final}}
\newcommand{\tifinalCall}{\mathstring{finalCall}}
\newcommand{\setN}{\mathbb{N}}
\newcommand{\states}{\mathset{S}}
\newcommand{\initstates}{\mathset{S}^0}
\newcommand{\actions}{\mathset{A}}
\newcommand{\Const}{\mathcmd{\mathbf{C}}}
\newcommand{\C}{\Const}
\newcommand{\N}{\mathbf{N}}
\newcommand{\NE}{\mathbf{N_E}}
\newcommand{\NP}{\mathbf{N_P}}
\newcommand{\Dest}{\mathcmd{\mathbf{D}}}
\newcommand{\D}{\Dest}
\newcommand{\Terms}{\mathcmd{\mathbf{T}}}
\newcommand{\T}{\Terms}
\newcommand{\Model}{\mathcmd{\mathbf{M}}}
\newcommand{\M}{\Model}
\newcommand{\Nonce}{\mathcmd{\mathbf{N}}}
\newcommand{\nonce}{\Nonce}
\newcommand{\setSymbolicOperations}{\mathset{SO}}
\newcommand{\SO}{\setSymbolicOperations}
\newcommand{\Views}{\mathset{SViews}}
\newcommand{\traces}{\ensuremath{\mathit{traces}}\xspace}
\newcommand{\Traces}{\ensuremath{\mathit{Traces}}\xspace}
\newcommand{\supp}{\ensuremath{\mathit{supp}}\xspace}
\newcommand\protoClass{\mathrm{P}}
\newcommand{\Eventin}{\ensuremath{\mathit{Event}_\mathit{in}}\xspace}
\newcommand{\Eventout}{\ensuremath{\mathit{Event}_\mathit{out}}\xspace}
\newcommand{\Eventctl}{\ensuremath{\mathit{Event}_\mathit{ctl}}\xspace}
\newcommand{\attstrat}{\ensuremath{\mathit{AttS}}\xspace}
\newcommand{\succs}{\ensuremath{\mathit{succ}}\xspace}
\newcommand{\implementations}{\mathname{\bf Impl}}
\newcommand{\fieldsDV}{\mathcal{F}}
\newcommand{\locationsDV}{\mathcal{L}}
\newcommand{\valuesDV}{\mathcal{V}}
\newcommand{\objectsDV}{\mathcal{O}}
\newcommand{\heapsDV}{\mathcal{H}}
\newcommand{\arraysDV}{\mathcal{AR}}
\newcommand{\methodsDV}{\mathcal{M}}
\newcommand{\midDV}{\ensuremath{\mathcal{MID}}\xspace}
\newcommand{\calC}{\ensuremath{\mathcal{C}}\xspace}
\newcommand{\calD}{\ensuremath{\mathcal{D}}\xspace}
\newcommand{\calN}{\ensuremath{\mathcal{N}}\xspace}
\newcommand\calP{\ensuremath{\mathcal{P}}\xspace}
\newcommand\calR{\ensuremath{\mathcal{R}}\xspace}
\newcommand\calS{\ensuremath{\mathcal{S}}\xspace}
\newcommand\calV{\ensuremath{\mathcal{V}}\xspace}
\newcommand{\rReturnVoidF}{\textsc{rReturnVF}\xspace}
\newcommand{\rReturnVoid}{\textsc{rReturnV}\xspace}
\newcommand{\rReturnF}{\textsc{rReturnF}\xspace}
\newcommand{\rReturn}{\textsc{rReturn}\xspace}
\newcommand{\rIfTestT}{\textsc{IfTestT}\xspace}
\newcommand{\rIfTestF}{\textsc{IfFestF}\xspace}
\newcommand{\rIDR}{\textsc{rIDR}\xspace}
\newcommand{\Prob}{\textsc{Prob}\xspace}
\newcommand{\AdvInv}{\textsc{AdvInv}\xspace}
\newcommand{\AdvRet}{\textsc{AdvRet}\xspace}
\newcommand{\AdvFin}{\textsc{AdvFin}\xspace}
\newcommand{\LibCall}{\textsc{LibCall}\xspace}
\newcommand{\LibResponse}{\textsc{LibResponse}\xspace}
\newcommand{\LeakMsg}{\textsc{LeakMsg}\xspace}
\newcommand{\ReceiveMsg}{\textsc{ReceiveMsg}\xspace}
\newcommand{\FinalCall}{\textsc{FinalCall}\xspace}
\newcommand{\LLibCall}{\textsc{LLibCall}\xspace}
\newcommand{\LLibRetVoid}{\textsc{LLibRetVoid}\xspace}
\newcommand{\LLibRet}{\textsc{LLibRet}\xspace}
\newcommand{\ALeakMsg}{\textsc{ALeakMsg}\xspace}
\newcommand{\AReceiveMsg}{\textsc{AReceiveMsg}\xspace}
\newcommand{\AFinal}{\textsc{AFinal}\xspace}
\newcommand{\tl}{\mathfun{tl}}
\newcommand{\eval}{\mathfun{eval}}
\newcommand{\lookupVirtualDV}{\mathfun{lookup\text{-}virtual}}
\newcommand{\lookupStaticDV}{\mathfun{lookup\text{-}static}}
\newcommand{\lookupDirectDV}{\mathfun{lookup\text{-}direct}}
\newcommand{\lookupSuperDV}{\mathfun{lookup\text{-}super}}
\newcommand{\defaultRegistersDV}{\mathfun{defReg}}
\newcommand{\transitionAH}{\mathset{in}}
\newcommand{\transitionHA}{\mathset{out}}
\newcommand{\transitionHL}{\mathset{libCall}}
\newcommand{\transitionLH}{\mathset{libResp}}
\newcommand{\tAH}{\transitionAH}
\newcommand{\tHA}{\transitionHA}
\newcommand{\tHL}{\transitionHL}
\newcommand{\tLH}{\transitionLH}
\newcommand{\tfinal}{\mathset{final}}
\newcommand{\libSpec}{\ensuremath{\mathfun{libSpec}}\xspace}
\newcommand{\sliceof}[2][h]{\ensuremath{{#1}|_{#2}}\xspace}
\newcommand{\lreachabe}[1][h]{\ensuremath{\mathfun{lreachable}_{#1}}\xspace}
\newcommand{\equals}{\mathsymb{equals}}
\newcommand{\iszero}{\mathsymb{iszero}}
\newcommand{\payloadString}{\mathsymb{string}}
\newcommand{\payloadUnstring}{\mathsymb{unstring}}
\newcommand{\payloadEmpty}{\mathsymb{emp}}
\newcommand{\pair}{\mathsymb{pair}}
\newcommand{\fst}{\mathsymb{fst}}
\newcommand{\snd}{\mathsymb{snd}}
\newcommand{\enc}{\mathsymb{enc}}
\newcommand{\Exec}{\mathname{Exec}}
\newcommand{\pparagraphraw}[1]{\par \smallskip\noindent\textbf{#1}}
\newcommand{\pparagraph}[1]{\pparagraphraw{#1.}\;}
\newcommand{\pparagraphorsubsection}[1]{\subsection{#1}}
\newcommand{\fullversionfootnote}[1]{\footnote{#1}}
\newcommand{\pparagraphorsubsection}[1]{\pparagraph{#1}}
\newcommand{\fullversionfootnote}[1]{}
\newenvironment{framedFigure}[2]
{
\begin{figure*}[t]
\def\tmpCaption{#1}
\def\tmpLabel{#2}
\begin{framed}
}
{
\vspace{-0.5em}
\end{framed}
\vspace{-1em}
\caption{\tmpCaption}\label{\tmpLabel}
\end{figure*}
}
\newcommand{\codesize}{\footnotesize}
\newenvironment{framedFigureSingle}[2]
{
\begin{figure}[t]
\def\tmpCaption{#1}
\def\tmpLabel{#2}
\begin{framed}
}
{
\vspace{-0.5em}
\end{framed}
\vspace{-1em}
\caption{\tmpCaption}\label{\tmpLabel}
\end{figure}
}
\newenvironment{imitate}[1]
{

\medskip
\noindent\textbf{#1.}
\bgroup
\it
}
{
\egroup
}
\newenvironment{Claim}[1]
{
\begin{imitate}{Claim~{#1}}
}
{
\end{imitate}
}
\newenvironment{claimproof}[1]
{

\medskip
\noindent\textit{Proof of Claim~#1.}
\bgroup
}{
\hfill $\diamond$
\egroup

\par
}
\newcounter{myExampleCounter}
\newenvironment{example}[1][]
{
  \refstepcounter{myExampleCounter}\smallskip\noindent \emph{Example \themyExampleCounter\ifthenelse{\equal{#1}{}}{}{: #1}.}
}
{\hfill$\diamond$
}
\newenvironment{inferenceRules}
{
\codesize
\begin{math}
\begin{array}{c}
}
{
\end{array}
\end{math}
}
\begin{document}


\title{Computational Soundness for Dalvik Bytecode}

\date{}

\maketitle

\begin{abstract}
Automatically analyzing information flow within Android applications that rely on cryptographic operations with their computational security guarantees
 imposes formidable challenges that existing approaches for understanding an app's
behavior  struggle to meet. These approaches do not distinguish cryptographic and non-cryptographic operations, and hence
do not account for cryptographic protections:
$f(m)$ is considered sensitive for a sensitive message $m$ irrespective of potential secrecy properties offered by a cryptographic operation $f$.
These approaches consequently provide a safe approximation of the app's behavior, but they  mistakenly classify a large fraction of apps as potentially insecure and consequently yield overly pessimistic results.

In this paper, we show how cryptographic operations can be faithfully included into existing approaches for automated app analysis.
To this end, we first show how cryptographic operations can be expressed as symbolic abstractions within the comprehensive Dalvik bytecode language. These abstractions
are accessible to automated analysis and can be conveniently added to existing app analysis tools using minor changes in their
semantics. 
Second, we show that our abstractions are faithful by providing the first computational soundness result for Dalvik bytecode, i.e., the absence of attacks against our symbolically abstracted program entails the absence of any attacks against a suitable cryptographic program
realization. We cast our computational soundness result in the CoSP framework, which makes the result modular and composable.
\end{abstract}
\iffullversion\else
\keywords{Android; Computational Soundness; Secure Information Flow}
\fi

\iffullversion
\clearpage
\tableofcontents
\clearpage
\fi



\section{Introduction}

Android constitutes an open-source project not only in terms of source code but also in terms of the whole ecosystem, allowing practically everyone
to program new apps and make them publicly available in Google Play.  
This open nature of Android has facilitated a rapid pace of innovation, but it has also led to the creation and widespread deployment of malicious apps~\cite{int-sec-threat-rep:symantec15,mobile-malware:gdata15}. 
Such apps often cause  privacy  violations that  leak  sensitive  information such as location or the user's address book, either
as an intended functionality or as a result of uninformed programming. In some cases such apps can even extract sensitive information from honest apps. 

A comprehensive line of research has, hence, strived to rigorously analyze how apps are accessing and processing sensitive information. These approaches typically
employ the concept of  information flow control (IFC), i.e., certain information sources such as GPS position and address book are declared to be sensitive, and certain
information sinks are declared to be adversarially observable. An IFC-based analysis then traces the propagation of sensitive information through the program, i.e.,  if sensitive data $m$ is 
input to a function $f$, then the result $f(m)$ is considered sensitive as well. IFC-based analyses thereby determine if information from sensitive sources can ever reach an observable sink, and in that
case report a privacy violation.
 
A considerable number of apps rely on cryptographic operations, e.g., for encrypting sensitive information before it is sent over the Internet. However, 
analyzing information flow within Android apps that rely on such cryptographic operations with their computational security guarantees imposes formidable challenges that all existing 
approaches for automated app analysis 
struggle to meet, e.g.,~\cite{EnGiChCoJuDaSh_14:taintdroid,LoMaStBaScWe_14:cassandra,BaBuDeGeHa_15:r-droid}. Roughly, these approaches  do not distinguish cryptographic operations from other, non-cryptographic functions. Thus, the standard information-tracing mechanism for arbitrary functions applies:
$f(m)$ is considered sensitive for a sensitive message $m$ irrespective of potential secrecy properties offered by a cryptographic function $f$, e.g., 
the encryption of a sensitive message $m$ is still considered sensitive such that sending this encryption over the Internet is considered a privacy breach.
These approaches consequently provide a safe approximation of the app's behavior, but they mistakenly classify a large fraction of apps as potentially insecure and consequently yield \emph{overly pessimistic results}. While approaches based on manual declassification have successfully managed to treat cryptographic
operations and their protective properties more accurately, see the section on related work for more details, no concept for an accurate cryptographic treatment is
known for automated analysis of Android apps.

\iffullversion\else\clearpage\fi
\subsection{Our Contributions}
In this paper, we show how cryptographic operations can be faithfully included into existing approaches for automated app analysis on Android, in the presence of malicious apps or network parties aiming to extract sensitive information from honest parties.
Our paper makes two main tangible contributions to this field: (i)  we show how cryptographic operations can be expressed as symbolic abstractions within Dalvik bytecode, so
that  existing automated analysis tools can adopt them with only minor changes in their
semantics; and (ii) we show that our abstractions are faithful by providing the first computational soundness result for the comprehensive Dalvik bytecode
language, i.e., the absence of attacks against our symbolically abstracted program entails the absence of any attacks against a suitable cryptographic program
realization. 

\pparagraph{Symbolic abstractions in Dalvik bytecode}
We first show how cryptographic operations  can be expressed as symbolic abstractions within Dalvik bytecode.  These symbolic abstractions -- often  also 
referred to as perfect cryptography or Dolev-Yao models --  constitute idealizations of cryptographic operations as free algebras that lend themselves towards automated analysis. 
Deriving such abstractions within the comprehensive Dalvik bytecode language constitutes a challenging task, since existing formalizations of Dalvik do not offer
a distinction between honest and adversarially controlled components, which is crucial for defining the  rules that the symbolic adversary has to adhere to. To this end, we
develop a novel semantic characterization of Dalvik bytecode that we
call \emph{split-state semantics} that provides a clear separation between honest program parts and cryptographic API calls with their corresponding augmented adversarial symbolic capabilities.
Moreover, this split-state form is key to our proof of computational soundness, see below.
Existing tools for automated app analysis can conveniently include our abstractions using minor changes in their underlying semantics, and thereby reason more accurately about cryptographic operations.

%
%

\pparagraph{Computational soundness for Dalvik bytecode}
We show that our symbolic abstractions can be securely instantiated using suitable cryptographic primitives, and thereby 
provide the first computational soundness result for Dalvik bytecode. More specifically,
our result is grounded in the Abstract Dalvik Language (ADL)~\cite{LoMaStBaScWe_14:cassandra}, which constitutes the currently  most detailed and comprehensive operational
semantics for Dalvik in the literature. To this end, we first extended ADL by probabilistic choices, as it otherwise would be inappropriate to express cryptographic operations.

We cast our computational soundness result in CoSP, a framework for establishing computational soundness results that decouples the process of embedding programming languages into CoSP from the computational soundness proofs itself. 
In particular, by casting our soundness results in CoSP, a rich body of computational soundness results for individual cryptographic primitves~\cite{BaHoUn_09:cosp,Me_10:ind-cpa-cosp,BaMaUn_12:without-restrictions-cosp,BaBeUn_13:cosp-zk,BaMoRu_14:equivalence,BaBeMaMoPe_15:malleable-zk} is immediately valid for Dalvik bytecode
without any additional work.

Establishing computational soundness results for Dalvik bytecode imposed a series of technical challenges that many prior computational soundness
works did not have to cope with. We highlight one such challenge. Computational soundness results struggle
when confronted with situations in which binary operations are applied to outputs of cryptographic operations, e.g., if the parity of a ciphertext should be checked,
 since such an operation would be undefined in the symbolic setting.
    Prior computational soundness results simply excluded programs with such illegitimate operations; this exclusion introduced an additional proof obligation for the automated analysis
    tool. While excluding such programs simplifies the soundness result, integrating these additional proof obligations
     in existing automated app analysis tools constitutes a tedious task, since the tools would need to check upfront whether any symbolically undefined operations will be performed on symbolic terms, in any execution branch.  As a consequence, we hence decided to establish a computational soundness result that
	over-approximates such scenarios by sending information of such illegitimate operations to the adversary and
    letting the adversary decide the result of such operations.

Finally, our proof reveals an additional result that we consider of independent interest: we show that any small-step semantics $S$ in split-state form entails a canonical small-step semantics $S^*$ for a symbolic model that is computationally sound with respect to $S$. Hence, for establishing a computationally sound symbolic abstraction for \emph{any} given programming language, it suffices to show that the interaction with the attacker and the cryptographic API can be expressed by means of our concept of split-state semantics.

\subsection{Summary of Our Techniques}
This section summarizes the techniques that we use to achieve these results. 
\iffullversion{We believe that this makes the paper better accessible.}
\else
{We hope that this summary makes the paper better accessible, given that it was distilled from a technical report spanning over
roughly 50 pages~\cite{ourfullversion}.}
\fi
Finally, we discuss how our results can be used to extend information flow tools.
\pparagraph{The CoSP framework
}
A central idea of our work is to reduce computational soundness of Dalvik bytecode  to computational soundness in the CoSP framework~\cite{BaHoUn_09:cosp,BaMoRu_14:equivalence}.
 All definitions
in CoSP are cast relative to a \emph{symbolic model} that specifies a set of constructors and destructors that
symbolically represent the cryptographic operations and are also used for characterizing the terms that the attacker can derive (called \emph{symbolic attacker knowledge}), and a \emph{computational implementation} that specifies cryptographic algorithms for these constructors and destructors. In CoSP, a \emph{protocol} is represented
by an infinite tree that describes the protocol as a labeled
transition system. Such a CoSP protocol contains actions
for performing abstract computations (applying constructors
and destructors to messages) and for communicating
with an adversary. 
A CoSP protocol is equipped with two
different semantics: (i) a \emph{symbolic CoSP execution}, in which messages are represented  by terms;
 and (ii) a \emph{computational CoSP execution}, in which messages are
bitstrings, and the computational implementation is used
instead of applying constructors and destructors.  
A computational implementation is said to be \emph{computationally sound}
for a class of security properties if any CoSP protocol that satisfies these properties in the symbolic execution
also satisfies these properties in the computational execution.
The advantage of expressing computational soundness results in CoSP is that  the protocol
model in CoSP is very general so that the semantics of other languages can be embedded therein, thereby
transferring  the various established soundness results from CoSP into these languages~\cite{BaHoUn_09:cosp,Me_10:ind-cpa-cosp, BaMoRu_14:equivalence,BaMaUn_12:without-restrictions-cosp, BaBeUn_13:cosp-zk,BaBeMaMoPe_15:malleable-zk}. 

\iftikzavailable
\begin{framedFigure}{Overview of the main technical lemmas, where $e$ is the embedding into CoSP}{fig:overview-figure}
\begin{center}
	\hspace{-3em}
	\small
\begin{tikzpicture}[node distance=5.8cm,>=stealth,bend angle=45,auto]
	\tikzstyle{edge} = [draw,thick,->]
  \begin{scope}
    \node	(ADL-symb)
			{\parbox{3.8cm}{\centering ADL\\ Semantics\\ $\protos 1 \ADLEquivalent \protos 2$}};
    \node	(ADL-symbolic-execution) [right of= ADL-symb]
			{\parbox{3.5cm}{\centering Symbolic\\ Split-State\\ $\protos 1 \execEquivalent \protos 2$}};
    \node	(ADL-symbolic-cosp) 			[right of= ADL-symbolic-execution]
			{\parbox{3.5cm}{\centering Symbolic CoSP\\ Execution\\ $e(\protos 1) \CoSPEquivalent e(\protos 2)$}};
    \node 	(ADL)							[below of= ADL-symb, node distance=2.8cm]
			{\parbox{3.5cm}{\centering ADL\\ Semantics\\ $\protos 1 \ADLIndistinguishable \protos 2$}};
    \node	(ADL-computational-execution)	[right of= ADL]
			{\parbox{3.5cm}{\centering Computational\\ Split-State\\ $\protos 1 \execIndistinguishable \protos 2$}};
    \node	(ADL-computational-cosp) 		[right of= ADL-computational-execution]
			{\parbox{3.5cm}{\centering Computational CoSP\\ Execution\\ $e(\protos 1) \CoSPIndistinguishable e(\protos 2)$}};
	
	\draw[thick,double,->] ($(ADL-symb)+(2,0)$) -- node {Lemma~\ref{lemma:symbolic-adl-symbolic-execution}} ($(ADL-symbolic-execution)-(1.5,0)$);
	\draw[thick,double,->] ($(ADL-symbolic-execution)+(1.7,0)$) -- node {Lemma~\ref{lemma:embedding-soundness-symbolic}} ($(ADL-symbolic-cosp)-(1.8,0)$);
        \draw[thick,double,->] ($(ADL-symbolic-cosp)+(0,-0.8)$) -- node [left] {\parbox{\widthof{Soundness in CoSP}}{\centering Computational Soundness in CoSP}} ($(ADL-computational-cosp)+(0,0.8)$);
	\draw[thick,double,->] ($(ADL-computational-cosp)-(1.8,0)$) -- node {Lemma~\ref{lemma:embedding-soundness-computational}} ($(ADL-computational-execution)+(1.7,0)$);
	\draw[thick,double,->] ($(ADL-computational-execution)-(1.5,0)$) -- node {Lemma~\ref{lemma:computational-execution-adl}} ($(ADL)+(2,0)$);
	\draw[dotted,double,->] ($(ADL-symb)+(0,-0.8)$) -- node {Theorem~\ref{theorem:symbolic-adl-computational-adl}} ($(ADL)+(0,0.8)$);
	
  \end{scope}
\end{tikzpicture}
\end{center}
\TODOE{Why do we write down the initial configuration as $\Pi\langle s\rangle$? Why do we for the moment not consider two transition systems that only differ in the initial configurations as different transition systems? We could then later, when we need it (in the comparisons to information flow methods), define the notation that $\Pi\langle s\rangle$ to replace the initial configuration with $s$?}
\end{framedFigure}
\fi 

\pparagraph{Symbolic ADL and probabilistic choices}
ADL as defined in \cite{LoMaStBaScWe_14:cassandra} does not support probabilistic choices, and hence no generation of cryptographic keys
and no  executions of cryptographic functions. We thus extended ADL with a rule that uniformly samples random values (more concretely: a register value from
the set of numerical values). 

We consider attackers that are external to the app, e.g., malicious parties or network parties with the goal of extracting secrets from an honest app. We characterize the interaction points of the attacker with our extended version of ADL by a set of so-called \emph{malicious functions} that communicate with the attacker. The attacker itself is modeled as a probabilistic polynomial-time machine. We introduce an additional semantic rule that is applied whenever a malicious function is called. The rule invokes the attacker with the arguments of the function call and stores the response of the attacker as a return value. For defining the indistinguishability of two ADL programs, we additionally require that the adversary can also send a single bit $b$ as a final guess, similar to other indistinguishability definitions. This entails the notions of \emph{symbolic equivalence} ($\ADLEquivalent$, in the symbolic setting) and of \emph{computational indistinguishability} ($\ADLIndistinguishable$, in the computational
setting) of two ADL programs.

\pparagraph{Split-state semantics for symbolic ADL}
Establishing a computational soundness proof for ADL requires  a clear separation between honest program parts and cryptographic API calls with their corresponding augmented adversarial symbolic capabilities. To achieve this, we characterize this partitioning by introducing the concept of a \emph{split-state form} of an operational semantics. The split-state form 
partitions the original semantics into three components, parallelly executed asynchronously: $(i)$ all steps that belong to computing cryptographic operations (called the \emph{crypto-API semantics}), $(ii)$ all steps that belong to computing the malicious functions (called the \emph{attacker semantics}), and $(iii)$ all steps that belong to the rest of the program (called the \emph{honest-program semantics}). Moreover, we define explicit transitions between each of these components, which gives rise to a precise message-passing interface for 
 cryptographic operations and for communicating with the attacker.

Our strategy for showing computational soundness is to use this split-state form for phrasing the symbolic variant as a small-step semantics by replacing the crypto-API semantics with the symbolic constructors and destructors from the symbolic model, and by replacing the attacker semantics by the symbolic characterization of the attacker. 
With this symbolic semantics at hand, we define split-state symbolic equivalence ($\execEquivalent$) as equivalence of (sets of) traces.
However, as
explained before, we first have to resolve the problem that computational soundness results struggle to deal with situations in which binary operations are applied to  outputs of cryptographic operations. We decided not to exclude programs that exhibit such behaviors, but to perform an over-approximation instead by letting the adversary determine the
outcome of such operations. This makes our abstractions conveniently accessible for existing tools, but it also complicates the computational soundness proof since we
have to consider the operations of constructors and destructors on non-symbolic terms as well. To this end, we encode them as bitstrings and interpret these bitstrings
symbolically again. Fortunately, symbolic bitstring interpretations can be seamlessly combined with all previous CoSP results.


We finally define computational  indistinguishability of two honest program semantics in the split-state computational execution of ADL ($\execIndistinguishable$). As usual, the adversary
we consider is a probabilistic polynomial-time machine, and all semantics constitute families of semantics that are indexed by a security parameter.

It might be of independent interest that our symbolic variant of the semantics and the computational indistinguishability can be defined on the split-state form  independently of ADL. We show that ADL can be brought into such a split-state form, then prove later that symbolic equivalence in ADL implies symbolic equivalence in the split-state form, and conclude
by proving that computational indistinguishability in the split-state form implies computational indistinguishability in ADL. Hence, for every two ADL programs $\proto_1, \proto_2$ and initial configurations $s_1, s_2$ ($\protos{i}$ denoting $\proto_i$ with initial configuration $s_i$)
we have
\begin{align*}
    \protos 1 \ADLEquivalent \protos 2 &\implies \protos 1 \execEquivalent \protos 2, 
\text{ and }\\
\protos 1 \execIndistinguishable \protos 2 &\implies \protos 1 \ADLIndistinguishable \protos 2.
\end{align*}

\pparagraph{Computational soundness proof} %
We first construct an injective embedding $\embedding$ that maps every ADL program to a CoSP protocol. We stress that within CoSP, the same CoSP protocol is used for the computational and the symbolic execution and that CoSP requires a separation of the attacker and the cryptographic operations from the rest of the program. Our split-state form precisely satisfies these requirements. 
The embedding $\embedding$ uses the honest-program semantics to iteratively construct a CoSP protocol:  a transition to the crypto-API semantics corresponds to a computation node; a transition to the attacker semantics corresponds to an output node followed by an input node; and whenever several possibilities exist, a control node is selected to
let the adversary decide which possibility (which node) to take.

We prove this embedding sound in the symbolic model,  and we prove it complete with respect to the range of $e$ in the computational model, i.e., for every two ADL programs $\proto_1, \proto_2$ and initial configurations $s_1, s_2$ 
we have
$$\protos 1 \execEquivalent \protos 2 \implies e(\protos 1) \CoSPEquivalent e(\protos 2)$$
and
$$e(\protos 1) \CoSPIndistinguishable e(\protos 2) \implies \protos 1
 \execIndistinguishable \protos 2,$$
 where  $\CoSPEquivalent$ and $\CoSPIndistinguishable$ denote symbolic equivalence and computational indistinguishability in CoSP.
%
%

Figure \ref{fig:overview-figure} finally shows how all pieces are put together:
\begin{imitate}{Theorem~\ref{theorem:symbolic-adl-computational-adl} (computational soundness of Dalvik -- simplified)}
Let $\proto_1, \proto_2$ be two ADL programs that use the same crypto-API and $s_1, s_2$ be two initial configurations. Then we have
$$\protos 1 \ADLEquivalent \protos 2 \implies \protos 1 \ADLIndistinguishable \protos 2.$$
\end{imitate}

%
%
%

\pparagraph{Extension of information flow tools}
To put our work in perspective, we elaborate on a possible application of our result. We envision the extension of information flow (IF) methods with symbolic abstractions. On a high-level, we envision the following approach for extending IF-tools: whenever there is a potential information flow from High to Low and a cryptographic function is called, we extract a model of the app and query a symbolic prover to find our whether the attacker learns something about the High values. As in symbolic ADL only a few semantic rules are changed w.r.t. ADL, modifications to existing analyses are likely to be confined, thus simple to integrate.

This approach imposes the challenge of extracting a model of the app. Our embedding of an ADL program into CoSP already extracts a symbolic model for the program. However, shrinking this extracted model to a manageable size and querying the symbolic prover in a way such that it scales to complex apps is a task far from
simple that merits a paper on its own.

\subsection{Overview}
Section~\ref{section:cosp} reviews the CoSP framework for equivalence properties that we ground our computational soundness result on. 
Section~\ref{section:dalvik-bytecode-semantics} reviews the Abstract Dalvik Language (ADL). 
Sections~\ref{sec:securityframework} and~\ref{sec:dalvik-symbolic-semantics} define the probabilistic execution of ADL and introduce our symbolic variant of ADL,  including the symbolic abstractions of cryptographic
operations and the capabilities of the symbolic adversary. Section~\ref{sec:soundness}  defines the connections between ADL, symbolic ADL, and CoSP, and based on these connections proves the computational soundness result. 
Section \ref{section:related-work} discusses related work. 
We 
conclude in Section \ref{sec:conclusion} with a summary of our findings and outline directions for future research.


\section{Notation}\label{sec:notation}%
Let $\setN$ be the set of natural numbers and assume that they begin at $0$. 
\iffullversion%
We abbreviate the statement that a set $A$ is a finite subset of a set
$B$ as $A \subseteq_{fin} B$.
\fi%
For indicating that  function $f$ from a set $A$ to a set $B$ is
a partial function, we write $f : A \rightharpoonup B$.
%
We use squared brackets in two different ways: $(i)$ $m[pp]$ denotes the instruction with the number $pp$ for a set of instructions $m$, 
and $(ii)$ $r[v \mapsto \mathit{val}] := (r\setminus (v,r(v))) \cup
(v,val)$ is short-hand for the function mapping $v$ to $\mathit{val}$
and otherwise behaving like $r$. 
\iffullversion%
We lift this notation to functions,
\ie, $r[f](v)$ is equal to $f(v)$ if $f$ is defined on $v$, and equal to $r(v)$
otherwise.
\fi%
Throughout the paper, we use $\secpar$ as the security parameter. 
%
We use $\varepsilon$ to denote the 
empty sequence, 
empty path, 
empty bitstring,
or empty action
depending on the context.
We write $\underline t$ for a sequence $t_1,\dotsc,t_n$ if $n$ is clear from the context.
For any sequence $l\in E^*$, we use 
$\cdot$ to denote concatenation $l_1 
\cdot l_2$, as well as the result of 
appending ($l \cdot e$) or 
prepending ($e \cdot l$) an element, as long as the difference is
clear from context.
\iffullversion%
We use ${(f{(l)})}_{e\in l}$ to the sequence resulting from applying the
(meta-language) operation $f$
to each element in a sequence $l$, \eg, ${(e^2)}_{e\in
\set{1,2,3}}$ is the sequence of the first three square numbers.
We use $l|_0^k$ to denote the $k$-prefix of $l$.
\fi%
We filter a sequence $l$ by a set $S$, denoted $l|_S$, by removing each element
that is not in $S$. 
\iffullversion%
We use similar notation for the projection of a sequence $l=s_1,\ldots,s_n \in S^*$:
given a partial function $\pi \colon S \rightharpoonup T$, 
    $l|_\pi = (s_1,\ldots,s_{n-1})|_\pi\cdot \pi(s_n)$
    or 
    $(s_1,\ldots,s_{n-1})|_\pi$, if $\pi(s_n)$ undefined.
\fi%
\iffullversion%
    $(e)^l$ denotes the sequence of length $l$ where each element equals $e$.
\fi%
As we represent the attacker as a transition system, we sometimes write $T_A$ and sometimes $\attacker$ for the attacker.


\section{CoSP Framework (Review)}\label{section:cosp}
\iffullversion
The computational soundness proof developed in this paper follows CoSP~\cite{BaHoUn_09:cosp,BaMoRu_14:equivalence}, a general framework for 
conducting computational soundness proofs of symbolic cryptography and for embedding these proofs into programming languages with their
given semantics. CoSP enables proving computational soundness results in a conceptually modular and generic way: every computational soundness
result for a cryptographic abstraction phrased in CoSP automatically holds for all suitably embedded languages, and the process of embedding is conceptually
decoupled from computational soundness proofs. Hence in this work, we will   suitably embed Dalvik Bytecode into CoSP, and thereby leverage existing computational soundness results 
of CoSP.

In Section~\ref{section:symbolic-model}, we review the \emph{symbolic model} of CoSP, which encompasses symbolic abstractions of cryptographic operations and the representation of programs (which are called protocols) in CoSP. In Section~\ref{section:symbolic-equivalence}, we present the notion of \emph{symbolic equivalence} in CoSP, which defines that two CoSP protocols are indistinguishable for an attacker that operates on symbolic abstractions. In Section~\ref{section:computational-indistinguishability}, we review the notion of \emph{computational indistinguishability} in CoSP, which defines
the execution of protocols using actual cryptographic algorithms. In Section~\ref{section:computational-soundness}, we finally review CoSP's notion of \emph{computational soundness}.
\else
The computational soundness proof developed in this paper follows CoSP~\cite{BaHoUn_09:cosp,BaMoRu_14:equivalence}, a general framework for 
conducting computational soundness proofs of symbolic cryptography and for embedding these proofs into programming languages with their
given semantics. This section reviews the CoSP framework.
\fi

\pparagraphorsubsection{Symbolic Model \& Execution}\label{section:symbolic-model}~
\iffullversion
CoSP provides a general symbolic model for expressing cryptographic abstractions. 
We start with some basic concepts such as constructors, destructors, nonces and
message types.

\begin{definition}[CoSP Terms]
  A \emph{constructor} $C$ is a symbol with a (possibly zero)
  arity.
  A \emph{nonce} $N$ is a symbol with zero arity. 
  We write $C/n\in \mathbf C$ to denote that $\mathbf C$ contains a constructor
  $C$ with arity $n$.  A \emph{message
      type $\mathbf T$ over $\mathbf C$ and $\mathbf
    N$} is a set of terms over constructors $\mathbf C$ and nonces $\mathbf N$.
     A \emph{destructor} $D$ of arity $n$, written $D/n$, over a
    message type $\mathbf
      T$ is a partial map $\mathbf T^n\to \mathbf T$. If $D$ is
  undefined on $\underline t$, we write $D(\underline t)=\bot$.
\end{definition}

In CoSP, symbolic abstractions of protocols and of the
attacker are formulated in a symbolic model, including
countably infinite set of nonces partitioned into protocol and
attacker nonces.

  \begin{definition}[Symbolic model]\label{def:symb.model}
  A \emph{symbolic model}
  $\mathbf M=(\mathbf C,\mathbf N,\mathbf T,\mathbf D)$
  consists of a set of constructors $\mathbf C$, a set of nonces $\mathbf N$, a message type
  $\mathbf T$ over $\mathbf C$ and $\mathbf N$ (with $\mathbf
  N\subseteq\mathbf T$), a
  set of destructors $\mathbf D$ over $\mathbf T$. We require that $\N = \NE \uplus \NP$ for countably infinite sets
  $\NP$ of protocol nonces and attacker nonces $\NE$.
\end{definition}

To unify notation for constructors, destructors, and nonces, we define the \emph{evaluation of terms} as the partial function $\eval_F:\mathbf T^n\to\mathbf T$ for every constructor or destructor $F/n \in \D \cup \C$ and every nonce $F\in\N$ as follows: (where $n = 0$ for a nonce).

\begin{definition}[Evaluation of terms]\label{def:eval-prime}
If $F$ is a
  constructor, define $\eval_F(\underline t) \defeq  F(\underline t)$ if
  $F(\underline t)\in\mathbf T$ and $\eval_F(\underline t) \defeq  \bot$
  otherwise.  If $F$ is a nonce, define $\eval_F() \defeq  F$. If $F$ is a destructor,
  define $\eval_F(\underline t) \defeq  F(\underline t)$ if $F(\underline t)\neq\bot$
  and $\eval_F(\underline t) \defeq  \bot$ otherwise.
\end{definition}

\else%
 In CoSP, symbolic abstractions of protocols and the
attacker are formulated in a \emph{symbolic model} $\M= (\C, \nonce, \T, \D)$: a set of free functions
$\C$, a countably infinite set $\N$ of nonces, a set $\T$ of
terms, and a set $\D$ of partial mappings from terms to terms (called destructors).
To unify notation, we introduce $\eval_F(t)$: if $F$ is a
  constructor, $\eval_F(t) \defeq  F(t)$ if
  $F(t)\in\mathbf T$ and $\eval_F(t) \defeq  \bot$
  otherwise.  If $F$ is a nonce, $\eval_F() \defeq  F$. If $F$ is a destructor,
  $\eval_F(t) \defeq  F(t)$ if $F(t)\neq\bot$
  and $\eval_F(t) \defeq  \bot$ otherwise.
\fi

\pparagraph{Protocols} In CoSP, protocols are represented as infinite trees with the following nodes: \emph{computation nodes} are used for drawing fresh nonces
and for applying constructors and destructors; \emph{input} and
\emph{output nodes} are used for send and receive operations;
\emph{control nodes} are used for allowing the attacker to
schedule the protocol. A computation node is annotated
with its arguments and has two outgoing edges: a yes-edge, used for the
application of constructors, for drawing a nonce, and for the successful
application of a destructor, and a no-edge, used if an application of a constructor or
destructor $F$ on a term $t$ fails, i.e., if $\eval_F(t)=\bot$.
Nodes have explicit \emph{references} to other nodes whose terms they use.
\iffullversion%
For example, a computation node that computes
$C(t)$ references the node that produced $t$, e.g., an input node or another computation node.\fi

\iffullversion \begin{definition}[CoSP protocol]
\label{def:cosp-protocol}
  A \emph{CoSP protocol} ${\proto}$ for a symbolic model $\mathbf M=(\mathbf C,\mathbf N,\mathbf T,\mathbf D)$ is a tree of infinite depth with a distinguished root and labels
  on both edges and nodes. Each node has a unique identifier $\n$ and
  one of the following types:\footnote{In contrast to the definition in the original CoSP framework~\cite{BaHoUn_09:cosp}, we do not consider
   the type ``non-deterministic node''. This type has been obsolete for all embeddings of symbolic calculi that have been established in the original framework so far 
   as well.}
\begin{itemize}
  \item
    \emph{Computation nodes} are annotated
    with a constructor, nonce or destructor $F/n \in \mathbf C \cup \NP \cup \mathbf D$
  together with the identifiers of $n$ (not necessarily distinct)
  nodes; we call these annotations \emph{references}, and we call the referenced nodes \emph{arguments}. Each computation node $\n$ has exactly two
  successors; $\mathit{yes}(\nu)$ and $\mathit{no}(\nu)$,
  the corresponding edges are labeled with
  $\mathtt{yes}$ and $\mathtt{no}$, respectively.

\item \emph{Input nodes} have no annotations. They have
  exactly one
  successor, $\succs(\nu)$.

 \item \emph{Output nodes} have a reference to exactly one node in their annotations.
     They have exactly one successor, $\succs(\nu)$.
\item \emph{Control nodes} are annotated with a bitstring $l$. They have at least one and up to countably many successors; the corresponding edges are labeled
  with distinct bitstrings $l'$. We call $l$
    the \emph{out-metadata} and $l'$ the \emph{in-metadata} of this node.
  \end{itemize}
  We assume that the annotations are part of the node identifier.
  A node $\n$ can only reference other nodes $\n'$
  on the path from the root to $\n$;
  in this case $\n'$ must be a computation
  node or input node. If
  $\n'$ is a computation node,  the path from
  $\n'$ to $\n$ has additionally to go through the outgoing edge of $\n'$ with label $\mathtt{yes}$.
\end{definition}
 \fi


\pparagraph{Symbolic operations} \iffullversion
As a next step, we model the capabilities of the symbolic attacker. 
We have to capture which protocol messages the attacker observes, in particular in
which order an attacker observe these messages. Moreover, we have to capture which tests an
attacker can perform in order to judge whether two protocols are distinguishable. These tests,
called \emph{symbolic operations}, capture the sequence of operations a symbolic attacker applies,
including the used protocol messages. This notion is sometimes
referred to as a \emph{recipe}.

\begin{definition}[Symbolic Operation]
  \label{def:symbolic-operation}
Let $\Model=(\C,\N,\Terms,\D)$ be a symbolic model.
A \emph{symbolic operation $O/n$ (of arity $n$) on $\M$} is a finite tree whose nodes are
labeled with constructors from $\C$, destructors from $\D$, nonces from $\N$, and formal parameters $x_i$ with
$i\in\set{1,\dots,n}$.
For constructors and destructors, the children of a node represent its arguments (if any). Formal
parameters $x_i$ and nonces do not have children. 
    The set of symbolic operations of arity $n$ for a model $\Model$
    is denoted as $\setSymbolicOperations(\M)_n$. The set
    $\setSymbolicOperations(\M) := \bigcup_{n\in\setN}\setSymbolicOperations(\M)_n$ is the set of
    all symbolic operations for $\Model$.
  We extend the evaluation function to symbolic operations.
  Given a list of terms $\underbar t\in\T^n$, the evaluation function $\eval_O: \T^n \to \T$
  recursively evaluates the tree $O$
  starting at the root as follows: The formal parameter $x_i$ evaluates
  to $t_i$. A node with $F\in \C \cup \N_E \cup \D$ evaluates according to $\eval_F$, where $\N_E\subseteq\N$ are attacker nonces.
  If there is a node that evaluates to $\bot$, the whole tree
  evaluates to $\bot$.
\end{definition}
Note that the identity function is included. It is the tree that contains only $x_1$ as node.
\else
To model the capabilities of the symbolic attacker, we explicitly list the tests and operations that the attacker can perform on protocol messages using so-called \emph{symbolic operations}.
A symbolic operation is (similar to a CoSP tree) a finite tree whose nodes are labeled with constructors, destructors, or nonces from the symbolic model $\Model$, or formal parameters $x_i$ denoting pointers to the $i$th protocol message. There is a natural evaluation function for a symbolic operation $O$ and a list $\underline t$ of terms that the attacker received so far (the \emph{view}).
\fi

\pparagraph{Symbolic execution}
A symbolic execution is a path through a protocol tree defined as defined below.
It induces a \emph{symbolic view}, which contains the communication
with the attacker. We, moreover, define an \emph{attacker strategy} as
the sequence of symbolic operations that the attacker performs in the
symbolic execution.
\iffullversion
It induces a \emph{view}, which contains the communication with the attacker. Together with the
symbolic execution, we define an \emph{attacker strategy} as the sequence of symbolic operations that the
attacker performs.
\fi

\begin{framedFigure}{Rules for defining the smallest relation for the symbolic execution}{figure:symbolic-execution}
\footnotesize
\vspace{-0.5em}
\begin{align*}
    (V,\nu,f) & \cospsym (V,\mathit{yes}(\nu),f[\nu\mapsto m])
    &&
    \nu \text{ computation n. with $F\in\C\cup\D\cup\N$,
    $m\defeq \eval_F (\underline{\tilde t})\neq\bot$}
    \\
    (V,\nu,f) & \cospsym (V,\mathit{no}(\nu),f)
    && 
    \text{$\nu$ computation node with $F\in\C\cup\D\cup\N$,
    $\eval_F (\underline{\tilde t})=\bot$}
    \\
    (V,\nu,f) & \cospsym (V\cdot(\texttt{in},
    (t,O)),\mathit{succ}(\nu),f[\nu\mapsto t])
    &&
    \text{$\nu$ input node,
    $t\in\T$,
    $O\in\SO(\M)$,
$\eval_O(\termsof(V))=t$}
    \\
    (V,\nu,f) & \cospsym (V\cdot(\texttt{out}, \tilde t_1),\mathit{succ}(\nu),f)
    &&
    {\nu~\text{output node}}
    \\
    (V,\nu,f) & \cospsym ( V\cdot(\texttt{control},(l,l')),\nu',f) 
    && 
    \nu~\text{control n., out-metadata $l$,
    successor  $\nu'$ has  in-metadata $l'$}
\end{align*}
\vspace{-0.5em}
\end{framedFigure}

\begin{definition}[Symbolic executions]\label{def:symbolic-execution}
  Let a symbolic model $\Model=(\C,\N,\Terms,\D)$ and a CoSP protocol
  ${\proto}$ for $\Model$ be given. Let $\mathit{Views} = 
   (\Eventin \cup \Eventout \cup \Eventctl)^*$, with 
$\Eventin := \set{\texttt{in}} \times \T \times \setSymbolicOperations(\M),
        \Eventout := \set{\texttt{out}} \times \T^*,
       \Eventctl := \set{\texttt{control}} \times \bits* \times \bits*$.
  We define 
\begin{align*}
\cospsym \subseteq &
      \mathit{Views}^*\times \mathit{Nodes} \times (\mathit{Nodes} \to \Terms)
      \to\\
      &\mathit{Views}^*\times \mathit{Nodes} \times (\mathit{Nodes} \to \Terms)
\end{align*}
  as the smallest relation s.t.\ the rules from Figure~\ref{figure:symbolic-execution} hold,
where 
     $\tilde{\underline t}
     \defeq f(\tilde \nu_{1}), \dots, f(\tilde
     \nu_{|\tilde{\underline{\nu}}|})$,
     for the nodes
     $\tilde \nu_1,\ldots,\tilde \nu_{|\tilde{\underline{\nu}}|}$
     referenced by $\nu$ (for computation nodes).
%
  The set of \emph{symbolic executions of
  \proto} is defined as
\begin{align*}
      \mathrm{SExec}(\proto) := & \{ 
      ((V_0,\nu_0,f_0), \dots, (V_n,\nu_n,f_n)) \mid n\in\setN\\
	  & \land \forall i. (V_i,\nu_i,f_i) \cospsym (V_{i+1},\nu_{i+1},f_{i+1})\},
\end{align*}
  where $(V_0,\nu_0,f_0) = (\epsilon,\mathit{root}(\proto),\emptyset)$.
  $V_i$ is called a \emph{symbolic view} for step $i$.
  The set of symbolic views of $\proto$ is defined as
  \[ \Views(\proto) := 
      \set{V_n | (V_0,\nu_0,f_0), \cdots, (V_n,\nu_n,f_n) \in \mathrm{SExec}(\proto)}
  \]
  Given a view $V$,
  $\termsof(V)$ denotes the list of terms $t$ contained in $(\stringOutput,t)\in V$.
  $\outmetaof(V)$ denotes the list of terms $l$ contained in elements of the form $(\stringControl, (l,l'))$ in the view $V$.
    $\inputof(V)$, called the \emph{attacker strategy}, denotes the list of terms that contains only entries of $V$ of the form $(\stringInput,(t,O))$ or $(\stringControl,(l,l'))$, where for $(\stringInput,(t,O))$ the entry $(\stringInput,O)$ is stored and for $(\stringControl,(l,l'))$ the entry $(\stringControl,l')$ is stored
. $[\inputof(V)]_{\Views(\proto)}$ denotes the equivalence class of all views $U\in\Views(\proto)$ with $\inputof(U) = \inputof(V)$.
\end{definition}

\pparagraph{Symbolic knowledge}
The \emph{symbolic knowledge} of the attacker comprises the results of all symbolic tests the
attacker can perform on the messages output by the protocol.
\iffullversion%
The definition captures that the attacker knows exactly which symbolic operation
leads to which result.

\begin{definition}[Symbolic knowledge]
 \label{def:symbolic-knowledge}
 Let $\M$ be a symbolic model.
 Given a view $V$ with $|\termsof(V)| = n$, a
 \emph{symbolic knowledge function} $f_V:\setSymbolicOperations(\Model)_n \rightarrow
 \set{\top,\bot}$ is a partial function from symbolic operations (see
 Definition~\ref{def:symbolic-operation}) of arity $n$ to $\set{\top,\bot}$. The \emph{full symbolic
 knowledge function} is a total symbolic knowledge function $\testsof_V(O)$  defined by
 \[
 \testsof_V(O) \defeq
 \begin{cases}
 \bot \quad \text{if } \eval_O(\termsof(V)) = \bot \\
 \top \quad \text {otherwise.}
 \end{cases}
 \]
\end{definition}
\else%
Given a view $V$ with $|V_\termsof|=n$, we define the symbolic
knowledge $\testsof_V$ as a function from symbolic operations on $\M$
to $\set{\top,\bot}$, where $\top$ comprises all results of
$\eval_O(V_\termsof)$ that are different from $\bot$.
\fi%

\pparagraphorsubsection{Symbolic Equivalence} \label{section:symbolic-equivalence}
Two views are \emph{equivalent}, denoted as $V \sim V'$, if they $(i)$
have the same structure (i.e., the same order of $\stringOutput,
\stringInput, \stringControl$ entries), $(ii)$ have the same
out-metadata (i.e., $V_\outmetaof = V'_\outmetaof$), and $(iii)$ lead
to the same knowledge (i.e., $\testsof_V = \testsof_{V'}$).
Finally, two CoSP protocols 
are \emph{symbolically equivalent} ($\CoSPEquivalent$) if its two variants lead to equivalent
views when run with the same attacker.

\iffullversion
\begin{definition}[Equivalent views]
  \label{def:static-equivalence}
 Let two views $V, V'$ of the same length be given. We denote their $i$th entry by $V_i$ and $V'_i$, respectively. $V$ and $V'$ are \emph{equivalent} ($V\sim V'$), if the following three conditions hold:
  \begin{enumerate}
    \item\label{item:same-structure} (Same structure) $V_i$ is of the form $(s,\cdot)$ if and only if $V'_i$ is of the form $(s,\cdot)$ for some $s\in\{\stringOutput,\stringInput,\stringControl\}$.
    \item\label{item:same-out-metadata} (Same out-metadata) $\outmetaof(V) = \outmetaof(V')$.
    \item\label{item:same-symbolic-knowledge} (Same symbolic knowledge) $\testsof_V = \testsof_{V'}$.
  \end{enumerate}
\end{definition}

Finally, we define two protocols to be \emph{symbolically equivalent} if the two protocols lead to equivalent
views when faced with the same attacker strategy.
Thus, a definition of symbolic equivalence should compare the symbolic knowledge of two
protocol runs only if the attacker behaves identically in both runs.

Given the static equivalence relation from Definition~\ref{def:static-equivalence}, 
the notion of symbolic equivalence in CoSP is defined as trace equivalence.

\begin{definition}[Symbolic equivalence]
\label{def:symbolic-indistinguishable}
Let $\Model$ be a symbolic model and $\protoClass$ be a class of
protocols for $\Model$.
Let $\traces_A(\Pi)=\set{[A]_{\Views(\Pi)}}$
denote the list of traces for an attacker strategy $A$.
We lift the relation $\sim$ from Definition~\ref{def:static-equivalence} to sets, \ie, $A \sim B$ iff. 
$\forall a\in A \exists b\in B. a \sim b$ and vice versa.

Two protocols $\Pi_1, \Pi_2 \in \protoClass$ are \emph{trace
equivalent} ($\Pi_1 \CoSPEquivalent \Pi_2$), if, for all attacker
strategies $A$,
\[ \traces_A(\Pi_1)\sim\traces_A(\Pi_2). \]
\end{definition}


\begin{framedFigure}{Labelled transition}{def:symbolic-labelled-trans}
\footnotesize
\vspace{-0.5em}
\begin{align*}
    (V,v,f) 
    & \rightarrow (V,\mathit{yes}(v),f[\nu\mapsto m])
    && \text{$\nu$ computation node with $F\in\C\cup\D\cup\N$},
    m\defeq \eval_F (\underline{\tilde t})\neq\bot 
    \\
    (V,v,f) 
    & \rightarrow (V,\mathit{no}(v),f)
    && 
    \nu \text{ computation node with }F\in\C\cup\D\cup\N,
    \eval_F (\underline{\tilde t})=\bot
    \\
    (V,v,f) 
    & \xrightarrow{\mathit{in}(O)} 
    (V\dot(\texttt{in}, (t,O)),\mathit{succ}(v),f[\nu\mapsto t])
    &&
    \nu~\text{input node},
    t\in\T,
    O\in\SO(\M),
    \eval_O(\termsof(V))=t
    \\
    (V,v,f) 
    & \xrightarrow{\textit{out}}
    (V\cdot(\texttt{out}, \tilde t_1),\mathit{succ}(v),f)
    &&
    {\nu~\text{output node}}
    \\
    (V,v,f) 
    & \xrightarrow{\textit{control}(l)}
     V\cdot(\texttt{control},(l,l')),\nu',f) 
    && 
    \nu~\text{control n., out-metadata $l$,
    successor  $\nu'$ has  in-metadata $l'$}
\end{align*}
\vspace{-0.5em}
\end{framedFigure}

\begin{definition}[(Weak) bisimilarity]
    \label{def:bisimiliarity}
    A \emph{(weak) bisimulation relation} is a symmetric relation
    \[\calR: 
      \mathit{Views}^*\times \mathit{Nodes} \times (\mathit{Nodes} \to \Terms)
      \times
      \mathit{Views}^*\times \mathit{Nodes} \times (\mathit{Nodes} \to \Terms)
  \]
    such that, if
    $A \calR B$, then
    \begin{itemize}
        \item $K_{V_A} = K_{V_B}$ (where $V_A$,$V_B$ are the symbolic views in $A$ and $B$).
        \item if $A \rightarrow A'$ then there is $B'$ s.t.  $B \rightarrow^* B'$ and $A
            \calR B$
        \item if $A\xrightarrow{\alpha} A'$ then there is $B'$ such
            that $B \rightarrow^* \xrightarrow{\alpha} \rightarrow^*
            B'$ and $A \calR B$.
    \end{itemize}

    If there exist a weak bisimulation relation between $A$ and $B$,
    they are \emph{bisimilar}.
\end{definition}

Without non-deterministic nodes, more generally without internal non-determinism, the notion of trace equivalence (i.e., symbolic equivalence from Definition~\ref{def:symbolic-indistinguishable}) is equivalent to weak bisimilarity. 

\begin{lemma}
    For any model \M, any two processes $\Pi_1$ and $\Pi_2$ are trace
    equivalent if they are bisimilar.
\end{lemma}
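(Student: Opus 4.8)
The plan is to show that a weak bisimulation between (the initial configurations of) $\Pi_1$ and $\Pi_2$ can be promoted to a relation that additionally tracks equivalence of the views produced so far, and that this promoted relation already witnesses trace equivalence. First I would observe that the monolithic step relation $\cospsym$ of Figure~\ref{figure:symbolic-execution} decomposes exactly into the labelled relation of Figure~\ref{def:symbolic-labelled-trans}: a step at a computation node is an internal step $\rightarrow$ and leaves the view unchanged, while a step at an input, output, or control node is a labelled step $\xrightarrow{\alpha}$ appending exactly one view entry — an $\stringInput$-entry $(t,O)$ for $\alpha=\mathit{in}(O)$, an $\stringOutput$-entry for $\alpha=\mathit{out}$, and a $\stringControl$-entry $(l,l')$ for $\alpha=\mathit{control}(l,l')$ (reading the control label as carrying both the out-metadata $l$ and the in-metadata $l'$ of the selected successor, i.e.\ exactly the data the attacker observes and picks at such a node). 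Thus a symbolic execution of $\proto$ is a $\cospsym$-path from $(\varepsilon,\mathit{root}(\proto),\emptyset)$, its view is the sequence of entries contributed by its labelled steps, and $\inputof(V)$ is the subsequence of those labels coming from input and control nodes.

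Next, fix a weak bisimulation $\calR$ with $(\varepsilon,\mathit{root}(\Pi_1),\emptyset)\ \calR\ (\varepsilon,\mathit{root}(\Pi_2),\emptyset)$ and put $A\approx B$ iff $A\ \calR\ B$ and the views $V_A,V_B$ in $A,B$ are equivalent in the sense of Definition~\ref{def:static-equivalence}. The heart of the argument is the claim: whenever $A\approx B$ and $A\cospsym A'$, there is $B'$ with $B\cospsym^{*}B'$, $A'\approx B'$, and $\inputof(V_{B'})=\inputof(V_{A'})$. I would prove this by case analysis on the node type of $A$. Computation node: the view is unchanged, so the $\rightarrow$-clause of $\calR$ suffices and the internal $\rightarrow^{*}$ padding on the $B$-side also leaves its view and $\inputof$ untouched. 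Input node: use the $\xrightarrow{\mathit{in}(O)}$-clause; here one need not (and cannot) show that the two sides read in the same term — input terms do not enter $\termsof(\cdot)$ and hence do not affect $\testsof$, and the possibly different $B$-side term is guaranteed to be defined because $\testsof_{V_A}(O)=\testsof_{V_B}(O)$ equals $\top$ on the $A$-side. Output node: the appended term does enter $\termsof(\cdot)$, so here we invoke the defining property $K_{V_{A'}}=K_{V_{B'}}$ of the bisimulation to get $\testsof_{V_{A'}}=\testsof_{V_{B'}}$. Control node: the $\xrightarrow{\mathit{control}(l,l')}$-clause preserves both the out-metadata $l$ (which feeds $\outmetaof$, hence the $\sim$-check) and the in-metadata $l'$ (which feeds $\inputof$). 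In every case the bisimulation's $\rightarrow^{*}$ padding changes neither the view nor $\inputof$, so $A'\approx B'$ and the strategies agree.

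Iterating this claim along an arbitrary symbolic execution of $\Pi_1$ yields, for every $V_1\in\Views(\Pi_1)$, some $V_2\in\Views(\Pi_2)$ with $\inputof(V_2)=\inputof(V_1)$ and $V_1\sim V_2$; since $\calR$ (hence $\approx$) is symmetric, the converse holds as well. Fixing an attacker strategy $A$, this gives $[A]_{\Views(\Pi_1)}\sim[A]_{\Views(\Pi_2)}$ as sets of views (in particular emptiness on one side forces emptiness on the other), hence $\traces_A(\Pi_1)\sim\traces_A(\Pi_2)$; as $A$ was arbitrary, $\Pi_1\CoSPEquivalent\Pi_2$. I expect the only genuinely delicate point to be the output case: the two executions generally emit syntactically distinct terms, so equivalence of the resulting views does not follow from the induction hypothesis but precisely from the symbolic-knowledge clause built into the definition of weak bisimulation — this is where $\calR$ does real work, and it is worth spelling out explicitly that $\sim$, unlike the views themselves, is insensitive to the concrete terms carried inside $\stringInput$- and $\stringOutput$-entries.
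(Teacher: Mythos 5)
Your proof is correct and follows essentially the same route as the paper's: both decompose $\cospsym$ into the labelled transitions of Figure~\ref{def:symbolic-labelled-trans}, obtain a label-matching execution of $\Pi_2$ from the bisimulation, and derive the three conditions of view equivalence — structure and out-/in-metadata from the labels, symbolic knowledge from the $K_{V_A}=K_{V_B}$ clause. The only difference is presentational: you carry an explicit step-indexed invariant $\approx$ with a per-node-type case analysis, whereas the paper argues once over the whole trace; your treatment of the output case correctly identifies where the knowledge clause does the real work.
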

\begin{proof}
We introduce the following function mapping a single stap of a trace,
\ie,
$e\in\mathit{Views}$
to  labels according to
Definition~\ref{def:symbolic-labelled-trans}. Let
\[ \beta (v)=
    \begin{cases}
        \texttt{in}(O) & \text{if }v=(\texttt{in},(t,O))\in\Eventin \\
        \texttt{out} & \text{if }v=(\texttt{out},\tilde t)\in\Eventout \\
        \texttt{control}(l) & \text{if }v=(\texttt{control}(l,l'))\in\Eventctl. \\
    \end{cases}
\]
Without loss of generality, we assume that each CoSP tree is transformed into a CoSP tree where each node uniquely describes its subtree completely.
\TODOR{I identify the protocol with the root of its tree, i.e.,
    each node describes its subtree completely. Make sure to mention
it around Definition~\ref{def:cosp-protocol}.}
Fix an arbitrary trace $t_1=(v_1,\ldots,v_n)\in\traces_A(\Pi_1)$.
By Definition~\ref{def:symbolic-execution}
and Figure~\ref{def:symbolic-labelled-trans}, there is a sequence of
transitions and triples $(V_i,\nu_i,f_i)$ such that:
\[ 
    (V_0,\nu_0,f_0) 
    \rightarrow^* \xrightarrow{\beta(v_1)} \rightarrow^* 
    \cdots
    \rightarrow^* \xrightarrow{\beta(v_n)} \rightarrow^* 
    (V_n,\nu_n,f_n),
\]
and $v_0=\Pi_1$.
As $\Pi_1$ and $\Pi_2$ are bisimilar, there exists a sequence of
transitions and triples
$(V_i',\nu_i',f_i')$ such that $v_0'=\Pi_2$, and
\[
    (V_0',\nu_0',f_0') 
    \rightarrow^* \xrightarrow{\beta(v_1)} \rightarrow^* 
    \cdots
    \rightarrow^* \xrightarrow{\beta(v_n)} \rightarrow^* 
    (V_n',\nu_n',f_n'),
\]
as well as $K_{V_n}=K_{V_{n'}}$. 
The latter point implies that the third condition of
Definition~\ref{def:static-equivalence} holds for
$(V_0,\ldots,V_n)$
and 
$(V_0',\ldots,V_n')$.
From the relation between the view and the labels in each transition
described in 
Definiton~\ref{def:symbolic-labelled-trans}, it follows that the first
and second condition holds. Hence for each
$(V_0,\ldots,V_n) \in \Views(\Pi_1)$
there exists an equivalent view
$(V_0',\ldots,V_n') \in \Views(\Pi_2)$.
The same argument can be made for $\Pi_2$ in place of $\Pi_1$ and vice
versa, thus $\Pi_1$ and $\Pi_2$ are trace equivalent.

\end{proof}

\fi

\pparagraphorsubsection{Computational execution}~\label{section:computational-indistinguishability}
 In the computational setting, symbolic constructors and destructors are realized with
 cryptographic algorithms.
A \emph{computational
implementation} is a family $\underline \implementation=(A_x)_{x \in \C \cup \D \cup \NP}$ of deterministic polynomial-time
algorithms $\implementation_F$ for each constructor or destructor $F \in \C\cup\D$ as well as a probabilistic
  polynomial-time (ppt) algorithm $A_N$ for drawing protocol nonces $N\in\N$. 
  \iffullversion%
The first argument of $\implementation_F$ and $A_N$ represents the security parameter. \fi

\iffullversion 

\begin{framedFigure}{Rules defining the smallest relation for the computational execution, where
     $\mathit{yes}(\nu)$ and $\mathit{no}(\nu)$ are the yes- and
     no-successor of $\nu$ (for computation nodes), 
     $\mathit{succ}$ is the successor node (for input or output nodes),
     and 
     $\tilde{\underline t}\defeq \tilde t_{j}\defeq f(\tilde
     \nu_{j})$ with $\underline{\tilde \nu}$ are the nodes referenced by $\nu$ (for computation nodes).
    }{fig:computational-execution}
\footnotesize
\vspace{-0.5em}
\begin{align*}
    (\nu,f,n) & \cospcomp (\mathit{yes}(\nu),f[\nu\mapsto n(N)],n)
    && \text{$\nu$ computation node with $N\in\N$ and $n(N)\neq\bot$}
    \\
    (\nu,f,n) & \cospcomp (\mathit{yes}(\nu),f[\nu\mapsto
m],n[N\mapsto m)
    && \text{$\nu$ computation node with $N\in\N$ and
$n(N)=\bot$}\\
    &&& \text{and  $m$ drawn according to $\implementation_N(\secpar)$.}
    \\
    (\nu,f, n) & \cospcomp (\mathit{yes}(\nu),f[\nu\mapsto m],n)
    && 
    \nu \text{~computation node with $F\in\C\cup\D\cup\N$, and}
    A_F (\secpar,\underline{\tilde t})=m\neq\bot
    \\
    (\nu,f, n) & \cospcomp (\mathit{no}(\nu),f,n)
    && 
    \nu \text{~computation node with $F\in\C\cup\D\cup\N$, and}
    A_F (\secpar,\underline{\tilde t})=\bot
    \\
    (\nu,f, n) & \cospcomp (\succs(\nu),f[\nu\mapsto m],n)
    && \text{$\nu$ input node, a request bitstring and receive $m$}
    \\
    (\nu,f, n) & \cospcomp (\succs(\nu),f,n)
    && \text{$\nu$ output note, send ${\tilde {m}_1}$ to the communication partner}
    \\
    (\nu,f, n) & \cospcomp (\nu',f,n)
    && \text{\parbox{8.3cm}{
    $\nu$ control node with out-metadata $l$,
     send $l$ to communication partner and receive in-metadata $l'$. If there is an edge with label $l'$, let $\nu'$ be the successor of $\nu$ along the edge
labeled $l'$; otherwise let $\nu'$ be the successor with the lexicographically smallest edge label.}}
\end{align*}
\vspace{-0.5em}
\end{framedFigure}

\begin{definition}[Computational implementation]\label{def:comp.impl}
  Let $\mathbf M$ $=(\mathbf C,\mathbf N,\mathbf T,\mathbf D)$ be a symbolic model. A
  \emph{computational implementation of $\mathbf M$} is a family of
  functions $\implementation = (A_x)_{x \in \mathbf C \cup \mathbf
    D\cup\mathbf N}$ such that
  $A_{F}$ for $F/n\in\mathbf C\cup\mathbf D$ is a partial
  deterministic function $\setN \times (\set{0,1}^*)^n\to\set{0,1}^*$,   and
  $A_N$ for $N \in \Nonce$ is a total probabilistic function with domain $\setN$ and range $\{0,1\}^*$.
  All functions $A_F$  have to be computable in
  deterministic polynomial time, and all $A_N$ have to be computable in
  probabilistic polynomial time (ppt).

  We extend the computational implementation to symbolic operations.
  The computational implementation of a symbolic operation $O\in\SO$ is defined  $A_O: (\bits*)^n \to \bits*$,
  recursively evaluating the tree $O$
  starting at the root as follows: The formal parameter $x_i$ computed
  to $i$th input value. A node with $F\in \C \cup \N_E \cup \D$ is computed according to $A_F$. If the function is undefined for any node, then the $A_O$ is undefined on these inputs.
  \end{definition}
\fi

\pparagraph{Computational execution} The \emph{computational execution} of a protocol is a randomized
interactive machine, called the \emph{computational challenger}, that traverses the protocol tree and interacts with 
a ppt attacker $\advE$: at
a computation node the corresponding algorithm is run and depending on
whether the algorithm succeeds or outputs $\bot$, 
 either the yes-branch or the
no-branch is taken; at an output node, the message is sent to the
attacker, and at an input node a message is received by the attacker;
at a control node
the attacker sends a command that specifies which branch to take.
The transcript of the execution contains the computational counterparts of a
symbolic view.\fullversionfootnote{%
    We stress that we not have to incorporate an explicit run-time polynomial to ensure overall
termination in polynomial time, since computational soundness in CoSP is based on tic-indistinguishability, which only requires indistinguishability for 
polynomially-long prefixes, see Section~\ref{section:computational-soundness}.}

\iffullversion

\begin{definition}[Computational execution] \label{def:computional-execution-challenger} \label{def:computional-execution}
  Let $\implementation$ be a computational implementation of the symbolic model $\Model=(\Const, \N, \Terms,\Dest)$ and $\proto$ be a
  CoSP protocol. For a security parameter $\secpar$, the \emph{computational challenger}
$\Exec_{\Model,\implementation,\proto}(\secpar)$
is an interactive Turing machine defined by repeatedly applying the
rules in Figure~\ref{fig:computational-execution}, starting from the
initial state $(\nu,f,n)$, where $\nu$ is the root of $\proto$, and
$f$ and $n$ are undefined partial functions from node identifiers to
bitstrings and from $\mathbf N$ to bitstrings, respectively.

We call the interaction between $\Exec_{\Model,\implementation,\proto}(\secpar)$ and an interactive ppt Turing machine $\attacker(\secpar)$ the \emph{computational execution}, and denote it as $\interact{\Exec_{\Model,\implementation,\proto}(\secpar)}{\attacker(\secpar)}$ using 
standard notation for interacting machines. The execution stops
whenever one of the two machines stops, and the output of $\interact{\Exec_{\Model,\implementation,\proto}(\secpar)}{\attacker(\secpar)}$ is the output of $\attacker(\secpar)$.
\end{definition}

By Definition~\ref{def:adl-semantical-domains}, there is an
efficiently computable injection from $\valuesDV\cup\heapsDV$ into
$\bits*$, hence we abuse notation by supplying values from domain in
lieu of bitstrings, and using the inverse of this injection to convert
bitstrings back into values from $\valuesDV\cup\heapsDV$.

The CoSP execution is used in cryptographic reductions; however, CoSP protocols are by
definition infinite. Hence, as is standard in cryptography,
we have to ensure the execution of a CoSP protocols is computable in polynomial-time. To this
end, we require that it is possible to incrementally compute the node information, i.e., the node
identifier, for each path through the CoSP protocol in polynomially many steps (in the length
of the path from the root node to the current node). We call such (bi-)protocols efficient.


\begin{definition}[Efficient Protocol]\label{def:cosp:efficient-protocol}
  We call a CoSP protocol
  \emph{efficient} if:
  \begin{itemize}
  \item There is a polynomial $p$ such that for any node $N$, the
    length of the identifier of $N$ is bounded by $p(m)$
    where $m$ is the length (including the total length of the
    edge-labels) of the path from the root to $N$.
  \item There is a deterministic polynomial-time algorithm that, given
    the identifiers of all nodes and the edge labels on the
    path to a node $N$, computes the identifier of $N$.
  \item There is a deterministic polynomial-time algorithm that, given the identifier of a control node $N$,
  the identifiers of all nodes and all edge labels on the path to $N$, computes the
  lexicographically smallest label of an edge (i.e., the in-metadata) of all edges that lead from $N$ to one of its successors.
  \end{itemize}
\end{definition}


\fi

\pparagraph{Computational indistinguishability}
\iffullversion%
For computational soundness results of equivalence properties in CoSP, the notion of computational indistinguishability that is prevalently used in
cryptography has to be refined to the notion of \emph{termination-insensitive computational indistinguishability} (tic-indistinguishability, in short)~\cite{Un11:tic-ind}.
In comparison to the standard notion of indistinguishability, tic-indistinguishability does not require the
interactive machines to be polynomial-time, but it instead only considers decisions that were made for polynomially-bounded prefixes of the interaction.
This excludes pathological cases in which programs of super-polynomial running time only differ in the response time (which an adversary might
not be able to react to in polynomial time) and in which cryptographic operations might be executed on inputs of super-polynomial length (for which
common cryptographic definitions do not give any guarantees).

\begin{definition}[Tic-indistinguishability]\label{def:tic-indistinguishability}
Given two machines $M,M'$ and a polynomial $p$, we write $\Pr[\interact{M}{M'}\downarrow_{p(\secpar)}x]$ for
the probability that the interaction between $M$ and $M'$ terminates within $p(\secpar)$ steps and $M'$ outputs $x$.

We call two machines $A$ and $B$ \emph{termination-insensitively computationally indistinguishable 
for a machine $\advE$ ($A\tic[\advE] B$)} if for all polynomials $p$, 
there is a negligible function $\mu$ such
that for all $z,a,b\in\bit$ with $a\ne b$, 
\begin{align*}
& \Pr[\interact{A(\secpar)}{\advE(\secpar,z)}\downarrow_{p(\secpar)} a]\\
& + \Pr[\interact{B(\secpar)}{\advE(\secpar,z)}\downarrow_{p(\secpar)} b]
 \le 1 + \mu(\secpar).
\end{align*}
Here, $z$ represents an auxiliary string. 
Additionally, we call $A$ and $B$ \emph{termination-insensitively computationally indistinguishable 
($A\tic B$)} if we have $A\tic[\advE] B$ for all polynomial-time machines $\advE$.

For two CoSP protocols $\proto_1,\proto_2$, a symbolic Model $\Model$, and implementation $\implementations$, if $\Model$ and $\implementations$ are clear from the context, we use for $\Exec_{\Model,\implementation,\proto_1}(\secpar) \tic \Exec_{\Model,\implementation,\proto_2}(\secpar)$ (for sufficiently large $k\in\setN$) the notation $\proto_1 \CoSPIndistinguishable \proto_2$.
\end{definition}

%
\else
\fi
\iffullversion
\subsection{Computational Soundness}~\label{section:computational-soundness}

The previous notions culminate in the definition of computational soundness for equivalence
properties. It states that the symbolic equivalence of efficient protocols implies
their computational indistinguishability.

\begin{definition}[Computational soundness]\label{def:computational-soundness-equivalence}
Let a symbolic model $\M$ and a class $\protoClass$ of efficient protocols be given. A computational implementation $\implementations$ of $\M$ is \emph{computationally sound}
for $\Model$ if for every pair $\proto_1,\proto_2\in\protoClass$ symbolic equivalence $\proto_1 \CoSPEquivalent \proto_2$ implies $\proto_1 \CoSPIndistinguishable \proto_2$, i.e., tic-indistinguishability (see Definition~\ref{def:tic-indistinguishability}) of their computational execution $\Exec_{\Model,\implementation,\proto_1}(\secpar) \tic \Exec_{\Model,\implementation,\proto_2}(\secpar)$.
\end{definition}
\else%
We finally define \emph{computational soundness} by requiring that symbolic equivalence implies computational indistinguishability.

\begin{definition}[Computational Soundness]\label{def:computational-soundness-equivalence}
 An implementation $\underline \implementation$ of a given symbolic
model $\M$ is \emph{computationally sound}
for $\Model$ and a class $\protoClass$ of efficient pairs of protocols if for every $\Pi \in \protoClass$, we have that $\Pi$ is computationally indistinguishable whenever $\Pi$ is
symbolically equivalent.
\end{definition}
\fi


\section{Dalvik Bytecode (Review)}\label{section:dalvik-bytecode-semantics}

The Abstract Dalvik Language (ADL)~\cite{LoMaStBaScWe_14:cassandra} constitutes the currently  most detailed and comprehensive operational
semantics for Dalvik in the literature, even though it does not encompass
concurrency and exceptions (as e.g.
in~\cite{EnGiChCoJuDaSh_14:taintdroid}). We refer to
Section~\ref{section:related-work} for further discussion on different Dalvik semantics. 
\iffullversion{We only provide a compact review of ADL and refer to~\cite{LoMaStBaScWe_14:cassandra} for more details.}
\else{Due to lack of space, we only provide a very compact review of ADL and refer to~\cite{LoMaStBaScWe_14:cassandra} for more details.}
\fi


\begin{framedFigure}{%
        Selection of inference rules that define $\redRelSt$ with $uop
        \in \mathcal{UNOP}, bop\in \mathcal{BINOP},
        rop\in\mathcal{RELOP}$, with some $n\in\setN$ counter for
        method calls, some ADL program $\proto$, and some method $m$.
        We write $v_a,\dots,v_e$ for $v_a,v_b,v_c,v_d,v_e$, and
        if $r=\defaultRegistersDV(u_1,\ldots,u_l)$, then 
        $r(v_i) = u_i$ for $i\in\set{0,\ldots,l}$ and $\voidDV$
        otherwise.
    }{fig:selected-rules}
\footnotesize
\begin{align*}
        \textsc{rConst}\colon &
        {s \redRelSt s\subst{\mathit{pp}+1, r[v_a \mapsto n]}} &&
        \text{for }
m[pp] = \progcmd{const} v_a,n 
\\
\textsc{rMove}\colon & 
s \redRelSt s\subst{pp+1,r[v_a \mapsto r(v_b)]} &&
\text{for~}
 m[pp] = \progcmd{move} v_a,v_b 
 \\
 \iffullversion%
 \textsc{rUnop}\colon & 
 s \redRelSt s\subst{pp+1,r[v_a \mapsto u]}&&
 \text{for~}
 m[pp] = \progcmd{unop} v_a,v_b,uop \text{~and~} u = \underline{uop}(r(v_b)) 
 \\
 \fi
 \textsc{rBinop}\colon & 
 s \redRelSt s\subst{pp+1,r[v_a \mapsto u]}&&
 \text{for~}
  m[pp] = \progcmd{binop} v_a,v_b,v_c,bop \text{~and~} u = r(v_b) ~\underline{bop}~ r(v_c) 
 \\
 \rIfTestT\colon & 
 s \redRelSt s\subst{pp+n}&&
 \text{for~}
  m[pp] = \progcmd{if\text{-}test} v_a,v_b,n,rop \text{~and~} r(v_a) ~\underline{rop}~ r(v_b) 
 \\
 \rIfTestF\colon & 
 s \redRelSt s\subst{pp+1}&&
 \text{for~}
  m[pp] = \progcmd{if\text{-}test} v_a,v_b,n,rop \text{~and~} \lnot(r(v_a) ~\underline{rop}~ r(v_b)) 
\\
\textsc{rMoveR}\colon &
s \redRelSt s\subst{pp+1,r[v_a \mapsto r(\resultLowerDV)]} &&
\text{for~}
m[pp] = \progcmd{move\text{-}result} v_a 
\\
\textsc{rISt}\colon &
     s
\redRelSt
s\left\{ 
\begin{aligned}
    & m'\cdot m \cdot ml,0\cdot pp \cdot ppl,\\
             & \defaultRegistersDV([r(v_a),\dots,r(v_e)])\cdot r \cdot rl
\end{aligned}
\right\}
&&
\begin{aligned}
\text{for~}
    m[pp] &= \progcmd{invoke\text{-}static} v_a,\dots,v_e,mid \\
    m' & = \lookupStaticDV_\proto(mid)
\end{aligned}
\\
\textsc{rIDR}\colon &
     s
\redRelSt
 s
 \left\{
     \begin{aligned}
         & m' \cdot m \cdot ml, 0 \cdot pp \cdot ppl,\\
 & \defaultRegistersDV([r(v_k),\dots,r(v_{k+n-1})]) \cdot r \cdot rl
     \end{aligned}
 \right\}
 &&
 \begin{aligned}
\text{for~}
m[pp] &= \invokedirectr v_k,n,mid \\
m' & = \lookupDirectDV_\proto(\mathit{mid},h(r(v_k)).\mathsf{class})
 \end{aligned}
\\
\iffullversion
\rReturnVoid\colon
&
\stateDV[m\cdot ml, h, pp\cdot pp'\cdot ppl,r \cdot r' \cdot rl, as]
    \redRelSt  
&&
    \text{for $m[pp] = \progcmd{return\text{-}void}$ and $ml'\neq
    ()$}
    \\
    & \langle ml,h,(pp'+1)\cdot ppl,\\
& r'[\resultLowerDV \mapsto \lowerDV(\voidDV), \resultUpperDV \mapsto \upperDV(\voidDV)] \cdot rl, as \rangle
\\
\fi
\rReturn\colon &
\begin{aligned}
    & \stateDV[{m\cdot ml,h,pp\cdot pp'\cdot ppl,r\cdot r'\cdot rl,as}]
    \redRelSt \\
    & \langle ml,h,(pp'+1)\cdot ppl,r'[r''] \cdot rl, as \rangle
\end{aligned}
    && 
    \begin{aligned}
    \text{for~}
    m[pp] & = \progcmd{return} v_a\text{~and~} ml'\neq () \\
    r'' & = \resultLowerDV \mapsto \lowerDV(r(v_a)), \resultUpperDV \mapsto \upperDV(r(v_a))
    \end{aligned}
\\
\iffullversion%
\rReturnVoidF\colon &
    \stateDV[(m),h,(pp),(r),as]
    \redRelSt 
    \stateDV[\voidDV,h,as]
 &&
    \text{for~}
m[pp] = \progcmd{return\text{-}void}
\\
\fi
\rReturnF\colon &
     \stateDV[(m),h,(pp),(r),as]
    \redRelSt 
    \stateDV[r(v_a),h,as]
&&
    \text{for~}
m[pp] = \progcmd{return} v_a
    \end{align*}
\iffullversion
for $v_a,v_b,v_c,v_d,v_e \in \mathcal{X}$, $n \in \mathbb{N}$, $mid \in \mathcal{MID}$,  $uop \in \mathcal{UNOP}$, $bop \in \mathcal{BINOP}$, $rop \in \mathcal{RELOP}$ and $u\in\calN$.
\fi
\end{framedFigure}

\iffullversion%
\subsection{Syntax of ADL}

%
\begin{framedFigure}{The set $\mathcal{INSTR}$ of ADL Instructions (excluding all \progcmdraw{\text{-}wide} and \progcmdraw{\text{-}range} variants of commands)}{fig:dalvik-instructions}
	\codesize
\setlength{\columnsep}{0cm}
\begin{multicols}{3}
~\vspace{-2em}
\begin{align*}
&\text{Arithmetic instructions}\\
&\progcmd{move} v_a, v_b\\
&\progcmd{const} v_a, n\\
&\progcmd{cmp} v_a, v_b, v_c\\
&\progcmd{unop} v_a,v_b,uop\\
&\progcmd{binop} v_a,v_b,v_c,bop\\
&~\\
&\text{Array-related instructions}\\
&\progcmd{array\text{-}length} v_a, v_b\\
&\progcmd{new\text{-}array} v_a, v_b\\
&\progcmd{filled\text{-}array} v_a, v_b, v_c, v_d, v_e, n\\
&\progcmd{fill\text{-}array\text{-}data} v_a, u_0, \dots, u_n\\
&\progcmd{aget} v_a, v_b, v_c\\
&\progcmd{aput} v_a, v_b, v_c\\
\end{align*}
\begin{align*}
&\text{Control flow instructions}\\
&\progcmd{nop}\\
&\progcmd{goto} n\\
&\progcmd{if\text{-}test} v_a, v_b, n, rop\\
&\progcmd{if\text{-}testz} v_a, n, rop\\
&\text{Object-related instructions}\\
&\progcmd{instance\text{-}of} v_a, v_b, cl\\
&\progcmd{new\text{-}instance} v_a, cl\\
&\progcmd{const\text{-}string} v_a, s\\
&\progcmd{const\text{-}class} v_a, cl\\
&\progcmd{iget} v_a, v_b, fid\\
&\progcmd{iput} v_a, v_b, fid\\
&\progcmd{sget} v_a, fid\\
&\progcmd{sput} v_a, fid\\
\end{align*}
\begin{align*}
&\text{Method-related instructions}\\
&\progcmd{invoke\text{-}virtual} v_a, v_b, v_c, v_d, v_e, n, mid\\
&\progcmd{invoke\text{-}super} v_a, v_b, v_c, v_d, v_e, n, mid\\
&\progcmd{invoke\text{-}direct} v_a, v_b, v_c, v_d, v_e, n, mid\\
&\progcmd{invoke\text{-}interface} v_a, \ldots,  v_e, n, mid\\
&\progcmd{invoke\text{-}static} v_a, v_b, v_c, v_d, v_e, n, mid\\
&\progcmd{move\text{-}result} v_a\\
&\progcmd{return\text{-}void}\\
&\progcmd{return} v_a
\end{align*}
\end{multicols}
for $v_a,v_b,v_c,v_d,v_e \in \mathcal{X}$, $n \in \mathbb{N}$, $mid \in \mathcal{MID}$, $fid \in \mathcal{FID}$, $cl \in \mathcal{CID}$, $uop \in \mathcal{UNOP}$, $bop \in \mathcal{BINOP}$, $rop \in \mathcal{RELOP}$, and $u_i \in \mathcal{N} \cup \mathcal{L}_c$ (for $i\in\mathcal{N}_0$).
\end{framedFigure}
%

ADL uses as syntactical domains seven underspecified sets and several sets of unary, binary, and relation operations.


\begin{definition}[Syntactical domains of ADL]\label{def:syntactical-domains}

The following sets constitute the \emph{syntactical domains of
ADL}.\medskip

\noindent{
\codesize
\begin{tabular}{rlrl}
$\mathcal{CONV}$ &: set of symbols for type-casting operations
                 & \multicolumn{2}{l}{underspecified sets}\\
$\mathcal{UNOP}$ &:= $\set{-,\lnot} \cup \mathcal{CONV}$ \text{unary operations}
              & $\mathcal{CID}$ &: set of class names\\
$\mathcal{BINOP}$ &:= $\set{+,-,*,/,\%, \land, \lor, \oplus, <<, >>, >>>}$
              & $\mathcal{FID}$ &: set of field names\\
& \phantom{:= } \text{binary operations}
              & $\mathcal{MID}$ &: set of method names\\
$\mathcal{RELOP}$ &:= $\set{=,\neq,<,>,\le,\ge}$  \text{relations}
              & $\mathcal{S}$ &: set of string symbols\\
$\mathcal{X}$ &:= \set{v_i \mid i \in \setN} \text{register names} 
              & $\mathcal{N}$ &: set of numerical symbols\\
              &
              & &\phantom{: } (e.g., integers and floating point numbers)\\
              &
              & $\mathcal{L}_c$ &: set of constant memory locations\\
              &
              & $\mathcal{F}$ &: set of fields
\end{tabular}}\medskip

\noindent
We require that $\mathcal{MID}$,$\mathcal{CID}$, and $\mathcal{FID}$ are mutually disjoint. A given program typically only uses a subset of the underspecified sets. In these cases, we use the notation $K_y$ to denote that $K_y$ is some subset of $K$ ($K_y \subseteq K$) specific to $y$, for all $K\in\set{\mathcal{CID},\mathcal{FID},\mathcal{MID},\mathcal{S}, \mathcal{N}, \mathcal{F}}$.
\end{definition}

Many ADL instructions have a $\progcmdraw{\text{-}wide}$ variant for 2-register-values and a $\progcmdraw{\text{-}range}$ variant where instead of 5 registers only the starting register and a range $n$ is given and the arguments are read of the $n$ subsequent registers, beginning from the starting register. Figure~\ref{fig:dalvik-instructions} depicts the set of ADL instructions but omits all $\progcmd{\text{-}wide}$ and $\progcmd{\text{-}range}$ variants for the sake of readability, as they are treated analogously. The complete set of instructions can be found in~\cite{LoMaStBaScWe_14:cassandra}.

With the syntactical domains (Definition~\ref{def:syntactical-domains}) and the set of instructions (Figure~\ref{fig:dalvik-instructions}) at hand, we now define the syntax of ADL methods and of ADL programs.

\begin{definition}[ADL methods]\label{def:dalvik-method}
The set $\methodsDV$ of \emph{ADL methods} is defined by $\methodsDV := \mathcal{INSTR}^* \setminus []$, where $\mathcal{INSTR}$ is the set of instructions (Figure~\ref{fig:dalvik-instructions}).
\end{definition}

ADL uses five partial \emph{lookup functions}. These lookup functions refer to the method names that return the field ($\mathfun{lookup\text{-}field}_{\proto}$) and the instruction set with respect to the method's name: for static methods ($\lookupStaticDV_{\proto}$), normal methods ($\mathfun{lookup\text{-}direct}_\proto$), inherited methods ($\mathfun{lookup\text{-}super}$), and virtual methods ($\lookupVirtualDV_{\proto}$).

\begin{definition}[ADL programs]\label{def:adl-program}
An \emph{ADL program} $\proto$ is a tuple 
\begin{center}
\begin{align*}
\proto := ~& (\mathcal{CID}_\proto, \mathcal{FID}_\proto, \mathcal{MID}_\proto, \mathcal{M}_\proto,\mathcal{F}_\proto,\\
& \mathfun{lookup\text{-}field}_{\proto}, \lookupStaticDV_{\proto}, \mathfun{lookup\text{-}direct}_\proto, \mathfun{lookup\text{-}super}_\proto, \lookupVirtualDV_{\proto}), \text{where}
\end{align*}
$\mathcal{CID}_\proto \subseteq_{fin} \mathcal{CID}, \mathcal{FID}_\proto \subseteq_{fin} \mathcal{FID}, \mathcal{MID}_\proto  \subseteq_{fin} \mathcal{MID}, \mathcal{M}_\proto  \subseteq_{fin} \mathcal{M},\mathcal{F}_\proto \subseteq_{fin} \mathcal{F}$,
\begin{align*}
\mathfun{lookup\text{-}field}_{\proto} & : \mathcal{FID}_\proto \rightarrow \mathcal{F}_\proto,\\
\lookupStaticDV_{\proto} & : \mathcal{MID}_\proto \rightharpoonup \mathcal{M}_\proto,\\
\mathfun{lookup\text{-}direct}_\proto & : \mathcal{MID}_\proto \times \mathcal{CID}_\proto \rightharpoonup \mathcal{MID}_{\proto},\\
\mathfun{lookup\text{-}super}_\proto  & : \mathcal{MID}_\proto \times \mathcal{CID}_\proto \rightharpoonup \mathcal{MID}_{\proto},\\
\lookupVirtualDV_{\proto}             & : \mathcal{MID}_\proto \times \mathcal{CID}_\proto \rightharpoonup \mathcal{MID}_{\proto}
\end{align*}
\end{center}
\end{definition}

%
\else%
\pparagraph{Syntax}
ADL methods are non-empty sequences of instructions and together constitute a
set $\methodsDV$. A method name $\mathit{mid}\in\midDV$ and a class
name $c\in\mathcal{CID}$ identify the method to be called by means
of lookup tables, \ie, partial functions mapping $(c,\mathit{mid})$ to
normal methods ($\mathfun{lookup\text{-}direct}_\proto$), inherited
methods ($\mathfun{lookup\text{-}super}$), or virtual
methods($\lookupVirtualDV_{\proto}$). In contrast, static methods
($\lookupStaticDV_{\proto}$) are identified by method name alone.
Similarly, fields ($\fieldsDV$) are referenced via $\mathfun{lookup\text{-}field}_{\proto}$.
\fi

\iffullversion%
\subsection{The Semantics of ADL}

In this section, we review ADL's operational semantics as far as
needed to understand and state our results; in particular, we only
provide a partial list of ADL's instructions. We refer to the original
ADL paper for further information~\cite{LoMaStBaScWe_14:cassandra}. 

The execution relation defines the operational semantics of ADL. Before we define the execution relation, we introduce the semantical domains of ADL,
A \emph{state}, also called an \emph{intermediate state}, in ADL
consists of a heap $h$, a program counter $pp$, and a set of register
values $r_1, \dots, r_n$, and is extended here by some adversarial
state $\mathit{as}$.
 Formally defining an ADL state and an ADL program requires several additional semantical domains sets and functions.
First, we define the set of \emph{registers}. 
 Let $\locationsDV = \locationsDV_v \cup \locationsDV_c$ denote
 the set of \emph{memory locations}, where  $\locationsDV_v$  is an underspecified set of \emph{variable locations} and $\locationsDV_c$ an underspecified set of \emph{constant memory locations}, with $\locationsDV_v \cap \locationsDV_c = \emptyset$. 
For the purpose of this work, we assume a total order on $\locationsDV$.
 Let $\mathcal{V} = \mathcal{N} \cup \locationsDV \cup \set{\voidDV}$ denote the set
 of \emph{values}, where $\mathcal{N}$ is an underspecified, but finite, set of \emph{numerical values} and $\voidDV$ a distinguished return-value for void-methods. Let $\mathcal{R}=\setN \cup \set{\resultLowerDV,\resultUpperDV} \rightarrow \mathcal{V}$ be the set of \emph{registers}, where $\resultLowerDV$ and $\resultUpperDV$ are reserved registers for return values (split into an upper and a lower part) of method calls. To define the set of heap states, let $\mathcal{AR} = \setN \times (\setN \rightharpoonup \mathcal{V})$. Then the set $\mathcal{H}$ of \emph{heaps} is defined as $\mathcal{H} := (\locationsDV \rightharpoonup (\mathcal{O} \cup \mathcal{AR}))$. Altogether, these notions are referred to as the \emph{semantical domains} of ADL.

Throughout the paper, we distinguish between configurations and states. A configuration describes the state of the program without the attacker. Later, we also exclude the state of the library from the configurations. In ADL, configurations are elements from the set ${\cal C}'$. In particular, we distinguish initial states and initial configurations.

\begin{definition}[Semantical domains]
    \label{def:adl-semantical-domains}
The semantical domains of ADL programs are defined by
\begin{align*}
	\locationsDV & = \locationsDV_c \cup \locationsDV_v & \text{locations} \\
	\valuesDV & = \mathcal{N} \cup \locationsDV \cup \set{\voidDV} & \text{values} \\
	\objectsDV & = \mathcal{CID} \times (\mathcal{F} \rightharpoonup \valuesDV) & \text{objects} \\
	\arraysDV & = \setN \times (\setN \rightharpoonup \valuesDV) & \text{arrays} \\
	\mathcal{X}_{\text{res}} & = \set{\resultLowerDV, \resultUpperDV} & \text{reserved registers} \\
	\mathcal{R} & = (\mathcal{X} \cup \mathcal{X}_{res}) \rightarrow \valuesDV & \text{register states} \\
	\mathcal{H} & = (\locationsDV \rightharpoonup (\mathcal{O} \cup \mathcal{AR})) & \text{heaps} \\
	\mathcal{C}' & = 
    \methodsDV \times \heapsDV \times \setN \times \mathcal{R} \times \advss
    & \text{intermediate configuration} \\
	\mathcal{C} & = 
    \methodsDV \times \heapsDV \times \setN \times \mathcal{R} \times \advss
    & \text{intermediate state} \\
	\mathcal{C}_{\text{final}} & = \valuesDV \times \heapsDV \times \advss 
    \cup \advrespDV
    & \text{final state}
\end{align*}
where $\locationsDV_v$ with $\locationsDV_v \cap \locationsDV_c = \emptyset$ is the set of variable locations, $\voidDV$ is a special value such that $\voidDV \not\in (\mathcal{N} \cup \locationsDV)$, $\resultLowerDV$ and $\resultUpperDV$ are special registers such that $\resultLowerDV,\resultUpperDV \not\in \mathcal{X}$, and $\cal Q$ is the state space of the adversary (see Section~\ref{section:threat-model}). Moreover, there is an efficiently computable injection from $\valuesDV\cup\heapsDV$ into bitstrings, which can be efficiently inverted on its range.
\end{definition}

\else%
\pparagraph{Semantics}
The semantical domains of ADL programs are defined by a set of
locations $\locationsDV$, numerical values $\mathcal{N}$ and a special
value $\voidDV$, which together form the set of values $\valuesDV$.
Objects carry their class name and a partial function from fields to
values forming the set $\objectsDV$; similarly, arrays carry their
length and a partial function from indices in $\setN$ to values.  
Given a set of registers $\mathcal{X}$ (including two reserved
registers from a set $\mathcal{X}_{res}$, see below), the register state is defined as
$\mathcal{R} = (\mathcal{X} \cup \mathcal{X}_{res}) \rightarrow \valuesDV$. The space of heaps is
$\mathcal{H} = (\locationsDV \rightharpoonup (\mathcal{O} \cup \mathcal{AR}))$ where $\mathcal{AR} = \setN \times (\setN \rightharpoonup \mathcal{V})$. A configuration describes the state of the program without the attacker: $\mathcal{C}'  := 
\methodsDV \times \heapsDV \times \setN \times \mathcal{R} \times \advss$. The full intermediate states additionally contains the attacker states $\mathcal{Q}$ (see Section~\ref{sec:securityframework}): $\mathcal{C}  = 
\methodsDV \times \heapsDV \times \setN \times \mathcal{R} \times \advss$.
Final states are defined as:
($ \mathcal{C}_{\text{final}} = \valuesDV \times \heapsDV \times \advss 
\cup \advrespDV$), where $\advrespDV$ denotes the set of
malicious functions (see Section~\ref{sec:securityframework}).

Throughout the paper, we explicitly distinguish between configurations and states, in particular initial states and initial configurations.
\fi%

\pparagraph{The execution relation $\reduction$}
The operational semantics is defined in terms of an execution relation
$\reduction$ (for an ADL program $\proto$, which we assume
fixed in this section). 
%
For the sake of illustration, Figure~\ref{fig:selected-rules} contains
a representative selection of the rules defining $\reduction$.
For the full set of rules, we refer to the work of Lortz et al.~\cite{LoMaStBaScWe_14:cassandra}.

We use the following notation to shorten presentation and highlight the
modifications applied to the state.
For a state $s = \stateDV\in\mathcal{C}$, we use 
$s\subst{pp+1}$ to denote 
$\stateDV[m\cdot ml,h,pp+1\cdot ppl,r\cdot rl,as]$. Similarly 
$s\set{r[v\mapsto a]}$ denotes 
$\stateDV[{m\cdot ml,h,pp\cdot ppl,r[v\mapsto a]\cdot rl,as}]$, 
$s\set{h[l\mapsto a]}$ denotes $\stateDV[{m\cdot ml,h[l\mapsto a],pp\cdot ppl,r\cdot rl,as}]$, 
$s\set{m'}$ denotes $\stateDV[{m'\cdot ml,h,pp\cdot ppl,r\cdot rl,as}]$,
and $s\set{as'}$ denotes $\stateDV[{m\cdot ml,h,pp\cdot ppl,r\cdot rl,as'}]$.

The relation defines
constant assignment (\textsc{rConst}),
copying of register values (\textsc{rMove}),
binary operations (\textsc{rBinop}), 
conditional branching (\textsc{rIfTestTrue} and
\textsc{rIfTestFalse}).
Moreover, we depict rules for static method evaluation (\textsc{rISt}) and
evaluation of final methods (\textsc{rIStR}).  
Return values  are stored in distinct
result register $\resultLowerDV, \resultUpperDV \in \mathcal{X}_{res}$
\ffootnote{Although ADL includes two
distinct result registers $\resultLowerDV$ and $\resultUpperDV$ exist, the
upper register $\resultUpperDV$ is only used for the
$\progcmd{move\text{-}result\text{-}wide}$ instruction, where the
return value is copied in two parts.}%
(see \textsc{rReturn}).

We slightly diverge from the
characterization of method calls in \cite{LoMaStBaScWe_14:cassandra} to capture the total
number of computation steps in a run: each transition corresponds to
one computation step, and each rule in Figure~\ref{fig:selected-rules} is
annotated accordingly ($\redRelSt$  instead of $\reduction_\proto$). 



\section{Security framework}
\label{sec:securityframework}
We extend ADL with prbabilistic choices and with a probabilistic polynomial-time attacker that is invoked whenever specific functions are invoked. The modifications to the ADL-semantics are depicted in Figure~\ref{fig:adl-prob-adv}.

\pparagraphorsubsection{Execution and communication model}
ADL as defined in \cite{LoMaStBaScWe_14:cassandra} does not support probabilistic
choice and hence no generation of cryptographic keys
and no  executions of cryptographic functions. We thus first extended ADL with a
rule that uniformly samples a register value from
the set of numerical values $\calN$ (see the \Prob-rule in Figure~\ref{fig:adl-prob-adv}).

\iffullversion
To simplify presentation, we
interpret the ADL's semantics as a (generative) probabilistic transition system,
assuming probability $1$ and number of computation steps $1$
for all transitions in
Figure~\ref{fig:selected-rules} and Appendix~\ref{app:adl}.
The symbolic variant of ADL, presented in the next section, will
simplify the adversary by means of a deduction relation, i.e.,
non-deterministic choice over all message deducible by the adversary.
To be able to capture the non-determinism in the symbolic variant, as
well as the probabilism in the computational variant, we chose a model
of a probabilistic transition system similar to the model introduced
by Vardi under the name \emph{concurrent Markov
chains}~\cite{Vardi:1985:AVP:1382438.1382865}, but recast in terms of
a transition labelled system (as opposed to state labelled), and with
the restriction that probabilistic choices are always unlabelled,
which simplifies the definition of parallel composition, which we will
later use to decouple, and substitute attacker as well as library.
Hence, the definition of the distribution of traces below applies to
any probabilistic transition system,
including the split-state compositions from Section~\ref{sec:split-state}.
We furthermore annotate both probabilistic and non-deterministic steps with
the number of computation steps in $\setN$, to be able to argue about
the runtime of a system. 

\begin{definition}[Probabilistic transition system]\label{def:probabilistic-transition-system} 
    A \emph{probabilistic transition system} is a quadruple
    $(\states,s_0,\actions,\delta)$ consisting of
    \begin{itemize}
        \item a set of states $\states$,
        \item an initial state $s_0 \in \states$,
        \item a set of actions $\actions$, and
        \item a transition function $\delta \colon S \to 
            \calD(S \times \setN )  
            \uplus \calP(A\times S \times \setN)$.
    \end{itemize}
    Given $s\in \states$ and $\delta(s) = \mu$,
    we write $ s \predRel{n}{[p]} s'$
    if  $\mu(s',n)=p$.
    If $(a,s',n) \in \delta(s)$, we
    write $s \predRel{n}{a} s'$. In the first case, we speak of
    a probabilistic transition (and a probabilistic state $s$), in the
    second, we speak of a non-deterministic transition (and
    a non-deterministic state $s$).
    If a state $s$ is non-deterministic and
    $\delta(s)=\emptyset$, we also call this state \emph{final}. 
    If a state $s$ is non-deterministic and
    $\delta(s)$ a singleton set or empty set,
    or if $s$ is probabilistic and
    $\delta(s)$ is the Dirac distribution, 
    we also call $s$ \emph{deterministic}. 
    A probabilistic transition system
    is fully probabilistic if all of its states are either
    probabilistic or deterministic.
\end{definition}

Here, $\calD(\Omega)$ denotes the set of all discrete probability
distributions on $\Omega$. A discrete probability distribution on
$\Omega$ is a function $\mu\colon \Omega \to [0,1]$ such that $\set{x
\in \Omega \mid \mu(x)>0}$ (also called the support of $\mu$, denoted
$\supp(\mu)$) is finite or countably infinite, and 
$\sum_{x\in \Omega} \mu(x)  = 1$. We use $\mu(X)$ as short-hand for
$\sum_{x\in X}\mu(x)$ for $X\subseteq \Omega$. For $x\in\Omega$, 
let $\mu^1_x$ denote the Dirac distribution at $x$, i.e., the
distribution with $\mu^1_x(x)=1$.

Having this model in place,
given an ADL program $\proto$,
we interpret $\proto$ as a transition system with
a transition 
$\stateDV \predRel{n}{[p]} \stateDVsSt{'}$
wherever 
$\stateDV \predRel{\Pi,n,p}{~} \stateDVSt{'}$.

\begin{example}[ADL transition system]\label{ex:adl}
    Given an ADL program
    $\Pi$,
    an initial configuration $\stateDVs$,
    and 
    an attacker system $(S_A,s^0_A,\emptyset,\delta_A)$
    in
    $\set{\adv^\secpar}_{\secpar\in\setN}\in\advM$,
    let $\ADL_{\Pi,\adv,\stateDVs}$ 
    be the probabilistic transition system 
    $(\states,\stateDVs,\emptyset,\delta)$,
    where 
    \begin{itemize}
        \item $\states = 
            \calC
            \uplus
            \calC_{\text{final}}
            \uplus
            \advss_\mathit{final}$,
        \item  $s_0 = \stateDV[mk,h,ppl,rl,s^A_0]$,
        \item  
            $\delta(s) = \mu^1_{s',1}$ for $s$ and $s'$,
            according to the rules described in
            Figure~\ref{fig:selected-rules} and Appendix~\ref{app:adl} 
            (observe that at most one rule applies to each state, and that
            each rule uniquely determines the follow-up state),
        \item 
            $\delta(s) = \mu_{s,v_a,\calN}$ where 
            $\mu_{s,v_a,\calN}(s\{r[v_a \mapsto n\},1)=\frac{1}{\calN}$
                for all $n\in\calN$, if
                $s$  matches the pre-condition of \Prob{} in
            Figure~\ref{fig:adl-prob-adv},
        \item 
            $\delta(s) = \mu_{s,\adv}$, where
            $\mu_{s,\adv}
            ( s\subst{pp+1,r[\resultLowerDV \mapsto \lowerDV(\mathit{res}), \resultUpperDV \mapsto
            \upperDV(\mathit{res})],as'},n+1)
            = 
  \Pr [
       \Exec(T) = \langle (\mathit{mid},r(v_a), \dots, r(v_e)),\mathit{as},\epsilon \rangle
            \rightarrow_A q_1 \rightarrow_A \cdots  \rightarrow_A
            q_{n-1}
        \rightarrow_A  \langle i,\mathit{as'},\mathit{res} \rangle ]$
        and
            $\mu_{s,\adv}
            ( \langle \mathit{res} \rangle)
            =  
  \Pr [
       \Exec(T) = \langle (\mathit{mid},r(v_a), \dots, r(v_e)),\mathit{as},\epsilon \rangle
            \rightarrow_A q_1 \rightarrow_A \cdots  \rightarrow_A
            q_{n-1}
        \rightarrow_A  \langle i,\mathit{as'},\mathit{res} \rangle ] $
            for $s$ according to the preconditions of 
            \AdvInv and \AdvRet in the same figure. (Note that
            $s\in \calC$.)

        \item $\delta(\stateDV[u,h,\mathit{as}]) = \mu_\mathit{as}$
            where $\mu_\mathit{as} = 
            \Pr [
                \Exec(T) = \langle (),\mathit{as},\epsilon \rangle
                \rightarrow_A q_1 \rightarrow_A \cdots  \rightarrow_A
                q_{n-1}
                \rightarrow_A \langle i,\mathit{as'},\mathit{res} \rangle
            ]$ and
            $s$ according to the preconditions of \AdvFin.
            (Note that
            $s\in \advss_\mathit{final}$.)
    \end{itemize}
\end{example}

\begin{definition}[Trace distribution (probabilistic)]%
    \label{def:traces-dist}
    Given a fully probabilistic transition system
    $T=(\states,s_0,A,\delta)$, 
    where $\states$ is finite or countably infinite
    and $\epsilon \notin A$,
    we define
    $
        \Pr [ s \xrightarrow{\alpha}_n s' ] = p 
        $
        iff. $s \predRel{n}{[p]} s' \land \alpha=\epsilon$
        or $s \predRel{n}{\alpha} s' \land p=1$.
    We define the outcome probability of an execution as follows:
    \[
        \Pr [ 
        \Exec(T)  =
         s_0 \xrightarrow{\alpha_1}_{n_1} \cdots \xrightarrow{\alpha_n}_{n_n} s_n ]
        =
        \prod^{n-1}_{i=0} \Pr[ s_i \xrightarrow{\alpha_{i+1}} s_{i+1} ].
        \]
    The probability to reach a certain state within $n$ steps is
    defined
    \[ 
        \Pr[T \downarrow_n s_m ] = 
        \sum 
        \Pr \left[ \Exec(T) = s_0 \xrightarrow{\alpha_1}_{n_1} \cdots \xrightarrow{\alpha_m}_{n_m} s_m 
    \text{ and  
    $\sum_{i=1}^{m} n_i \le n$
    }
    \right],
    \]
    and the probability of a trace $(a_1,\ldots,a_n)$ as
    \[
        \Pr [ \Traces(T)  = (a_1,\ldots,a_n) ]
        =
        \sum_{ (\alpha_1,\ldots,\alpha_m)|_A = (a_1,\ldots,a_n)}
        \Pr [ \Exec(T) = s_0 \xrightarrow{\alpha_1}_{m_1} \cdots \xrightarrow{\alpha_m}_{m_m} s_m ].
    \]
\end{definition}

\paragraph{Notational conventions.}
\label{par:notational_conventions_traces}

Within $\Pr[\cdot]$, we abbreviate $s \xrightarrow{\alpha}_n s' $ with
$s \xrightarrow{\alpha} s'$ if $n=1$. We furthermore use $\alpha$
to indicate a transition that might be unlabelled, i.e.,
starts from a probabilistic state, in which case
$\alpha=\epsilon$. We use $a$ instead of $\alpha$, to indicate that the parting
state is indeed non-deterministic.

\else
To simplify presentation, we then  interpret ADL's semantics as
a probabilistic transition system in the following manner.
    A probabilistic transition system
    $T=(\states,s_0,A,\delta)$,
    defines probabilistic and non-deterministic transitions
    that are annotated with a number of
    computation steps, i.e.,
    $\delta \colon S \to 
            \calD(S \times \setN )  
            \uplus \calP(A\times S \times \setN)$.
    We write $ s \predRel{n}{[p]} s'$
    if  $\delta(s) = \mu$ and $\mu(s',n)=p$,
    and
    $s \predRel{n}{a} s'$ if
    $(a,s',n) \in \delta(s)$.
    If $p=1$ or $n=1$, we omit $[p]$ or $n$, respectively.
   If $S$ is finite or countably infinite,
   and 
   $T$ is \emph{fully probabilistic}, i.e., 
   there is no non-deterministic transition with more than one
   follow-up state, we 
   introduce the following random
   variables:
    $
        \Pr [ s \xrightarrow{\alpha}_n s' ] = p 
        $
        iff. $s \predRel{n}{[p]} s' \land \alpha=\epsilon$
        or $s \predRel{n}{\alpha} s' \land p=1$.
    Given any initial state $s_0\in\states^0$, we define the \emph{outcome
    probability} of an execution
    $
        \Pr [ 
        \Exec(T)  =
        s_0 \xrightarrow{\alpha_1}_{n_1} \cdots \xrightarrow{\alpha_n}_{n_n} s_n ]
        =
        \prod^{n-1}_{i=0} \Pr[ s_i \xrightarrow{\alpha_{i+1}} s_{i+1} ]
        $
    and the \emph{probability of a trace} $(\alpha_1,\ldots,\alpha_n)$ as
    $
        \Pr [ \Traces(T)  = (\alpha_1,\ldots,\alpha_n) ]
        =
        \sum
        \Pr [ \Exec(T) = s_0 \xrightarrow{\alpha_1}_{n_1} \cdots \xrightarrow{\alpha_n}_{n_n} s_n ].
        $

\fi

\pparagraphorsubsection{Threat model}\label{section:threat-model}
In this work, we consider adversaries that are network parties or malicious apps that try to retrieve sensitive information from an honest app.
We model the adversary as an external entity that cannot run any code
within the program and that does not have access to the program's
heap, but can read input to and control output from a given set of
\emph{malicious functions} $\advrespDV$. The attacker is a probabilistic polynomial-time
algorithm $\adv$. 

We represent the attacker $\adv$ in ADL as an unlabelled probabilistic transition system. We assume that each state is a triple, where
the first element solely contains the inputs
and the last element contains the outputs. We use the relation $\rightarrow_A$ to denote transitions in the attacker's transition system in Figure~\ref{fig:adl-prob-adv}.
Many computation models can be expressed this way, including, but not
restricted to, Turing machines.

Whenever a malicious function is invoked, $\adv$ is executed with its previous state $as$ and the arguments
$arg$ of the malicious function. The output of $\adv$ is the new
state $as'$ and a response-message 
$\mathit{res}$ in  $\valuesDV$, or in the set $\advrespDV$, which is
distinct from $\valuesDV$.
If $\mathit{res}\in\valuesDV$, it is interpreted as the function's output
values, otherwise, \ie, if the response message is in $\advrespDV$,
the execution terminates with the adversarial
output $\mathit{res}\in\advrespDV$.
Figure~\ref{fig:adl-prob-adv} precisely defines this behavior in
the rule rInvoke-Adv.
%


\begin{framedFigure}{Inference rules
    extending ADL with probabilistic semantics and adversarial
interaction.}
   {fig:adl-prob-adv}
\begin{center}
\begin{inferenceRules}
\inference[\Prob]
{m[pp] = \progcmd{rand} v_a & n'\in \calN }
{s \predRel{~}{\left[\frac{1}{|\calN|}\right]} s\subst{r[v_a \mapsto n']}}\\
\end{inferenceRules}\\[2ex]

\begin{inferenceRules}
\inference[\AdvInv]%
{%
m[pp] = \invokestatic~ v_a,\dots,v_e,\mathit{mid} 
& \mathit{mid} \in \Mal_\mathit{static}
 & \mathit{res}\in\valuesDV
\\
  \Pr [
       \Exec(T) = \langle (\mathit{mid},r(v_a), \dots, r(v_e)),\mathit{as},\epsilon \rangle
            \rightarrow_A q_1 \rightarrow_A \cdots  \rightarrow_A
            q_{n-1}
        \rightarrow_A  \langle i,\mathit{as'},\mathit{res} \rangle 
    ] = p
}%
{%
    s 
\predRel{n+2}{[p]}~
s\subst{pp+1,r[\resultLowerDV \mapsto \lowerDV(\mathit{res}), \resultUpperDV \mapsto
\upperDV(\mathit{res})],as'}
}
\\[2em]
\inference[\AdvRet]%
{%
m[pp] = \invokestatic~ v_a,\dots,v_e,\mathit{mid}
& \mathit{mid} \in \Mal_\mathit{static}
 & \mathit{res}\in\advrespDV
\\
  \Pr [
       \Exec(T) = \langle (\mathit{mid},r(v_a), \dots, r(v_e)),\mathit{as},\epsilon \rangle
            \rightarrow_A q_1 \rightarrow_A \cdots  \rightarrow_A
            q_{n-1}
        \rightarrow_A  \langle i,\mathit{as'},\mathit{res} \rangle
    ] = p
}%
{%
\parbox{8cm}{$s 
\predRel{n+1}{[p]}~
\langle \mathit{res} \rangle$}%
}
\\[4ex]
\inference[\AdvFin]%
{%
  \Pr [
      \Exec(T) = \langle (),\mathit{as},\epsilon \rangle
            \rightarrow_A q_1 \rightarrow_A \cdots  \rightarrow_A
            q_{n-1}
        \rightarrow_A \langle i,\mathit{as'},\mathit{res} \rangle
    ] = p
 & \mathit{res} \in\advrespDV
}%
{%
    \stateDV[u,h,\mathit{as}]
\predRel{n+2}{[p]}
    \stateDV[\mathit{res}]
}
\end{inferenceRules}
\end{center}
\iffullversion
\scriptsize
for $v_a,v_b,v_c,v_d,v_e \in \mathcal{X}$.
\fi
\end{framedFigure}

\iffullversion
\TODOR{We use terminology that is only introduced later.. should we
move the third section before this one?} 
\begin{definition}[Attacker]\label{def:attacker}
    An attacker $\adv$ is a family of 
    fully probabilistic 
    transition systems (cf.
    Definition~\ref{def:probabilistic-transition-system}), 
    indexed by a security parameter $\secpar\in\setN$.
    For each of theses probabilistic transition systems
    $(\states,s_0,\emptyset,\delta)$, the following holds:
    $\states=
    (F_\mathit{mal}, \valuesDV^* )
    \times \advss 
    \times (\valuesDV \uplus \advrespDV \uplus \set{\epsilon})$,
    where
    $F_\mathit{mal}$ is a set of malicious functions,
    $\advss$ is the state-space of the adversary, and
    $\advrespDV$ are the adversarial outputs,
    and each state $s$ in
    $\advss_\mathit{final}\defeq \set{ (\langle
        i,q,\mathit{res}\rangle \mid
        \mathit{res}\in\advrespDV}$ is final, \ie,
        $\delta(s)=\emptyset$.
        Furthermore (for simplicity), we assume that
        every step has computation time 1, i.e.,
        $\delta(s)(s',n)=0$ for all $n\neq 1$.
\end{definition}

We define the adversarial computation model as a set of attackers,
which we call $\advM$.

\fi

\iffullversion
\begin{definition}
    There is a subset of $\midDV_\proto$ which the adversary controls,
    called \Mal. No element of $\Mal$ is in the range of any
    lookup-tables \lookupVirtualDV, \lookupStaticDV, \lookupDirectDV,
    \lookupSuperDV. \Mal is partitioned into the sets 
    $\Mal_\mathit{virtual}$,
    $\Mal_\mathit{static}$,
    $\Mal_\mathit{super}$, and
    $\Mal_\mathit{direct}$.
\end{definition}
\fi

\iffullversion%
We obtain the following definition for executing ADL in the presence of an adversary.
\else%
We can cast the ADL semantics in terms of a probabilistic transition
system $\ADL_{\proto,\adv,\stateDVs}$ that is parametric in the program
executed, the attacker (see below) and the initial configuration. 
We obtain the following definition for executing ADL in the presence
of an adversary.
\fi%
\begin{definition}[ADL Execution with Adv]
    Given an  ADL program $\proto$, an  adversary $A$,
    and an initial configuration $\stateDVs$,
    the probability that the interaction between $\proto$ on
    $\stateDVs$ and $\adv$ 
    terminates within $n$ steps
    and
    results in $x$
    is defined as 
    \iffullversion $$ \else \begin{multline*}\fi
    \Pr[
        \interact{ \proto\stateDVs} {\adv} \downarrow_{n}x] \defeq 
        \iffullversion\else\\\fi
        \Pr[\ADL_{\proto,\adv,\stateDVs}\downarrow_n \langle x \rangle ].
	\iffullversion $$ \else \end{multline*}\fi
	where we write (given $\adv$'s initial state $as$) $\proto\stateDVs$ for the program $\proto$ with initial state $\stateDV[{ml,h,ppl,rl,as}]$.
\end{definition}

\pparagraphorsubsection{Calls to crypto APIs} %
\iffullversion%
The most direct way of modelling the crypto-API would 
take the symbolic model as a starting point,
defining
static functions without any side-effect, one of which can be used to
generate nonces or keys, while the others apply constructors or
destructors. However, as our results shall be applicable to Android
Apps, we take existing crypto-APIs as the starting point, since 
the symbolic model is meant to abstract
existing crypto-APIs, rather than demand a ``matching implementation''.
To this end, it is worth 
having a look at how 
Java's Standard Cryptography Interface is typically used.

\begin{lstlisting}[language=java,
        caption={Excerpt from 
        \url{https://www.owasp.org/index.php/Using_the_Java_Cryptographic_Extensions}},
    label=lst:javax2]
KeyGenerator keyGen = KeyGenerator.getInstance("AES");
keyGen.init(128);
SecretKey secretKey = keyGen.generateKey();
byte[] iv = new byte[128 / 8];
SecureRandom prng = new SecureRandom();
prng.nextBytes(iv);
Cipher c = Cipher.getInstance("AES/CBC/PKCS7PADDING");
c.init(Cipher.ENCRYPT_MODE,
       secretKey, 
       new IvParameterSpec(iv));
byte[] byteCipherText = c.doFinal(byteDataToEncrypt);
\end{lstlisting}

We observe that here, \texttt{c} is a function object used for encryption and
decryption, which is initialised with the cypher to be used, mode of
operation and additional parameters. Consequently, when the actual
computations are performed in the last line, the fields of this object
carry information important to its operation, not only the arguments
supplied. 
Hence, the cryptographic abstraction of this library
depends not only on the method called (\texttt{doFinal} in this case),
but also on the object in the heap. Moreover, the question
\emph{which} cryptographic abstraction is chosen, e.g., encryption,
decryption, authenticated encryption, etc., depends on object in the
heap.

Furthermore, these functions are
typically \texttt{final}, meaning that they cannot be overwritten by
sub classes. This is (most likely) compiled to a direct call of said
function, implemented in the instruction $\invokedirectr$ 
(see Figure~\ref{fig:dalvik-instructions}). From the point of view of
this paper, invokevirtual, invokedirect,
\progcmd{invoke\text{-}super} and their 
\progcmd{\text{-}range} counterparts behave very similarly, which is
why we concentrate on the case relevant for Java's standard
cryptographic interface, \ie, \invokedirectr. But it is worth noting,
that static function calls (\invokestatic) are not used.

We can also observe, that
the randomness used for encryption (\texttt{iv}) is
generated using a PRNG, but user-supplied, while the
\texttt{generateKey} method choses a key without exposing the
randomness used. But even for keys, the randomness can be supplied by
the user, as demonstrated by the following (insecure) example.

\begin{lstlisting}[language=java,caption={Excerpt from 
        \url{https://gist.github.com/bricef/2436364}},
        label=lst:javax1
        ]
static String IV = "AAAAAAAAAAAAAAAA";
static String plaintext = "test text 123\0\0\0"; /*Note null padding*/
static String encryptionKey = "0123456789abcdef";

Cipher cipher = Cipher.getInstance("AES/CBC/NoPadding", "SunJCE");
SecretKeySpec key = new SecretKeySpec(encryptionKey.getBytes("UTF-8"), "AES");
cipher.init(Cipher.ENCRYPT_MODE, key,new IvParameterSpec(IV.getBytes("UTF-8")));
return cipher.doFinal(plainText.getBytes("UTF-8"));
\end{lstlisting}

On the other hand, it is also possible to let the \texttt{Cipher}
object pick the initialisation vector itself (and request it via
a method call in order to, e.g., attach it to the ciphertext).
This is implemented in Java via overloading, i.e., the method id is
known at compile time based on the number and types of arguments.

To summarize these observations:
\begin{itemize}
    \item Cryptographic output may depend on values in the heap.
    \item Which cryptographic abstraction is appropriate may depend on
        values in the heap, too.
    \item Cryptographic calls are calls to non-static functions.
    \item Cryptographic APIs provide interfaces for random number
        generation and key-generation with or without explicit
        randomness.
\end{itemize}

We can validate these observations on
the open source library
bouncycastle\footnote{http://www.bouncycastle.org}, which is very popular
in mobile applications, as it is fairly lightweight. The interface
it provides also permits user-supplied randomness, while at the same
time retaining state within the \texttt{cipher} object.
\begin{lstlisting}[language=java,
    caption={Excerpt from
    \url{https://www.bouncycastle.org/specifications.html}},
    label=lst:bouncy1]
BlockCipher engine = new DESEngine();
BufferedBlockCipher cipher = new PaddedBlockCipher(new CBCCipher(engine));
byte[] key = keyString.getBytes();
byte[] input = inputString.getBytes();
cipher.init(true, new KeyParameter(key));
byte[] cipherText = new byte[cipher.getOutputSize(input.length)];
int outputLen = cipher.processBytes(input, 0, input.length, cipherText, 0);
cipher.doFinal(cipherText, outputLen);
\end{lstlisting}

Note that it is also possible to use bouncycastle as a security
provider for Java's standard cryptographic interface.

Since randomness might be implicit, or explicit, depending on how
a method is called, 
rather than mapping methods to (the evaluation of) a constructor or
destructor on its arguments, we chose to generalize this to arbitrary
combinations of constructors, destructors and randomness generation.
Symbolic operations (see
Definition~\ref{def:symbolic-operation}) nicely capture the concept. Hence, a library
specification maps a method and a predicate on the object's current
state (\ie, a subset of $\objectsDV$) to a symbolic operation. The input
to the symbolic operation contains the method's arguments, as well as
a defined part of the heap (cf. Definition~\ref{def:pre-compliant}).
\else%
%
%
%
Libraries that offer cryptographic operations leave the generation of randomness either implicit or explicit. Our result encompasses both settings. Moreover, most classes for cryptographic operations, such as the commonly used \texttt{javax.crypto.Cipher}, are used with a method call that lacks some arguments, e.g., the encryption key when using \texttt{javax.crypto.Cipher.doFinal()} to account for situations in which the missing argument has been stored earlier in a member variable, e.g., using \texttt{javax.crypto.Cipher.init()}. We address this by considering the class instance (from the heap) as an additional argument to the method call that has been marked, e.g., \texttt{javax.crypto.Cipher.doFinal()}. Since the class instance contains more information than the arguments of the cryptographic operation of the computationally sound symbolic model, all relevant information such as the encryption key, plaintext, and randomness has to be extracted from the class instance. The derived functions that we explicitly added in our formalization take care of this extraction.
%
\fi

\begin{definition}[Library specification]
    A library specification is an efficiently computable partial function
    $\libSpec: \midDV \times \objectsDV
    \cup \mathcal{UNOP}  
    \cup \mathcal{BINOP}  
    \cup \mathcal{RELOP}  
    \rightharpoonup \SO 
    $
    defined at least on
    $\mathcal{UNOP}  \cup \mathcal{BINOP}  \cup \mathcal{RELOP}$.
\end{definition}

\iffullversion
Note that operations on bitstrings are also specified by $\libSpec$.  
In Corollary~\ref{corollary:derived-destructors} we show that via
symbolic operations, destructors derived from these operations
can also be added as destructors (together with potentially some fresh nonces)
to the symbolic model in a computationally sound way. \TODOE{Replace derived destructors with symbolic operations.}
As we will see later, this allows us to retain more precision, as
bitstrings obtained from the crypto-API, e.g., via decryption, can
still be treated as bitstrings.
\fi

In order to be able to use an asymptotic security definition, we
define uniform families of ADL programs as programs generated from
a security parameter.

\begin{definition}[Uniform families of \iffullversion ADL \fi programs]%
    \label{def:adl-family}
    Let $\proto$ be an algorithm that,
    given a security parameter $\secpar$,
    outputs an ADL program. We denote the output of this program
    $\proto^\secpar$ and call
    the set of outputs of this program a
    \emph{uniform family of ADL programs}.
\end{definition}

Next, we define initial configurations for families (indexed by a security parameter) of transition systems as states that are valid initial configuration for all security parameters.
\begin{definition}[Initial configuration]\label{def:initial-state}
Given a family $(T_\secpar)_{\secpar}$ of transition systems. We say that a state $s$ is an \emph{initial configuration} for $(T_\secpar)_{\secpar}$ if it is a valid initial configuration for all $T_\secpar$ in this family. Analogously, we say that a configuration $s$ is an initial configuration for a uniform family of ADL programs if for all $\secpar$ this $s$ is a valid initial configuration for $\proto^\secpar$.
\end{definition}

\iffullversion
Real-world cryptographic libraries require each
cipher before use to be initialised with the key-length, \eg,
the class methods 
\texttt{getInstance} in \texttt{javax.crypto.Cipher} is called with
a string specifying cipher, mode of operation, key-length and possibly
more (see, e.g., Listing~\ref{lst:javax1}).
Even though the key-length cannot be arbitrarily large (in fact, the
choice is quite limited), we think this comes reasonably close to how
asymptotic security is achieved in real life: when standardisation
bodies or security experts advise developers to chose larger keys,
some constant in the existing source code is adapted to change the
parameters to the security library, \eg, in a preprocessor step. 
We are aware that this requirement is not met by most real-world
cryptographic libraries. 
Still, we consider this gap between the actually deployed programs and
the theoretical result to be significantly smaller than in previous
results for actual programming
languages~\cite{BaHoUn_09:cosp,BaMaUn_10:rcf,AiGoJu_11:extracting-c}.
\fi

\pparagraphorsubsection{Indistinguishability of two ADL programs}
Similar to CoSP, we define indistinguishability of ADL programs using tic-indistinguishability~\cite{Un11:tic-ind}.
%
\begin{definition}[Indistinguishability (ADL)]%
    \label{def:ADL-indistinguishability}
We call two uniform families of ADL programs 
$\proto_1= \set{\proto^\secpar_1}_{\secpar\in\setN}$
and 
$\proto_2= \set{\proto^\secpar_2}_{\secpar\in\setN}$
with initial configuration 
$s_1=\stateDVsN{_1}$ 
and
$s_2=\stateDVsN{_2}$ 
\emph{computationally indistinguishable}
for a (not necessarily uniform) family of attackers 
$\adv = \set{\adv^\secpar}_{\secpar\in\setN}$
$\advE$ (written $\protos 1\ADLIndistinguishable^\adv \protos 2$)
if for all polynomials $p$, 
there is a negligible function $\mu$ such
that for all $a,b\in\bit$ with $a\ne b$, 
\[
    \Pr[ \interact{\protoss 1} {\adv^\secpar}
\downarrow_{p(\secpar)} a]
   + \Pr[
   \interact{\protoss 2}{\adv^\secpar}
  \downarrow_{p(\secpar)} b]
 \le 1 + \mu(\secpar).
 \]
We call $\protos 1$ and $\protos 2$ \emph{computationally indistinguishable 
($\protos 1 \ADLIndistinguishable \protos 2$)} if we have $\protos 1 \ADLIndistinguishable^\advE \protos 2$ for all machines $\advE$.

\TODOE{Should we not use a different $p$ for $\protoss 2$, i.e., something like ``for all $p$ there is a $p'$''?}
\end{definition}

This notion indistinguishability gives rise to 
notion of non-interference, when $\proto_1=\proto_2$. 
\TODOR{What can we say about this notion of non-interference?}


\section{Symbolic Dalvik Bytecode}\label{sec:dalvik-symbolic-semantics}

In this section, we define a symbolic variant of ADL, which uses symbolic terms 
instead of cryptographic values. We sometimes abbreviate this symbolic variant as \ADLs.
We will show in the next section that it suffices to analyze a (symbolic) ADLs program in order to prove the corresponding (cryptographic) ADL program secure,
provided that the cryptographic operations used in that program are computationally sound, i.e., provided that they have a computationally sound symbolic
model in CoSP%
\iffullversion%
in the sense of Definition~\ref{def:computational-soundness-equivalence}).
\else.
\fi%
ADLs' semantics precisely corresponds to the semantics of ADL, except for the treatment of cryptographic operations.
As a consequence, existing automated analysis tools can be conveniently extended to ADLs and, thus, accurately cope with
cryptographic operations, since this extension only requires semantic adaptations precisely for those cases where cryptographic 
behavior needs to be captured.
ADLs is parametric in the symbolic model of the considered cryptographic operations and, hence, benefits
from the rich set of cryptographic primitives that are already supported
by computational soundness results in CoSP, such as encryption and signatures~\cite{BaHoUn_09:cosp,Me_10:ind-cpa-cosp,BaMoRu_14:equivalence,BaMaUn_12:without-restrictions-cosp} or zero-knowledge proofs \cite{BaBeUn_13:cosp-zk,BaBeMaMoPe_15:malleable-zk}.

\subsection{Embeddable symbolic CoSP models}

\iffullversion We first define sufficient conditions under which a given CoSP model can be embedded into ADL\@.
To this end, we require that values in $\calN$ can be embedded into CoSP, \ie, there
needs to be an injective function from $\calN$ into \T, so that, e.g., 
a register value can be encrypted.
As in previous
embeddings~\cite{BaHoUn_09:cosp,BaMaUn_10:rcf}, we require that the
symbolic model includes an equality operation $\equals$. 

\begin{definition}[ADL-embeddable symbolic model]\label{def:adl-embeddable-symbolic-model}
    A symbolic model $\M=(\C,\N,\T,\D)$ (see Definition~\ref{def:symb.model}) is \emph{ADL-embeddable} if
\begin{itemize}

\item for $S=
    \valuesDV  \cup 
    \heapsDV\cup
    \Mal_\mathit{static}\cup
    \mathcal{UNOP}\cup
    \mathcal{BINOP}\cup
    \mathcal{RELOP}\cup
    \mathcal{CONV}\cup
    \set{0,1}$,
    there exists an injective function 
    $\iota: S \to \T$
    from $S$ to terms consisting only of constructors,
    such that all $n\in S$ can be distinguished using deconstructors, \ie,
    for each $n\in S$,
    there are deconstructors $D_1,\ldots,D_m\in\D$ such that
    for any $n'\in S$,
    \[ 
        D_1 \circ \cdots \circ D_m \circ \iota(n')=
    \begin{cases}
        \iota(n') & \text{if }n'=n\\
        \bot & \text{if }n'\neq n.
    \end{cases}
\]
We assume the inverse $\iota^{-1}$ on the range of $\iota$
    to be efficiently computable.

\item The destructor distinguishing $0\in S$ is called \iszero.

\item There is a destructor $\equals/2 \in \D$ such that for all $x,y\in\T$, $x\neq y$, $\equals(x,x) = x$ and $\equals(x,y) = \bot$.

\item There are $\pair/2\in\C$ and $\fst/1,\snd/1\in\D$ such that for all
    $,y\in\T$,  
    $\fst(\pair(x,y)) = x$ and 
    $\snd(\pair(x,y)) = y$.

\item For each $n$-ary operation $\mathit{op}\in\mathcal{UNOP}\cup\mathcal{BINOP}\cup\mathcal{RELOP}$, there is a destructor 
    $\mathit{op}_s$ such that for all $a_1, \dots, a_n\in\valuesDV$,
    $\mathit{op}_s(\iota(a_1),\dots,\iota(a_n))=\iota(\underline{\mathit{op}}(\iota^{-1}(a_1),\dots,\iota^{-1}(a_n)))$.
\end{itemize}
We abbreviate $\widehat n:= \iota(n)$, and call $\widehat n$ the
symbolic representation of $n$. We lifts this notion to sets:
$\forall N\subseteq \calN. \widehat N = \set{ \widehat n \mid n \in N}$.
\end{definition}
 \else

We first define sufficient conditions under which a given CoSP model
can be embedded into ADL.  To this end, we require that, e.g., values in
$\valuesDV = \calN\cup\locationsDV\cup \set{\voidDV}$
 can be embedded into CoSP, \ie, there needs to be an
injective function from $\valuesDV$  into \T. 
Formally, we assume that there exists an injective function
$\iota$ such that all $v\in\valuesDV$ can be
distinguished using deconstructors, \ie, $ \forall
v,v'\in\valuesDV
\exists D_1,\ldots,D_m\in\D$, $\iota(v') = v$ if $v'=v$ and $\bot$
otherwise.
As in previous
embeddings~\cite{BaHoUn_09:cosp,BaMaUn_10:rcf}, we require that the
symbolic model includes an equality destructor $\equals$ and a pairing
constructor $\pair$. 
We define $\widehat v = \iota(v)$ if $n\in\valuesDV$ and the
identity if $n\in\T$, and 
lift this notion to sets:
$\forall V \subseteq \valuesDV. \widehat V = \set{ \widehat v \mid v \in
v}$. 
\fi

\iffullversion

With Corollary~\ref{corollary:bitstring-operation-extension}, all
recent CoSP-results from the
literature~\cite{BaHoUn_09:cosp,Me_10:ind-cpa-cosp,BaMaUn_12:without-restrictions-cosp,BaBeUn_13:cosp-zk,BaMoRu_14:equivalence,BaBeMaMoPe_15:malleable-zk}
satisfy these requirements.
We show that the first
condition of
Definition~\ref{def:adl-embeddable-symbolic-model} can easily be
satisfied for ADL values, if we instantiate the (underspecified)
set $\valuesDV$ with 
bitstrings. 

\begin{example}[ADL-embeddable symbolic model for bitstrings and heaps]%
    \label{ex:bitstring-representation}
    
    Let $\valuesDV=\bits*$, \ie, numerical values, locations (and
    $\voidDV$) are expressed via bitstrings, but are distinguishable,
    \ie, via a tagging convention. We show how an embedding of
    bitstring can be
    achieved with a
    symbolic model $\M=(\C,\N,\T,\D)$ which includes
    \begin{itemize}
        \item $\equals/2 \in \D$ such that for all $x,y\in\T$, $x\neq
            y$, $\equals(x,x) = x$ and $\equals(x,y) = \bot$, and
        \item $\payloadString_0/1,\payloadString_1/1,\payloadEmpty/0\in \C,\payloadUnstring_0/1,\payloadUnstring_1/1\in\D$ such that for all $x\in\T$ $$\payloadUnstring_0(\payloadString_0(x)) = x \text{ and } \payloadUnstring_1(\payloadString_1(x)) = x,$$ and in all other cases $\payloadUnstring_0(x) = \bot$ and $\payloadUnstring_1(x) = \bot$.
    \end{itemize}
    In this case, 
$\iota_{\valuesDV}: \bits*\to\T$
    can be defined as follows:
    \[
        \iota( b_l \ldots b_0) 
     := \payloadString_{b_{l}}(
        \payloadString_{b_{l-1}}(
        \dots(\payloadString_{b_0}(\payloadEmpty))\dots)).
\]
    
    The symbols in 
    $\Mal_\mathit{static}\cup
    \mathcal{UNOP}\cup
    \mathcal{BINOP}\cup
    \mathcal{RELOP}\cup
    \mathcal{CONV}\cup
    \set{0,1}$
    can be expressed using the same method via an embedding into bitstrings,
    or even simpler, by having one constructor and one destructor per
    element, as these sets are a-priori fixed.
    One could make the same argument for the heap, but as symbolic
    abstractions may produce updates to the heap, i.e.,
    ``partial'' heaps that overwrite the heap where they are defined
    (cf. Section~\ref{sec:adls-semantics}), 
    and symbolic verification tools do not support arbitrary
    destructors, we propose a modelling 
    as a list of pairs of argument
    and function value. As the heap can contain objects, as well as
    arrays, these need to be embedded as well.
    \TODOR{We could go into more detail here, even define this
        thoroughly, but I am not sure any one is interested. Discuss.
    (Notes in Latex comments.)}
    %
    %
    %
    %
    %
    %
    %
    %
\end{example}

 For illustration consider the bitstring $\mathstring{01101}$. Its
 \emph{symbolic representation} is the term 
 \[ 
     t:=
     \widehat{01101} =
    \payloadString_0(
    \payloadString_1(
    \payloadString_1(
    \payloadString_0(
    \payloadString_1(\payloadEmpty))))).
 \]
If a term $t$ represents the bitstring $\mathstring{01101}$,
the following equality test evaluates to $\payloadEmpty$:
{\iffullversion \scriptsize \else \tiny \fi
\begin{align*}
& \equals(\payloadUnstring_1(\payloadUnstring_0(\payloadUnstring_1(\payloadUnstring_1(\payloadUnstring_0(t))))),\payloadEmpty)\\
& = \equals(\payloadUnstring_1(\payloadUnstring_0(\payloadUnstring_1(\payloadUnstring_1(\payloadUnstring_0(\payloadString_0(\payloadString_1(\payloadString_1(\payloadString_0(\payloadString_1(\payloadEmpty)))))))))),\payloadEmpty)\\
& = \equals(\payloadUnstring_1(\payloadUnstring_0(\payloadUnstring_1(\payloadUnstring_1(\payloadString_1(\payloadString_1(\payloadString_0(\payloadString_1(\payloadEmpty)))))))),\payloadEmpty)\\
& =
\equals(\payloadUnstring_1(\payloadUnstring_0(\payloadUnstring_1(\payloadString_1(\payloadString_0(\payloadString_1(\payloadEmpty)))))),\payloadEmpty)\\
& =
\equals(\payloadUnstring_1(\payloadUnstring_0(\payloadString_0(\payloadString_1(\payloadEmpty)))),\payloadEmpty)\\
& =
\equals(\payloadUnstring_1(\payloadString_1(\payloadEmpty)),\payloadEmpty)\\
& =
\equals(\payloadEmpty,\payloadEmpty)\\
& =
\payloadEmpty
\end{align*}
}In Appendix~\ref{app:extendability}, we show that any computationally
sound symbolic model that contains symbolic bitstrings and equality
can be extended by destructors implementing any polynomial-time
computable function on bitstrings, allowing us to perform, \eg, binary
operations like XOR on the symbolic representation of bitstrings.
\else
\NOTEE{I do not see why the paper would significantly benefit from mentioning that we proved that any binary (deterministic) poly-time computable function on symbolic bitstrings as a destructor can be added to an existing symbolic model without loosing computational soundness.}
\fi

\subsection{Semantics of symbolic ADL}\label{sec:adls-semantics}
The semantical domains of the symbolic
variant of ADL coincide with the semantical domains of ADL except for the set
of values $\valuesDV$ and the set of (intermediate and final) states.
\iffullversion The set of values is the union of symbolic terms and
values as defined previously. \else The set of values is extended to
hold terms in addition to ADL values:
$\valuesDV = \mathcal{N} \cup \locationsDV \cup \set{\voidDV} \cup
\T$.
\fi


\begin{framedFigure}{Modified inference rules of the symbolic execution relation $\reduction\text{-}Sym$, where $s = \stateDV[{ml,h,pp \cdot ppl,r\cdot rl}] $.
        \iffullversion%
    (Variable ranges as in Figure~\ref{fig:dalvik-instructions}.)
    \fi%
}{fig:modified-inference-rules-symbolic}
\begin{center}
\codesize
\begin{align*}
\textsc{rIDR-s:}~&
     s
     \symtrans{%
     (\tiHL,\mathit{mid},r(\underline {v_k}),\sliceof{r(v_k)}),
     (\tiLH,(u_\mathit{lo},u_\mathit{up},h'))
     }
&& \text{for~}
    m[pp] = \invokedirectr v_k,n,mid \\
    &    
    s\subst{h[h'],pp+1\cdot ppl,r[
            \resultLowerDV \mapsto u_\mathit{lo},
            \resultUpperDV \mapsto u_\mathit{up}
    ]\cdot rl}%
&&\land  O = \libSpec(mid,h(r(v_k))) \\
&&&\land 
     \pair(t,h') = \eval_O(r(\underline {v_k}),\widehat h)
    \\
\textsc{ISt-m:}~&
s
\symtrans{(\tiHA,\mathit{mid}, r(v_a),\dots,r(v_e)),%
(\tiAH,(u_\mathit{lo},u_\mathit{up}))%
}
&& \text{for~}
m[pp] = \invokestatic v_a,\dots,v_e,mid\\
&
s\subst{pp+1\cdot ppl,r[\resultLowerDV \mapsto u_\mathit{lo}, \resultUpperDV \mapsto u_\mathit{hi}]\cdot rl}   
&&
\land mid \in \Mal_\mathit{static} \land V' \vdash u_\mathit{lo} \land V' \vdash u_\mathit{up} 
\\
\textsc{Unop-m:}~&
 s 
 \symtrans{(\tiHA,\mathit{uop},r(v_b)),(\tiAH,(u,u))}
&& \text{for~}
m[pp] = \progcmd{unop} v_a,v_b, \mathit{uop} \\
&
s\subst{pp+1\cdot ppl,r[v_a \mapsto u]\cdot rl}
&&\land r(v_b) \not\in \T_\calV \land V' \vdash u
\\
\textsc{Unop-l:}~&
 s \symtrans{(\tiHL,\mathit{uop},r(v_b)),(\tiLH,(u,u))}
&& \text{for~}
m[pp] = \progcmd{unop} v_a,v_b, \mathit{uop} \\
&
s\subst{pp+1\cdot ppl,r[v_a \mapsto u]\cdot rl}
&& \land r(v_b)\in\T_\calV \land u = d_{uop}(r(v_b))
\\
\iffullversion
\textsc{Binop-m:}~&
 s 
 \symtrans{(\tiHA,\mathit{bop},r(v_b),r(v_c)),(\tiAH,(u,u))}
&& \text{for~}
m[pp] = \progcmd{binop} v_a,v_b,v_c,\mathit{bop} \\ 
&
s\subst{pp+1\cdot ppl,r[v_a \mapsto u]\cdot rl}
&&
(\land r(v_b) \not\in \T_\calV 
\lor r(v_c) \not\in \T_\calV )
\land V' \vdash u
\\
\textsc{Binop-l:}~&
s \symtrans{(\tiHL,\mathit{bop},r(v_b),r(v_c)),(\tiLH,(u,u))}
&& \text{for~}
m[pp] = \progcmd{binop} v_a,v_b,v_c,\mathit{bop} \\ 
&
s\subst{pp+1\cdot ppl,r[v_a \mapsto u]\cdot rl}
&& \land r(v_b),r(v_c)\in\T_\calV \land u = d_{bop}(r(v_b))
\\
\fi
\textsc{IfTest-m:}~&
 s
 \symtrans{(\tiHA,\mathit{rop},r(v_a),r(v_b))}
s\subst{(pp+n')\cdot ppl,r\cdot rl}
&& \text{for~}
    m[pp] = \progcmd{if\text{-}test} v_a,v_b,n,rop \\
	&&&\land \lnot(r(v_a)\in\T_{\valuesDV} \land r(v_b)\in\T_{\valuesDV}) \land n' \in \set{n,1}
\end{align*}
\end{center}
\scriptsize
\iffullversion
for $v_a,v_b,v_c,v_d,v_e \in \mathcal{X}$, $mid \in \mathcal{MID}$,  $uop \in \mathcal{UNOP}$, $bop \in \mathcal{BINOP}$, $rop \in \mathcal{RELOP}$ and 
$u,u_\mathit{lo},u_\mathit{up}\in\T$.
\fi
\end{framedFigure}

We add to each state in the inference rules the list of terms output
so far, called \emph{the symbolic view}. This symbolic view
corresponds to the symbolic view in the symbolic execution of a CoSP
protocol (Definition~\ref{def:symbolic-execution}), and is used to
track the information the adversary can use to deduce messages.

\iffullversion
\begin{definition}[Symbolic semantical domains]\label{def:adls-dom}
The \emph{symbolic semantical domains} of ADL programs are defined as
in Definition~\ref{def:syntactical-domains}, with the exception
of

\begin{align*}
	\valuesDV^\T & = \mathcal{N} \cup \locationsDV \cup \set{\voidDV} \cup \T & \text{values} \\
	\mathcal{C}^\T & =  \mathcal{H} \times \setN \times \mathcal{R} \times \Views & \text{intermediate states} \\
	\mathcal{C}_{\text{final}}^\T & = \valuesDV \times \mathcal{H} \times \Views & \text{final states}
\end{align*}
\end{definition}
\fi

Transition between states are now annotated with the information sent
to or received from the adversary. As ADL is sequential, every
output to the adversary is followed by the adversary's input.
Calls to malicious functions reveal register values to the adversary,
which can be terms, locations, or numerical values. Terms are added to
the adversarial knowledge as they are; locations and values are
translated by means of the embedding.
Calls to the crypto-API immediately yield a term by applying
a constructor or destructor to the term arguments, or their symbolic
representation.
As elaborated in the introduction, applying binary operations (\eg, XOR) to
symbolic terms usually invalidates computational soundness results. We over-approximate this by
treating symbolic terms as a blackbox and let the outcome of any
operation involving a symbolic term be decided by the adversary. As
a side effect, we let the adversary learn the operands. %
As binary operations and tests on bits
are only defined on inputs in 
$\calN$,
rules involving these
remain unaltered and
are complemented by the rules in
Figure~\ref{fig:modified-inference-rules-symbolic}. 
\iffullversion\else%
For space reasons, the figure does not contain the rules for binary operations (\textsc{Binop}), but they are defined analogously to unary operations. 
\fi%
For brevity, we let $V'$
denote the attacker's view, which is part of the successor state, and
write $V'\vdash m$, if the adversary
can deduce a term $m$ from a view $V$, \ie, there is a symbolic
operation $O$ such that $O(\termsof(V))=m$.
Observe that for
\textsc{IfTest-m},
the adversary can decide the outcome without learning the operands.
This highlights the non-determinism inherent to the symbolic semantics.
\iffullversion%
We thus define the probabilistic transition system
$\ADLs_{\Pi,\stateDVs}$ 
similar to Example~\ref{ex:adl}, but with these modifications.
\else%
Let $\ADLs_{\Pi,\stateDVs}$ denote the 
probabilistic transition system
describing these modified semantical domains and transitions.
\fi

With the symbolic semantics in place, we can finally define the notion
of symbolic equivalence between two programs. We first specify what
constitutes a symbolic view, as 
\iffullversion%
Definition~\ref{def:traces-dist}
\else%
our definition of traces
\fi
applies only to fully probabilistic transition systems.

\begin{definition}[Symbolic view\iffullversion (probabilistic transition system)\fi]
    Given a probabilistic (but not necessarily fully probabilistic)
    transition system $T$ with initial state $s_0$,
    the set of symbolic views of $T$ is
    defined:
    \[
        \Views(T) = 
        \set{ 
         (\alpha_1,\ldots,\alpha_m)|_{\mathit{Views}}
         \middle|
         s_0 \xrightarrow{\alpha_1}_{n_1} \cdots \xrightarrow{\alpha_m}_{n_m} s_m 
 }\]
\end{definition}

\iffullversion
As in CoSP, we use the equivalence relation on traces
introduced in Definition~\ref{def:symbolic-knowledge}.
\else
We use the notion of symbolic equivalence introduced in
Section~\ref{section:cosp}.
\fi

\begin{definition}[Symbolic equivalence]\label{def:symb-equivalent}
Let $\Model$ be a symbolic model. 
    Two probabilistic transition systems $T_1$ and $T_2$ are 
    \emph{symbolically equivalent}
    ($T_1 \execEquivalent T_2$)
    if $\Views(T_1) \sim \Views(T_2)$ w.r.t. \Model.
    Two uniform families of ADL programs 
    $\proto_1= \set{\proto^\secpar_1}_{\secpar\in\setN}$
    and 
    $\proto_2= \set{\proto^\secpar_2}_{\secpar\in\setN}$
    and initial configurations 
    $s_1=\stateDVsN{_1}$ 
    and
    $s_2=\stateDVsN{_2}$ 
    are \emph{symbolically equivalent} 
    ($\protos 1 \ADLEquivalent \protos 2$)
    iff,
    for each \TODOE{why not ``sufficiently large''?} $\secpar$,
    the probabilistic transition systems
    $T_1 = \ADLs_{\proto_1,s_1}$
    and
    $T_2 = \ADLs_{\proto_2,s_2}$
    are symbolically equivalent.
\end{definition}

\NOTEE{It would be nice to say something about the fact that other papers do not have a security parameter in the symbolic model. We intentionally do have it, since in practice one is only interested in one security parameter; hence verification per security parameter is feasible. Though actually we have a gap in the proof: we show that if a protocol is symbolically equivalent for sufficiently large $\secpar$, then it is secure for sufficiently large $\secpar$. This does not mean that verifying it for one $\secpar$ implies security for the same $\secpar$.}



\section{Computational Soundness}
\label{sec:soundness}
Establishing a computational soundness proof for ADL requires  a clear separation between honest program parts and cryptographic API calls with their corresponding augmented adversarial symbolic capabilities. To achieve this, we characterize this partitioning by introducing the concept of a \emph{split-state form} of an operational semantics, and we subsequently
show that it can be naturally used to represent ADL. This section, presents the essence of our proof. For the full proofs and detailed definition, we refer to the technical report~\cite{ourfullversion}.

\subsection{Split-state semantics}
\label{sec:split-state}
The split-state form 
partitions the original semantics into three components, parallely
executed asynchronously: $(i)$ all steps that belong to computing
cryptographic operations (called the \emph{crypto-API semantics}),
$(ii)$ all steps that belong to computing the malicious functions
(called the \emph{attacker semantics}), and $(iii)$ all steps that
belong to the rest of the program (called the \emph{honest-program
semantics}). 
The overall operational semantics is a composition of each of these
sub-semantics, i.e., the state space is the Cartesian product of the
sub-states with additional information on which entity is currently
running, defined in terms of asynchronous parallel composition.
\unless\iffullversion
We refer to the technical report for details~\cite{ourfullversion}.
\fi
In a split-state form, all three entities can synchronize through the following sets of labels:
$\tHL$,
$\tLH$,
$\tHA$, and
$\tAH$. 
The first two model message passing between the honest
program and the crypto-API; the latter two model message passing between the honest program and the attacker. There
is no synchronization step between attacker and the crypto-API, as the attacker
can only stop the execution with a transition of the form
$(\tifinal,m)$ for some message $m$.

\iffullversion
The overall operational semantics, called the \emph{split-state semantics}, will then be defined as the composition of each of these three sub-semantics, i.e., 
the state space is the Cartesian product of the sub-states with additional information on which entity is currently running, defined  in
terms of asynchronous parallel composition.
The benefit of this notion is that it makes the communication explicit that
occurs between the actual program and the adversary. This communication is often
fixed but arbitrary, \ie, in the case of black-box usage. 
Moreover, it
separates the actual program from the cryptographic API calls, which is crucial for computational soundness proofs.
Finally, the details of the crypto-API are irrelevant for program
analyses, provided that the cryptographic operations are implemented securely.


\paragraph{Defining split-state.}
In the first step, we represent the respective program semantics of
the adversary, the cryptographic library and the honest programs in
terms of three (possibly different) labelled  transition systems.
In each of those, every
state corresponds to a configuration of the program, 
the set of
initial states corresponds to the set of initial configurations,
and
the transition relation is defined by the set of possible execution
steps. 
Execution steps that involve any of the other entities are
labelled either:
\begin{itemize}
    \item $( \tiHL,f,m)$
            when the honest program
        makes a library call to the function named $f$
        with input $m$.

    \item $(\tiLH,m)$ for the library's response
    \item $(\tiHA,m)$ when the honest program passes message $m$ to
        the adversary, and
    \item  $(\tiAH,m)$ in the opposite case.
\end{itemize}

We can describe our notion of split-state semantics using the
well-studied notion of 
\emph{asynchronous parallel composition}\cite{milner1989communication,hoare1985communicating}\footnote{To be more precise:
CSP-style parallel composition on common actions.},
which
we review below. For readability, we define the transition function in
terms of the transition relation it induces.

\begin{definition}[Asynchronous parallel composition]%
    \label{def:asynchronous-parallel-comp}
    The asynchronous parallel composition of two 
    probabilistic transition systems
    $(\states_1,s^0_1, A_1, \delta_1)$
    and
    $(\states_2,s^0_2, A_2, \delta_2)$
    is the probabilistic transition system 
    $(\states_1 \times \states_2,
     (s^0_1,s^0_2),
     (A_1 \cap A_2),
     \delta)$,
    where 
    $\delta$ is defined such that:
    \begin{itemize}
        \item 
            $(s_1,s_2) \predRel{n_1+n_2}{a} (s_1',s_2')$
            iff 
            $s_1 \predRel{n_1}{a} s_1'$,
            $s_2 \predRel{n_2}{a} s_2'$
            and 
            $a\in A_1 \cap A_2$,
        \item $(s_1,s_2) \predRel{n_1}{a} (s_1',s_2)$
            iff $s_1 \predRel{n_1}{a} s_1' \wedge a\notin A_2$,
        \item $(s_1,s_2) \predRel{n_2}{a} (s_1,s_2')$
            iff $s_2 \predRel{n_2}{a} s_2' \wedge a\notin A_1$, and
        \item $(s_1,s_2) \predRel{n}{[p]} (s_1',s_2')$ iff
            $s_1 \predRel{n}{[p]} s_1' \land s_2'=s_2$ 
            or 
            $s_2 \predRel{n}{[p]} s_2' \land s_1'=s_1$.
    \end{itemize}
\end{definition}

    We extend this form of composition to an arbitrary number of
    transition systems by applying it to the first and second element,
    then applying the result to the third, and so forth. We simplify
    notation by flattening the resulting state space to $\states_1
    \times \states_2 \cdots \times \states_n$ instead of
    $ (\cdots (\states_1 \times \states_2 )\cdots) \times
    \states_n$, modifying initial configurations and transition
    relation accordingly. 

    A split-state composition is an asynchronous parallel composition
    with the following restrictions:
    \begin{inparaenum}[\itshape a\upshape)]
        \item We specify which transitions can be performed in
            synchronisation. Assuming some set of function symbols
            $F$,
            some set of symbols for malicious functions $F_\mathit{mal}$ including
            $\tifinalCall$,
            and a domain for messages $\calD$, we define:
            \begin{align*}
                \tHA: & = \set{(( \tiHA,f,m_1,\ldots,m_n),1,1) \mid f\in F_\mathit{mal}, m_i \in \calD}\\
                \tAH: & = \set{(( \tiAH,m),1,1) \mid  m\in \calD},\text{\ and}\\
                \tHL: & = \set{(( \tiHL,f,m_1,\ldots,m_n),1,1) \mid f\in F, m_i\in \calD},\\
                \tLH: & = \set{(( \tiLH,m),1,1) \mid  m\in \calD},\\
                \tfinal & = \set{(( \tifinal,m),1,1) \mid  m\in \calD}.
            \end{align*}
            The first two allow for message passing between the honest
            program and the cryptographic library, the latter two for
            message passing between honest program and attacker. There
            is no synchronising step between attacker and library.
        \item We enrich the three transition system's state with
            a boolean value that indicates whether the system is
            currently active. Only the active system can perform
            non-synchronising steps or decide to pass information to
            another system via a synchronising step -- but in this
            case, it activates the other system and ceases to be
            active itself.
    \end{inparaenum}

    To simplify notation, we write 
    $s \xrightarrow{A} s'$
    if there exists $a\in A$ such that $s
    \xrightarrow{a} s' $.

\begin{definition}[Split-state operational semantics]\label{def:split-state-semantics}
    Given
three transition systems
$(\states_H, s^0_H, A_H, \delta_H)$,
$(\states_A, s^0_A, A_A, \delta_A)$,
$(\states_L, s^0_L, A_L, \delta_L)$,
such that
$A_H \subseteq \tHA\cup \tAH \cup \tHL \cup \tLH$,
$A_A \subseteq \tHA\cup \tAH \cup \tfinal$,
$A_L \subseteq  \tHL \cup \tLH$,
we define the split-state composition as
the asynchronous parallel composition
of 
$(\mathbb{B} \times \states_H, 
(\top,s^0_H),
A_H,
\delta_H')$,
$(\mathbb{B} \times \states_A, 
(\bot,s^0_A),
A_A,
\delta_A')$,
$(\mathbb{B} \times \states_L, 
(\bot,s^0_L),
A_L,
\delta_L')$,
where
\begin{align*}
\delta_H'(b,s) & =
        \begin{cases}
            \mu \circ (((\top,s),n)\mapsto (s,n)) & \text{if $b=\top$ and  $\delta_H(s)=\mu$} \\
            ((\tiHA,\tifinalCall),(\bot,s),1) & \text{if $b=\top$ and $s$ final}\\
            \delta_H(s)|_{\tHL\cup\tHA}^\bot & \text{if $b=\top$ and $s$
        non-determ. and not final} \\
        \delta_H(s)|_{\tLH\cup\tAH}^\top & \text{if $b=\bot$ and $s$
        non-deterministic} 
        \end{cases}
        \\
\delta_A'(b,s) & =
        \begin{cases}
            \mu \circ (((\top,s),n)\mapsto (s,n)) & \text{if $b=\top$ and  $\delta_A(s)=\mu$} \\
            \delta_A(s)|_\tAH^\bot & \text{if $b=\top$ and $s$
        non-deterministic} \\
            \delta_A(s)|_\tHA^\top & \text{if $b=\bot$ and $s$
        non-deterministic} 
        \end{cases}
        \\
\delta_L'(b,s) & =
        \begin{cases}
            \mu \circ (((\top,s),n)\mapsto (s,n)) & \text{if $b=\top$ and  $\delta_L(s)=\mu$} \\
            \delta_L(s)|_\tLH^\bot & \text{if $b=\top$ and $s$
        non-deterministic} \\
            \delta_L(s)|_\tHL^\top & \text{if $b=\bot$ and $s$
        non-deterministic} 
        \end{cases}
\end{align*}
Here, $P|_S^b$ is short-hand for 
$ \set{(a,(b,s'),n) \mid (a,s',n)\in P \land a \in S}$.
An operational semantics 
is a \emph{split-state semantics}, if it 
can be expressed as a split-state composition of three transition systems.

We call states of the honest program semantics \emph{configurations}, and write $T(s)$ for a split-state semantics $T$ meaning that the initial state of the honest program semantics $T_H$ is replaced by $s$.
\end{definition}

We stress that requiring an $\tiHA$-transition being always followed
by a $\tiAH$-transition is without loss of generality. In case the
original would solely send a message or solely receive a message, one
of the transitions would only carry dummy information.
\else

An operational semantics 
is a \emph{split-state semantics}, if it 
can be expressed as a transition system
$(\states, \initstates,\rightarrow)$
by split-state composition of three transition systems.

\fi

The benefit of this notion is that it makes the communication explicit that
occurs between the actual program and the adversary. This communication is often
fixed but arbitrary, \ie, in the case of black-box usage. Moreover, it
separates the actual program from the cryptographic API calls, which is crucial for computational soundness proofs.
Finally, the details of the crypto-API are irrelevant for program
analyses, provided that the cryptographic operations are implemented securely.

\iffullversion
This notion makes these distinction more explicit and facilitates the
proof, but note that the split-state semantics can be defined in
a meaningless way: the library could be empty, and never be called, or
return values from the adversary be wholly ignored. We will illustrate
in the following section, however, that through small additions, the
ADL semantics can be separated, providing 
\begin{inparaenum}[\itshape a\upshape)]
\item for a generalized proof method, \ie, Lemma~\ref{lemma:embedding-soundness-computational} and~\ref{lemma:embedding-soundness-symbolic} apply to all split-state semantics, not only
    ADL,
\item a communication model that makes the threat model and the
    assumptions on the library implementation obvious.
\end{inparaenum}
\TODOE{Consistently write American English or British English, e.g., modelling vs modeling.}
\fi

A split-state semantics is not necessarily a probabilistic transition
systems, as the transition systems it is composed from might be  
non-deterministic. However, we define the
probability of a certain outcome only for probabilistic split-state
semantics, as in most applications for security, non-determinism in
the honest program semantics stems from the modelling of concurrency
and is usually resolved by specifying a scheduler. 
This non-determinism is typically conservatively resolved by assuming that the
adversary controls the scheduling. 
\ifdraft \TODOE{This still needs to be incorporated into the embedding. Add this text then again.}
In this case, $F_\mathit{mal}$
may contain a symbol $\texttt{ctl}$ that is used to send scheduling
requests of form $(\tiHA,\texttt{ctl},l)$ to the adversary, where $l$
is a unique label identifying, \eg, the current position in the
execution, entering a `pending' state distinct from, but unique to,
the previous state.
From this state, the possible follow-up states are reachable via
transitions labelled $(\tiAH,l')$, where $l'$ is a distinct label for
each follow-up state.
Despite the fact that ADL programs are strictly
sequential, we will define an embedding into CoSP that translates
these scheduling requests into control nodes. This imposes some
light restrictions: there can only be a countable number of labels 
for follow-up states $l'$, and each $l'$ needs to be polynomial in the length of the smallest path to it.
For a reasonable computational model, it should be possible for the
attacker to compute each $l'$, but our soundness result does not rely
on this fact.
\fi

\iffullversion
\begin{definition}[Split-state execution]%
    \label{def:split-state-execution}

    Let $T_H,T_L,T_A$ be transition systems and
    assume that 
    their split-state composition $T$ is probabilistic.

    We write $\Pr[ T(s_0) \downarrow_{n}x]$ for
    the probability that the interaction between between
    honest program, attacker and library 
    with initial state $s_0\in\initstates$
    results in $x$ and terminates within $n$ steps, as follows:
    \[
        \Pr[ T(s_0) \downarrow_{n}x]
        =\\
    \sum_{ (s_0 \xrightarrow{\alpha_1} \cdots s_k)\in\supp(\Exec(T)) }
    \Pr \left[\Exec(T) = s_0 \xrightarrow{\alpha_1}_{n_1} s_1 \cdots
        \xrightarrow{\alpha_k = (\tifinal,x)}_{n_k}
    s_k
    \text{ and  
    $k \le n$}
    \right].
\]
\end{definition}
\else
We write $\Pr[ T(s_0) \downarrow_{n}x]$ for
the probability that the interaction between
honest program, attacker and crypto-API
with initial state $s_0\in\initstates$
results in $x$ and terminates within $n$ steps. Split-state indistinguishability is
then defined as follows.
\fi

\begin{definition}[Split-state indistinguishability]%
    \label{def:ss-indistinguishability}
    Let $T_{H,1}$, $T_{H,2}$, $T_A$, and $T_L$ be families of transition systems
    indexed by a security parameter in $\setN$. 
    We call $T_{H,1}$ and $T_{H,2}$  
    \emph{computationally indistinguishable}
    for $T_L$ and $T_A$
    and initial states
    $\initH{1}$,
    $\initH{2}$,
    $\initL$ 
    of $T_{H,1}$, $T_{H,2}$, $T_L$ (respectively) in the sense of Definition~\ref{def:initial-state}
    if for all polynomials $p$, 
    there is a negligible function $\mu$ such
    that for all $a,b\in\bit$ with $a\ne b$, 
    \begin{align*}
        &
        \Pr[ 
            T_1^\secpar((\top,\initH{1}),(\bot,(0,z)),(\bot,\initL))
                \downarrow_{p(\secpar)} a]\\
                & + \Pr[
                T_2^\secpar((\top,\initH{2}),(\bot,(0,z)),(\bot,\initL))
            \downarrow_{p(\secpar)} b]
            \le 1 + \mu(\secpar),
        \end{align*}
        where 
        $T_1^\secpar$ is the split-state composition of
        $T_{H,1}^\secpar,T_L^\secpar,T_A$, and
        $T_2^\secpar$ is the split-state composition of
        $T_{H,2}^\secpar,T_L^\secpar,T_A$
         for security parameter $\secpar$. In short, we write $T_{H,1}(\initH{1}) \SSIndistinguishable{T_L(\initL),T_A} T_{H,2}(\initH{2})$.
        Additionally, we call $T_{H,1}$ and $T_{H,2}$ 
        \emph{computationally indistinguishable 
        (written as $T_{H,1}(\initH{1}) \SSIndistinguishable{T_L(\initL)} T_{H,2}(\initH{2})$)}
        if 
        $T_{H,1}(\initH{1}) \SSIndistinguishable{T_L(\initL),T_A} T_{H,2}(\initH{2})$
        for all $T_A\in\advM'$ and for initial states $\initH{1},\initH{2},\initL$ for $T_{H,1},T_{H,2},T_L$ (respectively).
    \end{definition}
\if\else\fullversion
Note that the state state of a split-state composition has form 
$(
(b_H,s_H),
(b_A,s_A),
(b_L,s_L)),$
where $b_H$ is $\top$ iff the honest program is active and similar for
$b_A$ and $b_L$.
\fi

\subsection{Split-state representation of ADL}\label{section:split-state-adl}

\begin{framedFigure}{ADL split-state representation, honest program semantics, $s = \stateDV[ml,h,ppl,r\cdot rl]$}{fig:adl-to-adlss}
\begin{center}
\footnotesize
\begin{align*}
\LibCall\colon &
    s \xrightarrow{(\tiHL,\mathit{mid},r(v_k),\dots,r(v_{k+n-1}),\sliceof{r(v_k)})}_H (\wait,s)
&& 
    \begin{aligned}
        &   \text{for~}
(mid,h(r(v_k))) \in \dom(\libSpec) 
\\ 
& m[pp] = \invokedirectr v_k, n, \mathit{mid}
    \end{aligned}
\\
\LibResponse\colon &
(\wait,s) \xrightarrow{(\tiLH,(m_\mathit{lo},m_\mathit{up},h'))}_H 
s\left\{
    \begin{aligned}
        & h[h'],pp+1\cdot ppl,\\
        & r\left[
        \resultLowerDV \mapsto m_\mathit{lo},
        \resultUpperDV \mapsto m_\mathit{up}
\right]\cdot rl 
    \end{aligned}\right\}
&& 
\begin{aligned}%
    & \text{for~}   
     (mid,h(r(v_k))) \in \dom(\libSpec) \\
& m[pp] = \invokedirectr v_k, n, \mathit{mid}
\end{aligned}
\\
\LeakMsg\colon &
s \xrightarrow{(\tiHA,\mathit{mid},{r(v_a)},\dots,{r(v_e)})}_H
(\wait,s)
&& 
\begin{aligned}
    & \text{for~} mid \in \Mal_\mathit{static} \\
    & m[pp] = \invokestatic v_a,\dots,v_e,mid 
\end{aligned}
\\
\ReceiveMsg\colon &
(\wait,s) \xrightarrow{(\tiAH,m_\mathit{lo},m_\mathit{up})}_H s
\left\{
    pp+1\cdot ppl,
    r\left[ \begin{aligned}
        \resultLowerDV & \mapsto m_\mathit{lo},\\
        \resultUpperDV & \mapsto m_\mathit{up}
    \end{aligned} 
        \right]\cdot rl
\right\}
&& \begin{aligned}
    &\text{for~} mid \in \Mal_\mathit{static} \\
& m[pp] = \invokestatic v_a,\dots,v_e,mid 
\end{aligned}
\\
\FinalCall: &
\stateDV[u,h] \xrightarrow{(\tiHA,\tifinalCall)}_H \stateDV[u,h] 
&&
\end{align*}
\end{center}
\iffullversion%
\scriptsize
for $v_a,v_b,v_c,v_d,v_e,v_k \in \mathcal{X}$, $n \in \setN$, $mid \in \mathcal{MID}$, 
and $m_\mathit{lo},m_\mathit{up}\in\valuesDV$
\fi
\end{framedFigure}

\begin{framedFigure}{ADL split-state representation, library program semantics, $\_$ 
        is an arbitrary adversarial state.
    }{fig:adl-to-adlss-lib}
\begin{center}
\footnotesize
\begin{align*}
\LLibCall: &
    \stateDV[(),\emptyset,(),(),\_] \xrightarrow{(\tiHL,\mathit{mid},r(\underline{v}),h)}_L
    \stateDV[{(m),h,(0),(\defaultRegistersDV([r(\underline {v})])),\_}]
&& \begin{aligned}
    & 	\text{for~}
	 r(\underline {v}) := r(v_k),\dots,r(v_{k+n-1}), \\
         & m \defeq \lookupDirectDV_\proto(\mathit{mid},h(r(v_k)).\mathsf{class})
	\end{aligned}
\\
\LLibRetVoid: &
\stateDV[(m),h,(pp),(r),\_] \xrightarrow{(\tiLH,(\voidDV,\voidDV,h))}_L 
\stateDV[(),\emptyset,(),(),\_]
&& \text{for~}
m[pp]  = \progcmd{return\text{-}void} 
\\
\LLibRet: &
\stateDV[(m),h,(pp),(r),\_] \xrightarrow{(\tiLH,(\lowerDV(r(v_a)),\upperDV(r(v_a)),h))}_L 
\stateDV[(),\emptyset,(),(),\_]
&& \text{for~}
m[pp] = \progcmd{return}~v_a
\end{align*}
\end{center}
\iffullversion
\scriptsize
for $v_a,v_b,v_k \in \mathcal{X}$, $n \in \setN$, $mid \in \mathcal{MID}$.
\fi
\end{framedFigure}

\begin{framedFigure}{ADL split-state representation, adversary semantics, 
    $\varepsilon$ denotes the empty string.}{fig:adl-to-adlss-att}
\begin{center}
\footnotesize
\begin{align*}
\ALeakMsg: &
    \stateDV[\varepsilon,as,\varepsilon] 
    \xrightarrow{(\tiHA,\mathit{mid},{v_a},\ldots,{v_e})}_A
    \stateDV[(\mathit{mid},{v_a},\ldots,{v_e}),as,\varepsilon]
\\
\AReceiveMsg: &
    \stateDV[i,as,\mathit{res}] 
    \xrightarrow{(\tiAH, (\lowerDV(\mathit{res}), \upperDV(\mathit{res}))) }_A
    \stateDV[\epsilon, as',\epsilon]%
&& \text{for~}
\mathit{res}\in \valuesDV
\\
\AFinal: &
    \stateDV[i,as,\mathit{res}] 
    \xrightarrow{(\tifinal,\mathit{res})}_A 
    \stateDV[\epsilon, as',\epsilon]%
&& \text{for~}
res \in\advrespDV
\end{align*}
\end{center}
\iffullversion
\scriptsize
for $v_a,v_b,v_c,v_d,v_e \in \mathcal{X}$, $n \in \setN$ and $mid \in \mathit{Mal}$, 
\fi
\end{framedFigure}

We now define the condition necessary for splitting an ADL program
into an honest program semantic and a crypto-API semantics. 
\iffullversion\else%
For a formal
definition, we refer to the technical report~\cite{ourfullversion}.
\fi
An ADL program $\proto$ is \emph{pre-compliant} with
a crypto-API specification if:
\begin{inparaenum}[(i)]
    \item \iffullversion\else\label{itm:reads-only-slice}\fi
        the crypto-API only reads the slice of the heap belonging to its
        object, \ie, only objects reachable from locations stored in
        its own member variables,
    \item \iffullversion\else\label{itm:never-invokes-adv}\fi
        the crypto-API never invokes the adversary, and
    \item \iffullversion\else\label{itm:honest-deterministic}\fi
        only the crypto-API and the adversary make use of the
        $\textsf{rand}$-instruction.
\end{inparaenum}

\iffullversion
\begin{definition}[Reachable locations]
    The set of reachable locations from location $l\in\locationsDV$ in
    heap $h\in\heapsDV$ is defined recursively:
    \[
    \lreachabe(l) = \begin{cases}
        l & \text{if $h(l)\in\arraysDV$} \\
        \emptyset & \text{if $h(l)$ undefined}\\
        \bigcup_{f \mid F(f)\in\locationsDV} \lreachabe(l)
        & \text{if $h(l)=(c,F)\in\objectsDV$}
    \end{cases}
    \]
\end{definition}

\begin{definition}[Heap slice]
    Let 
    $h\in \heapsDV =  
    \locationsDV \rightharpoonup (\mathcal{O} \cup \mathcal{AR})$ 
    and $l\in \locationsDV$. We define $l$'s slice of $h$ as
    a partial function on $\locationsDV$ as follows:
    \[
        \sliceof[h]{ l } (l')
        = 
            h(l')  \quad \text{if $l'\in \lreachabe(l)$}
    \]
\end{definition}

\begin{definition}[ADL pre-compliance] \label{def:pre-compliant}
    An ADL program $\proto$ is pre-compliant with
    a library specification \libSpec, if for each $\secpar$ for the Crypto-API semantics in $\ADLss_{\proto^\secpar,\adv^\secpar,s_0}$ (see Definition~\ref{def:adl-split-state-representation}) the following holds:
    \begin{itemize}
        \item \label{itm:reads-only-slice}
            the Crypto-API semantics only reads the slice of the heap belonging to its
            object, \NOTEE{For the soundness proof, we do not need the requirement that the library solely reads a data slide from the heap. We might be able to just as well give the entire heap as an argument. For extracting a model that is used for verification, we can later on describe several methods for optimizing (i.e., reducing) the model.}
			\ie,
            for all
            $\mathit{mid}\in\midDV$, 
            $h_1,h_2\in\heapsDV$,
            $v_k,\ldots,v_{k+n-1}\in\valuesDV$,
            $\advs\in\advss$,
            $u\in\valuesDV$,
            such that
            $(\mathit{mid},h(r(v_k)))\in\libSpec$
            and
            $m = \lookupDirectDV_\proto(\mathit{mid},h(r(v_k)).\mathsf{class})$,
            \begin{align*}
                \Pr[
                    \stateDV[{(m),h_1[\sliceof{v_k}],(0),(\defaultRegistersDV([r_k,\ldots,r_{k+n-1}])),\advs}] 
                    \xrightarrow{\alpha_1}
                    \cdots
                    \xrightarrow{\alpha_n}
                    \stateDV[{u,h_1[\sliceof{v_k}],\advs_n}] 
                ] = \\
                \Pr[
                    \stateDV[{ (m),h_2[\sliceof{v_k}],(0),(\defaultRegistersDV([r_k,\ldots,r_{k+n-1}])),\advs }] 
                    \xrightarrow{\alpha_1}
                    \cdots
                    \xrightarrow{\alpha_n}
                    \stateDV[{u,h_2[\sliceof{v_k}],\advs_n}] 
            ]. 
            \end{align*}
        \item \label{itm:never-invokes-adv}
            the Crypto-API semantics never invokes the adversary, \ie,
            for all
            $\mathit{mid}\in\midDV$, 
            $h_1,h_2\in\heapsDV$,
            $v_k,\ldots,v_{k+n-1}\in\valuesDV$,
            $v_a,\ldots,v_e\in\valuesDV$,
            $\advs\in\advss$,
            $u\in\valuesDV$,
            such that
            $(\mathit{mid},h(r(v_k)))\in\libSpec$
            and
            $m = \lookupDirectDV_\proto(\mathit{mid},h(r(v_k)).\mathsf{class})$,
            \begin{align*}
                \Pr[ &
                    \stateDV[(m),h,(0),(\defaultRegistersDV([r_k,\ldots,r_{k+n-1}])),\advs] 
                    \rightarrow^*
                    \stateDV[m'\cdot ml,h,pp\cdot ppl,rl,\advs'] \\
                    & \wedge
m[pp] = \invokestatic~ v_a,\dots,v_e,\mathit{mid} 
\wedge \mathit{mid} \in \Mal_\mathit{static}
                ] = 0\text{, and}
            \end{align*}
        \item\label{itm:honest-deterministic}
        only the Crypto-API semantics and the adversary semantics in $\ADLss_{\proto^\secpar,\adv^\secpar,s_0}$ make use of the
        $\progcmd{rand}$-instruction, \ie,
            for any transition 
            $s_0 \rightarrow_{\Pi,n_1,p_1} s_1 \cdots
            \rightarrow_{\Pi,n_l,p_l} s_l$ to some 
            state $s_l=\stateDV$ with $m[pp]=\progcmd{rand} v_a$ for
            some $v_a$, 
            there is a position in the method stack, i.e.,
            an $i\in\setN$ 
            and $m_i\cdots ml'$, $r_i\cdot rl'$ and
            $pp_i\cdot rl'$ the $i$th suffix of $ml$, $rl$ or $ppl$,
            respectively, 
            such that
            $m_i[pp_i]= \invokedirectr~v_k,n,\mathit{mid}$
            and
            $(\mathit{mid},h(r(v_k)))\in\dom(\libSpec)$,
            for some $v_k$,$n$ and $\mathit{mid}$, and
            $h$ the heap when this method was called, \ie, $h$
            such that, for
            the
            largest $j\in\set{0,l}$ and some $\advs'\in\advss$,
            $s_j = \stateDV[
            m_i\cdots ml', h, r_i\cdot rl',
            pp_i\cdot rl', \advs']$.
    \end{itemize}
\end{definition}
\fi 

\begin{definition}[ADL split-state representation]%
\label{def:adl-split-state-representation} 
    Given a uniform family of ADL programs 
    $\proto= \set{\proto^\secpar}_{\secpar\in\setN}$,
    every member thereof
    pre-compliant with a crypto-API
    specification \libSpec,
    a family of attackers
    $\set{\adv^\secpar}_{\secpar\in\setN}\in\advM$,
    and an initial configuration
    $s_0=\stateDVs$,
    we define the
    ADL split-state representation 
    $(\ADLss_{\proto^\secpar,\adv^\secpar,s_0})_{\secpar\in\setN}$
    as the family of
    split-state compositions of the following three transition
    systems for every $\secpar\in\setN$:
    \begin{itemize}
        \item \emph{honest-program semantics:}
            The transition system
            $(\states_H,s^0_H,A_H,\delta_H)$ is defined
            by the ADL semantics%
            \iffullversion
            (not including the transitions 
            in
            Figure~\ref{fig:adl-prob-adv})
            \fi,
            extended with rules in
            Figure~\ref{fig:adl-to-adlss}, and the initial state
            $\stateDV[{ml,h,ppl,rl,\advsd^H}]$ 
            for an
            arbitrary initial adversary state $\advsd^H$ which will be
            ignored.

        \item \emph{crypto-API semantics:}
            The transition system
            $(\states_L,s^0_L,A_L,\delta_L)$ is defined
            by the ADL semantics except
            \rReturnVoidF,  \rReturnF,
            extended with rules 
            in Figure~\ref{fig:adl-prob-adv} and
            $\Prob$ (see Figure~\ref{fig:adl-to-adlss-lib}).
            The initial state
            is $\stateDV[(),\emptyset,(),(),\advsd^L]$ for some
            arbitrary initial adversary state $\advsd^L$ which will be
            ignored.

        \item \emph{attacker semantics:}
            The transition system
            $(\states_A,s^0_A,A_A,\delta_A)$
            consists of the transitions of the adversary \adv,
            extended with the transitions in
            Figure~\ref{fig:adl-to-adlss-att}.
    \end{itemize}
    To make our notation more concise, we abbreviate the initial state
    $(\top,\stateDV[{ml,h,ppl,rl,\advsd^H}]),(\bot,s_A),
    (\bot,\stateDV[ (),\emptyset,(),(),\advsd^L ])$
    for some honest initial state $\stateDVs$ as
    $\stateDVs^\mathit{ss}$.
\end{definition}

\iffullversion

Observe that the adversarial steps are explicit in this
representation, and hence each step is annotated with a computation
time of $1$. 

\begin{definition}[Single-step transition system]\label{def:single-step}
    A probabilistic transition system is \emph{single-step} if
    $s \predRel{n}{[p]} s', p>0$ or 
    $s \predRel{n}{a} s'$  imply $n=1$.
\end{definition}

For probabilistic transition steps that are single-step, such as any
system that is part of an ADL split-state representation,
we will omit
the number of computation steps in each transition from here forth.

While the overall split-state representation is probabilistic, the
honest program semantics is now deterministic, except for transitions
which call the Crypto-API or the adversary. This property is helpful
in formulating the embedding, as this means that from any given state, any 
follow-up state up to the point where the Crypto-API or the adversary
is called are uniquely defined.

\begin{definition}[Internally deterministic]\label{def:internally-deterministic}
    A single-step probabilistic transition system
    $T = (S,s_0,A,\delta)$ 
    is \emph{internally deterministic} if 
    every probabilistic state is deterministic,
    and if every non-deterministic state $s$ is either
    deterministic,
    or if for $A'=\tiAH$ and $A'=\tiLH$, there are
    $\delta_{A'} \colon S\times A' \to S$ such that
    \[
        \delta_H(s) \subseteq \set{(a,s',1) \mid a\in A' \land s'= \delta_{A'}(s,a) }.
    \]
\end{definition}
\begin{lemma}
    The honest-program semantics within any ADL split-state
    representation is internally deterministic.
\end{lemma}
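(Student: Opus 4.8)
The plan is to verify directly that the honest-program transition system $T_H=(\states_H,s^0_H,A_H,\delta_H)$ of Definition~\ref{def:adl-split-state-representation} satisfies the two conditions of Definition~\ref{def:internally-deterministic}. It is single-step by construction (every ordinary ADL rule carries computation time $1$---see the discussion following Figure~\ref{fig:selected-rules}---and so does every rule of Figure~\ref{fig:adl-to-adlss}), so the notion applies. The first condition, that every probabilistic state be deterministic, holds vacuously: the only rule of ADL producing a genuinely probabilistic transition is \Prob{} (Figure~\ref{fig:adl-prob-adv}), and $T_H$ is by definition the ADL semantics \emph{without} the rules of Figure~\ref{fig:adl-prob-adv}, augmented only by the rules of Figure~\ref{fig:adl-to-adlss}, none of which is probabilistic; thus $\delta_H(s)\in\calP(A_H\times\states_H\times\setN)$ for every $s$, so $T_H$ has no probabilistic states at all. (Pre-compliance, clause~(iii) of Definition~\ref{def:pre-compliant}, moreover ensures that the honest program never itself samples randomness, but that fact is not needed here.)

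For the second condition I would distinguish the shape of an arbitrary non-deterministic state $s\in\states_H$. \emph{(a)} If $s=\stateDV$ is a full configuration whose current instruction $m[pp]$ is neither $\invokedirectr v_k,n,\mathit{mid}$ with $(\mathit{mid},h(r(v_k)))\in\dom(\libSpec)$ nor $\invokestatic v_a,\dots,v_e,\mathit{mid}$ with $\mathit{mid}\in\Mal_\mathit{static}$, then at most one ordinary ADL rule applies---this is the usual determinism of the ADL small-step relation (see the observation in Example~\ref{ex:adl}); the only instructions with two candidate rules are \texttt{if-test} and \texttt{return}, whose applicability conditions are mutually exclusive---so $\delta_H(s)$ is a singleton or empty, and $s$ is deterministic. \emph{(b)} If $m[pp]=\invokedirectr v_k,n,\mathit{mid}$ with $(\mathit{mid},h(r(v_k)))\in\dom(\libSpec)$, then---since an entry point of $\libSpec$ is not simultaneously resolved by $\lookupDirectDV_\proto$, a well-formedness property underlying the split-state construction---the only enabled rule is \LibCall{}, whose unique transition goes to $(\wait,s)$; $s$ is deterministic. \emph{(c)} Symmetrically, if $m[pp]=\invokestatic v_a,\dots,v_e,\mathit{mid}$ with $\mathit{mid}\in\Mal_\mathit{static}$, then $\lookupStaticDV_\proto(\mathit{mid})$ is undefined---no element of $\Mal$ is resolved by any lookup table---so \textsc{rISt} is disabled and only \LeakMsg{} applies, again with a unique transition to $(\wait,s)$; $s$ is deterministic.

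It remains to handle the states $(\wait,s')$ and the final configurations. \emph{(d)} From a state $(\wait,s')$ only \LibResponse{} or \ReceiveMsg{} can fire, and these are mutually exclusive, since the single instruction at the program counter of $s'$ can be a crypto-API call (which triggered \LibCall{}) or a malicious static call (which triggered \LeakMsg{}) but not both. If \LibResponse{} applies, then $\delta_H((\wait,s'))$ consists exactly of the transitions $(({\tiLH},(m_\mathit{lo},m_\mathit{up},h')),s'',1)$ with $m_\mathit{lo},m_\mathit{up}\in\valuesDV$ and $h'\in\heapsDV$, where the successor $s''$---namely $s'$ with its program counter advanced, its result registers set to $m_\mathit{lo}$ and $m_\mathit{up}$, and its heap updated by $h'$---is a fixed function of $(\wait,s')$ and of the received message; taking $\delta_{\tiLH}$ to be that function witnesses the inclusion required by Definition~\ref{def:internally-deterministic}. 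The case of \ReceiveMsg{} is analogous, with ${\tiAH}$-labels and witness $\delta_{\tiAH}$ (and no heap update), and if neither rule applies then $\delta_H((\wait,s'))=\emptyset$. \emph{(e)} A final configuration $\stateDV[u,h]$ admits only \FinalCall{}, whose unique transition is the self-loop labelled $({\tiHA},\tifinalCall)$; it is deterministic. Since cases \emph{(a)}--\emph{(e)} exhaust the state space of $T_H$, every non-deterministic state is deterministic or has the $\delta_{\tiLH}$/$\delta_{\tiAH}$ shape, and $T_H$ is internally deterministic.

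I expect the difficulty to be bookkeeping rather than depth. The whole argument hinges on the mutual exclusivity of the rules enabled in each configuration---ruling out that an ordinary ADL transition and a message-passing transition, or two distinct message-passing transitions, fire together---which I would isolate as a short preliminary observation resting on the disjointness of $\Mal$ from the ranges and domains of the lookup tables (part of the threat model) and the analogous separation of the entry points of $\libSpec$ from $\lookupDirectDV_\proto$ (built into the ADL split-state representation). The one conceptual point worth emphasising is that the unbounded choice introduced by \LibResponse{} and \ReceiveMsg{}---the crypto-API, respectively the adversary, may return an arbitrary message---is precisely, and only, the form of non-determinism that Definition~\ref{def:internally-deterministic} tolerates, because in both rules the successor configuration is a deterministic function of that message.
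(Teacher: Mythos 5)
Your proof is correct and takes essentially the same route as the paper's, which compresses your case analysis into a three-line "by careful inspection" argument: no probabilistic transitions by construction, ordinary ADL states admit at most one rule, and the added rules are either deterministic or (for \LibResponse{} and \ReceiveMsg{}) determine the successor as a function of the action label. One small caveat: your justification in case \emph{(b)} — that crypto-API entry points are not resolved by $\lookupDirectDV_\proto$ — contradicts \LLibCall{}, which explicitly resolves them via $\lookupDirectDV_\proto$; the intended disambiguation (that \LibCall{} supersedes \rIDR{} whenever $(\mathit{mid},h(r(v_k)))\in\dom(\libSpec)$) is left implicit in the paper as well, so this does not change the structure of the argument.
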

\begin{proof}
    By definition, the honest program semantics exclude probabilistic
    transitions. 
    By careful inspection, we can verify that 
    all states except those covered by the side-conditions 
    in Figure~\ref{fig:adl-to-adlss} are deterministic ( only
    one inference rule applies to each state, and each rule determines
    a unique follow-up state).
    The rules in Figure~\ref{fig:adl-to-adlss} are either
    deterministic, or in case of \LibResponse and \ReceiveMsg,
    determine the follow-up state depending on the action. 
\end{proof}
\fi

\begin{lemma}\label{lem:adl-exec-equals-adlss-exec}
    For all ADL programs $\proto$ that are
    pre-compliant with a crypto-API specification
    \libSpec,
    all initial configurations $\stateDVs$,
    all adversaries
   all $n\in\setN$, we have
   \[
    \Pr[
        \interact{ \Pi\stateDVs}
        {\adv}
    \downarrow_{n}x]
     = \iffullversion\else\\\fi
     \Pr[ T \downarrow_{n}x],
 \]
 where $T$ denotes the ADL split-state representation
    of $\proto$ and $\adv$
    for initial configuration
    $\stateDVs$.
\end{lemma}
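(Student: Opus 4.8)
The plan is to show that the split-state representation $T$ is just an \emph{unfolding} of the transition system $\ADL_{\proto,\adv,\stateDVs}$ underlying the left-hand side (recall $\Pr[\interact{\proto\stateDVs}{\adv}\downarrow_n x]=\Pr[\ADL_{\proto,\adv,\stateDVs}\downarrow_n\langle x\rangle]$), and then to read off the claimed equality of outcome probabilities. Concretely, I would construct a weight- and step-count-preserving correspondence between executions of $\ADL_{\proto,\adv,\stateDVs}$ and executions of $T$: every ordinary ``honest'' ADL step is carried over verbatim to a non-synchronising step of $T_H$; a single adversary-call transition \AdvInv, \AdvRet, or \AdvFin of weight $p$ and annotation $n{+}2$ is expanded into the block consisting of the $\tiHA$-synchronisation (\LeakMsg/\ALeakMsg, resp.\ \FinalCall/\ALeakMsg), the $n$ internal attacker transitions of $T_A$ realising the corresponding response, and — except in the \AdvFin case — the $\tiAH$-synchronisation (\ReceiveMsg/\AReceiveMsg, resp.\ the terminating \AFinal); and a crypto-API method call, which in $\ADL_{\proto,\adv,\stateDVs}$ is an ordinary \rIDR/method-body/\rReturn sequence, is expanded into the $\tiHL$-synchronisation (\LibCall/\LLibCall), the library-internal execution of the same method body in $T_L$, and the $\tiLH$-synchronisation (\LibResponse/\LLibRet).

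I would establish this correspondence by induction on executions, and here the three pre-compliance conditions do the work. Condition~(iii) — only the crypto-API and the adversary use $\mathsf{rand}$ — guarantees that $T_H$ contains no \Prob transition, so it is internally deterministic (as already observed for ADL split-state representations) and all randomness of $T$ lives in $T_L$ and $T_A$; this matches $\ADL_{\proto,\adv,\stateDVs}$, whose only randomness is the \Prob steps occurring inside crypto-API bodies together with the weight $p$ folded into \AdvInv/\AdvRet/\AdvFin. Condition~(ii) — the crypto-API never invokes the adversary — ensures a library-internal run never interleaves with attacker activity, so the whole method execution can be matched atomically against the inlined execution in the original. Condition~(i) — the crypto-API reads only the slice of the heap belonging to its object — ensures that passing the full heap $h$ in \LLibCall and afterwards applying the overwrite $h[h']$ is observationally identical to running the method in place in $\ADL_{\proto,\adv,\stateDVs}$, because the two heaps agree on everything the method can read and the method only writes to its own slice.

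It then remains to check two invariants along matched executions. First, equality of weights: honest steps and all $\tiHL/\tiLH/\tiHA/\tiAH$ synchronisations carry probability $1$; the library-internal \Prob steps of a crypto-API call contribute the same product of $1/|\calN|$ factors as the inlined \Prob steps of the original; and a single adversary-call transition of weight $p$ is matched, by the definition of the trace distribution induced by $T_A$, against precisely the attacker-internal execution block(s) of $T_A$ whose total probability mass is $p$. Second, equality of computation-step counts: the annotations $n{+}2$ on \AdvInv/\AdvRet/\AdvFin are chosen exactly so that the $n$ attacker-internal steps plus the leak- and receive-synchronisations of the unfolded picture sum to the same total, and likewise \rIDR together with the instructions of the method body and \rReturn match the $\tiHL$-synchronisation, the library-internal steps, and the $\tiLH$-synchronisation. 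Since the correspondence also preserves the final adversarial output, summing over all matched pairs gives, for every $n$ and $x$,
\[
\Pr[\ADL_{\proto,\adv,\stateDVs}\downarrow_n\langle x\rangle]
=\Pr[T\downarrow_n x],
\]
which is the claim.

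I expect the step-count bookkeeping to be the main obstacle: one must keep the $n{+}2$ annotations, the cost of synchronising transitions in the asynchronous parallel composition, the auxiliary $\wait$ states, and the active-flag overhead all exactly aligned, so that ``terminates within $n$ steps'' transfers with no slack in either direction. The probability algebra and the invocation of the pre-compliance conditions are comparatively routine once the right correspondence on executions has been fixed.
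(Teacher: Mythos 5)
Your proposal is correct and follows essentially the same route as the paper's proof: an induction over executions that matches honest steps verbatim, expands adversary calls and crypto-API calls into the corresponding synchronisation blocks, and invokes the three pre-compliance conditions in exactly the roles you describe (no \textsc{Prob} in $T_H$, no attacker interleaving inside library runs, and heap-slice locality justifying the $h[h']$ overwrite). The paper formalises your ``block matching'' as an explicit correspondence relation on states (partitioned into honest, library, and final states, with invariants such as $ml = ml_L\cdot ml_H$ and $h = h_H[h_L]$) and proves by induction on the step count that corresponding states accumulate equal probability mass, which is precisely the weight- and step-count-preserving correspondence you outline.
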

\iffullversion
\begin{proof}
    We need to show the equivalence between
    \[
    \sum_{\parbox{3cm}{\begin{center}
    $(s_0 \xrightarrow{\alpha_1}_{n_1}$\\$\cdots\xrightarrow{\alpha_k} s_k)$\\
    $\in\supp(\Exec(\ADL_{\proto,\adv,\stateDVs}))$\end{center}}
}
    \Pr \left[\Exec(\ADL_{\proto,\adv,\hat{s_0}}) = s_0 \xrightarrow{\alpha_1}_{n_1} s_1 \cdots
    \xrightarrow{\alpha_{k-1}}_{n_{k-1}} s_{k-1} \xrightarrow{\alpha_k}_{n_k}
    \stateDV[x] \land
    \sum_{i=1}^{k} n_i \le n
    \right]
\]
where $\hat{s_0}$ is the configuration belonging to $s_0$, i.e., $s_0$ without the attacker state, and 
    \[
    \sum_{ (s_0 \xrightarrow{\alpha_1} \cdots s_k)\in\supp(\Exec(T)) }
    \Pr \left[\Exec(T) = s_0 \xrightarrow{\alpha_1}_{n_1} s_1 \cdots
        \xrightarrow{\alpha_k = (\tifinal,x)}_{n_k}
    s_k
    \land  
    k \le n
    \right].
\]
\newcommand{\sysa}{\ensuremath{\ADL_{\Pi^\secpar}}\xspace}
\newcommand{\sysb}{\ensuremath{T^\secpar}\xspace}

    We will prove this equivalence by induction on the length of the
    execution in \sysa, however, in order to define what it means for
    two executions to correspond, we
    partition execution steps into
    honest program states,
    library states,
    and final states.
    The initial configuration is an honest program state by definition.
    Final states stay final states. 
    Each other state
    is an honest state, if the previous state was an honest state and the
    transition between the two was not an instance of $\rIDR$ where
    $(\mathit{mid},h(r(v_k)))\in \libSpec$. If the
    previous state was an honest state, but this condition is not
    true, the successor state is a library state. 
    Similarly, when the
    previous state is a library state and the transition is an instance of 
    $\rReturnVoid$ or $\rReturn$
    such that the state of the
    successor state matched the stack of the latest previous state that
    was not a library state, then the successor is an honest program
    state, more formally: if $s_i \rightarrow s_{i+1}$ is an instance
    of $\rReturnVoid$ or $\rReturn$, the last previous honest program
    state $s_j$ with $j<i$ is such that 
    $s_j = \stateDV[ml,h,pp\cdot ppl,rl,\advs]$, then $s_{i+1}$ is a program
    state if $s_{i+1}=\stateDV[ml,h',pp+1\cdot ppl,rl',\advs']$ for
    some $h'$, $rl'$ and $\advs'$.

    Given this transition, we define a relation $\calR$ between executions,
    which is, for the most part, based on the resulting state. 
    \begin{itemize}
        \item A state $s$ in \sysa, which is a final state, 
            \ie, $s
            = \stateDV[\mathit{res}]$ for some $\mathit{res}$,
            corresponds to a state 
            $s_T=(s_H,s_A,s_L)$ in \sysb,
            if
            the transition to $s_T$ in $\sysb$ was labelled
            $(\tifinal,\mathit{res})$.
        \item A state $s$ in \sysa, which is an honest program state
            of form $s=\stateDV[u,h,\advs]$, corresponds to a state
            $s_T=(s_H,s_A,s_L)$ in \sysb,
            if
            $s_H=(\top,\stateDV[u,h])$ and
            $s_A=(\bot,\stateDV[\epsilon,\advs,\epsilon])$.
        \item A state $s$ in \sysa, which is an honest program state
            $\stateDV[ml,h,ppl,rl,\advs]$
            corresponds to state 
            $(s_H,s_A,s_L)$
            iff.
            $s_H=(\top,\stateDV[ml_H,h_H,ppl_H,rl_H,\advs_H])$,
                $s_L=(\bot,\stateDV[ml_L,h_L,ppl_L,rl_L,\advs_L])$,
                $s_A=(\bot,(\mathit{in},\advs,\mathit{out}))$,
                and
                \begin{itemize}
                    \item $ml=ml_H$,
                    \item $h=h_H$,
                    \item $ppl=ppl_H$,
                    \item $rl=rl_H$.
                \end{itemize}

            \item A state $s$ in \sysa, which is a library state
                $\stateDV[ml,h,ppl,rl,\advs]$
            corresponds to state 
            $(s_H,s_A,s_L)$
            iff.
                $s_H=(\bot,\stateDV[ml_H,h_H,ppl_H,rl_H,\advs_H])$,
                $s_L=(\top,\stateDV[ml_L,h_L,ppl_L,rl_L,\advs_L])$,
                $s_A=(\bot,(\mathit{in},\advs,\mathit{out}))$,
                and
                \begin{itemize}
                    \item $ml=ml_L\cdot ml_H$, where  $ml_L\neq
                        \epsilon$,
                    \item $h=h_H[h_L]$,
                    \item $ppl=ppl_L\cdot ppl_H$,
                    \item $rl=rl_L\cdot rl_H$.
                    \item $ml_H=ml_\mathit{enter}$, $rl_H=rl_\mathit{enter}$
                        and $ppl=ppl_\mathit{enter}$ for
                        $s_i=\stateDV[ml_\mathit{enter},h_i,ppl_\mathit{enter},rl_\mathit{enter},\advs_i]$,
                        $i<k$ the last program state such that $s_{i+1}$ and all
                        subsequent states were library states
                \end{itemize}
    \end{itemize}

 Fix $n$.
 We show by induction on number $k$ of steps in execution:
 For any execution of
 $k' = \sum_{i=0}^k n_i$
 steps that ends in a state 
 $\stateDV[ml,h,ppl,rl,\advs]$ in \sysa,
 the sum of probabilities of all executions of $k'$ steps in $\sysb$
 that ends in a corresponding state is exactly the same.
 Note that for $a\neq a'$,  $a \calR b$ implies $\neg(a' \calR b)$,
 hence this implies the claim.
 
Initially, this is the case, because if 
$\stateDV[ml,h,ppl,rl,(\secpar,z)]$ in \sysa,
then $s_H=(\top,\stateDV[ml,h,ppl,rl])$,
$s_L=(\bot,(\epsilon,\emptyset,\epsilon,\epsilon,\advsd))$ and
$s_A=(\bot,(\secpar,z))$.
(W.l.o.g., we use $\advsd$ for $\advsd^L$ or $\advsd^H$, depending on
context.)

Assume the induction hypothesis (IH) holds for $k$ steps, \ie, up to
a state $s_k$ in \sysa. We perform a case distinction on $s_k$.

\paragraph{If $s_k$ is a library state,} it has the form
$s_k=\stateDV[ml,h,ppl,rl,\advs]$, as the rules $\rReturnVoidF$ and
$\rReturnF$ are not part of the library semantics.
%
Thus $ml=ml'\cdot ml_\mathit{enter}$ for some non-empty sequence
$ml'$, as 
$\rReturnVoid$, 
$\rReturnVoidF$, 
$\rReturn$ and
$\rReturnF$ 
are the only rules reducing the method stack, but if one of them 
would have been applied, $s_k$ would not be a library state.
By IH, we have that $ml=ml_L\cdot ml_H$, and $ml_H=ml_\mathit{enter}$
so any rule that does not alter the method stack
preserves the correspondence,
given that $\Pi^\secpar$ is pre-compliant (see 
Condition~\ref{itm:reads-only-slice}). This holds for
method calls to honest libraries, too, as they 
extend $ml$ and, likewise, $ml_L$ to the left.
Method call to the adversary are excluded by
Condition~\ref{itm:never-invokes-adv}.
As $s_k$'s frame is of length at least $2$,
$\rReturnVoidF$
$\rReturnF$ never apply.
Consider now the two remaining cases, where either
$\rReturnVoid$
or
\rReturn are applied. We treat only the
latter, as the former is completely analogous. 
Let
$s_k=\stateDV[m\cdot ml',h,pp \cdot
ppl,r\cdot r'\cdot rl,\advs]$ for $ml'\neq \epsilon$ and
$m[pp] = \progcmd{return} v_a$.
As $m\cdot ml' = ml_L\cdot ml_\mathit{enter}$,
either $ml_L=(m)$ and thus $ml'=ml_\mathit{enter}$,
or $|ml_L|>1$.
In the first
case, $ml_L = (m)$, 
only \LLibRetVoid or \LLibRet
(depending in $m[pp]$) apply. 
These are labelled 
$(\tiLH,\lowerDV(r(v_a)),\upperDV(r(v_a)),h)$
with probability $1$ and $1$ computation step.
By asynchronous parallel composition, the
the honest program semantics 
moves with rule 
\LibResponse. The following state is thus
an honest program state, and relation holds.
In particular, the library's heap overwrites
the honest program's, thus it holds that
$h'=h$ (by definition of the transition),
$h=h_H[h_L]$ (by IH)
$h_H[h_L]=h_H'$ (by \LibResponse)
and thus $h'= h_H'$.
In the second case, $|ml_L|>1$,
\rReturnVoid (or \rReturn) are used in both
cases, preserving the IH\@.
It is not possible to enter a final state from $s_k$,
as 
\rReturnVoidF and
\rReturnF only apply if the method stack is of size 1.

\paragraph{If $s_k$ is an honest program state,} it has the form
$s_k=\stateDV[ml,h,ppl,rl,\advs]$, or
$s_k=\stateDV[u,h,\advs]$.
In the second case,
$s_k=\stateDV[u,h,\advs]$,
the only applicable rule is
$\AdvFin$,
resulting 
in a transition labelled
$(\tifinal,\mathit{res})$.
to state 
$\stateDV[\mathit{res}]$, for the
sum of all probabilities of adversary steps that go
from $\adv$ to $\mathit{res}$ in $n$ steps, for some $n$.
Let $p$ be this sum.
The only applicable rule in $\sysb$ is
\FinalCall, which 
has label $(\tHA,\tifinalCall)$ and thus
executes
\ALeakMsg 
in parallel.
There is a set of
corresponding executions of the adversary for each
$n_k$,  the sum of which has probability $p$, as the
number of steps, starting states and final step are
the same. Including the final step $\AFinal$ (which
is always applicable when the third element of the triple is in
$\advrespDV$, and excludes \AReceiveMsg),
each of these results in a transition 
$(\tifinal,\mathit{res})$.
for $n+2$ steps.
Now we consider the first case,
$s_k=\stateDV[ml,h,ppl,rl,\advs]$.
Let the transition to $s_{k+1}$ be an instance of
\rIDR where
$(\mathit{mid},h(r(v_k)))\in\libSpec$.
Then, the next state is a library state
$sk' = \stateDV[m'\cdot ml,h,0\cdot ppl, \defaultRegistersDV ( [r(v_k),\dots,r(v_{k+n-1})] )\cdot rl,as]$
For each corresponding state in \sysb,
\LibCall in parallel with \LLibCall is the only transition possible in
\sysb. By asynchronous parallel composition, each follow-up state will
be such that 
$s_H=(\bot,ml,h,ppl,rl,\advs')$,
$s_L=(\top,(m),\sliceof{r(v_k)},(0),
\defaultRegistersDV([r(v_k),\dots,r(v_{k+n-1})]),\advs'')$, and
$s_A=(\bot,\mathit{in},\advs,\mathit{out}$
for some $\advs',\advs''\in\advrespDV$,
$\mathit{in} \in (F_\mathit{mal}, \valuesDV^* )$, and
$\mathit{out} \in \times (\valuesDV \cup \advrespDV \cup \set{\epsilon})$.
These library states correspond to each other, in particular,
$h=h_H=h_H[h_L]=h_H[\sliceof{r(v_a)}]$, and 
$ml = ml_\mathit{enter}$,
$ppl = ppl_\mathit{enter}$,
$rl = rl_\mathit{enter}$.
Let the transition be an instance of the rule \AdvInv with 
probability $p$ in $n$ steps. Similar to the previous case involving
the attacker, the sum of all possible transitions in $n+1$ steps from
$(s_H,s_A,s_L)$,
equals $p$,
as only rules \LeakMsg and \ALeakMsg apply (in parallel).
The adversary is activated on same input, \ie, all
instances of states $(i,\advs,\mathit{res}$
with $\mathit{res}\in\valuesDV$, reachable in
$T$ synchronise with \ReceiveMsg by definition
of \AReceiveMsg and
Definition~\ref{def:attacker}. 
The number of computation steps
account for the extra activation with
$\ReceiveMsg$ in parallel with $\ReceiveMsg$.
Let the transition be an instance of the rule \AdvInv with 
probability $p$ in $n$ steps. 
Similar to the
previous case, where \AdvInv was invoked from a library state,
this transition is labelled
$(\tifinal,\mathit{res})$.
and results in a 
state 
$\stateDV[\mathit{res}]$, for the
sum of all probabilities of adversary steps that go
from $\adv$ to $\mathit{res}$ in $n$ steps, for some $n$. 
The argument remains the same. Note that
the
adversary produces a final state
(which cannot have successor, by
Definition~\ref{def:attacker}), 
thus
there is no final state before any final state. 
Hence 
$\LeakMsg$ and $\ALeakMsg$ synchronise as before,
and $\AFinal$ is the only way to proceed from
final state.
If the transition was an instance of \Prob{}, by
Condition~\ref{itm:honest-deterministic},
Definition~\ref{def:pre-compliant}, $s_k$ would not be an
honest-program state.
Any other transition affects only the left-most element of $ml$, $ppl$
and $rl$, thus the same transition can be applies in \sysb.
\paragraph{If $s_k$ is a final state,} 
there is no next step in $s_k$ and all corresponding states. This
concludes the proof.
\end{proof}

\fi


\subsection{Over-approximating ADL}\label{section:approx}
Recall that our plan is to instantiate $\calD_c$ with symbolic terms and even with positions in a CoSP tree. However, some operations are not defined on symbolic terms. As an example consider the XOR operation. It has been shown that each computationally sound symbolic representation of XOR has to work on symbolic bitstrings. Hence, for a typical symbolic representation of a ciphertext the XOR operation is not defined in the symbolic model. 

In order to get rid of any undefined operations after instantiating $\calD_c$ with symbolic terms or CoSP tree positions, we over-approximate each undefined operation by querying the attacker for the result.
To this end, we define an over-approximation of the ADL semantics
in terms of its split-state representation. This over-approximation tracks
values resulting from calls to the crypto-API. Whenever
a computation, e.g., a unary operation, a binary operation or
a test, is performed on these values, the over-approximation gives
the adversary more power: she can decide the values of these
computations. 
This is necessary, as these operations cannot be computed in
the symbolic model. Already performing this over-approximation on the ADL
split-state semantics simplifies the embedding.
Thanks to the split-state composition, we can define this
over-approximation canonically. 
\iffullversion\else%
Yet for space constraints, we
refer the reader to the technical report~\cite{ourfullversion}, and give here only the core of
it.
\fi%

\iftikzavailable
\begin{figure}[h] 
\centering
\begin{tikzpicture}[
    node distance=2.5cm,
    commnode/.style={align=center},
]
    \node[commnode]	(D)
    {$\calD$\\ (e.g., $\valuesDV$)} ;
    \node[commnode, right of = D]	(Dc) 
    {$\calD_c$\\ (e.g., $\T$)};
    \node[commnode, below of = Dc]	(Db) 
    {$\calD_b$\\ (e.g., $\iota(\valuesDV)$)};

    \draw[->] (D) to node[above] {inject.} (Dc);
    \draw[] (Dc) -- node[thin, fill=white] {\rotatebox[origin=c]{90}{$\subseteq$}} (Db);
    \draw[<->] (D) -- node[above,sloped] {biject.} (Db);
\end{tikzpicture}
\caption{Domains of the canonical \iffullversion over-approximation.
\else over-approx.\fi}
\label{fig:relation-between-domains-underlying-the-canonical-over-approximation}
\end{figure} 

\fi 

We assume some domain $\calD$ underlying the transition function and the set of states
of the honest program
semantic, and a distinct new domain $\calD_c$ (for values resulting from the
crypto API), a subset of which $\calD_b \subset \calD_c$ 
(representation of bitstrings) have
a bijection to the original domain (see
Figure~\ref{fig:relation-between-domains-underlying-the-canonical-over-approximation}).
We assume the honest-program semantics to be defined
via a set of
rules, and interpret these rules in the new domain $\calD_c$. If the
re-interpreted rule can be instantiated regardless of whether some
state carries values in $\calD$ or $\calD_c$, this rule is transferred
to the over-approximated semantics. Most rules only move values from
registers to heaps and are homomorphic in this sense.
Rules for which the above do not hold, but which can be expressed using
a constructor or destructor, are split into four rules, two of which
send these values to the adversary and use her input for the follow-up
state in case one of the variables is in $\calD_c\setminus \calD_b$.
Otherwise, \ie, if only values from $\calD$ are used, or values
representable in $\calD$, then the crypto-API is used for the
computation. This way, we were able to encode binary operations like
XOR as destructors, so representations of bitstrings that have passed
the crypto API (\eg, a bitstring was encrypted and then decrypted
again) can be treated without unnecessary imprecision.
If any rule $r\in R_H$ falls in neither of the above cases, the
canonical over-approximation is undefined. 

Using this over-approximation, we can derive $\ADLo$, the
over-approximated ADL semantics.
We tag messages resulting from calls to the crypto-API using a set
$\calN_c$, such that $\calN_c \cap \calN = \emptyset$, and assume
a bijection between the two that is efficiently computable.
For $n\in\calN_c \cup \calN$, let $[n]_{\calN_c}$ and
$[n]_{\calN}$ denote its representation in $\calN_c$ or $\calN$
according to the bijection. We lift this notation to values in
\calV, too. 
The inference rules include those previously defined,
but modified such that register values
are converted to $\calN$ before addressing the crypto-API,
\ie,
$ \tiHL(f,r({v_a}),\ldots,r({v_e}))$ is substituted by
$ \tiHL(f,[r({v_a})]_{\calN_c},\ldots,[r({v_e})]_{\calN_c})$,
and $\tiHA(\mathit{mid},{r(v_a)},\dots,{r(v_e)})$ is
substituted by
$\tiHA(\mathit{mid},{[r(v_a)]_{\calN_c}},\dots,{[r(v_e)]_{\calN_c}})$.
Finally, the rules in Figure~\ref{fig:adlss-to-adlo} are added.

\begin{framedFigure}{over-approximated honest program semantics for ADL ($\ADLo$). Here $\calN_b := range(\iota)$ and $\calN_c' := \calN_c \setminus \calN_b$.}{fig:adlss-to-adlo}
\codesize
\setlength{\columnsep}{-0.1cm}
\begin{multicols}{3}
\textbf{Cmd:} $m[pp] = \progcmd{unop} v_a,v_b,uop$\\[1ex]
$r(v_b)\not\in\calN_c'$\\[1ex]
$~\qquad \implies (l_1,l_2) = (\tiHL,\tiLH)$\\[1ex]
$r(v_b)\in\calN_c'$\\[1ex]
$~\qquad\implies (l_1,l_2) = (\tiHA,\tiAH)$
\begin{align*}
\textsc{Unop-$l_1$:} ~& s \xrightarrow{(l_1,\mathit{uop},r(v_b))}_H s
\\
\textsc{Unop-$l_2$:} ~& s \xrightarrow{(l_2,u)}_H s\subst{r[v_a \mapsto u]}
\end{align*}

\newpage

\textbf{Cmd:} $m[pp] = \progcmd{binop} v_a,v_b,v_c, bop$\\[1ex]
$\lnot(r(v_b),r(v_c) \in \calN_c')$\\[1ex]
$~\qquad\implies (l_1,l_2) = (\tiHL,\tiLH)$\\[1ex]
$r(v_b),r(v_c) \in \calN_c'$ \\[1ex]
$~\qquad\implies (l_1,l_2) = (\tiHA,\tiAH)$
\begin{align*}
\textsc{Binop-$l_1$:} ~& s \xrightarrow{(l_1,\mathit{bop},r(v_b),r(v_c))}_H s
\\
\textsc{Binop-$l_2$:} ~& s \xrightarrow{(l_2,u)}_H s\subst{r[v_a \mapsto u]}
\end{align*}
\newpage
\textbf{Cmd:} $m[pp] = \progcmd{if\text{-}test} v_a,v_b,n,rop$\\[1ex]
$\lnot(r(v_a), r(v_b)\in\calN_c')$\\[1ex]
$~\qquad\implies (l_1,l_2) = (\tiHL,\tiLH)$\\[1ex]
$r(v_a), r(v_b)\in\calN_c'$\\[1ex]
$~\qquad\implies (l_1,l_2) = (\tiHA,\tiAH)$
\begin{align*}
\textsc{T1-$l_1$:} ~& s \xrightarrow{(l_1,\mathit{rop},r(v_a),r(v_b))}_H s
\\
\textsc{T2-$l_1$:} ~& s \xrightarrow{(l_2,x)}_H s\subst{pp + n}
\text{, if } x = r(v_a)
\\
\textsc{T3-$l_1$:} ~& s \xrightarrow{(l_2,x)}_H s\subst{pp + 1}
\text{, if } x = \bot
\end{align*}
\end{multicols}
\iffullversion
\begin{align*}
\LibCall': &
    s \xrightarrow{(\tiHL,\mathit{mid},[r(v_k),\dots,r(v_{k+n-1}),\sliceof{r(v_k)}]_\calD)}_H s
&& \text{for~}
m[pp] = \invokedirectr v_k, n, \mathit{mid}\\
&&& (mid,h(r(v_k))) \in \libSpec 
\end{align*}
\fi
\end{framedFigure}


\iffullversion
\paragraph{Precise treatment of bitstring representations output by
Crypto-API.}
If some value was produced by the Crypto-API, \eg, an encryption, in
\ADLs, it
will be represented by a symbolic term, \eg, $\enc(k,m)$ for some
terms $k$ and $m$. When $\enc(k,m)$ is tested for equality to
a bitstring, there is no meaningful way to decide for the outcome of
this test. To simplify the next proof steps, we over-approximate in
these situations, even though we have not made the transition into the
symbolic model yet.
But not every output from the Crypto-API should be over-approximated
in this. Consider the following example: a bitstring is encrypted by
the Crypto-API, stored and later decrypted again. The result of this
decryption is very well a library output, but it is perfectly feasible
to compute equivalence to a second bitstring. It should not be
necessary to send the bitstring to the adversary in this case.
In terms of the symbolic model,
the bitstring would be represented as outlined in
Example~\ref{ex:bitstring-representation} on page
\pageref{ex:bitstring-representation}. Each bitstring has a unique term
representation, thus there exists a subset of \T which is bijective to
the set of bitstrings. In general, we assume a subset of $\calD_c$,
called $\calD_b$, which is injective to the original domain $\calD$.
If an operation, \eg, a binary operation like XOR, is to be applied to
two values in $\calD_b$, we require the library to compute this
operation. For all relevant operations in ADL, this is possible
w.l.o.g (see Appendix~\ref{app:adl}. We thus obtain a relatively
precise over-approximation.

\paragraph{Rules.}\label{par:rules}
We assume the transition relation to be defined in terms of a set of
inference rules employing (meta-language) variables parametric in some
domain $\calD$.
We say we interpret a rule in the domain $\calD'$, if we alter the
domain of each variable in $\calD$ to $\calD'$. Naturally, this might
lead to rules that cannot be satisfied any more, e.g., if
$\calD=\mathbb{R}$ and a 
premise requires a variable $v\in\mathbb{R}$ to equal $\pi$, the same
rules interpreted in the domain $\calD'=\setN$ are unsatisfiable.
However, 
most rules defining the \ADL semantics can be re-interpreted easily.
Many of them move
values from one register to another, or from heap to registers and are
thus oblivious of the actual type of data moved. If a transition from
a state $s$ to a state $s'$ is possible with, \eg, \textsc{rMove},
and there is a well-defined mapping from states in the domain
$\calD=\calV$ to states in any other domain $\calD'$, then an
interpretation of \textsc{rMove} in \calD' can be instantiated for $s$
and $s'$ mapped to $\calD'$.

Some ADL rules, \eg, \textsc{rBinop}, cannot be re-interpreted for an
arbitrary $\calD$. In this case, the operation $\underline{\mathit{bop}}$ is
only defined on values in $\valuesDV$, more precisely $\calN$. Hence
the predicate $x = r(v_a) \mathrel{\underline{\mathit{bop}}} r(v_b)$ is
unsatisfiable if $r(v_a),r(v_b)\notin \valuesDV$. As mentioned before,
we can deal with this, by
\begin{inparaenum}[\itshape a\upshape)]
\item transforming $r(v_a)$ and $r(v_b)$ to $\calD_b$ and sending them
    to the library, if they are in $\calD$ or $\calD_b$, or
\item sending them $r(v_a)$ and $r(v_b)$ if they are in $\calD_c
    \setminus \calD_b$, \ie, if it is not clear how to evaluate this
    operation.
\end{inparaenum}

We define this over-approximation for any semantics in split-state
form as follows. Here we benefit from the split-state composition, as
it allows for rewriting rules in a way that involves the attacker, or
the Crypto-API\@. The over-approximated ADL semantics is an instance of
the canonical over-approximation.

\begin{definition}[Crypto-API space]
    Let $\calD$ be any set. A pair of domains $(\calD_c,\calD_b)$ is a \emph{Crypto-API
    space}, if 
    \begin{itemize}
        \item $\calD_c\cap\calD = \emptyset$,
        \item there is an injection
    $f$ from $\calD$ to $\calD_c$, and
        \item $\calD_b\subset \calD_c$, and
        \item there is a bijection $b$ between $\calD$ and $\calD_b$.
    \end{itemize}

    In addition, we use the following notation for
    all $d\in\calD\cup\calD_b$:
    \begin{align*}
        [d]_{\calD} =
        \begin{cases}
            d & \text{if $d\in\calD$}\\
            b^{-1}(d) & \text{if $d\in\calD_b$}
        \end{cases}
       \qquad &
        [d]_{\calD_b} = 
        \begin{cases}
            b(d) & \text{if $d\in\calD$}\\
            d & \text{if $d\in\calD_b$}
        \end{cases}
       \qquad &
        [d]_{\calD_c} = 
        \begin{cases}
            f(d) & \text{if $d\in\calD$}\\
            d & \text{if $d\in\calD_c$}
        \end{cases}
    \end{align*}
    We lift this notation to finite sets, tuples and sequences.
\end{definition}

\begin{definition}[Canonical over-approximation (honest program
    sementics)]\label{def:sso-h}

    Let $(\calD_c,\calD_b)$ be a Crypto-API space for $\calD$ and
    $\calD' := \calD \cup \calD_b \cup \calD_c$. 
    Given a symbolic model $\Model=(\C,\N,\Terms,\D)$, and
    an internally deterministic transition system $T_H$ with
    $\rightarrow_H$ defined in terms of a finite set of
    inference rules $R_H$, which are parametric in a domain
    \calD,
    we define $\rightarrow_{H'}$ from the set of inference rules
    $R_H$  that defines $\rightarrow_H$:
    \begin{enumerate}[$(i)$]
        \item\label{item:domain-independence} 
            Each rule $r\in R_H$, for which any instance 
            with domain \calD that concludes 
            $s \xrightarrow{(l,1,1)}_H s'$ for some $s$ and $s'$,
            can be instantiated 
            with domain $\calD_b$ to conclude
            $[s]_{\calD_b} \xrightarrow{(l,1,1)}_{H'} [s']_{\calD_b}$,
            and
            with domain $\calD_c$ to conclude
            $[s]_{\calD_c} \xrightarrow{(l,1,1)}_{H'} [s']_{\calD_c}$,
            is part of the rules when
            interpreted with domain $\calD_c$ (which $\calD_b$ is
            a subset of).

        \item\label{item:externalize-computations} 
            Each rule $R\in R_H$, for which the above does not hold,
            but for which there is a symbolic operation $O\in \SO$ 
            such that it has the form 
            \begin{align*}
                R = 
                \inference
                {\phi(s) \wedge \rho(s,s') 
                    \wedge \Pr\left[ [r]_{\calD_b} = A_O([v_1,\ldots,v_l]_{\calD_b})
                \right]=1
                }
                {s \xrightarrow{(\varepsilon,1,1)}_H s'} 
            \end{align*}
            for
            variables $v_1,\ldots,v_l$ in $s$,
            and variables $r$ in $s'$,
            we define four rules:  
            \begin{align*}
                \inference{\phi([s]_\calD) \wedge \neg (v_1,\ldots,v_l\in\calD\cup\calD_b) }
                {s \xrightarrow{(( \tiHA,O,v_1,\ldots,v_l  ),1,1)}_{H'} s}
          \qquad     &
                \inference
                {\phi([s]_\calD)\wedge\rho'([s]_\calD,[s']_\calD) \wedge \neg (v_1,\ldots,v_l\in\calD\cup\calD_b)}
                {s \xrightarrow{(( \tiAH,r),1,1)}_{H'} \delta_{\tiAH}(s,r) = s'} 
                \\[1em]
                \inference
                {\phi([s]_\calD) \wedge v_1,\ldots,v_l\in\calD \cup \calD_b
                }
            {s \xrightarrow{((\tiHL,O,[v_1]_{\calD_b},\ldots,[v_l]_{\calD_b}),1,1)}_{H'} s} 
            \quad &
                \inference
                {\phi([s]_\calD) \wedge \rho([s]_\calD,[s']_\calD) \wedge v_1,\ldots,v_l\in\calD\cup\calD_b }
            {s \xrightarrow{((\tiLH,r),1,1)}_{H'} \delta_{\tiLH}(s,r)= s'} 
            \end{align*}
    \end{enumerate}
	\TODOE{I think our notation for $\delta_{\ell}$ is not consistent anymore. Should it, e.g., for $\delta_{\tiLH}$ not be something like $\delta_L(s)|_\tLH^\bot$, as in Definition~\ref{def:split-state-semantics}?}

    If any rule $r\in R_H$ falls in neither of the above cases, the
    canonical over-approximation is undefined. Otherwise,
    the \emph{canonical over-approximation of the honest program semantics $T_H$ for Crypto-API space 
    $(\calD_c,\calD_b)$}
    is the transition system
    \[ T_H' := ([\states_H]_{\calD'},[\initstates_H]_{\calD'},\rightarrow_{H'}). \]
\end{definition}

\begin{definition}[Canonical over-approximation]\label{def:sso}
    Given a symbolic model $\Model=(\C,\N,\Terms,\D)$, and
    a split-state composition $T$ of
    an internally deterministic transition system $T_H$,
    and transition systems 
    $T_A$ and $T_L$, such that
    $T$ is probabilistic,
    let $T_H'$,  the canonical over-approximation of the honest program
    sementics $T_H$, be defined. 

    Then, the \emph{canonical over-approximation of $T$ for the Crypto-API
    space $(\calD_c,\calD_b)$}
        is the split-state
    composition of $T_H'$, $T_A$, $T_L'$, where 
    \begin{align*}
     T_H' & := ([\states_H]_{\calD'},[\initstates_H]_{\calD'},\rightarrow_{H'})\\
     T_L' & := (\states_L,\initstates_L,\rightarrow_{L'})\\
    \end{align*}
    with $s\xrightarrow{(\tiHL,f,[v_1]_{\calD_b},\ldots,[v_l]_{\calD_b})}_{L'} s' $  
    if $s\xrightarrow{(\tiHL,f,v_1,\ldots,v_l)}_L s'$ and 
    $s \xrightarrow{\alpha}_{L'} s'$ if $s \xrightarrow{\alpha}_{L} s'$ and $\alpha\not\in\tHL$. 
\end{definition}

\begin{lemma}\label{lem:overapproximation-preserves-internally-det}
    For any split-state composition $T$ from an
    honest program
    $T_H$,
    and some attacker and library transition system,
    the honest
    program semantics
    $T_H'=(\states_H',\states^{0'}_H,\rightarrow_H')$
    resulting from the canonical over-approximation of $T$ 
    to the crypto-API space $(\calD_c,\calD_b)$
    are internally deterministic, if $T_H$ is internally
    deterministic.
\end{lemma}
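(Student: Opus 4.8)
The plan is to verify, clause by clause, the definition of internal determinism (Definition~\ref{def:internally-deterministic}) for the transition system $T_H' = ([\states_H]_{\calD'},[\initstates_H]_{\calD'},\rightarrow_{H'})$ produced by Definition~\ref{def:sso-h}, deriving each clause from the corresponding property of $T_H$. The first two clauses are immediate: every rule out of which $\rightarrow_{H'}$ is built — both the rules of case~$(\mathit{i})$ re-interpreted over $\calD_c$ and the four rules derived in case~$(\mathit{ii})$ — concludes a transition with a single computation step, so $T_H'$ is single-step; and since $\rightarrow_H$, hence $\rightarrow_{H'}$, is presented purely by inference rules yielding labelled transitions, neither system has a genuinely probabilistic state, so ``every probabilistic state is deterministic'' holds vacuously. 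All the work lies in the remaining clause: every non-deterministic state of $T_H'$ is either deterministic or has all of its outgoing transitions labelled by $\tiAH$ (resp.\ $\tiLH$), with target a function of the label.

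First I would exploit the shape of a case-$(\mathit{ii})$ \emph{source} rule $R$: it concludes a silent transition $s \xrightarrow{(\varepsilon,1,1)}_H s'$, and since $\varepsilon\notin\set{\tiAH,\tiLH}$ internal determinism of $T_H$ forces every state at which $R$ applies to be deterministic, so $R$ is the only $T_H$-rule enabled there. Consequently, at a $T_H'$-state originating from such a $T_H$-state no case-$(\mathit{i})$ rule fires (these are structure-preserving and thus enabled in $T_H'$ exactly when they were enabled in $T_H$), so only the four rules derived from $R$ can apply. The two request rules carry the mutually exclusive guards $\neg(v_1,\dots,v_l\in\calD\cup\calD_b)$ and $v_1,\dots,v_l\in\calD\cup\calD_b$, each yielding a single outgoing transition labelled $\tiHA$ or $\tiHL$ that enters a distinguished waiting state (as in the concrete rules of Figure~\ref{fig:adl-to-adlss}, where the request targets $(\wait,s)$; this separation is what keeps request and response phases apart, consistently with the active-bit mechanism of Definition~\ref{def:split-state-semantics}); the two response rules carry the same complementary guards and produce, for the one active channel, exactly the transitions $s\xrightarrow{(\tiAH,r)}_{H'}\delta_{\tiAH}(s,r)$ (resp.\ with $\tiLH$ and $\delta_{\tiLH}$), over all $r$. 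Hence such a state is either deterministic (one request transition) or has all of its transitions sharing one label in $\set{\tiAH,\tiLH}$ with target determined by the label — precisely the attacker-shaped case of Definition~\ref{def:internally-deterministic}, witnessed by the restriction of $\delta_{\tiAH}$ or $\delta_{\tiLH}$.

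Next I would handle $T_H'$-states at which only case-$(\mathit{i})$ rules apply. Because a domain-independent rule does not inspect the data it moves, it is enabled at $[s]_{\calD_c}$ iff it was enabled at $s$ in $T_H$; thus $\rightarrow_{H'}$ restricted to $[s]_{\calD_c}$ is the image under $[\cdot]_{\calD_c}$ of $\rightarrow_H$ restricted to $s$, and internal determinism transfers verbatim — the witnessing function, where needed, being obtained by conjugating the one for $T_H$ with the domain bijection. One also checks that case-$(\mathit{i})$ and case-$(\mathit{ii})$-derived rules never compete at one state: if a case-$(\mathit{ii})$ source rule and another rule were simultaneously enabled at some $s$ in $T_H$, then $s$ would carry two outgoing transitions, one of them $\varepsilon$-labelled, contradicting internal determinism of $T_H$; this exclusion survives re-interpretation.

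The main obstacle is not a single hard step but the bookkeeping around the request/response phases: making precise which $T_H'$-states are the waiting states created by a $\tiHA$/$\tiHL$ request, verifying that no $\varepsilon$-labelled transition and no case-$(\mathit{i})$ transition is simultaneously enabled at such a state, and exhibiting the witnessing maps $\delta_{\tiAH},\delta_{\tiLH}$ as literal restrictions of the functions fixed in Definition~\ref{def:sso-h}. Everything else is a routine unfolding of Definitions~\ref{def:internally-deterministic} and~\ref{def:sso-h}.
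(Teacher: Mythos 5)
Your proposal is correct and follows essentially the same three-part argument as the paper's own proof: case-$(i)$ rules transfer internal determinism via the injection into $\calD_c$, the four case-$(ii)$ rules pair off under mutually exclusive guards with targets determined by the label via $\delta_{\tiAH}$/$\delta_{\tiLH}$, and no two rule families can compete at one state without contradicting internal determinism of $T_H$. The only minor divergence is your appeal to a distinguished waiting state $(\wait,s)$ — in Definition~\ref{def:sso-h} the request rules are self-loops on $s$ rather than transitions into a fresh state — but the paper's proof glosses over exactly the same bookkeeping, so this does not constitute a gap relative to the paper's own level of rigor.
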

\begin{proof}
    Consider the subset of the rules defining $\rightarrow_H'$ that
    are just re-interpreted according to
    Condition~\ref{item:domain-independence},
    Definition~\ref{def:sso}. If the transition relation relating from
    this subset was not internally deterministic, $T_H$ would not have
    been internally deterministic, as the mapping from $\calD$ to
    $\calD_c$ is injective.
    Consider now any additional rule derived according to
    Condition~\ref{item:externalize-computations},
    Definition~\ref{def:sso}. Either the first and second rule
    applies, or the third and fourth, as their premises are mutually
    exclusive on the state. By definition of these rules, the
    follow-up state $s'$ depends only on $s$ and value $r\in\calD'$ contained
    in the label. 
    It is left to show that no state $s$ to which any of these four
    rules can be instantiated, is covered by any rule resulting
    included according to Condition~\ref{item:domain-independence},
    Definition~\ref{def:sso}, or by another rule resulting fro
    Condition~\ref{item:externalize-computations}. In the both cases,
    this would imply that $T_H$ was not internally deterministic, in
    the first case directly, in the second case because the same
    predicate on $s$ is in the premise of all four translated rules.
\end{proof}

\begin{lemma}
    For any split-state composition $T$ from an
    honest program
    $T_H=(\states_H,\initstates_H,\rightarrow_H)$,
    an attacker $T_A$,
    and a
    library transition system 
    $T_L=(\states_L,\initstates_L,\rightarrow_L)$,
    if
    \begin{itemize}
        \item $\Pr[ s \xrightarrow{(\tiHA,O,u_1,\ldots,u_l)}_H s']=0$ and
            $\Pr[ s \xrightarrow{(\tiHL,O,u_1,\ldots,u_l )}_H s']=0$ for
            any two states $s,s'\in\states_H$ and any sequence
            $u_1,\ldots,u_l$, and
        \item $ \Pr [ s \xrightarrow{(\tiHL,O,v_1,\ldots,v_l)}_L
            \rightarrow^*\xrightarrow{(\tiLH,[A_O(v_1,\ldots,v_l)]_\calD)}_L s']=1$ for any
            $s'\in\states_L$ and any $s$ for which holds
            $s\in\initstates_L$ or $\Pr[s' \xrightarrow{\tLH} s]\neq
            0$. 
    \end{itemize}
    then there is an attacker $T_A'$ such that for the canonical
    over-approximation $T'$ from $T_H,T_L$ and $T_A'$, and
    for all $\secpar,n\in\setN$, $z\in\bits*$,
    \begin{multline*}
        \Pr[ T(\stateDVs,(\bot,(\secpar,z)),\stateDV[(),\emptyset,(),(),\advsd]) \downarrow_{n}x]
     = \\
     \Pr[ T'([\stateDVs]_{\calD_b},(\bot,(\secpar,z)),((),\emptyset,()(),\advsd)) \downarrow_{p(n)}x].
    \end{multline*}
for some polynomial $p$.
\end{lemma}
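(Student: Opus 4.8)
The strategy is to exhibit a probability- and output-preserving weak bisimulation between the split-state composition $T$ (of $T_H$, $T_L$, $T_A$) and its canonical over-approximation $T'$ (of $T_H'$, $T_L'$, $T_A'$), and then to bound the slowdown it induces. First I would fix $T_A'$ to be $T_A$ composed with the efficiently computable conversions $[\cdot]_{\calN}$ and $[\cdot]_{\calN_c}$ (equivalently $[\cdot]_{\calD}$, $[\cdot]_{\calD_b}$): whenever $T_H'$ hands $T_A'$ an argument of the crypto-tagged shape, $T_A'$ untags it, runs $T_A$, and re-tags the reply. Because these maps are bijections on the subdomains that ever occur, the wrapper is faithful and costs only a constant per interaction.

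The core is the relation $\calR$ that links a state of $T$ whose honest-program component $s_H$ has all its values in $\calD$, library component $s_L$ and attacker component $s_A$, to the state of $T'$ with honest-program component $[s_H]_{\calD_b}$, library component $s_L$, attacker component the $T_A'$-wrapper state corresponding to $s_A$, and which is moreover \emph{not} in the middle of one of the extra library round-trips the over-approximation inserts for the \textsc{Unop}/\textsc{Binop}/\textsc{IfTest} rules. I would then check, rule by rule, that $\calR$ is a weak bisimulation. For a rule of $T_H$ that is merely re-interpreted in the larger domain (item~\ref{item:domain-independence} of Definition~\ref{def:sso-h}), one step of $T$ is matched by one step of $T'$, using that $[\cdot]_{\calD_b}$ is a bijection $\calD\to\calD_b$ and that $T_L'$ is $T_L$ relabelled. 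For a rule split by item~\ref{item:externalize-computations}, a single $\varepsilon$-step of $T$ that assigns a register the value $A_O(v_1,\ldots,v_l)$ with all $v_i\in\calD$ is matched in $T'$ by three moves: the $\tiHL$-call rule fires (rather than its $\tiHA$-variant) precisely because $[v_i]_{\calD_b}\in\calD\cup\calD_b$; then, by the second hypothesis, the library performs a deterministic computation returning $[A_O([v_1]_{\calD_b},\ldots,[v_l]_{\calD_b})]_{\calD}$; then the $\tiLH$-response rule fires. These match because $[\cdot]_{\calD_b}$ and $[\cdot]_{\calD}$ are mutually inverse, so the returned value equals $A_O(v_1,\ldots,v_l)$. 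The genuine crypto-API calls and malicious-function calls of $T_H$ (labels $\tiHL$ or $\tiHA$ carrying a real crypto operation) are, by the \emph{first} hypothesis, never of the split shape, hence handled identically in $T$ and $T'$ up to the wrapper and relabelling. The crucial invariant, proved by induction along the execution together with the bisimulation, is that every honest-program state reached by $T'$ from $[\stateDVs]_{\calD_b}$ carries no value in $\calD_c\setminus\calD_b$; this makes the ``adversary-decides'' branch of each split rule dead code, so that $T'$ faithfully tracks $T$.

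For the step count, each $\varepsilon$-step of $T$ is replaced in $T'$ by at most two honest-program moves plus one library round-trip, and an $n$-step execution of $T$ fires at most $n$ such steps; moreover each such round-trip computes a unary, binary or relational operation on bitstrings produced so far, hence on operands of length at most $n$, and therefore (as $\libSpec$ is efficiently computable) terminates within a number of steps polynomial in $n$. Thus the run of $T'$ matched to an $n$-step run of $T$ terminates within $p(n)$ steps for a fixed polynomial $p$, and since $\calR$ preserves probabilities and final outputs, $\Pr[T(\ldots)\downarrow_n x]=\Pr[T'(\ldots)\downarrow_{p(n)}x]$. The main obstacle is the bookkeeping around the several conversion maps ($\calD\leftrightarrow\calD_b\hookrightarrow\calD_c$ and the $\calN/\calN_c$ tagging of $\ADLo$): the relation $\calR$ must be stated precisely enough that these maps compose to the identity wherever it matters, and the invariant ``$T'$ never produces a value in $\calD_c\setminus\calD_b$'' must be re-established after every step — which is exactly where the two hypotheses are consumed, the first to exclude spurious $\tiHA$/$\tiHL$ moves of the honest program and the second to force the library to return inside $\calD$.
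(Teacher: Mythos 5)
Your proposal is correct and follows essentially the same route as the paper's proof: the paper also defines $T_A'$ as $T_A$ with the $(\tiHA,O,\ldots)$ transitions stripped and domain conversions applied, and then establishes by induction a step-wise correspondence (your bisimulation $\calR$) in which re-interpreted rules match one-to-one, the split rules are matched by a deterministic library round-trip forced by the second hypothesis, and the adversary-decides branches are dead because all honest-program values stay in $\calD\cup\calD_b$. If anything, your treatment of the polynomial step-count bound is more explicit than the paper's, which leaves that point implicit.
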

\begin{proof}
    Let $T_A'$ be $T_A$, with all transitions of form
    $(\tHA,O,\ldots)$ for $O\in\SO$ removed, and all transitions
    labelled
    $(\tHL,f,u_1,\ldots,u_l)$ for $f\not\in \SO$ 
    interpreted with $[u_1,\ldots,u_l]_{\calD}$ instead of
    $u_1,\ldots,u_l$ in the successor state. 
    We show that for any transition from step 
    $(s_H,s_A,s_L)$ to $(s_H',s_A',s_L'$ in $T$,
    there is a sequence of transitions with the same
    overall probability (\ie, the product of their probabilities)
    from $([s_H]_{\calD_b},s_A,s_L)$ to $([s']_{\calD_b},s_A',s_L')$ in T'. 
    Initially, this holds by definition of the initial state.
    Inductive step. Let the honest program be active.
    For any transition described by the
    transformation~\ref{item:domain-independence}, this holds
    immediately, unless the adversary or the library is called. If the
    adversary is called, observe that, by assumption one, 
    the adversary is never invoked on $(\tHA,O,\ldots)$ for $O\in\SO$,
    so $T_A$ differs only in the transformation, from $\calD_b$ to
    $\calD$, which, by bijection, produces the same follow-up state.
    If the library is called, the same argument is made by definition
    of the library semantics of the over-approximation from
    Definition~\ref{def:sso}.
    transformation~\ref{item:externalize-computations}, consider the
    instantiation of $R$ that applies from $s_H$ to $s_H'$. By
    definition of $[\cdot]_{\calD_b}$, the first two rules derived
    from $R$ never apply, but the third applies. 
    By assumption two and the definition of split-state composition,
    $T'$ proceeds with probability $1$ until the fourth rule applies.
    In particular, the same predicates hold, and, as in the previous
    case, by the fact that $\calD$ and $\calD_b$ are bijective, and by
    definition of the over-approximated library semantics in
    Definition~\ref{def:sso}, the library transitions operate on the
    same internal states. 
    The resulting state contains $r\in\calD$, \ie, the original
    domain, which is left intact by the conversation $[s']_{\calD}$.
    In the last case, the library or the attacker is active. If no
    message is passed to the honest program, the
    the induction hypothesis is trivially preserved. If a message is
    passed to the honest program, the honest program needs to have
    a transition with a label in \tAH or \tLH, and as $T_H$ needs to
    be internally deterministic for the over-approximation to be
    defined, this transition is an instance of a rule that matches
    Condition~\ref{item:domain-independence}. By the transformation
    applied to these rules, the values in $\calD$ are correctly
    translated to on reception $\calD_b$. 
\end{proof}

\begin{definition}[Harmonizing Crypto-API]\label{def:harmonizing-crypto-api}
    A Crypto-API $T_L := (\states_L,\initstates_L,\rightarrow_L)$ \emph{harmonizes} with 
    a computational implementation \implementation of a symbolic model \Model and a \libSpec,
    if    
    \[
        \Pr\left[
            \exists s'\in\initstates_L. 
            s \xrightarrow{((\tiHL,\mathit{op},u_1,\ldots,u_l),1,1)}_{L'} 
            \rightarrow^*
            \xrightarrow{((\tiLH,r),1,1)}_{L} s'
            \right] = \Pr\left[ A_{\libSpec(\mathit{op)}}(u_1,\ldots,u_l)=r \right]
            \]
        and
    \[
        \Pr\left[
            \exists s'\in\initstates_L. 
            s \xrightarrow{((\tiHL,f,u_1,\ldots,u_l,h),1,1)}_{L'} 
            \rightarrow^*
            \xrightarrow{((\tiLH,r),1,1)}_{L} s'
            \right] = \Pr\left[ A_{\libSpec(f,h(u_1))}(u_1,\ldots,u_l,h)=r \right]
            \]
 for any $s$ for which $s\in\initstates_L$ or $\Pr[s'
    \xrightarrow{\tLH} s]\neq 0$. 
\end{definition}

\begin{definition}[Crypto-API compliance]\label{def:compliant}
    An ADL program $\proto$ is compliant with a library specification \libSpec w.r.t. a symbolic model \Model
    and a computational implementation of \Model called \implementation,
    if $\proto$ is
    pre-compliant (see Definition~\ref{def:pre-compliant}),
    and there is a Crypto-API ~$T_L := (\states_L,\initstates_L,\rightarrow_L)$ such that $T_L$ harmonizes with $\implementation$ (see Definition~\ref{def:harmonizing-crypto-api}).
\end{definition}

\begin{example}[ADLo: over-approximated ADL semantics]%
\label{ex:adl-overapprox} 
    We use the canonical over approximation to define $\ADLo$, the
    over-approximated version of the ADL split-state semantics. 
    Let $\calD = \valuesDV \cup \heapsDV$, 
    and
    $(\calD_c,\calD_b)$ any Crypto-API space for $\calD$. This can,
    for example, created by tagging library output, and thus having
    $\calD_b=\calD_c$.
    Given a uniform family of ADL programs 
    $\proto= \set{\proto^\secpar}_{\secpar\in\setN}$,
    every member of which
    is pre-compliant with a library 
    specification \libSpec,
    and a family of attackers
    $\adv = \set{\adv^\secpar}_{\secpar\in\setN}\in\advM$,
    we call the canonical over-approximation
    of the ADL split-state representation (see
    Definition~\ref{def:adl-split-state-representation})
    the over-approximated ADL semantics, denoted
    $\ADLo_{\proto,\adv,s_0}=
    (\ADLo_{\proto^\secpar,\adv^\secpar,s_0})_{\secpar\in\setN}$.
\end{example}

\begin{corollary}\label{cor:adlo}
    For any ADL program
    $\proto$ 
    compliant with a library specification \libSpec
    and a symbolic
    model \Model,
    and an attacker $\adv$,
    there is an
    attacker $\adv'$ and a polynomial $p$ such that
    for all $n$: 
    \[
     \Pr[ \ADLss_{\proto,\adv,s_0} \downarrow_{n}x]
     = 
     \Pr[ \ADLo_{\proto,\adv',s_0} \downarrow_{p(n)} x].
 \]
\end{corollary}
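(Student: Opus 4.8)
The plan is to derive this as a corollary of the lemma stated just above (the one producing, under two probabilistic side conditions, an attacker $T_A'$ for which the canonical over-approximation matches the original split-state system with a polynomial slowdown), by exhibiting the right split-state decomposition and checking those two side conditions for a compliant ADL program. First I would unfold the relevant definitions. By Definition~\ref{def:adl-split-state-representation}, $\ADLss_{\proto,\adv,s_0}$ is the split-state composition of the honest-program semantics $T_H$ (the ADL rules together with Figure~\ref{fig:adl-to-adlss}), the attacker semantics built from $\adv$ (Figure~\ref{fig:adl-to-adlss-att}), and a crypto-API semantics $T_L$; by Example~\ref{ex:adl-overapprox}, $\ADLo_{\proto,\adv',s_0}$ is precisely the canonical over-approximation (Definition~\ref{def:sso}) of such a composition, for the crypto-API space over $\calD=\valuesDV\cup\heapsDV$ obtained by tagging library outputs, with $\calN_b=\mathrm{range}(\iota)$ well-defined since \Model is ADL-embeddable. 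This over-approximation exists because $T_H$ is internally deterministic (shown in the excerpt) and every ADL honest-program rule falls into one of the two cases of Definition~\ref{def:sso}. I would also record that the composition is probabilistic, so the quantities $\Pr[\,\cdot\downarrow_n x\,]$ are defined, and that the configuration written $s_0$ in the corollary is the initial state $\stateDVs^\mathit{ss}$ of Definition~\ref{def:adl-split-state-representation}, which has the shape required by the lemma.

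Checking the lemma's first side condition — that $\Pr[s \xrightarrow{(\tiHA,O,\underline u)}_H s'] = \Pr[s \xrightarrow{(\tiHL,O,\underline u)}_H s'] = 0$ for every symbolic operation $O\in\SO$ — is immediate from the shape of the ADL honest-program rules: the only $\tiHL$-labelled transition is $\LibCall$, whose label carries a method name $\mathit{mid}\in\midDV$, and the only $\tiHA$-labelled transitions are $\LeakMsg$ and $\FinalCall$, carrying $\mathit{mid}\in\Mal_\mathit{static}$ or the token $\tifinalCall$. None of these is a symbolic operation, and $\midDV\cap\SO=\emptyset$, so no $\SO$-labelled honest transition has positive probability. (Symbolic-operation labels appear only in the over-approximated relation $\rightarrow_{H'}$, never in $\rightarrow_H$.)

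The second side condition — that a call $(\tiHL,O,\underline v)$ to the crypto-API is answered deterministically with $[A_O(\underline v)]_\calD$ — is exactly where compliance is used. By Definition~\ref{def:compliant}, $\proto$ being compliant with \libSpec w.r.t.\ \Model and a computational implementation \implementation means $\proto$ is pre-compliant (Definition~\ref{def:pre-compliant}) \emph{and} there is a crypto-API $T_L$ that harmonizes with \implementation in the sense of Definition~\ref{def:harmonizing-crypto-api}. Harmonization says precisely that for every library state $s$ that is initial or reachable right after a $\tiLH$-transition, $\Pr[s \xrightarrow{(\tiHL,\mathit{op},\underline u)}_{L'}\rightarrow^*\xrightarrow{(\tiLH,r)}_L s'] = \Pr[A_{\libSpec(\mathit{op})}(\underline u)=r]$, and analogously for a method call $f$ with $\libSpec(f,h(u_1))$ in place of $\libSpec(\mathit{op})$. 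In the canonical over-approximation the symbolic operation shipped in such a $\tiHL$-label is by construction exactly $\libSpec(\mathit{op})$ (resp.\ $\libSpec(f,h(u_1))$), so the two probabilities refer to the same object; moreover, for the operations the over-approximation actually externalises — unary/binary operations and relational tests applied to operands outside $\calN$ — the corresponding $\libSpec(\mathit{op})\in\SO$ is built solely from constructors and destructors, so $A_{\libSpec(\mathit{op})}$ is a deterministic function and this probability is $1$. Hence both side conditions hold, and the preceding lemma yields an attacker $T_A'$ and a polynomial $p$ with $\Pr[\ADLss_{\proto,\adv,s_0}\downarrow_n x] = \Pr[T'\downarrow_{p(n)}x]$, where $T'$ is the canonical over-approximation from $T_H$, $T_L$, $T_A'$; by Example~\ref{ex:adl-overapprox}, $T'=\ADLo_{\proto,\adv',s_0}$ for $\adv'$ the ADL attacker underlying $T_A'$, which is the claim.

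I expect the main obstacle to be bookkeeping rather than mathematics. The lemma is stated over abstract domains $\calD$, $\calD_b$, $\calD_c$ and abstract rule-schema rewrites, whereas ADL instantiates these with concrete sets ($\valuesDV\cup\heapsDV$, the tagged copy $\calN_c$, the bijective image $\calN_b=\mathrm{range}(\iota)$) and with the value-level conversions $[\cdot]_{\calN}$, $[\cdot]_{\calN_c}$ of Figure~\ref{fig:adlss-to-adlo}, which must be matched up with the lemma's $[\cdot]_\calD$ and $[\cdot]_{\calD_b}$ (and with the corresponding adjustment to the attacker that turns $\adv$ into $\adv'$). The genuinely load-bearing work hidden behind the assertion in Example~\ref{ex:adl-overapprox} that $\ADLo$ is well-defined is a case-by-case pass over the ADL instruction set: for each rule, deciding whether it is ``domain-independent'' or ``externalisable via a symbolic operation'', verifying that the side conditions partition the rules so that exactly one of the two cases applies, and — crucially for the argument above — checking that every externalising rule uses a randomness-free symbolic operation, so that the ``$\Pr[\cdots]=1$'' in the second side condition is literally true. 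That case analysis is the part of the proof I would spend the most care on.
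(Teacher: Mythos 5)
Your proposal matches the paper's intent exactly: the corollary is stated without proof immediately after the lemma on canonical over-approximations, and the intended argument is precisely to instantiate that lemma with the ADL split-state representation of Definition~\ref{def:adl-split-state-representation} and the crypto-API space of Example~\ref{ex:adl-overapprox}, using compliance (harmonization) to discharge the second side condition and the shape of the honest-program labels to discharge the first. Your additional care about which rules are externalised via symbolic operations (and hence why the ``probability $1$'' clause is only ever invoked for deterministic, randomness-free operations) is a correct and worthwhile elaboration of details the paper leaves implicit.
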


We define the the split-state equivalence for ADL.

\begin{definition}[ADL split-state equivalence]\label{def:adlss-ind}
Let $\proto_1$ and $\proto_2$ be two families of ADL programs compliant with the
same library specification \libSpec and $s_1,s_2$ initial configurations for $\proto_1$ and $\proto_2$, respectively. We write
$ \protos 1 \execIndistinguishable \protos 2$,
if, for all adversaries $\adv\in\advM$,
for the over-approximated ADL split-state representations 
$(\ADLo_{\proto_1^\secpar,\adv^\secpar,s_1})_{\secpar\in\setN}$ of $\protos 1$ and \adv,
and
$(\ADLo_{\proto_2^\secpar,\adv^\secpar,s_2})_{\secpar\in\setN}$ of $\protos 2$ and \adv, we have
$T_{H,1}(s_1) \SSIndistinguishable{T_L,T_A} T_{H,2}(s_2)$.
\end{definition}

Finally, we are in a position to connect the ADL semantics to the over-approximated split-state form.
\begin{lemma}\label{lemma:computational-execution-adl}
Let $\proto_1$ and $\proto_2$ be two families of ADL programs pre-compliant with the
same library specification \libSpec and $s_1,s_2$ initial configuration for $\proto_1$ and $\proto_2$, respectively. Then
$\protos 1 \execIndistinguishable \protos 2
\implies \protos 1 \ADLIndistinguishable \protos 2$.
\end{lemma}
\begin{proof}
    Follows from Lemma~\ref{lem:adl-exec-equals-adlss-exec} and Corollary~\ref{cor:adlo}.
\end{proof}

\fi

\iffullversion
\subsection{Canonical symbolic semantics}%
\label{sec:canonical-symbolic-model}

The canonical over-approximated split-state semantics define
a canonical symbolic semantics, when the Crypto-API space $\calD_c$ is
instantiated with the set of terms
defined by the CoSP-symbolic model $\Model$ (in
which our result in parametric),
when the 
attacker semantics is given by 
the symbolic attacker from $\Model$, 
and when the crypto-API is simplified to
the constructors and destructors in $\Model$.

\begin{definition}[Canonical symbolic split-state semantics]
    Given a symbolic model $\Model=(\C,\N,\Terms,\D)$,
    a subset $\T_b \subset T$ such that
    $(\T,\T_b)$ constitute a Crypto-API space, 
    a split-state semantics $T$ with an honest-program semantics
    $T_H$,
    and an attacker strategy, i.e., a sequence $I\in\attstrat$
    for 
            $\attstrat = 
            \set{(\texttt{in},O) \mid O\in\SO} \ifdraft \cup
            \set{(\texttt{ctl},l') \mid l'\in\bits*}\fi$
    (see Definition~\ref{def:symbolic-execution}), 
    the \emph{canonical symbolic split-state semantics
    of $T$ w.r.t. $(\T,\T_b)$ and $I$}
    is 
    the split-state
    composition of 
    \begin{itemize}
        \item the canonical over-approximation of the honest program
            semantics for Crypto-API space $(\T,\T_b)$ (see
            Definition~\ref{def:sso-h}),
        \item the attacker semantics 
            $(\states_I,\states_I^0,\rightarrow_I)$ with
            state space
            $\states_I \defeq \attstrat \times \Eventout^* \times ( \T \cup \set{\bot})$,
            initial state
            $\initstates_I =  (I,\epsilon,\bot)$
            ($\epsilon\in\Eventout^*$),
            and $\rightarrow_I$ the smallest relation such that
            \begin{align*}
                ((\texttt{in},O)\cdot \mathit{IL},V,\bot)
                \xrightarrow{(\tiHA,\mathit{mid},u_1,\ldots,u_l)}_{I,1,1}
                (\mathit{IL},V', \eval_O(\termsof(V')))
                && \text{for~}
                V'= V\cdot (\texttt{out}, \widehat{\mathit{mid}},\widehat{u_1},\ldots,\widehat{u_l})
                \\
                (\mathit{IL},V,\pair(u_\mathit{lo},u_\mathit{up}))
                \xrightarrow{(\tiAH,u_\mathit{lo},u_\mathit{up})}_{I,1,1}
                (\mathit{IL},V, \bot),
            \end{align*}
        \item and the Crypto-API semantics 
            $(\states_L,\states_L^0,\rightarrow_L)$ with
            state space
            $\states_L= \T^3 \cup \set{\bot,\epsilon}$,
            initial state space
            $\initstates_L=\epsilon$
            and $\rightarrow_L$ the smallest relation such
            that:
            \begin{align*}
                \epsilon 
                \xrightarrow{(\tiHL,\mathit{mid},u_1,\ldots,u_l,h)}_{L,1,1}
                (t_\mathit{lo},t_\mathit{up},t_h,)
                && \text{if~}
                \pair(t_\mathit{lo},\pair(t_\mathit{up},t_h)) = \eval_O (\widehat{\mathit{mid}},\widehat{u_1},
                \ldots,\widehat{u_l},\widehat{h})
                \\
                && \text{or $t_\mathit{lo}=t_\mathit{up}=t_h=\bot$ if~}
                \eval_O (\widehat{\mathit{mid}},\widehat{u_1},
                \ldots,\widehat{u_l},\widehat{h})=\bot
                \\
                (t_\mathit{lo},t_\mathit{up},t_h)
                \xrightarrow{(\tiLH,(t_\mathit{lo},t_\mathit{up},t_h))}_{L,1,1}
                \epsilon.
            \end{align*}
    \end{itemize}
\end{definition}

    We can define the ADL symbolic split-state semantics as an
    instance of the canonical symbolic semantics.

\begin{definition}[$\ADLsss$, ADL symbolic split-state semantics]
    Let $\T_b$ be the domain of $\iota$
    (see Definition~\ref{def:adl-embeddable-symbolic-model}) and
    observe that $(\T,\T_b)$ constitute a Crypto-API space. 
    For $\Pi$ a uniform family of ADL programs, every member of which
    is pre-compliant with a library 
    specification \libSpec,
    the \emph{ADL symbolic split-state semantics of $\Pi$ w.r.t.\ an
    attacker strategy $I$ }
    is the canonical symbolic split-state semantics
    of 
    the ADL split-state representation
    (see~\ref{def:adl-split-state-representation})
    w.r.t. $(\T,\T_b)$ and $I$.
\end{definition}

Note that, given a program $\Pi$ and an attacker strategy $I$, the ADL
symbolic split-state semantics of $\Pi$ w.r.t. $I$ is deterministic.

We show that the ADL symbolic split-state semantics can
be trivially simplified to the semantics presented in
Section~\ref{sec:dalvik-symbolic-semantics}.  

We define symbolic equivalence in the 
spirit of Definition~\ref{def:symbolic-indistinguishable}
with equivalence of views defined as in
Definition~\ref{def:static-equivalence}.

\begin{definition}[Symbolic equivalence $\execEquivalent$]
\label{def:ss-symbolic-indistinguishable}
Two ADL programs $\proto_1$ and $\proto_2$,
and initial configurations 
$s_1=\stateDVsN{_1}$ 
and
$s_2=\stateDVsN{_2}$ 
are 
\emph{symbolically split-state equivalent ($\protos 1 \execEquivalent \protos 2$)}
if for all attacker strategies $I$, their respective ADL symbolic
split-state semantics $T_1,T_2$ w.r.t. to $I$ are symbolically equivalent, i.e.,
 if  $\Views(T_1) \sim \Views(T_2)$.
\end{definition}

Now we are ready to state that symbolic equivalence in the
split-state setting implies symbolic equivalence in the sense of
Definition~\ref{def:symbolic-indistinguishable}, or in other words,
symbolic equivalence with respect to the semantics introduced in
Section~\ref{sec:dalvik-symbolic-semantics}.

\begin{lemma}\label{lemma:symbolic-adl-symbolic-execution}
Let $\proto_1$ and $\proto_2$ be two ADL program pre-compliant to the same library specification \libSpec and $s_1,s_2$ initial configuration for $\proto_1$ and $\proto_2$, respectively. Then,
$$\protos 1 \ADLEquivalent \protos 2 \implies \protos 1 \execEquivalent \protos 2$$
\end{lemma}
\iffullversion%
\begin{proof}
    Assume $\proto_1 \notExecEquivalent \proto_2$, then there 
    are initial configuration 
    $s_1\in T_{\proto_1}$,
    $s_2\in T_{\proto_2}$
    and
    views
    $t_1 \in \Views_{s_1}(T_{\proto_1})$,
    $t_2 \in \Views_{s_2}(T_{\proto_2})$
    such that $t_1 \not\sim t_2$ for
    $T_{\proto_1}$
    and
    $T_{\proto_2}$
    their respective ADL symbolic split-state semantics with respect
    to some attacker strategy $I$.
    We obtain two views witnessing that 
    $\protos 1 \notADLEquivalent \protos 2$
    from $t_1$ and $t_2$ as follows.
    If there is a consecutive pair of elements of
    form $((\tHA,\mathit{rop},x,y),(\tAH,x))$ 
    or
    $((\tHA,\mathit{rop},x,y),(\tAH,\bot))$ 
    with
    $\mathit{rop}\in\mathcal{ROP}$ and $x,y\in\T$
    in $t_1$ or $t_2$,
    then the second
    element is removed.

    Observe that for
    transition that invoke neither attacker nor
    library both semantics permit the same transitions.
    Transitions of the form
    $(\tiHA,\ldots),(\tAH,\ldots)$ or
    $(\tiHL,\ldots),(\tLH,\ldots)$ 
    correspond immediately, as the library and attacker semantics can
    be inlined, 
    the only exception being the rule \textsc{IfTest-m}.
    Wherever \textsc{IfTest-m} is invoked, an event of form
    $\mathit{rop}\in\mathcal{ROP}$ and $x,y\in\T$ is in the view,
    hence $V\vdash x$ as well as $V\vdash y$ for $V$ the view at
    this point. Thus, if this transformation (let us call it $\theta$) is
    applied,
    $\theta(t_1)\in \Views_{s_1}(\proto_1)$ 
    and 
    $\theta(t_2)\in \Views_{s_2}(\proto_2)$.
    If $t_1 \not\sim t_2$, then 
    $\theta(t_1) \not\sim \theta(t_2)$,
    as the transformation only removes input elements that are
    irrelevant for the knowledge set, and is preserving the structure
    in the following sense: if $\theta$ removes the $i$th element of $t_1$,
    it also removes the $i$th element of $t_2$, otherwise the $i-1$
    prefix of $t_1$ and $t_2$ was not symbolically equivalent.
\end{proof}
\else%
The proof of this lemma as well of all upcoming ones are postponed to the technical report~\cite{ourfullversion}.
\fi

In
Lemma~\ref{lemma:embedding-soundness-symbolic}, we show
that for the embedding presented in Section~\ref{sec:embedding}
static equivalence in CoSP implies static equivalence in the symbolic
variant of ADL\@.

\subsection{Constructing the CoSP-embedding}\label{section:construction-embedding}
\newcommand{\extractState}{\mathfun{exSt}}
\newcommand{\extractAllStates}{\mathfun{exAllSt}}
\newcommand{\ident}{\mathfun{id}}
Previous work~\cite{BaHoUn_09:cosp,BaMaUn_10:rcf} defined the
embedding into CoSP indirectly, namely via a symbolic and
computational execution that 
followed the structure of a CoSP execution (see
Definition~\ref{def:symbolic-execution}
and~\ref{def:computional-execution-challenger}).
This work, in contrast, explicitly defines an embedding.
As CoSP trees are infinite, we define the embedding in a co-recursive manner, i.e., as the largest fixpoint of a co-recursive construction. 
Each step in this recursion is defined by a function $\AlgoName$ that takes as input a trace from a leaf-node in the so-far constructed CoSP tree to the root node and outputs a finite subtree. 
After defining this largest fixpoint construction, we concentrate on defining the recursion $\AlgoName$.
We stress that our construction is defined on the honest program semantics of any over-approximated semantics. 
Hence, this construction is valid for the canonical symbolic model,
too.

\subsubsection{Instantiating the over-approximated split-state semantics with references}
Within the embedding, we instantiate the honest
program semantics in the over-approximated split-state form (see Definition~\ref{def:sso}),
but replace values originating from the cryptographic library or the
attacker by pointers to computation nodes or input nodes, respectively.
Formally, we instantiate
the set $\calD_c := \mathit{Pos}$ with the set of positions in a CoSP tree.
Here, positions are sequences of natural
numbers that encode which path through the CoSP-tree was taken, \ie, $\mathit{Pos} := \setN^*$.
In the case of ADL, registers and heap locations thus store values that have been input by the
adversary or the crypto-API by pointing to the position of
the respective input or computation nodes in addition to numerical
values, locations and \voidDV{}\@.
In order to transmit these values to the adversary or the crypto-API,
these positions are 
resolved to a node identifier. The CoSP execution itself takes care of
translating, \eg, node identifiers of input nodes to the value the
attacker choose to send at this point.

Note that there is no unique
representation in $\calN_c'$
for all values in $\calN$ (\ie, there is no
bijection between the sets
as required by Definition~\ref{ex:adl-overapprox}),
but (as we will see below) the transitions
in Figure~\ref{fig:adl-to-adlss-att}, which are the only ones using 
$\calN_c$ representations, are never be used. All other transitions
are still well-defined.
\TODOE{What does this comment mean? Does it mean that we can actually relax our definition?}

Recall that states Lemma~\ref{lem:overapproximation-preserves-internally-det} that the
ADL split-state representation is not probabilistic anymore, and
contains non-determinism only in global transitions, i.e., transitions with labels
in
$\tHA\cup\tAH\cup\tHL\cup\tLH$. 

\begin{figure}[t]
\begin{framed}
\begin{center}
\includegraphics[width=\textwidth]{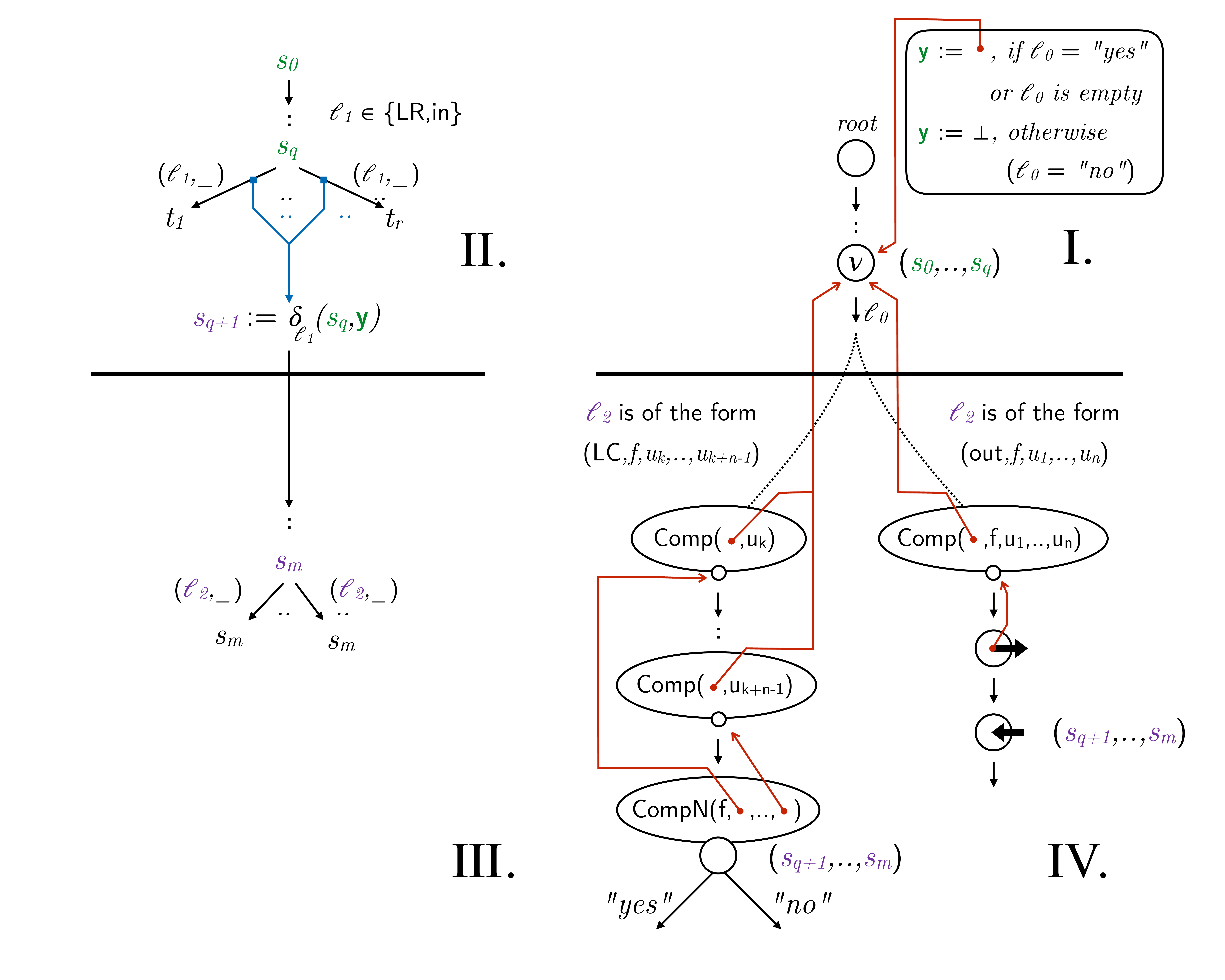}
\end{center}
\vspace{-2em}
\end{framed}
\vspace{-1em}
\caption{Construction of the recursion step $\AlgoName$}\label{figure:construction-embedding-4phases}
\end{figure}

\ifdraft
\fi
\subsubsection{Constructing the algorithm for the recursion step}\label{section:construction-embedding-recursion-step}
The core of the embedding is the recursion step algorithm $\AlgoName(p, \pos)$. As an input, the algorithm expects a path $p$ from a leaf edge $\ell_0$ (see Figure~\ref{figure:construction-embedding-4phases}, phase I.) of the already constructed CoSP tree to the tree's root and the position of the last node in this path. If the leaf node in this already constructed CoSP tree has several outgoing edges, the input $p$, in particular, also contains one of the outgoing edges (labelled with $\ell_0$ in Figure~\ref{figure:construction-embedding-4phases}). We first give an overview of the recursion step, then describe all auxiliary functions that we use in the full description, and finally present the construction in Figure~\ref{figure:construction-embedding} in full detail.

%
\pparagraph{Overview}
Let the split-state transition system $T = (T_H,T_L,\attacker)$ (i.e., the program in the case of ADL) to be embedded and the initial configuration $s$ be fixed. 
As Figure~\ref{figure:construction-embedding-4phases} depicts, the algorithm can be divided into the following four phases.
\begin{compactitem}
\item[I.] Extract from the prefix $p$ the label $\ell_0$ of the last edge in the CoSP tree that has already been constructed. This last label $\ell_0$ determines whether the last subtree that was constructed in the co-recursive computation of the CoSP-tree ended with a computation node or an input node. This check corresponds to split-state transition system ending with a global transition $\tiLH$ or $\tiAH$. We call this last node $\nu$. We construct $\AlgoName$ such that $\nu$ is annotated with the sequence of states $({\color{darkgreen}s_0},\dots,{\color{darkgreen}s_q})$ of the honest program semantics.\footnote{This sequence of states corresponds to the resulting CoSP subtree from last (co-recursive) invokation of $\AlgoName$. This overview makes this correspondence precise.}
\begin{compactitem}
\item If $\nu$ is a computation node and $\ell_0 = ``no''$, set the variable {\bf\color{darkgreen}\textsf{y}} to the empty reference: {\bf\color{darkgreen}\textsf{y}}$:= \bot$.
\item Otherwise (i.e., $\ell_0 = ``yes''$ or $\nu$ is an input node), set the variable {\bf\color{darkgreen}\textsf{y}} to the position of $\nu$, as illustrated in Figure~\ref{figure:construction-embedding-4phases}.
\end{compactitem}
\item[II.] Reconstruct a path through the over-approximated honest program semantics $T_H$ (together with the initial state $s$) with the states $({\color{darkgreen}s_0},\dots,{\color{darkgreen}s_q})$. Check how many possible successors $s_q$ has. \AlgoName is constructed such that after $s_q$ there is always a global transition, either $\tiHL$ or $\tiHA$. \TODOE{Probably, we have to have the full sequence (from the beginning) of the honest program semantics. Otherwise, if it is a subsequence (e.g., the body of a loop) that potentially fits several prefixes, it might happen that $s_{q+1}$ is not unique and depends on the prefix. Check whether the proof uses the fact that we only talk about subsequences here.} 
We determine the successor node $s_{q+1}$ via the global transition mapping $\delta_{\ell_1}$: $s_{q+1} := \delta_{\ell_1}({\color{darkgreen}s_q},{\bf\color{darkgreen}\textsf{y}})$.
\item[III.] Run the over-approximated honest program semantics $T_H$ from $s_{q+1}$ until the first global transition $\ell_2$.
\item[IV.] Construct the corresponding CoSP-subtree as follows.
\begin{compactitem}
\item If $\ell_2$ is of the form $(\tiHL, f, u_k, \dot, u_{k+n-1})$, place a sequence $C_i$ of computation nodes that computes the value for each of the arguments $u_{k+i}$ ($i\in{0,\dots,n-1}$). Then, compute a sequence $C_f$ of computation nodes for the function $f$ that uses the results of the respective sequences $C_0, \dots, C_{n-1}$. Finally, annotate the final computation nodes of $C_f$ with the sequence $({\color{darkpurple}s_{q+1}}, \dots, {\color{darkpurple}s_{m}})$ of states that occurred in step III.
\item If $\ell_2$ is of the form $(\tiHA, f, u_1, \dot, u_{n})$, place a sequence $C$ of computation nodes that compute the value for $(f,u_1,\dots,u_n)$, where $f$ is only an identifier for the malicious function $f$ that triggered the communication to the attacker. Below, place an output node that sends the result of $C$ to the attacker, and one node further place an input node. Annotate the final input node with the sequence $({\color{darkpurple}s_{q+1}}, \dots, {\color{darkpurple}s_{m}})$ of states that occurred in step III.
\end{compactitem}
\end{compactitem}
\TODOE{CompN and potentially also Comp can have several final nodes. Is that a problem?}

\begin{framedFigure}{Construction of the embedding, where $\nu[(s,\pos)]$ means that the identifier of $\nu$ is set to $(s,\pos)$.}{figure:construction-embedding}
\footnotesize
\noindent\textbf{Phase I.}\\
Let $p' := p \rightarrow \nu \xrightarrow{[\ell_0]}$. 
Let ${\bf\color{darkgreen}\textsf{y}} := \bot$ if $\ell_0 = \mathstring{no}$ and ${\bf\color{darkgreen}\textsf{y}} := \ident({\color{darkgreen}\pos})$ otherwise. 
Let $({\color{darkgreen}s_0},\dots,{\color{darkgreen}s_q}) := \extractState(\nu)$.\\[0.5ex]
\noindent\textbf{Phase II.}\\
Let ${\color{darkpurple}s_{q+1}} := \delta_{\ell_1}({\color{darkgreen}s_q},{\bf\color{darkgreen}\textsf{y}})$ if $z = ({\ell_1},\_) \land \ell_1\in\set{\tiAH,\tiLH}$
, and ${\color{darkgreen}s_0} \rightarrow \dots \rightarrow {\color{darkgreen}s_q} \xrightarrow{z}_H {\color{darkpurple}s_{q+1}}$\\[0.5ex]
\noindent\textbf{Phase III.}\\
Let ${\color{darkpurple}\ell_2}$ be defined as $t(p',{\color{darkpurple}s_{q+1}},1) = ({\color{darkpurple}s_m},m-(q+1))$
    $\land ~ {\color{darkpurple}s_m} \xrightarrow{{\color{darkpurple}\ell_2}}_H s_m$. In particular, ${\color{darkpurple}s_{q+1}} \rightarrow \dots \rightarrow {\color{darkpurple}s_m}$. \\[0.5ex]
\noindent\textbf{Phase IV.}
\begin{align*}
 &   \AlgoName(p \rightarrow \nu \xrightarrow{[\ell_0]}, {\color{darkgreen}\pos}) :=\\
&		\begin{cases}
        \begin{aligned}
            c_1 & \rightarrow^* c_l
        \rightarrow
        \OutN(\ident(\pos''))\\
        & \rightarrow
        \InN[(({\color{darkpurple}s_{q+1}}, \dots, {\color{darkpurple}s_m}), \pos')]
        \rightarrow
        \end{aligned}
        &
        \begin{aligned}
            \text{if}~ & {\color{darkpurple}\ell_2} = (\tiHA, f, u_1,\dots, u_n) \text{, where } \pos' := \pos'' \cdot (0)^2\\
			&\text{and } (c_1 \rightarrow^* c_l,\pos'') := \mathit{Comp}({\color{darkgreen}\pos},f,u_1,\dots,u_n,p')
        \end{aligned}%
		\\[1.5em]
        \begin{aligned}
        &c_1 \rightarrow^* c_l \rightarrow \\
		& \CompN(f, \ident(\pos_{k}), \ldots, \ident(\pos_{k+n-1}))\\
  & \qquad [(({\color{darkpurple}s_{q+1}}, \dots, {\color{darkpurple}s_m}), \pos')] \set{ \xrightarrow{\mathstring{yes}}, \xrightarrow{\mathstring{no}} }
        \end{aligned}
        &
        \begin{aligned}
            \text{if}~ & {\color{darkpurple}\ell_2} = (\tiHL,f,u_k,\ldots,u_{k+n-1})\text{, where }  \\
			& (c_1 \rightarrow^* c_{l_k},\pos_k) := \mathit{Comp}({\color{darkgreen}\pos},u_k,p'), \dots, \\
            & (c_{1 + l_{k+n-2}} \rightarrow^* c_{l},\pos_{k+n-1}) := \mathit{Comp}({\color{darkgreen}\pos},u_{k+n-1},p'),\\
			& \pos' := \pos_{k+n-1} \cdot (0)^u,\\
			& \text{and $u$ is the length of $\CompN(f, \ident(\pos_{k}), \ldots, \ident(\pos_{k+n-1}))$}
        \end{aligned}%
		\end{cases}
		\\[3ex]
& \mathit{Comp}(\pos,\underline u,p) := \\
		&    \begin{cases}
        ( c_1 \rightarrow^* c_k, \pos\cdot(0)^k )
        & 
            \text{if } \underline u = u_1 \land ~ u_1\in\calS \land  S \in \set{\calN,\locationsDV,\set{\voidDV}} \\
        & \text{\phantom{if }} c_1,\ldots,c_k = \mathit{CoSPRep}(\iota_\calS(u_1),p)
        \\
        ((),u_1) & \text{if } \underline u = u_1 \land u_1\in \mathit{Pos}
		\\[2ex]
		(c_1 \rightarrow^* c_k \rightarrow^* c_l
        \rightarrow 
		& \text{if } \underline u = (u_1,\ldots,u_h) \land
        (c_1\rightarrow^* c_k, \pos_1) = \mathit{Comp}(\pos,u_1,p) \land 
        \\
		\CompN(\pair,\ident(\pos_1,p),\ident(\pos_2,p)),\pos_2 \cdot (0))
        &
        \phantom{\text{if }}(c_{k+1}\rightarrow^* c_l,\pos_2) = \mathit{Comp}(\pos_1,(u_2,\ldots,u_h),p)
        \rightarrow c_{k+1}
	    \end{cases}		
\end{align*}

$\mathit{CoSPRep}(t,p)$ outputs the representation of a term $t$ in
terms of a CoSP path $p$ in an topological sorting of
the following tree: for each position in $t$, there is a node
annotated with the constructor at the position, referencing the
identifiers of the nodes of its subterms. An edge exists between two
nodes iff the term corresponding to the second node is a subterm of
the term corresponding to the first node.\\[2ex]
$\CompN(f,\mathit{ref}_1,\dots,\mathit{ref}_n)$ is the CoSP tree that corresponds to the symbolic operation $f$ with the references $\mathit{ref}_1,\dots,\mathit{ref}_n$. Let $u$ be the length of the (linear) CoSP tree that corresponds to $f$.\TODOE{More precisely, $SO$ is a parametric CoSP protocol (see previous work). Shall we mention that?}

\TODOE{$\pos'$ is the position of the node to which it belongs. Shall we introduce a notation for that? It is actually something very intuitive. At the moment, it looks quite complicated.}

\TODOE{Would the proof be easier if the node identifier (e.g., of $\nu$) contains the entire execution sequence instead of the states of the quasi atomic trace (as it is now)?}
\end{framedFigure}

In the construction of $\AlgoNameString$, we use four auxiliary functions: $\extractState$, to extract the honest program semantics-state encoded in a node identifier, $\extractAllStates$, to extract a list of states that are encoded in a trace of a CoSP tree, $t$, to find the next global $\tHA$- or $\tHL$-transition, and $\ident$, to find a node for a given position parameter $\pos$.

\pparagraph{Extracting states: the auxiliary functions  $\extractState$ and $\extractAllStates$}
We define $\extractState(\nu)$ as the last state reconstructed out of the identifier $\nu$, and $\extractAllStates(\nu_1 \rightarrow \dots \nu_n)$ as the function that extract the entire sequence of states from $\nu_1 \rightarrow \dots \nu_n$.

\TODOE{Lift $\extractState$ to lists of states instead of defining $\extractAllStates$.}

\pparagraph{Next global $\tHA$- or $\tHL$-transition: the auxiliary function $t$}
We use $p\rightarrow$ to denote a path ending at an edge.
\begin{align*}
    t(p\rightarrow, \hat s,n) & := 
    \begin{cases}
        (\hat s,n) & 
        \text{if $S(p \rightarrow, \hat s)=\set{s'} \wedge \hat s \xrightarrow{\tHA\cup\tHL}_H s'$}\\
        t(p \rightarrow, s', n+1) & \text{if  $S(p \rightarrow, \hat s)=\set{s'}$ but none of the above}\\
        (\hat s,n) & \text{if $S(p \rightarrow, \hat s)=\emptyset$}
    \end{cases}
    \intertext{where $S$ denotes the set of successor states, except
        when called on the state encoded in the last edge of $p$,
        denoted $\tl(p)$. \ifdraft In case that $s$ has more than one
        successor, we choose the successor with the lexicographically smallest label  
 		of the last edge.\fi \TODOE{Do we still have multiple successors? Did we not get rid of them?}
    } 
    S(p \xrightarrow{l},\hat s)
    & := 
    \begin{cases}
        \set{ s'} &
        \text{if $|\set{s'' \mid 
		\hat s \rightarrow_H s''}| > 1 \wedge 
        \hat s \text{ equals the last state in } \extractAllStates(p)$} 
        \wedge \hat s \rightarrow_H^l s'
        \\
        \set{s' \mid \hat s \rightarrow_H s'} & \text{otherwise} \\
	\end{cases}
\end{align*}
\TODOE{Something is wrong here. I think, we do not need the set anymore. By letting the adversary resolve all non-determinism, we do not have non-determinism inside an atomic trace.}

\pparagraph{Node for a position: the auxiliary function $\ident$}
We construct a pair representation for every register name $v\in l$,
by first computing the node identifiers for each value at $r(v)$. 
If $r(v)\in\calN_c$, it is a reference to a previous node, which we
resolve with the following function, that identifies the position on
the current path. We slightly abuse notation by making the position in
the node identifier explicit. Note that positions must refer to node
on the previous path, as node identifiers, too, can only reference 
nodes in the prefix of the their path.
\begin{align*}
    \ident(pos,\emptyset) & := \bot \\
    \ident(pos,p' \rightarrow (\hat s,pos') \rightarrow) & := 
	\begin{cases}
		(\hat s,pos) & \text{if $pos = pos'$}\\
		\ident(pos,p' \rightarrow) & \text{otherwise}
	\end{cases}
\end{align*}

\subsubsection{Corecursively defining the embedding}
As outlined above, we are at this point in the position to define the mapping of a (potentially infinite) transition system (the honest program semantics) to an infinite CoSP tree. We define the mapping as the largest fixpoint of a corecursive procedure. For technical reasons (to make the fixpoint unique) and since CoSP trees are necessarily infinite, we first append to the translation of each leaf of the tree that is spanned by the transition system and an initial configuration an infinite chain of control nodes, which we call a dummy tree.

\begin{definition}[Dummy subtree]\label{def:dummy-subtree}
A \emph{dummy node} is a control nodes that has as out-metadata its position in the tree and that has a single successor with in-metadata $0$. A \emph{dummy subtree} is an infinite tree that solely consists of dummy nodes. We call the tree that solely consists of dummy nodes the \emph{empty CoSP protocol}.
\end{definition}

Finally, we can define the embedding of an ADL program as the largest fixpoint as the procedure that step-wise applies \AlgoNameString or appends a dummy subtree if the transition system reaches a final state.

\begin{definition}[Embedding of an ADL program]\label{def:embeddingsv}
Let $rootpath(edge)$ be the unique path from the edge to the root.
Consider the following definition $U(T_H,p)$:
\begin{itemize}
\item if $\AlgoName(p)$ is undefined return a dummy subtree (see Definition~\ref{def:dummy-subtree}) to $p$; \item otherwise, 
	\begin{itemize}
	\item let $pa := \AlgoName(p)$, 
	\item append $pa$ to $p$ resulting in a tree $par$, and 
	\item return the tree where to each leaf $le$ of $par$ the subtree $U(T_H(s),rootpath(le))$ is appended.
	\end{itemize}
\end{itemize}
The \emph{embedding $U(T_H(s))$ of an honest program transition system $T_H$ with initial configuration $s$} is co-recursively defined as the greatest fixpoint of $U(T_H(s),\varepsilon)$.
\end{definition}
We stress that this greatest fixpoint $\embedding(T_H(s))$ is unique.
\NOTEE{We could prove that thoroughly.}

\pparagraph{Each CoSP protocol $\embedding(T_H(s))$ is efficient} For each honest program semantics $T_H(s)$ with initial configuration $s$, $\embedding(T_H(s))$ is efficient in the sense of CoSP. Our algorithm $\AlgoNameString$ gives rise to an algorithm that produces an efficient CoSP protocol
in the sense of Definition~\ref{def:cosp:efficient-protocol}, i.e.,
that only outputs the identifier of the 
identifier of the next node $N$ and
a set of labels for its outgoing edges. This is trivially done by
encoding the list of next steps within the identifier.

\subsection{Soundness of the CoSP-embedding}\label{sec:embedding}
We have to show that for every pair of transition system $T_{H,1}, T_{H,2}$ and respective initial configurations $s_1,s_2$ with $T_{H,1}(s_1) \execEquivalent T_{H,2}(s_2)$ (i.e., that are equivalent in the split-state semantics) that $T_{H,1}(s_1) \CoSPEquivalent T_{H,2}(s_2)$ holds (i.e., they are is symbolically equivalent in the sense of CoSP). First, we introduce some notation that we use to make the proof more readable, and then we present the soundness proof.

%

\subsubsection{Preliminary definitions}
In order to make the proof more readable, we introduce some notation: first the view of an attacker, called out-traces, and second quasi atomic traces, which are used to characterize the subsequences of an honest program semantics execution sequence that is processed by $\AlgoNameString$.

\pparagraph{Distribution of out-traces}
In the CoSP-computational execution $\interact{\Exec_{\Model,\implementation,T_H(s)}(\secpar)}{\attacker(\secpar)}$, let $\mathit{OutTraces}_{\Model,\implementation,T_H(s),\attacker,p}(\secpar)$ be the distribution of messages that $\Exec_{\Model,\implementation,T_H}(\secpar)$ sends to $\attacker(\secpar)$ if output nodes are encountered such that the combined runtime of $\Exec_{\Model,\implementation,T_H}(\secpar)$ and $\attacker(\secpar)$ is $\le p(\secpar)$.

\TODOR{Proposal: Instead of $\mathit{OutTraces}$ for transition systems.}
Define for transition systems:
\begin{itemize}
    \item 
    \[
        \Pr [ (a_1,\ldots,a_n) = \Traces_{\le n}(T)   ]
        =
        \sum_{\parbox{2cm}{\footnotesize$(\alpha_1,\ldots,\alpha_m)|_A \\= (a_1,\ldots,a_n) \\ \land
        m\le n$}}
    \Pr [ \Exec(T) = s_0 \xrightarrow{\alpha_1}_{m_1} \cdots \xrightarrow{\alpha_m}_{m_m} s_m ] \]
\item And in general, for random variable $X$,
    $ \Pr[ t = X|_S ] \defeq \sum_{t'|_S = t} \Pr[ t' = X ]$.
\end{itemize}

Then, let $T$ be any split-state semantics
with Crypto-API semantics $T_L$ which harmonizes with the implementation \implementation
\ifdraft
\TODOR{Something similar to
Definition~\ref{def:compliant} here.
Need to split Definition~\ref{def:compliant}, so it says
ADL program is compliant if it is pre-compliant and library semantics
of split-state representation is compliant to libspec. Then have
definition of ``compliant library semantics'' before that and refer to
it here.
}, honest program semantics
$T_H$ and attacker semantics $T_A$.
Then $ \Traces_{\le p(n)}(T)|_{\tHA\cup\tfinal}$ is one distribution.

Then, later if CoSP is a transition system, we can use the same
notation. For now, I propose to fix $\Model$ and $\implementation$
globally, and define
$\Traces(\interact{\Exec_{T_H}(\secpar)}{\attacker(\secpar)})$
as the distribution over the whole trace (not just the final judgement), and
prefixing
message so they are in $\tHA$ and $\tAH$, and
the final output with final so it is in
$\tfinal$.
If we additionally use notation
$\Traces_{\le n}(\interact{\Exec_{T_H}(\secpar)}{\attacker(\secpar)})$
like above, we arrive at
$\Traces_{\le n}(\interact{\Exec_{T_H}(\secpar)}{\attacker(\secpar)})|_{\tHA\cup\tfinal}$.
\TODOR{ProposalEnd}
\fi

Let 
    $\mathit{OutTraces}_{SSo, T_H(s),\attacker,p}(\secpar)$
be the distribution of $\tHA$-traces 
of the transition system $T = (T_H,T_L,\attacker)$
with runtime less or equal than
$p(\secpar)$, i.e.,
\[
    \Pr[t= \mathit{OutTraces}_{SSo, T_H,\attacker,s,p}(\secpar)]
    =
    \sum_{ t'|_\pi = t } \Pr[ t' = \Traces^\mathit{ss}_{\secpar,z}(T(s)) ],
\]
where $\pi( (\tHA,m) ) =m$ and undefined otherwise.


%


\pparagraph{Quasi atomic traces}
Given a path ending on an edge, and a state, the first step in the
embedding is to iteratively compute the follow-up state until it is
no longer uniquely determined, or it involves the attacker or crypto-API.
We introduce the notion of an atomic trace and a global transition.

\begin{definition}[Global transition]\label{def:global-transition}
We call a transition $s \xrightarrow{l} s'$ a \emph{global transition} if $l \in \tHA\cup\tAH\cup\tHL\cup\tLH$.
\end{definition}

\begin{definition}[Atomic trace]\label{def:atomic-trace}
A sequence $\underline s$ of of transition steps $\underline s := s_1 \xrightarrow{\underline l}^n s_n$ is called an \emph{atomic trace} if no global transition occurs. We call a sequence $\underline s$ a \emph{quasi atomic trace} if $\underline s =  s_1 \xrightarrow{l_1} \underline s' \xrightarrow{l_2} s_{|\underline s'|+2}$ and $l_1\in \set{\tiAH,\tiLH}$ and $l_2\in \set{\tiHA,\tiHL}$.
\end{definition}

\subsubsection{The proofs}

\begin{lemma}\label{lemma:embedding-state-correspondence}
For a sequence $\underline s$ of transitions from an honest program semantics $T_H$ that begins at an initial configuration $s$, let $q_0, \rightarrow_H \dots \rightarrow_H q_n$ be a partitioning into quasi atomic traces $q_i$. Let $\rightarrow$ denote an edge in the CoSP tree. 
Let $\embedding(q_i) := \AlgoName(q_i)$, and let $\extractAllStates$ be defined as in Section~\ref{section:construction-embedding-recursion-step}. 
With this notation, the following two properties hold.
\begin{enumerate}[$(i)$]
\item\label{item:embedding-base-case} For all initial atomic traces $q_0$, we have $\extractAllStates(\embedding(q_0)) = q_0$.
\item\label{item:induction-step} For all $i\in\mathbb{N}$ there is a polynomial $p$ such that 
we have $$q_i \rightarrow_H q_{i+1} \implies e(q_i) \rightarrow e(q_{i+1}) \land \extractAllStates(\embedding(q_{i+1})) = q_{i+1}$$
\end{enumerate}
\end{lemma}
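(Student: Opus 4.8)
\textbf{Proof plan for Lemma~\ref{lemma:embedding-state-correspondence}.}

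The plan is to prove both properties by tracking, step by step, how the recursion-step algorithm $\AlgoNameString$ from Figure~\ref{figure:construction-embedding} consumes quasi atomic traces. The key invariant to maintain is that the last CoSP node produced by $\embedding(q_i)$ (the one that receives the annotation in Phase~IV, i.e.\ the final node of $C_f$ if $\ell_2 \in \tHL$, or the final input node if $\ell_2 \in \tHA$) has its identifier set to the pair $({\color{darkpurple}(s_{q+1},\dots,s_m)},\pos')$, where $s_{q+1} \rightarrow_H \dots \rightarrow_H s_m$ is exactly the sequence of honest-program states traversed in Phase~III, i.e.\ exactly the atomic part of $q_i$ together with its trailing global transition. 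This is the statement $\extractAllStates(\embedding(q_i)) = q_i$, once we recall that $\extractState$ reads off precisely this annotation and $\extractAllStates$ collects it along a path.

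\emph{Base case }(\ref{item:embedding-base-case})\emph{.} For the initial quasi atomic trace $q_0$, the path argument $p$ passed to $\AlgoNameString$ is the empty path $\varepsilon$, so Phase~I sets ${\bf\color{darkgreen}\textsf{y}} := \bot$ and there is no prior node $\nu$. By internal determinism of the over-approximated honest-program semantics (Lemma~\ref{lem:overapproximation-preserves-internally-det}), starting from the initial configuration $s$ the execution runs deterministically through a uniquely determined atomic trace $s = s_0 \rightarrow_H \dots \rightarrow_H s_m$ until the first global transition $\ell_2$ is reached (Phase~III; here $q+1 = 0$). Phase~IV then builds the corresponding subtree and annotates its last node with $({\color{darkpurple}(s_0,\dots,s_m)},\pos')$. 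Since $q_0$ is by definition this atomic run plus its trailing global transition, $\extractAllStates(\embedding(q_0)) = q_0$. The only subtlety is to check that the $\mathit{Comp}$/$\CompN$ gadget emitted in Phase~IV is well-defined on the argument values $u_i$ occurring in $\ell_2$: values in $\calN \cup \locationsDV \cup \set{\voidDV}$ are turned into a CoSP representation via $\mathit{CoSPRep} \circ \iota$, and values in $\mathit{Pos}$ become references via $\ident$; this is exactly what the over-approximation guarantees, so no case is missed.

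\emph{Induction step }(\ref{item:induction-step})\emph{.} Assume $q_i \rightarrow_H q_{i+1}$ and that the invariant holds for $q_i$, so $\embedding(q_i)$ ends at a node $\nu$ with identifier $({\color{darkgreen}(s_0,\dots,s_q)},{\color{darkgreen}\pos})$, where $s_q$ is the last honest-program state before the global transition that separates $q_i$ from $q_{i+1}$. When $\AlgoNameString$ is invoked on the path $p \rightarrow \nu \xrightarrow{[\ell_0]}$, Phase~I extracts these states and sets ${\bf\color{darkgreen}\textsf{y}}$ according to whether $\ell_0$ is the $\mathstring{yes}/\mathstring{no}$ edge (for a computation node, i.e.\ a $\tHL$-transition) or the successor of an input node (a $\tHA$-transition); in the former no-case ${\bf\color{darkgreen}\textsf{y}} = \bot$, otherwise ${\bf\color{darkgreen}\textsf{y}} = \ident(\pos)$. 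Phase~II then computes $s_{q+1} := \delta_{\ell_1}(s_q,{\bf\color{darkgreen}\textsf{y}})$; the crux here is that $\delta_{\ell_1}$ for $\ell_1 \in \set{\tiAH,\tiLH}$ is exactly the reception map the split-state semantics uses when the attacker's (resp.\ crypto-API's) message is delivered to the honest program — and by internal determinism this follow-up state is uniquely determined. So $s_{q+1}$ is the first state of the atomic part of $q_{i+1}$. Phase~III runs the deterministic atomic trace from $s_{q+1}$ to the next global transition $\ell_2$, and Phase~IV emits the subtree, whose last node is annotated with $({\color{darkpurple}(s_{q+1},\dots,s_m)},\pos')$. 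The new edge $\nu \rightarrow (\text{first node of }\embedding(q_{i+1}))$ witnesses $e(q_i) \rightarrow e(q_{i+1})$, and the annotation establishes $\extractAllStates(\embedding(q_{i+1})) = q_{i+1}$, completing the induction. The polynomial $p$ asserted in the statement bounds the length of the emitted subtree in terms of the length of $q_{i+1}$ and the sizes of the term representations of the $u_j$'s; this follows from efficiency of $\iota$, $\mathit{CoSPRep}$, $\CompN$, and $\libSpec$, and from the fact that each atomic trace has finite length.

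\textbf{Main obstacle.} The delicate point is the correctness of Phase~II, i.e.\ that $\delta_{\ell_1}(s_q,{\bf\color{darkgreen}\textsf{y}})$ genuinely reconstructs the honest-program state the split-state semantics would be in after receiving the attacker's or crypto-API's reply, even though the embedding carries CoSP \emph{positions} (elements of $\mathit{Pos}$) in registers and heap cells in place of the actual received values. One must argue that the honest-program semantics never inspects these slots in a way that distinguishes a position-token from the value it stands for before that slot is sent back out to the attacker/crypto-API (where the CoSP execution resolves the reference), which is precisely why the over-approximation routes every such computation through the adversary or the library. A careful case analysis over the honest-program rules, combined with the ``internally deterministic'' property and the shape of $\delta_{\tiAH}, \delta_{\tiLH}$ fixed in Definition~\ref{def:internally-deterministic}, closes this gap; everything else is bookkeeping on node identifiers and positions.
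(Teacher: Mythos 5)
Your proposal is correct and follows essentially the same route as the paper's proof: both argue by induction over the quasi atomic traces that the annotation written into the last node produced in Phase~IV is exactly the trace just consumed, using internal determinism to guarantee the atomic part is uniquely reconstructed and the domain-independence condition of the canonical over-approximation (Property~$(i)$ of Definition~\ref{def:sso-h}) to ensure the internally simulated run does not get stuck even though registers hold CoSP positions instead of values. The "main obstacle" you isolate is precisely the point the paper discharges by appeal to that same condition, so nothing is missing.
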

\begin{proof}
%
We show Property~$(\ref{item:embedding-base-case})$. Upon each invocation of $\AlgoName[T_H]$ a quasi atomic trace or an atomic trace following by one global $\tiHA$- or $\tiHL$-transition is generated inside $\AlgoName[T_H]$ and stored in the last node; hence Property~$(\ref{item:embedding-base-case})$ holds.

Next, we show that Property~$(\ref{item:induction-step})$ also follows from the construction of $\AlgoName[T_H]$ and $\embedding$. If $q_i \rightarrow 
q_{i+1}$, then $\embedding$ calls $\AlgoName[T_H]$ with a prefix $p \rightarrow \nu \xrightarrow{x}$ such that $\extractAllStates(\nu) = e(q_i)$. 
Then, $\AlgoName[T_H]$ internally computes the transitions beginning from the final state of $q_i$. The branchings coincide since by 
Definition~\ref{def:sso} $\rightarrow_H$ only branches after $\tiLH$-transitions, which 
are properly captured by the definition of $\AlgoName[T_H]$.
We know that the quasi atomic trace beginning from the final state of $q_i$ is a sequence due to the fact that whenever a branching occurs, the library or the adversary is queried. Moreover, the computation in $\AlgoName[T_H]$ does not get stuck, despite the fact that the registers carry references instead of values, due to Property~$(\ref{item:domain-independence})$ of Definition~\ref{def:sso-h}.
Since $q_i \rightarrow q_{i+1}$, there is a subtree $e(q_i) \rightarrow e(q_{i+1})$ in $e(T_H)$ and as above $\extractAllStates(e(q_{i+1})) = q_{i+1}$ holds by inspecting the construction of $\AlgoName[T_H]$.
\end{proof}

\TODOE{Do we have to require for all binary operations that for all $x,y\in\set{0,1}^*$ $d_{op}(\iota(x),\iota(y)) = \bot$ iff $\underline {op}(x,y) = \false$ holds?}

\begin{lemma}\label{lem:embedding-equal-outtraces}
Let a symbolic model $\Model$ together with an implementation $\implementation$ be given. For each honest program semantics $T_H$,
for each (attacker) interactive machine $\attacker$ such that $(T_H, T_L, \attacker)$ satisfies Definition~\ref{def:sso}, the following holds: there is an interactive machine $R_\attacker$ and a polynomial $p'$ such that for sufficiently large security parameters $\secpar$ we have 
	$$\mathit{OutTraces}_{SSo, T_H,\attacker,s,p}(\secpar) = \mathit{OutTraces}_{\Model,\implementation,\embedding(T_H(s)),R_\attacker,p'}(\secpar)$$
	if $\mathit{OutTraces}_{SSo, T_H,\attacker,p}(\secpar)$ uses some Crypto-API $T_L := (T_{L,\secpar})_{\secpar\in\mathbb{N}}$ that harmonize with $\implementation$ (see Definition~\ref{def:harmonizing-crypto-api}).
\TODOE{Do we also have to quantify over all initial configurations in $T_H$?}
\end{lemma}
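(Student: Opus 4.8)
The plan is to construct $R_\attacker$ as the machine that simulates $\attacker$ while translating between the split-state communication interface and the CoSP communication interface, and then to argue that the two out-trace distributions coincide by a step-by-step correspondence between the split-state execution of $(T_H, T_L, \attacker)$ and the CoSP-computational execution $\interact{\Exec_{\Model,\implementation,\embedding(T_H(s))}(\secpar)}{R_\attacker(\secpar)}$. The reduction $R_\attacker$ works as follows: it runs an internal copy of $\attacker$; whenever the CoSP challenger reaches an output node (corresponding, by the construction of $\AlgoNameString$ in Figure~\ref{figure:construction-embedding}, to a $\tiHA$-transition), $R_\attacker$ forwards the received bitstring to $\attacker$ as a message from a malicious function; whenever $\attacker$ responds with a value $\mathit{res}\in\valuesDV$, $R_\attacker$ supplies it at the subsequent input node (corresponding to the $\tiAH$-transition); and whenever $\attacker$ responds with $\mathit{res}\in\advrespDV$, $R_\attacker$ halts and outputs the final judgement. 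The control nodes inserted at branchings after $\tiLH$-transitions are resolved deterministically by $R_\attacker$ using the in-metadata, since in the over-approximated semantics the honest-program branching is already determined by the library response.

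The core of the argument is an invariant, proved by induction on the number of quasi atomic traces, stating that after $k$ global transitions the split-state configuration and the CoSP challenger's state correspond: the honest-program configuration $s_q$ reconstructed inside $\AlgoNameString$ via $\extractState$ equals the current split-state honest-program configuration, and the partial function $f$ in the CoSP challenger maps the position-references stored in $s_q$'s registers and heap to exactly the bitstrings the split-state crypto-API or attacker would have produced. The base case is Lemma~\ref{lemma:embedding-state-correspondence}$(\ref{item:embedding-base-case})$, and the inductive step uses Lemma~\ref{lemma:embedding-state-correspondence}$(\ref{item:induction-step})$ together with the harmonization hypothesis: when $\ell_2$ is of the form $(\tiHL,f,u_k,\ldots,u_{k+n-1})$, the computation-node sequence $C_f = \CompN(f,\ldots)$ in Figure~\ref{figure:construction-embedding} evaluates, under $\implementation$, to the same distribution as $A_{\libSpec(f,\cdot)}(u_1,\ldots,u_l,h)$, which by Definition~\ref{def:harmonizing-crypto-api} is exactly the distribution of the split-state library response; the yes/no-branching of the computation node then mirrors the library returning a value versus $\bot$. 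When $\ell_2$ is of the form $(\tiHA,f,u_1,\ldots,u_n)$, the output node emits precisely the tuple that the $\tiHA$-transition would leak, so the $\tHA$-entry added to the out-trace in the split-state execution matches the message sent to $R_\attacker$, hence to $\attacker$.

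Given this invariant, the two executions are coupled step for step, so the induced distributions on $\tHA$-traces (respectively, out-traces) are equal; the polynomial $p'$ is obtained from $p$ by accounting for the constant blow-up in step count incurred by each quasi atomic trace (the auxiliary computation nodes built by $\mathit{Comp}$ and $\CompN$, which is bounded because $\libSpec$ is efficiently computable and the pairing-term representations have polynomial size), so that a split-state run of length $\le p(\secpar)$ corresponds to a CoSP run of length $\le p'(\secpar)$ and conversely. The main obstacle I expect is the bookkeeping around the position-references: one must check carefully that every value originating from the crypto-API or the attacker that is later fed back into a $\tiHL$- or $\tiHA$-transition is resolved to the correct node identifier by the $\ident$ function, i.e.\ that the CoSP node-reference discipline (a node may only reference nodes on the path to it, through yes-edges for computation nodes) is respected by the embedding and that this resolution is consistent with the bijection between $\calD$ and $\calD_b$ used for bitstring-representable values versus the raw forwarding used for genuinely opaque terms in $\calD_c\setminus\calD_b$. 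Once that correspondence is nailed down, equality of the out-trace distributions is immediate from the coupling.
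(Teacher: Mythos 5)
Your proposal is correct and follows essentially the same route as the paper: the same wrapper construction of $R_\attacker$ around the attacker semantics, a step-wise correspondence between the split-state and CoSP executions established by induction over the global transitions via Lemma~\ref{lemma:embedding-state-correspondence}, the harmonization of the crypto-API with $\implementation$ to match library outputs, and a polynomial accounting for the extra computation nodes (the paper phrases the coupling as fixing the shared randomness $(r_A,r_L)$ and inducting on the number of out-labels, which is the same argument at a slightly different granularity). The only slight inaccuracy is attributing the post-$\tiLH$ branching to control nodes rather than to the yes/no-edges of the computation nodes produced by $\AlgoNameString$, but this does not affect the argument since, as you note, the branch is determined by the library response.
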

\begin{proof}
We show that for every honest program semantics $T_H$, every (attacker) interactive machine $\attacker$, and every resulting out-trace, there is a corresponding out-trace in the CoSP execution. We prove this by induction over the length of the out-trace of the honest program semantics $T_H$. More formally, for all honest program semantics $T_H$, we consider a modified transition system $\restrict{T_H}{n}$ that behaves just like $T_H$ except that $\restrict{T_H}{n}$ stops after it produced $n$ out labels. We prove by induction over $n$ that for all honest program semantics $T_H$ and all $n\in\mathbb{N}$, the statement holds for $\restrict{T_H}{n}$; hence the statement holds for polynomial-time computable prefixes.
\smallskip

\noindent\textbf{Constructing the reduction $R_\attacker$.} Before we begin with the induction proof, we define an interactive machine $R_\attacker$ from the attacker semantics $\attacker$.
\begin{itemize}
\item The machine $R_\attacker$ internally executes the attacker semantics $\attacker$ by iteratively computing the successors until either a transition with an $\transitionAHIdentifier$ label or until there is no unique successor. 
\item Whenever there is no unique successor but a distribution of successors, $R_\attacker$ conducts a weighted random choice, weighted with the probabilities of each successor.
\item Whenever the semantics conducts a transition $s \xrightarrow{(\transitionAHIdentifier, m)} s'$, store $s'$ and send $m$ to the communication partner (i.e., the computational CoSP execution).
\item Whenever $R_\attacker$ receives a message $f,m_1, \dots, m_n$ from the interaction partner (i.e., the computational CoSP execution) and the (internal) attacker semantics is currently in state $\stateDV[\varepsilon,as,\varepsilon]$, $R_\attacker$ sets the successor state to be $s' := \stateDV[(f,m_1, \dots, m_n),as,\varepsilon]$, and using $s'$ the attacker $R_\attacker$ then internally executes the attacker semantics as above.
\end{itemize}

\newcommand{\OutSSo}{\mathcmd{\text{\textrm{\small{}out}}\textsc{ss}\text{\textrm{\small{}o}}}}
\newcommand{\OutCoSP}{\mathcmd{\text{\textrm{\small{}out}}\textsc{cosp}}}

For notational convenience, we introduce the following notation $\OutSSo(i) := \mathit{OutTraces}_{SSo, \restrict{T_H}{i},\attacker,p}(\secpar)$ and $\OutCoSP(i) := \mathit{OutTraces}_{\Model,\implementation,\embedding(\restrict{T_H}{i}),R_\attacker,p}(\secpar)$.
\medskip

\par\noindent{\bf Induction base: $n=0$.~} By definition, for all honest program semantics $T_H$, the support of $\OutSSo(0)$ is the empty set. The support of $\OutCoSP(0)$ is empty as well, since $\AlgoName[\restrict{T_H}{0}]$ (and thus $\embedding(\restrict{T_H}{0})$) solely produces output nodes whenever there is an $\tiHA$-label in the honest program semantics of SSo.
\medskip

\par\noindent{\bf Induction step: $n > 0$.~} We assume that for all honest program semancs $T_H$ the restricted program $\restrict{T_H}{n-1}$ satisfies the statement. Let $T_H$ be an arbitrary but fixed honest program semantics. 
\smallskip

\noindent\textit{Fixing the randomness.} It suffices to prove the statement for an arbitrary but fixed set of random choices $r := (r_A,r_L)\in\set{0,1}^*$ in $\attacker$, written as $\attacker(r_A)$, and the cryptographic library, written as $\implementation(r_L)$ in CoSP and $SSo(r_L)$ in the over-approximated split-state semantics. We show that for sufficiently large $\secpar$, for all polynomials $p$, there is a polynomial $p'$ such that for $r := (r_A,r_L)$ 
$$\OutSSo(n,r) = \OutCoSP(n,r),$$
where $\OutSSo(i,r) := \mathit{OutTraces}_{\ADLo(r_L), \restrict{T_H}{i},\attacker(r_A),p}(\secpar)$ and $\OutCoSP(i,r) := \mathit{OutTraces}_{\Model,\implementation(r_L),\embedding(\restrict{T_H}{i}),R_{\attacker(r)},p}(\secpar)$.
\smallskip

\noindent\textit{Reduction to analyzing the last element in $\OutSSo(n,r)$.} We first show that it suffices to concentrate on proving the statement for the last element in $\OutSSo(n,r)$. 
Recall that $l|_0^k$ denotes the $k$-prefix of $l$.
If
\[ \OutSSo(n,r)|_0^{n-1} \neq \OutCoSP(n,r)|_0^{n-1} \]
holds, then there is a subset $r'$ of $r$ such that 
$$\OutSSo(n-1,r') \neq \OutCoSP(n-1,r')$$
holds by induction hypothesis.
\smallskip

\noindent\textit{Analyzing the last element in $\OutSSo(n,r)$.} For the last element in $\OutSSo(n,r)$, we treat all out-transitions uniformly. The last element in $\OutSSo(n,r)$ 
is of the form $(f,([r(v)]_\calN)_{v\in l})$ for $f\in\mathcal{UNOP}
\cup \mathcal{BINOP} \cup \mathcal{RELOP}\cup \mathit{Mal}_{static}$
 and $l\in\set{(v_b), (v_b,v_c), (v_a,v_b), (v_a,\dots,v_e)}$, caused by a honest program transition $s \xrightarrow{(\tiHA, f,([r(v)]_\calN)_{v\in l})} s$. $R_{\attacker(r)}$ behaves just like $\attacker(r)$ by definition and in particular the input messages are the same. Hence, we know that the inputs, sent to the computational CoSP execution are the same as those that the attacker semantics sends via $\tiAH$-transition labels. Moreover, for fixed randomness, the output of the cryptographic library in the CoSP execution are exactly the same as in the library semantics. \TODOE{Use the argument here that the cryptographic library is used as an algorithm.Use as $\implementation$ the implementation from the ADL program that encodes the library.} 
 
As a consequence and by Lemma~\ref{lemma:embedding-state-correspondence}, the state $s$ (from above) is the same (up to the content of the registers) as the state that is internally computed by $\AlgoName[\restrict{T_H}{n}]$ before the output node is produced.

Finally, we first stress that the number of computational steps of each invocation of $\AlgoName$ is polynomially bounded (in $\secpar$). Hence, for each polynomial $p$ there is a polynomial $p'$ such that the computations coincide. In conclusion, Claim~1 together with definition of $\AlgoName[\restrict{T_H}{n}]$ implies that the last element in $\OutSSo(n,r)$ coincides with the last element in $\OutCoSP(n,r)$.
 \end{proof}

We first prove the preservation in the computational model.
\begin{lemma}\label{lemma:embedding-soundness-computational-general}
    Let $T_{H,1}$ and $T_{H,2}$ be two uniform families with respective initial configurations $s_1,s_2$
    of ADL programs compliant (see Definition~\ref{def:compliant}) with the same
    library specification \libSpec w.r.t.\ to a symbolic model \Model
    and an implementation of it, called \implementation.
    Let
$e$ the injective embedding into CoSP\@,
    and
    $s_1$ and $s_2$ initial configurations.
    Then,
    \[ 
        \embedding(T_{H,1}(s_1)) 
        \CoSPIndistinguishable \embedding(T_{H,2}(s_2)) 
    \implies 
T_{H,1}(s_1) \execIndistinguishable T_{H,2}(s_1). \]
\end{lemma}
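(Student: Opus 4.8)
The plan is to prove this via a cryptographic reduction argument that transports a distinguishing attacker against the CoSP protocols back to a distinguishing attacker against the split-state semantics. Assume toward contradiction that $T_{H,1}(s_1) \notExecIndistinguishable T_{H,2}(s_1)$, i.e., there is a crypto-API $T_L$ harmonizing with $\implementation$, an attacker $T_A$, and a polynomial $p$ such that the termination probabilities of the split-state compositions differ non-negligibly on some bits $a \ne b$. I would then construct, from $T_A$, a ppt interactive machine $R_{T_A}$ (the reduction) exactly as in the proof of Lemma~\ref{lem:embedding-equal-outtraces}: $R_{T_A}$ internally simulates the attacker semantics $T_A$, resolving its probabilistic choices by weighted coin-flips and forwarding $\tiAH$-messages as computational inputs to the CoSP challenger while feeding the challenger's outputs back as $\tiHA$-messages, and finally outputting the bit that $T_A$ outputs. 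The claim is that $R_{T_A}$ distinguishes $\embedding(T_{H,1}(s_1))$ from $\embedding(T_{H,2}(s_2))$ in the computational CoSP execution with the same non-negligible advantage, contradicting $\embedding(T_{H,1}(s_1)) \CoSPIndistinguishable \embedding(T_{H,2}(s_2))$.

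The core technical step is to establish that the computational CoSP execution $\interact{\Exec_{\Model,\implementation,\embedding(T_{H,i}(s_i))}(\secpar)}{R_{T_A}(\secpar)}$ and the over-approximated split-state execution of $T_{H,i}(s_i)$ with $T_L$ and $T_A$ induce the same distribution on (polynomially-bounded prefixes of) transcripts, in particular on the final adversarial output. This is precisely what Lemma~\ref{lem:embedding-equal-outtraces} gives us for the out-traces; I would extend that equality to the full interaction including the final $\tifinal$-judgement, using that $\embedding$ translates $\tiHL/\tiLH$ transitions into computation-node sequences whose evaluation (by the harmonizing crypto-API / $\implementation$, which are used as plain algorithms on fixed randomness) coincides with the library semantics, and translates $\tiHA/\tiLH$ transitions into output/input node pairs matching the attacker interface. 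Lemma~\ref{lemma:embedding-state-correspondence} supplies the state-correspondence invariant ensuring that the honest-program state reconstructed inside $\AlgoNameString$ at each output/computation node equals the state reached in the split-state execution, so that the branching behaviour (the yes/no edges of computation nodes, the control choices) agrees step for step. The polynomial slow-down introduced by each invocation of $\AlgoNameString$ is absorbed by passing from $p$ to a suitable $p'$, which is harmless for tic-indistinguishability since it only concerns polynomial prefixes.

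Putting the pieces together: fix the polynomial $p$ witnessing the split-state distinguishing advantage; obtain from the transcript-equality the matching polynomial $p'$ for the CoSP side; then for the bits $a \ne b$ we get
\[
\Pr[\interact{\Exec_{\Model,\implementation,\embedding(T_{H,1}(s_1))}(\secpar)}{R_{T_A}(\secpar,z)}\downarrow_{p'(\secpar)} a]
+ \Pr[\interact{\Exec_{\Model,\implementation,\embedding(T_{H,2}(s_2))}(\secpar)}{R_{T_A}(\secpar,z)}\downarrow_{p'(\secpar)} b]
\]
equals the corresponding sum for the split-state executions, which exceeds $1 + \mu(\secpar)$ for some non-negligible $\mu$ infinitely often, contradicting computational indistinguishability in CoSP. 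Hence $\embedding(T_{H,1}(s_1)) \CoSPIndistinguishable \embedding(T_{H,2}(s_2))$ implies $T_{H,1}(s_1) \execIndistinguishable T_{H,2}(s_1)$.

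I expect the main obstacle to be the bookkeeping around the embedding $\embedding$: one must argue carefully that references stored in registers (positions $\mathit{Pos}$ in the CoSP tree, replacing crypto-API/attacker outputs) are resolved by the CoSP execution to exactly the bitstrings the split-state semantics would have used, and that the corecursive construction of $\embedding$ never "gets stuck" (which relies on condition~$(\ref{item:domain-independence})$ of Definition~\ref{def:sso-h}, internal determinism from Lemma~\ref{lem:overapproximation-preserves-internally-det}, and the fact that the $\calN_c'$-tagged transitions in Figure~\ref{fig:adl-to-adlss-att} are never taken because every value handed to the attacker is a position). A secondary subtlety is that $\embedding$ is only complete with respect to the \emph{range} of $e$ in the computational model — i.e., $R_{T_A}$ might in principle behave on the embedded protocol in ways no split-state attacker could; but since $R_{T_A}$ is by construction a faithful simulator of $T_A$ against the embedded interface, every interaction it produces corresponds to a legitimate split-state run, so no spurious behaviour arises.
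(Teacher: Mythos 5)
Your proposal is correct and follows essentially the same route as the paper: a proof by contradiction that passes to the over-approximated split-state representation, builds the reduction $R_{T_A}$ as in Lemma~\ref{lem:embedding-equal-outtraces}, and transfers the distinguishing advantage via the out-trace equality (extended to the final $\tifinal$-judgement), absorbing the polynomial slow-down into the choice of $p'$. You even flag the same subtlety the paper itself notes, namely that the out-trace equality must be understood to cover the adversary's final output.
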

\begin{proof}
    We assume
    $\embedding(T_{H,1}(s_1)) \CoSPIndistinguishable 
    \embedding(T_{H,2}(s_2))$,
    \ie,
    for all machines $\advE$ and all polynomials $p$, 
    there is a negligible function $\mu$ such
    that for all $a,b\in\bit$ with $a\ne b$, all $z\in\bits*$ and
    $\secpar\in\setN$:
    \begin{align*}
        & \Pr[\interact{\Exec_{\Model,\implementation,e(T_{H,1}(s_1))}(\secpar)}{\advE(\secpar,z)}\downarrow_{p(\secpar)} a]\\
        & + \Pr[\interact{\Exec_{\Model,\implementation,e(T_{H,2}(s_2))}(\secpar)}{\advE(\secpar,z)}\downarrow_{p(\secpar)} b]
        \le 1 + \mu(\secpar).
    \end{align*}

    Proof by contradiction. Assume 
    $T_{H,1}(s_1) \notExecIndistinguishable
    T_{H,2}(s_2)$, \ie,
    for some family of adversaries $(A^\secpar)_{\secpar\in\setN}\in\advM$,
    $\ADLss_{T_{H,1}^\secpar,\adv^\secpar,s_1}$
    and
    $\ADLss_{T_{H,2}^\secpar,\adv^\secpar,s_2}$
    are
    distinguishable (see Definition~\ref{def:adlss-ind}). Thus,
    there are $a,b\in\bit$, $a\ne b$ and $\secpar\in\setN$ such that,
    there is a  polynomials $p$ such that for all negligible functions
    $\mu$,
    \[
        \Pr\left[ 
    \ADLss_{T_{H,1}^\secpar,\adv^\secpar,s_1} \downarrow_{p(\secpar)} a\right]
                 + \Pr\left[
    \ADLss_{T_{H,2}^\secpar,\adv^\secpar,s_2} 
                  \downarrow_{p(\secpar)} b\right]
            > 1 + \mu(\secpar).
        \]
        (See Definition~\ref{def:ss-indistinguishability}.)
    By Corollary~\ref{cor:adlo}, there is an attacker $A$,
    and a polynomial $p'$ s.t. 
    \[
        \Pr\left[ 
    \ADLo_{T_{H,1}^\secpar,\adv',s_1} \downarrow_{p(\secpar)} a\right]
                 + \Pr\left[
    \ADLo_{T_{H,2}^\secpar,\adv',s_2} 
                  \downarrow_{p(\secpar)} b\right]
            > 1 + \mu(\secpar).
        \]
    Thus, by Lemma~\ref{lem:embedding-equal-outtraces}, we
    obtain the contradiction to the inequality assumed to, 
    concluding
    the proof.
    \TODOR{This is not immediate to see, as the equality in  
     Lemma~\ref{lem:embedding-equal-outtraces} is on OutTraces, but
     I'd rather simplify the notation, so it is obvious that the final
 output is part of this traces.} \TODOE{Include the final guess into the out-traces. The proof of Lemma 10 assumes it.}
\end{proof}

In order to match the notation in the overview, we write $\embedding(\protos i)$ for $\embedding(T_{H,i}(s_i))$ in the following lemmas, for two ADL programs $\proto_1, \proto_2$ and two initial configurations $s_1, s_2$.

\begin{lemma}\label{lemma:embedding-soundness-computational}
    Let $\proto_1$ and $\proto_2$ two ADL program together with input configurations $s_1,s_2$
in the canonical symbolic model with respect to a CoSP-symbolic model $\Model$. Let $\proto_1$ and $\proto_2$ be compliant (see Definition~\ref{def:compliant}) with the same library specification w.r.t\ a symbolic model $\Model$ and an implementation $\implementation$.
    Then,
    \[ 
        \embedding(\protos 1) 
        \CoSPIndistinguishable \embedding(\protos 2) 
    \implies 
\protos 1 \execIndistinguishable \protos 2. \]
\end{lemma}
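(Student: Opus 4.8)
\textbf{Proof plan for Lemma~\ref{lemma:embedding-soundness-computational}.}

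The plan is to obtain this statement as an immediate corollary of the more general Lemma~\ref{lemma:embedding-soundness-computational-general}, which already establishes exactly the required implication
\[
    \embedding(T_{H,1}(s_1)) \CoSPIndistinguishable \embedding(T_{H,2}(s_2))
    \implies
    T_{H,1}(s_1) \execIndistinguishable T_{H,2}(s_2)
\]
for any two uniform families of ADL programs $\proto_1,\proto_2$ that are compliant (Definition~\ref{def:compliant}) with a common library specification $\libSpec$ with respect to a symbolic model $\M$ and an implementation $\implementation$ of it. The only gap to close is notational/definitional: Lemma~\ref{lemma:embedding-soundness-computational-general} is phrased in terms of honest-program semantics $T_{H,i}$ and split-state indistinguishability $\execIndistinguishable$ of transition systems, whereas the present lemma is phrased directly for ADL programs $\protos 1, \protos 2$ in the canonical symbolic model. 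So the first step is to unfold the definitions: by Definition~\ref{def:embeddingsv}, $\embedding(\protos i)$ is by convention $\embedding(T_{H,i}(s_i))$, where $T_{H,i}$ is the honest-program semantics of the ADL split-state representation of $\proto_i$ (Definition~\ref{def:adl-split-state-representation}, in its over-approximated form $\ADLo$ from Example~\ref{ex:adl-overapprox}); and by Definition~\ref{def:adlss-ind}, $\protos 1 \execIndistinguishable \protos 2$ is precisely $T_{H,1}(s_1) \SSIndistinguishable{T_L,T_A} T_{H,2}(s_2)$ for all adversaries, which is the conclusion of Lemma~\ref{lemma:embedding-soundness-computational-general}.

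Concretely I would argue as follows. First invoke the hypothesis that $\proto_1,\proto_2$ are compliant with the same $\libSpec$ w.r.t.\ $\M$ and $\implementation$; this is exactly the precondition of Lemma~\ref{lemma:embedding-soundness-computational-general}, so that lemma applies verbatim to the honest-program semantics $T_{H,1},T_{H,2}$ extracted from the ADL split-state representations of $\proto_1$ and $\proto_2$. Second, observe that the antecedent $\embedding(\protos 1) \CoSPIndistinguishable \embedding(\protos 2)$ is definitionally $\embedding(T_{H,1}(s_1)) \CoSPIndistinguishable \embedding(T_{H,2}(s_2))$, so the antecedent of Lemma~\ref{lemma:embedding-soundness-computational-general} is met. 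Third, conclude $T_{H,1}(s_1) \execIndistinguishable T_{H,2}(s_1)$ and rewrite this, via Definition~\ref{def:adlss-ind}, as $\protos 1 \execIndistinguishable \protos 2$. The harmonization requirement on the crypto-API semantics $T_L$ (Definition~\ref{def:harmonizing-crypto-api}) that Lemma~\ref{lemma:embedding-soundness-computational-general} needs is supplied precisely by the compliance assumption (Definition~\ref{def:compliant}), which bundles pre-compliance with the existence of a harmonizing $T_L$.

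The only point requiring genuine care — and the place where a careless write-up would go wrong — is making sure the instantiation is the \emph{canonical symbolic} one: the present lemma speaks of $\proto_1,\proto_2$ ``in the canonical symbolic model with respect to a CoSP-symbolic model $\M$'', so one must check that the canonical over-approximated split-state semantics (Section~\ref{sec:canonical-symbolic-model}, the $\ADLsss$ instance) is indeed a legitimate instance of the over-approximated ADL split-state semantics $\ADLo$ to which Lemma~\ref{lemma:embedding-soundness-computational-general} applies. This holds because the canonical symbolic split-state semantics is, by construction, the canonical over-approximation of the ADL split-state representation with $\calD_c$ taken to be $\T$ (so $(\T,\T_b)$ is a crypto-API space by the ADL-embeddability assumption on $\M$), with the attacker semantics given by the symbolic attacker and the crypto-API semantics given by the constructors and destructors of $\M$; this is exactly within the scope of Definitions~\ref{def:sso}--\ref{def:adl-split-state-representation} and Example~\ref{ex:adl-overapprox}. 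I expect no substantive obstacle beyond this bookkeeping: the lemma is a specialization, and the real work is already discharged in Lemma~\ref{lemma:embedding-soundness-computational-general} (which in turn rests on Corollary~\ref{cor:adlo} and Lemma~\ref{lem:embedding-equal-outtraces}). Hence the proof is a one-line reduction:
\[
    \text{apply Lemma~\ref{lemma:embedding-soundness-computational-general} to } T_{H,1},T_{H,2} \text{ obtained from } \proto_1,\proto_2.
\]
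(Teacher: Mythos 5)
Your proposal is correct and matches the paper's approach: the paper likewise obtains Lemma~\ref{lemma:embedding-soundness-computational} as an immediate specialization of Lemma~\ref{lemma:embedding-soundness-computational-general}, using the stated convention that $\embedding(\protos i)$ abbreviates $\embedding(T_{H,i}(s_i))$ and that compliance supplies the harmonizing crypto-API. The extra care you take to check that the canonical instantiation falls within the scope of the general lemma is sound bookkeeping that the paper leaves implicit.
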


Next, we can prove the statement for the symbolic side.

\begin{lemma}\label{lemma:embedding-soundness-symbolic-general}
    Let $T_{H,1}$ and $T_{H,2}$ be honest program semantics with respective input configuration $s_1,s_2$
in the canonical symbolic model with respect to a CoSP-symbolic model $\Model$.
Then,  
$$T_{H,1}(s_1) \execEquivalent T_{H,2}(s_2)
\implies 
\embedding(T_{H,1}(s_1)) \CoSPEquivalent \embedding(T_{H,2}(s_2))$$
\end{lemma}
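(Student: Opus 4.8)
The proof mirrors the structure of the computational direction (Lemma~\ref{lemma:embedding-soundness-computational-general}), but works with symbolic views instead of out-traces. Symbolic equivalence in CoSP (Definition~\ref{def:symbolic-indistinguishable}) unfolds to: for every attacker strategy $A$, $\traces_A(\embedding(T_{H,1}(s_1)))\sim\traces_A(\embedding(T_{H,2}(s_2)))$; and $T_{H,1}(s_1)\execEquivalent T_{H,2}(s_2)$ unfolds (Definition~\ref{def:symb-equivalent}, Definition~\ref{def:ss-symbolic-indistinguishable}) to: for every attacker strategy $I$, $\Views(T_1)\sim\Views(T_2)$ for the canonical symbolic split-state semantics $T_1,T_2$ of $T_{H,1}(s_1),T_{H,2}(s_2)$ w.r.t.\ $I$. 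So, fixing $\Model$ and one strategy that plays the role of both $A$ and $I$, it suffices to exhibit a bijection $\Views(\embedding(T_{H,i}(s_i)))\to\Views(T_i)$ that is structure-preserving in the sense of Definition~\ref{def:static-equivalence}, i.e.\ matched views have the same pattern of $\stringOutput/\stringInput/\stringControl$ entries, the same out-metadata, and the same symbolic knowledge $\testsof$. Since both executions are deterministic once the strategy is fixed, this reduces to relating the two unique runs.

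The core is a symbolic analogue of Lemma~\ref{lemma:embedding-state-correspondence} and Lemma~\ref{lem:embedding-equal-outtraces}. Partition the honest-program run into quasi atomic traces $q_0\to_H q_1\to_H\cdots$; by construction each $q_j$ is mapped by $\AlgoName$ to a CoSP subtree whose non-internal shape is a block of computation nodes produced by $\mathit{Comp}$/$\CompN$, followed either by a $\CompN$ node with yes/no successors (the $\tiHL$ case) or by an $\OutN$ node followed by an $\InN$ node (the $\tiHA$ case). By induction on $j$, exactly as in Lemma~\ref{lemma:embedding-state-correspondence}, one shows that the state sequence stored in the node identifiers is $q_j$ and that every reference kept in a register or heap cell points to the unique computation or input node whose term equals the value in $\calD_c=\T$ stored in that cell by the canonical symbolic split-state semantics. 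This is the step that glues CoSP positions in $\setN^*$ to terms in $\T$, and it requires checking that $\mathit{Comp}$/$\mathit{CoSPRep}$ commute with the embedding $\widehat{\cdot}=\iota$ and with reference resolution along a path, including the split between operands already representable as symbolic bitstrings in $\T_b$ (routed through the crypto-API by the over-approximation) and genuinely opaque terms in $\T\setminus\T_b$ (routed through the adversary).

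With the state correspondence in hand, the communication events of the two runs line up one-to-one under knowledge-preserving renamings. A $\tiHL(\mathit{mid},\underline u,h)/\tiLH$ pair corresponds to the computation-node block whose final $\CompN$ node applies the symbolic operation $\libSpec(\mathit{mid},\ldots)$ to the argument terms, the yes/no edge recording whether $\eval$ of that operation is $\bot$ --- which is precisely what the crypto-API component of $T_i$ computes. A $\tiHA(\mathit{mid},u_1,\ldots,u_l)$ leak corresponds to an $\OutN$ node referencing the $\mathit{Comp}$-built pairing of $\widehat{\mathit{mid}},\widehat{u_1},\ldots,\widehat{u_l}$, whereas the symbolic attacker component of $T_i$ appends the tuple $(\stringOutput,\widehat{\mathit{mid}},\widehat{u_1},\ldots,\widehat{u_l})$; since each component of the tuple is recoverable from the pairing via $\fst,\snd$ and conversely, the set of terms derivable by symbolic operations, hence $\testsof$, is unaffected. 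The subsequent $\tiAH$ response corresponds to the $\InN$ node labelled $(t,O)$ with $t=\eval_O(\termsof(V))$, which is exactly the term the attacker component computes from the same $O$ drawn from the strategy. Finally, a $(\tiHA,\tifinalCall)$ step corresponds to the $\OutN$ node emitted at a final state followed by the infinite tail of dummy control nodes, whose out-/in-metadata depend only on positions in the tree and hence coincide for the two protocols whenever the preceding views agree; since out-metadata entries arise only from these dummy control nodes, the out-metadata of matched views are equal as soon as the rest matches.

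Assembling these observations yields, for every strategy, a structure-preserving bijection between $\Views(\embedding(T_{H,i}(s_i)))$ and $\Views(T_i)$; consequently ``$\traces_A(\embedding(T_{H,1}(s_1)))\sim\traces_A(\embedding(T_{H,2}(s_2)))$ for all $A$'' holds iff ``$\Views(T_1)\sim\Views(T_2)$ for all $I$'' holds, i.e.\ $\embedding(T_{H,1}(s_1))\CoSPEquivalent\embedding(T_{H,2}(s_2))$ iff $T_{H,1}(s_1)\execEquivalent T_{H,2}(s_2)$, and the hypothesis supplies the right-hand side (we only need the forward implication). The main obstacle is the inductive bookkeeping of the state correspondence, in particular verifying that the reference rebuilding done by $\mathit{Comp}$ agrees with reference resolution in the symbolic CoSP execution across all cases of the canonical over-approximation (unary and binary operations, tests, and the $\T_b$ versus $\T\setminus\T_b$ dichotomy), together with the routine but fiddly matching of destructor-failure no-edges and of the dummy control-node tails that carry the out-metadata.
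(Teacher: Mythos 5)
Your proposal is correct and follows essentially the same route as the paper: the paper's own proof likewise observes that the embedding acts identically on the canonical symbolic semantics, invokes the state/node-identifier correspondence of Lemma~\ref{lem:embedding-equal-outtraces}, and replaces the fixed attacker machine by a fixed attacker strategy with the same induction over the (quasi atomic) execution. Your write-up is in fact considerably more detailed than the paper's sketch, spelling out the view-level bookkeeping (pairings versus tuples via $\fst/\snd$, yes/no edges, dummy-node out-metadata) that the paper leaves implicit.
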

\begin{proof}
First, observe that the embedding takes as input any honest program semantics in the over-approximated semantics. Hence, it does exactly the same for the honest program semantics in canonical symbolic model.
The same argumentation as in Lemma~\ref{lem:embedding-equal-outtraces} can be made to show that the states in the canonical symbolic semantics and in the identifiers of the nodes in the embedded program (characterized by the honest program semantics) coincide. The most significant difference is that the attacker is not defined as an interactive machine but rather as a set of deduction rules that are induced by the constructors and the destructors. Instead of fixing an attacker machine, as in the proof of Lemma~\ref{lem:embedding-equal-outtraces}, we fix an attacker strategy (see Definition~\ref{def:symbolic-execution}) and then conduct the same induction proof over the length of the deduction sequence (including the steps of the attacker's deductions). Trace equivalence of $T_{H,i}(s_i)$ (for $i\in\set{1,2}$) in the canonical symbolic model and of $\embedding(T_{H,i}(s_i))$ then implies
$$T_{H,1}(s_1) \execEquivalent T_{H,2}(s_2) \implies \embedding(T_{H,1}(s_1)) \CoSPEquivalent \embedding(T_{H,2}(s_2))$$
\end{proof}

\begin{lemma}\label{lemma:embedding-soundness-symbolic}
    Let $\proto_1$ and $\proto_2$ two ADL program together with input configurations $s_1,s_2$
in the canonical symbolic model with respect to a CoSP-symbolic model $\Model$. Let $\proto_1$ and $\proto_2$ be pre-compliant with the same library specification.
Then,  
$$\protos 1 \execEquivalent \protos 2
\implies 
\embedding(\protos 1) \CoSPEquivalent \embedding(\protos 2)$$
\end{lemma}
\begin{proof}
The statement immediately follows from Lemma~\ref{lemma:embedding-soundness-symbolic-general}.
\end{proof}
\else

\begin{definition}[ADL split-state equivalence $\execIndistinguishable$]\label{def:adlss-ind}
Let $\proto_1$ and $\proto_2$ be two families of ADL programs compliant with the
same library specification \libSpec and $s_1,s_2$ initial configurations for $\proto_1$ and $\proto_2$, respectively. We write
$ \protos 1 \execIndistinguishable \protos 2$,
if, for all adversaries $\adv\in\advM$,
for the over-approximated ADL split-state representations 
$(\ADLo_{\proto_1^\secpar,\adv^\secpar,s_1})_{\secpar\in\setN}$ of $\protos 1$ and \adv,
and
$(\ADLo_{\proto_2^\secpar,\adv^\secpar,s_2})_{\secpar\in\setN}$ of $\protos 2$ and \adv, we have
$T_{H,1}(s_1) \SSIndistinguishable{T_L,T_A} T_{H,2}(s_2)$.
\end{definition}

Next, we state the lemma that connects the ADL semantics to the over-approximated split-state form.
\begin{lemma}\label{lemma:computational-execution-adl}
Let $\proto_1$ and $\proto_2$ be two families of ADL programs pre-compliant with the
same library specification \libSpec and $s_1,s_2$ initial configuration for $\proto_1$ and $\proto_2$, respectively. Then
$\protos 1 \execIndistinguishable \protos 2
\implies \protos 1 \ADLIndistinguishable \protos 2$.
\end{lemma}

\subsection{Canonical symbolic semantics}
The canonical over-approximated split-state semantics directly defines
a canonical symbolic semantics. We replace the domain $\calD_c$ by the
set of terms that are defined by the CoSP-symbolic model $\Model$, we replace the attacker semantics by
the symbolic attacker from $\Model$, and we replace the crypto-API by
the constructors and destructors in $\Model$.


\begin{definition}[Symbolic equivalence $\execEquivalent$]
\label{def:ss-symbolic-indistinguishable}
Two ADL programs $\proto_1$ and $\proto_2$,
and initial configurations 
$s_1=\stateDVsN{_1}$ 
and
$s_2=\stateDVsN{_2}$ 
are 
\emph{symbolically split-state equivalent ($\protos 1 \execEquivalent \protos 2$)}
if for all attacker strategies $I$, their respective ADL symbolic
split-state semantics $T_1,T_2$ w.r.t. to $I$ are symbolically equivalent, i.e.,
 if  $\Views(T_1) \sim \Views(T_2)$.
\end{definition}

Next, we state the lemma that connects symbolic ADL to the symbolic split-state composition of ADL.

\begin{lemma}\label{lemma:symbolic-adl-symbolic-execution}
Let $\proto_1$ and $\proto_2$ be two ADL program pre-compliant to the same library specification \libSpec and $s_1,s_2$ initial configuration for $\proto_1$ and $\proto_2$, respectively. Then,
$$\protos 1 \ADLEquivalent \protos 2 \implies \protos 1 \execEquivalent \protos 2$$
\end{lemma}


%
%

\subsection{The CoSP-embedding}
We are finally in the position to construct the embedding $\embedding$ from the part of the program that is encoded in the honest-program semantics into CoSP. Recall that the symbolic variant of ADL is also solely defined on the honest-program semantics of a program. Recall that we map the honest-program semantics $T_H$ that belongs to an ADL program $\proto$ into CoSP. In the context of a pair of ADL programs $\proto_1, \proto_2$, we write $T_{H,i}$ for the honest program semantics for $\proto_i$.

The main lemmas of this section show that for every pair of ADL programs $\proto_1, \proto_2$ with initial configurations $s_1,s_2$ the equivalence of $\protos 1$ and $\protos 2$ in symbolic ADL implies the symbolic equivalence of $\embedding(T_{H,1}(s_1))$ and $\embedding(T_{H,2}(s_2))$ in CoSP and computational indistinguishability of $\embedding(T_{H,1})$ and $\embedding(T_{H,2})$ in CoSP implies computational indistinguishability of $\protos 1$ and $\protos 2$.

First we construct the embedding, and then we give the main arguments why the embedding preserves symbolic equivalence and computational indistinguishability.
\pparagraph{Constructing the embedding into CoSP}\label{section:construction-embedding}
\newcommand{\extractState}{\mathfun{exSt}}
\newcommand{\extractAllStates}{\mathfun{exAllSt}}
As CoSP trees are infinite, we define the embedding in a co-recursive manner, i.e., as the largest fixpoint of a co-recursive construction. Each step in this recursion is defined by a function $\AlgoName$ that takes as input a trace from a leaf-node in the so-far constructed CoSP tree to the root node and outputs a finite subtree. 
Within the embedding, we re-interpret the transition system of the honest
program introduced in Section~\ref{section:approx} by instantiating
the set $\calD_c$ to be the set of positions in a CoSP tree.
With this representation, registers and heap locations can store values input by the
adversary or the crypto-API by pointing to the position of
the respective input or computation nodes in addition to numerical
values, locations, and \voidDV{}.
Whenever this kind of special
register values are supposed to be given to the adversary or the crypto-API,
the position is resolved to a node identifier. 

The recursion step $\AlgoName$ in the construction of the embedding
internally runs the honest semantics of this modified version of the
over-approximated split-state semantics. It computes
the next state from deterministic
(internal) transitions until it arrives at a non-deterministic
(calling) transition of
the honest-program semantics, i.e., either a $\tiHL$- or
a $\tiHA$-transition. 
Each call to the library (i.e., an \tiHL-transition)
is transformed into
a computation node with two successors (labelled $\mathit{yes}$ and
$\mathit{no}$), referencing freshly constructed values or previous
nodes depending on the information in the label of the transition.
Each call to the adversary (i.e., an \tiHA-transition)
is transformed into
an output node, again referencing values or nodes based on the label
of the transition, and an input node for the $\tiAH$-transition
following suit.

\pparagraph{Soundness of the embedding}
The main idea in the proof is to consider the sub-sequences of the traces
that start and end with calls to, or responses from the attacker or
the
library. 
Each step between the sub-sequences corresponds
to an edge in the CoSP-tree. 
If we 
choose a
canonical over-approximation such that
the domain of the variable values is the set of positions in the
CoSP-tree, then we can show the following relation:
for the over-approximated split-state semantics,
there is a direct correspondence between the states 
at the end of such a sub-sequence 
and the nodes in the CoSP-tree,
which store the current state in the node-identifier.

In order to match the notation in the overview, we write $\embedding(\protos i)$ for $\embedding(T_{H,i}(s_i))$ in the following lemmas, for two ADL programs $\proto_1, \proto_2$ and two initial configurations $s_1, s_2$.

\begin{lemma}\label{lemma:embedding-soundness-symbolic}
    Let $\proto_1$ and $\proto_2$ two ADL program together with input configurations $s_1,s_2$
in the canonical symbolic model with respect to a CoSP-symbolic model $\Model$. Let $\proto_1$ and $\proto_2$ be pre-compliant with the same library specification.
Then,  
$$\protos 1 \execEquivalent \protos 2
\implies 
\embedding(\protos 1) \CoSPEquivalent \embedding(\protos 2)$$
\end{lemma}

For our the next lemma and our main theorem, we require that two ADL programs are \emph{compliant}, i.e., the programs are pre-compliant with the same library specification (see Section~\ref{section:split-state-adl}), and the library produces the same distribution as the implementation $\implementation$ from the computational soundness result with a library specification $\libSpec$.

\begin{lemma}\label{lemma:embedding-soundness-computational}
    Let $\proto_1$ and $\proto_2$ two ADL program together with input configurations $s_1,s_2$
in the canonical symbolic model with respect to a CoSP-symbolic model $\Model$. Let $\proto_1$ and $\proto_2$ be compliant with the same library specification w.r.t\ a symbolic model $\Model$ and an implementation $\implementation$.
    Then,
    \[ 
        \embedding(\protos 1) 
        \CoSPIndistinguishable \embedding(\protos 2) 
    \implies 
\protos 1 \execIndistinguishable \protos 2. \]
\end{lemma}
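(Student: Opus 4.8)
The plan is to read this claim as an instance of the corresponding statement for an arbitrary over-approximated, internally deterministic honest-program semantics (Lemma~\ref{lemma:embedding-soundness-computational-general}): the semantics $T_{H,i}$ that the embedding consumes for $\proto_i$ is just one particular such honest-program semantics, so it suffices to establish that general reduction, which I outline below. I would argue by contradiction: assume $\embedding(\protos 1)\CoSPIndistinguishable\embedding(\protos 2)$ but $\protos 1\notExecIndistinguishable\protos 2$. The latter yields a family of attackers $\adv=\set{\adv^\secpar}_{\secpar}\in\advM$, a polynomial $p$ and bits $a\neq b$ separating the over-approximated split-state executions $\ADLo_{\proto_1^\secpar,\adv^\secpar,s_1}$ and $\ADLo_{\proto_2^\secpar,\adv^\secpar,s_2}$ on their final output; the goal is to turn $\adv$ into an efficient machine separating $\Exec_{\Model,\implementation,\embedding(\protos 1)}(\secpar)$ from $\Exec_{\Model,\implementation,\embedding(\protos 2)}(\secpar)$ with the same advantage, contradicting the hypothesis.

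The reduction is a wrapper machine $R_\adv$ that internally runs the attacker transition system $\adv^\secpar$: it resolves probabilistic transitions by a weighted coin flip and internal non-determinism canonically, advancing until a $\tiAH$-transition, whose message it hands to the CoSP challenger as the answer to the current input node; messages it receives at output nodes of the challenger are fed back into $\adv^\secpar$ as $\tiHA$-inputs; and a final output of $\adv^\secpar$ is forwarded verbatim. By construction $R_\adv$ is polynomial-time whenever $\adv^\secpar$ is. Before comparing with the CoSP side I would first move from $\ADLss$ to $\ADLo$ via Corollary~\ref{cor:adlo}, at the cost of a polynomial blow-up in running time and a modified attacker. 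The compliance hypothesis is exactly what makes the two worlds agree on cryptographic values: by ``harmonizing crypto-API'' the output distribution of the crypto-API semantics on any library call or operation equals the one obtained by running the implementation $\implementation$ on the symbolic operation that $\libSpec$ assigns to it, so the computation nodes that $\embedding$ emits for $\tiHL$ calls evaluate, under $\implementation$, to the same values that the $\tiHL/\tiLH$ round-trips of $\ADLo$ deliver.

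The heart of the argument is then an out-trace equivalence lemma (Lemma~\ref{lem:embedding-equal-outtraces}): for sufficiently large $\secpar$ the distribution of messages sent at output nodes in $\interact{R_\adv(\secpar)}{\Exec_{\Model,\implementation,\embedding(\protos i)}(\secpar)}$, truncated to polynomial length and including the final bit, coincides with the distribution of $\tiHA$-out-traces (with the final output) of $\ADLo_{\proto_i^\secpar,\adv^\secpar,s_i}$. Given this, $R_\adv$ inherits $\adv$'s advantage and we are done. I would prove the lemma by fixing the random tapes of both the attacker and the library and inducting on the number $n$ of emitted out-labels, using the truncated system $\restrict{T_H}{n}$; the base case is immediate. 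The inductive step rests on the state--node correspondence built into the embedding (Lemma~\ref{lemma:embedding-state-correspondence}): the recursion step $\AlgoName$ internally replays exactly the internal, deterministic transitions of the over-approximated honest-program semantics between two consecutive calling transitions (a ``quasi-atomic trace''), stores the reached states in the node identifier, and emits one computation node per $\tiHL$ call and an output/input pair per $\tiHA$ call. Because the honest-program semantics is internally deterministic and, after the canonical over-approximation, every rule acting on values can be re-interpreted over the position domain $\calD_c$ without stalling (Property~($i$) of Definition~\ref{def:sso-h}), this replay neither branches nor gets stuck, so the CoSP challenger reaches the matching state and emits the same message as $\ADLo$ at the $n$-th output; routing that message through $R_\adv$ produces the same attacker response on both sides, which closes the induction.

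The main obstacle is precisely this out-trace equivalence lemma, and within it the faithfulness of the co-recursive unfolding: one must check that positions stored in registers and heap cells are resolved to the right node identifiers in every case of $\AlgoName$, that no rule of the over-approximated ADL semantics falls outside the two handled shapes of Definition~\ref{def:sso-h} (so the construction is total on ADL), and that the polynomial running time of each $\AlgoName$-invocation allows passing from the polynomial $p$ bounding $\adv$ to a polynomial $p'$ bounding $R_\adv$ --- which is all that tic-indistinguishability needs, since it only constrains polynomially-long prefixes of the interaction. The remaining steps --- specializing the general lemma to the $T_{H,i}$ arising from ADL programs and transferring the distinguishing probabilities --- are routine once the lemma is in hand.
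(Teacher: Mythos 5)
Your proposal follows essentially the same route as the paper: the statement is reduced to the general version for arbitrary compliant honest-program semantics, proved by contradiction via Corollary~\ref{cor:adlo} and the out-trace equivalence of Lemma~\ref{lem:embedding-equal-outtraces}, whose proof constructs exactly the wrapper machine $R_\attacker$ you describe and proceeds by induction on the number of emitted out-labels with fixed attacker and library randomness, relying on the state--node correspondence of Lemma~\ref{lemma:embedding-state-correspondence}. The argument is correct and matches the paper's decomposition, including the role of the harmonizing crypto-API condition and the polynomial-prefix bookkeeping needed for tic-indistinguishability.
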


\fi

Our computational soundness result is parametric in a given symbolic
model and given conditions $C_I$ to the implementation and the protocols
such that computational soundness in the sense of the CoSP framework
holds. Our result states that for any symbolic model with conditions
C in CoSP, equivalence in symbolic ADL (see
Section~\ref{sec:dalvik-symbolic-semantics}) implies
indistinguishability in ADL (see Section~\ref{sec:securityframework}).
Since all CoSP results in the literature characterize the protocol
class by a set of protocol conditions $C_P$, we use these protocol
conditions in our theorem as well. 

\iffullversion
For our main theorem, we require that two ADL programs are \emph{compliant} (see Definition~\ref{def:compliant}), i.e., use the library in the same way, and the library produces the same distribution as the implementation $\implementation$ from the computational soundness result with a library specification $\libSpec$ (see Definition~\ref{def:harmonizing-crypto-api}).
\fi

\begin{theorem}\label{theorem:symbolic-adl-computational-adl}
Let a symbolic model $\Model$, protocol conditions $C_P$ and implementation conditions $C_I$ that are computationally sound in the sense of CoSP (Definition~\ref{def:computational-soundness-equivalence}) be given. 
    Let $\proto_1$ and $\proto_2$ be two uniform families 
    of ADL programs compliant \iffullversion(see Definition~\ref{def:compliant}) \fi with initial configurations $s_1,s_2$ and the same
    library specification \libSpec w.r.t.\ to a symbolic model \Model
    and all implementations \implementation that satisfy $C_I$.
Then, the following implication holds 
$$\protos 1 \ADLEquivalent \protos 2 \implies \protos 1 \ADLIndistinguishable \protos 2.$$
\end{theorem}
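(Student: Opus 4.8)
The proof follows the diagram in Figure~\ref{fig:overview-figure}: we chain the four ``horizontal'' lemmas together, using the computational soundness result in CoSP as the one ``vertical'' step. Concretely, the plan is to compose
\[
\protos 1 \ADLEquivalent \protos 2
\;\overset{\text{L.\ref{lemma:symbolic-adl-symbolic-execution}}}{\implies}\;
\protos 1 \execEquivalent \protos 2
\;\overset{\text{L.\ref{lemma:embedding-soundness-symbolic}}}{\implies}\;
\embedding(\protos 1) \CoSPEquivalent \embedding(\protos 2),
\]
then apply CoSP computational soundness (Definition~\ref{def:computational-soundness-equivalence}) to the pair $\embedding(\protos 1), \embedding(\protos 2)$ to obtain $\embedding(\protos 1) \CoSPIndistinguishable \embedding(\protos 2)$, and finally descend again via
\[
\embedding(\protos 1) \CoSPIndistinguishable \embedding(\protos 2)
\;\overset{\text{L.\ref{lemma:embedding-soundness-computational}}}{\implies}\;
\protos 1 \execIndistinguishable \protos 2
\;\overset{\text{L.\ref{lemma:computational-execution-adl}}}{\implies}\;
\protos 1 \ADLIndistinguishable \protos 2.
\]
Each implication is exactly one of the lemmas already proved in this section, so the theorem is obtained by transitivity of $\implies$.

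The main thing to check carefully is that the side conditions of all five steps are simultaneously met by the hypotheses of the theorem. The theorem assumes that $\proto_1$ and $\proto_2$ are compliant with the same library specification $\libSpec$ w.r.t.\ $\Model$ and every implementation $\implementation$ satisfying $C_I$; compliance in particular entails pre-compliance, which is what Lemma~\ref{lemma:symbolic-adl-symbolic-execution} and Lemma~\ref{lemma:computational-execution-adl} need, and the full compliance (the library harmonizes with $\implementation$) is what Lemma~\ref{lemma:embedding-soundness-computational} needs. Lemma~\ref{lemma:embedding-soundness-symbolic} is stated for programs in the canonical symbolic model and only requires pre-compliance, which again follows. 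Finally, to invoke CoSP soundness we must know that $\embedding(\protos 1)$ and $\embedding(\protos 2)$ lie in the protocol class $\protoClass$ picked out by the protocol conditions $C_P$; this is exactly the efficiency and well-formedness of the embedding (``each CoSP protocol $\embedding(T_H(s))$ is efficient''), established when $\embedding$ is constructed, so we note that $\embedding$ maps compliant ADL programs into the relevant $\protoClass$.

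The step I expect to require the most attention is the CoSP application in the middle: one must be precise that the implementation $\implementation$ used in the computational CoSP execution of $\embedding(\protos i)$ is the \emph{same} object for both protocols, that it satisfies $C_I$ (so that the CoSP soundness theorem applies verbatim), and that $\embedding$ is injective and range-preserving in the way Lemma~\ref{lemma:embedding-soundness-computational} demands (completeness w.r.t.\ the range of $\embedding$). Given the uniformity over all $C_I$-implementations in the hypothesis, fixing one such $\implementation$ and chaining through is routine, but it is where a careless statement would introduce a gap. Everything else is bookkeeping: matching the notation $\protos i = T_{H,i}(s_i)$, carrying the security-parameter index through, and invoking transitivity.

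\medskip
\noindent\emph{Proof.}
Fix an implementation $\implementation$ satisfying $C_I$; by hypothesis $\proto_1$ and $\proto_2$ are compliant with $\libSpec$ w.r.t.\ $\Model$ and $\implementation$, hence in particular pre-compliant with $\libSpec$. Assume $\protos 1 \ADLEquivalent \protos 2$.

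By Lemma~\ref{lemma:symbolic-adl-symbolic-execution}, pre-compliance of $\proto_1,\proto_2$ with $\libSpec$ gives $\protos 1 \execEquivalent \protos 2$. By Lemma~\ref{lemma:embedding-soundness-symbolic}, working in the canonical symbolic model for $\Model$ and using pre-compliance, this yields $\embedding(\protos 1) \CoSPEquivalent \embedding(\protos 2)$.

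By construction, $\embedding$ maps the (pre-)compliant ADL programs $\proto_1,\proto_2$ to efficient CoSP protocols in the protocol class $\protoClass$ determined by $C_P$ (the embedded protocols are efficient in the sense of Definition~\ref{def:cosp:efficient-protocol}). Since $(\Model, C_P, C_I)$ is computationally sound in the sense of Definition~\ref{def:computational-soundness-equivalence} and $\implementation$ satisfies $C_I$, symbolic equivalence $\embedding(\protos 1) \CoSPEquivalent \embedding(\protos 2)$ implies computational indistinguishability $\embedding(\protos 1) \CoSPIndistinguishable \embedding(\protos 2)$ w.r.t.\ $\Model$ and $\implementation$.

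By Lemma~\ref{lemma:embedding-soundness-computational}, using that $\proto_1,\proto_2$ are compliant with $\libSpec$ w.r.t.\ $\Model$ and $\implementation$, we obtain $\protos 1 \execIndistinguishable \protos 2$. Finally, by Lemma~\ref{lemma:computational-execution-adl}, pre-compliance of $\proto_1,\proto_2$ with $\libSpec$ gives $\protos 1 \ADLIndistinguishable \protos 2$.

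Chaining these implications yields $\protos 1 \ADLEquivalent \protos 2 \implies \protos 1 \ADLIndistinguishable \protos 2$, as claimed. \qed
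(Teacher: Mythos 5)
Your proposal is correct and follows exactly the paper's own argument: the paper proves the theorem by transitivity of implication across Lemmas~\ref{lemma:symbolic-adl-symbolic-execution}, \ref{lemma:embedding-soundness-symbolic}, \ref{lemma:embedding-soundness-computational}, and~\ref{lemma:computational-execution-adl}, together with the assumed CoSP computational soundness of the implementation. Your additional bookkeeping about which lemma needs pre-compliance versus full compliance is a faithful elaboration of the same route, not a different one.
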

\iffullversion
\begin{proof}
    Follows by transitivity of $\Rightarrow$ from
    Lemmas~\ref{lemma:symbolic-adl-symbolic-execution},
    \ref{lemma:embedding-soundness-symbolic},
    \ref{lemma:embedding-soundness-computational}, 
    and~\ref{lemma:computational-execution-adl},
    as well as the assumption, that the \implementation is a computationally
    sound implementation of the symbolic model \M.
\end{proof}
\fi

\TODOE{Check that we never tried to use a transitivity of tic-indistinguishability, since it does not hold. However, tic-indistinguishability is implies by statistical indistinguishability (i.e., indistinguishability against an unbounded attacker) and of course by equality of distributions.}



\section{Related work}\label{section:related-work}

\pparagraph{Operational semantics for Dalvik Bytecode}
We have opted to ground our work on the Abstract Dalvik Language (ADL)~\cite{LoMaStBaScWe_14:cassandra}.
ADL currently excels over alternative semantics such as the ones proposed by Wognsen et al.~\cite{WoKaOlHa_14:formalisation},
Xia et al.~\cite{XiGoLyQiLi_15:android-auditing},
TaintDroid~\cite{EnGiChCoJuDaSh_14:taintdroid}, and
Chaudhuri~\cite{Ch_09:language-based-android} because of its comprehensive treatment of the Dalvik language, even though ADL
currently only provides sequential executions and does not support exceptions.
TaintDroid is more general in this respect in that it contains an
ad-hoc modelling of concurrent execution. However, in the ADL extension put forward
with the adversary that includes probabilistic choices and adversary interactions, concurrency in ADL 
can be modelled via program transformations, as we discuss in 
Section~\ref{sec:soundness}.



\pparagraph{Information-flow control with cryptographic primitives}
The standard notion of security in information flow control -- non-interference --
is  too strong when cryptographic operations
are being  considered, as, \eg, an encryption of a secret key and
a secret message is intuitively safe to be stored in a public
variable, but it nontheless results in different values depending on the key and the
message.
While the
declassification of values (see~\cite{DBLP:journals/jcs/SabelfeldS09}
for an introduction and overview of results) can be
used to relax this notion, it is difficult to decide under what circumstances
cryptographic values can be safely declassified. 

Our approach is similar to Askarov~\etAl's,  permitting so-called 
\emph{cryptographically
masked flows} by considering a relaxed equivalence notion on public
values, masking acceptable information flow when ciphertexts are made
public~\cite{DBLP:journals/tcs/AskarovHS08}.
\iffullversion%
Instead of actually comparing low-values, e.g.,
encryptions of secret messages, an equivalence relation relates
them, e.g., if the encrypted messages have equal length, both
encryptions are considered low-equivalent. A security-type system is
introduced, which is sound with respect to possibilistic
non-interference and said notion of low-equivalence.
\fi%

Laud has shown  the computational soundness of this
approach~\cite{DBLP:conf/popl/Laud08} if the employed encryption scheme satisfies
key-dependent message security, provides plaintext integrity, and \iffullversion the program is
\emph{well-structured}, \ie,\fi keys are only used in the correct key position, etc.  As cryptographically masked flows are captured in a possibilistic
setting, leaks through probabilistic behavior are not captured; hence the
program must not be able to branch on probabilistic values. 
\iffullversion%
Consider the following program:
\[
    x := \mathsf{rnd} (0 , 1);~\mathsf{if}~x~\mathsf{then}~l := h~\mathsf{else}~l := \mathsf{rnd} (1 , 100)
\]
While the program is non-interferent in the possibilistic setting, as
any final value $l$ has could be random choice in the second branch,
it is not non-interferent (insecure) in the probabilistic setting as
it is very likely that $l$ indeed contains the value of $h$.

The consequence for this work is that, since we consider
non-interference in the possibilistic setting, we
cannot allow any computation on cryptographic values, as these will
allow to implement (something close to) $\mathsf{rnd}$ using
probabilistic operation like encryption. 
\fi%
In our work, we treat computations performed on cryptographic values 
conservatively by relaying them to the adversary. 
Furthermore, previous work introduced security type-systems, such as~\cite{DBLP:conf/fct/LaudV05}, and program
analyses, such as~\cite{DBLP:conf/esop/Laud03}, that directly operate on the
computational semantics and are thus capable of verifying non-interference in
a probabilistic setting. This approach avoids the aforementioned problem at the cost of less
modularity and less potential for automation than our approach.

\TODOR{More through analysis in terms of precision. 
    We can: branch on secrets (but consider both branches possible),
    apply operations to decryptions, 
apply operations to secrets (but leak the secret in this case)}
\pparagraph{Computational Soundness}
On computational soundness, there is a rich body of literature for various cryptographic primitives~\cite{BaPW4_03,CoWa_05,GaGaRo_08:commitments,BaBeMaMoPe_15:malleable-zk}, with malicious keys \cite{CoCoSc_12,BaMaUn_12:without-restrictions-cosp}, and even composable computational results~\cite{CoWa_11:composable,BoCoWa_13:for-free}.
Even though these works covered the applied $\pi$-calculus~\cite{CoCo_08:observational-equivalence,CoCoSc_12,BaHoUn_09:cosp}, the stateful applied $\pi$ calculus~\cite{ShQiFe_16:stateful-pi} and RCF~\cite{BaMaUn_10:rcf} and even a embedding of a fragment of C~\cite{AiGoJu_12:embedding-c}, none of these works provide a computational
soundness result is known for Dalvik bytecode and hence for Android app analysis.

Another line of work does not provide a complete symbolic characterization of the attacker but concentrates on single rules that hold for certain symbolic terms~\cite{BaCo_12:complete-attacker,BaCo_14:complete-equivalence}. This restriction enables a much more flexible and composable preservation notion but pays with omitting any guarantee that the set of rules characterizes all attacker-actions. Hence, it is not clear how well-suited this approach is to automation.

\pparagraph{Interactive proof assistants for cryptography}
Cryptographic proof assistants enable the mechanized verification of cryptographic proofs, without first abstracting the cryptographic operations~\cite{BaGrHeBe_11:easycrypt,Pe_15:fcf,Lo_16:isabelle-crypto}. Consequently, these tools only offer limited automation. Yet complementarily, these tools could be used to verify that a library satisfies the conditions that a computational soundness result requires.

\section{Conclusion and future work}%
\label{sec:conclusion}
We have shown how cryptographic operations can be faithfully included into existing approaches for automated app analysis.
Our results overcome the overly pessimistic results that arise when current automated approaches deal with apps that
contain cryptographic operations, as these results do not account for secrecy properties offered by cryptographic operations such
as encryption. We have shown how cryptographic operations can be expressed as symbolic abstractions within Dalvik bytecode, so
that these abstractions can be conveniently added to existing app analysis tools using minor changes in their semantics. 
Moreover, we have established the first computational soundness result for the Abstract Dalvik Language (ADL)~\cite{LoMaStBaScWe_14:cassandra}, which
currently  
constitutes the most detailed and comprehensive operational
semantics for Dalvik in the literature.

A result that we only scratched on in this work is that  any small-step semantics expressed in our novel split-state form entails a canonical small-step semantics for a symbolic model that is computationally sound. This hence provides a recipe for establishing computationally sound symbolic abstractions for \emph{any} given programming language, provided that one can show that the interaction with the attacker and the cryptographic operations can be expressed by means of our concept of split-state semantics. We plan to further investigate this claim
and its applicability to modern programming languages in future work.


\section{Acknowledgements}

This work has been partially funded by the German Research Foundation
(DFG) via the collaborative research center ``Methods and Tools for
Understanding and Controlling Privacy'' (SFB 1223), project B3.
This work has been partially supported by the Zurich Information Security Center (ZISC).



\iffullversion
\else
\clearpage
\fi
\bibliographystyle{IEEEtran}

\bibliography{local}

\begin{thebibliography}{10}
\providecommand{\url}[1]{#1}
\csname url@samestyle\endcsname
\providecommand{\newblock}{\relax}
\providecommand{\bibinfo}[2]{#2}
\providecommand{\BIBentrySTDinterwordspacing}{\spaceskip=0pt\relax}
\providecommand{\BIBentryALTinterwordstretchfactor}{4}
\providecommand{\BIBentryALTinterwordspacing}{\spaceskip=\fontdimen2\font plus
\BIBentryALTinterwordstretchfactor\fontdimen3\font minus
  \fontdimen4\font\relax}
\providecommand{\BIBforeignlanguage}[2]{{%
\expandafter\ifx\csname l@#1\endcsname\relax
\typeout{** WARNING: IEEEtran.bst: No hyphenation pattern has been}%
\typeout{** loaded for the language `#1'. Using the pattern for}%
\typeout{** the default language instead.}%
\else
\language=\csname l@#1\endcsname
\fi
#2}}
\providecommand{\BIBdecl}{\relax}
\BIBdecl

\bibitem{int-sec-threat-rep:symantec15}
\BIBentryALTinterwordspacing
{Symantec}, ``Internet security threat report, volume 20,'' Accessed: Oct 15,
  2015. [Online]. Available:
  \url{http://www.symantec.com/security_response/publications/threatreport.jsp}
\BIBentrySTDinterwordspacing

\bibitem{mobile-malware:gdata15}
\BIBentryALTinterwordspacing
{GData}, ``Mobile malware report: {Q2/2015},'' Accessed: Oct 15, 2015.
  [Online]. Available:
  \url{https://public.gdatasoftware.com/Presse/Publikationen/Malware_Reports/G_DATA_MobileMWR_Q2_2015_EN.pdf}
\BIBentrySTDinterwordspacing

\bibitem{EnGiChCoJuDaSh_14:taintdroid}
W.~Enck, P.~Gilbert, B.-G. Chun, L.~P. Cox, J.~Jung, P.~McDaniel, and A.~N.
  Sheth, ``Taintdroid: an information flow tracking system for real-time
  privacy monitoring on smartphones,'' \emph{Communications of the ACM},
  vol.~57, no.~3, pp. 99--106, 2014.

\bibitem{LoMaStBaScWe_14:cassandra}
S.~Lortz, H.~Mantel, A.~Starostin, T.~B{\"{a}}hr, D.~Schneider, and A.~Weber,
  ``Cassandra: towards a certifying app store for android,'' in \emph{Proc.\
  4th ACM SPSM}, 2014, pp. 93--104.

\bibitem{BaBuDeGeHa_15:r-droid}
M.~Backes, S.~Bugiel, E.~Derr, S.~Gerling, and C.~Hammer, ``R-droid: In-depth
  application vetting for android with path-sensitive value analysis,'' in
  \emph{Proc.\ ACM AsiaCCS}, 2016.

\bibitem{BaHoUn_09:cosp}
M.~Backes, D.~Hofheinz, and D.~Unruh, ``{CoSP: A General Framework for
  Computational Soundness Proofs},'' in \emph{Proc.\ 16th ACM CCS}, 2009, pp.
  66--78.

\bibitem{Me_10:ind-cpa-cosp}
S.~Meiser, ``Computational soundness of passively secure encryption in presence
  of active adversaries,'' Master's thesis at Saarland University, 2010.

\bibitem{BaMaUn_12:without-restrictions-cosp}
M.~Backes, A.~Malik, and D.~Unruh, ``{Computational Soundness without Protocol
  Restrictions},'' in \emph{Proc.\ 19th ACM CCS}, 2012, pp. 699--711.

\bibitem{BaBeUn_13:cosp-zk}
M.~Backes, F.~Bendun, and D.~Unruh, ``{Computational Soundness of Symbolic
  Zero-knowledge Proofs: Weaker Assumptions and Mechanized Verification},'' in
  \emph{Proc.\ 2nd POST}, 2013, pp. 206--225.

\bibitem{BaMoRu_14:equivalence}
M.~Backes, E.~Mohammadi, and T.~Ruffing, ``{Computational Soundness Results for
  ProVerif},'' in \emph{Proc.\ 3rd POST}, 2014, pp. 42--62.

\bibitem{BaBeMaMoPe_15:malleable-zk}
M.~Backes, F.~Bendun, M.~Maffei, E.~Mohammadi, and K.~Pecina, ``{A
  Computationally Sound, Symbolic Abstraction for Malleable Zero-knowledge
  Proofs},'' in \emph{Proc.\ 28th IEEE CSF}, 2015, pp. 412--480.

\bibitem{Un11:tic-ind}
D.~Unruh, ``{Termination-Insensitive Computational Indistinguishability (and
  Applications to Computational Soundness)},'' in \emph{Proc.\ 24th IEEE CSF},
  2011, pp. 251--265.

\bibitem{Vardi:1985:AVP:1382438.1382865}
M.~Y. Vardi, ``Automatic verification of probabilistic concurrent finite state
  programs,'' in \emph{Proc.\ 26th SFCS}, 1985, pp. 327--338.

\bibitem{BaMaUn_10:rcf}
M.~Backes, M.~Maffei, and D.~Unruh, ``{Computationally Sound Verification of
  Source Code},'' in \emph{Proc.\ 17th ACM CCS}, 2010, pp. 387--398.

\bibitem{AiGoJu_11:extracting-c}
M.~Aizatulin, A.~D. Gordon, and J.~J{\"u}rjens, ``{Extracting and verifying
  cryptographic models from C protocol code by symbolic execution},'' in
  \emph{Proc.\ 18th ACM CCS}, 2011, pp. 331--340.

\bibitem{ourfullversion}
M.~Backes, R.~Künnemann, and E.~Mohammadi, ``Technical report: Computational
  soundness for dalvik bytecode,'' arXiv:1608.04362, 2016.

\bibitem{milner1989communication}
R.~Milner, \emph{Communication and concurrency}, 1989, vol.~84.

\bibitem{hoare1985communicating}
C.~A.~R. Hoare \emph{et~al.}, \emph{Communicating sequential processes}, 1985,
  vol. 178.

\bibitem{WoKaOlHa_14:formalisation}
E.~R. Wognsen, H.~S. Karlsen, M.~C. Olesen, and R.~R. Hansen, ``Formalisation
  and analysis of dalvik bytecode,'' \emph{Science of Computer Programming},
  vol.~92, pp. 25--55, 2014.

\bibitem{XiGoLyQiLi_15:android-auditing}
M.~Xia, L.~Gong, Y.~Lyu, Z.~Qi, and X.~Liu, ``Effective real-time android
  application auditing,'' in \emph{Proc.\ 36th IEEE S\&}, 2015, pp. 899--914.

\bibitem{Ch_09:language-based-android}
A.~Chaudhuri, ``Language-based security on android,'' in \emph{Proc.\ 4th ACM
  PLAS}, 2009, pp. 1--7.

\bibitem{DBLP:journals/jcs/SabelfeldS09}
A.~Sabelfeld and D.~Sands, ``Declassification: Dimensions and principles,''
  \emph{J. Computer Security}, vol.~17, no.~5, 2009.

\bibitem{DBLP:journals/tcs/AskarovHS08}
A.~Askarov, D.~Hedin, and A.~Sabelfeld, ``Cryptographically-masked flows,''
  \emph{TCS}, vol. 402, no. 2-3, pp. 82--101, 2008.

\bibitem{DBLP:conf/popl/Laud08}
P.~Laud, ``On the computational soundness of cryptographically masked flows,''
  in \emph{Proc. of 35th POPL}, 2008, pp. 337--348.

\bibitem{DBLP:conf/fct/LaudV05}
P.~Laud and V.~Vene, ``A type system for computationally secure information
  flow,'' in \emph{Proc. of 15th FCT}, 2005, pp. 365--377.

\bibitem{DBLP:conf/esop/Laud03}
P.~Laud, ``Handling encryption in an analysis for secure information flow,'' in
  \emph{Proc. of 12th ESOP}, 2003, pp. 159--173.

\bibitem{BaPW4_03}
M.~Backes, B.~Pfitzmann, and M.~Waidner, ``{A Composable Cryptographic Library
  with Nested Operations},'' in \emph{Proc.\ 10th ACM CCS}, 2003, pp. 220--230.

\bibitem{CoWa_05}
V.~Cortier and B.~Warinschi, ``{Computationally Sound, Automated Proofs for
  Security Protocols},'' in \emph{Proc.\ 14th ESOP}, 2005, pp. 157--171.

\bibitem{GaGaRo_08:commitments}
D.~Galindo, F.~D. Garcia, and P.~van Rossum, ``{Computational Soundness of
  Non-Malleable Commitments},'' in \emph{Proc.\ 4th ISPEC}, 2008, pp. 361--376.

\bibitem{CoCoSc_12}
H.~Comon-Lundh, V.~Cortier, and G.~Scerri, ``{Security Proof with Dishonest
  Keys},'' in \emph{Proc.\ 1nd POST}.\hskip 1em plus 0.5em minus 0.4em\relax
  Springer, 2012, pp. 149--168.

\bibitem{CoWa_11:composable}
V.~Cortier and B.~Warinschi, ``A composable computational soundness notion,''
  in \emph{Proc.\ 18th ACM CCS}, 2011, pp. 63--74.

\bibitem{BoCoWa_13:for-free}
F.~B\"ohl, V.~Cortier, and B.~Warinschi, ``{Deduction Soundness: Prove One, Get
  Five for Free},'' in \emph{Proc.\ 20th ACM CCS}, 2013, pp. 1261--1272.

\bibitem{CoCo_08:observational-equivalence}
H.~Comon-Lundh and V.~Cortier, ``{Computational Soundness of Observational
  Equivalence},'' in \emph{Proc.\ 15th ACM CCS}.\hskip 1em plus 0.5em minus
  0.4em\relax ACM Press, 2008, pp. 109--118.

\bibitem{ShQiFe_16:stateful-pi}
J.~Shao, Y.~Qin, and D.~Feng, ``{Computational Soundness Results for Stateful
  Applied Pi Calculus},'' in \emph{Proc.\ 5rd POS}, 2016, pp. 254--275.

\bibitem{AiGoJu_12:embedding-c}
M.~Aizatulin, A.~D. Gordon, and J.~J\"{u}rjens, ``{Computational Verification
  of C Protocol Implementations by Symbolic Execution},'' in \emph{Proc.\ 19th
  ACM CCS}, 2012, pp. 712--723.

\bibitem{BaCo_12:complete-attacker}
G.~Bana and H.~Comon-Lundh, ``{Towards Unconditional Soundness: Computationally
  Complete Symbolic Attacker},'' in \emph{Proc.\ 1nd POST}, 2012, pp. 189--208.

\bibitem{BaCo_14:complete-equivalence}
------, ``{A Computationally Complete Symbolic Attacker for Equivalence
  Properties},'' in \emph{Proc.\ 21th ACM CCS}, 2014, pp. 609--620.

\bibitem{BaGrHeBe_11:easycrypt}
G.~Barthe, B.~Grégoire, S.~Heraud, and S.-Z. Béguelin, ``{Computer-Aided
  Security Proofs for the Working Cryptographer},'' in \emph{{Proc.\ CRYPTO}},
  2011, pp. 71--90.

\bibitem{Pe_15:fcf}
A.~Petcher and G.~Morrisett, ``{The Foundational Cryptography Framework},'' in
  \emph{{Proc.\ 4th POST}}, 2015, pp. 53--72.

\bibitem{Lo_16:isabelle-crypto}
A.~Lochbihler, ``Probabilistic functions and cryptographic oracles in higher
  order logic,'' in \emph{Proc.\ 25th ESOP}, 2016, pp. 503--531.

\end{thebibliography}
\iffullversion
{}
\else
\end{document}
\fi
\clearpage
\appendix
\addcontentsline{toc}{section}{Appendix}
\begin{center}
\LARGE\textbf{Appendix}
\end{center}


\section{Inference rules for the DEX Bytecode Semantics}
\label{app:adl}

\begin{framedFigureSingle}{Semantics of control flow instructions}{fig:control-flow-instructions}
\begin{inferenceRules}
\inference[rNop]{m[pp] = \progcmd{nop}}{\stateDV \redRelSt \stateDV[h,pp+1,r]}\\
\\%
\inference[rGoto]{m[pp] = \progcmd{goto} n}{\stateDV \redRelSt \stateDV[h,pp+n,r]}\\
\end{inferenceRules}
\end{framedFigureSingle}

\begin{framedFigure}{Semantics of object-related instructions}{fig:object-related-instructions}
\begin{inferenceRules}
\inference[rInstanceOfTrue]%
{m[pp] = \progcmd{instance\text{-}of} v_a,v_b,cl & r(v_b) \in \dom(h) \\%
assignmentCompatible(h(r(v_b))).\class, cl}%
{\stateDV \redRelSt \stateDV[h,pp+1,r{[v_a \mapsto 1]}]}\\
\\%
\inference[rInstanceOfFalse]%
{m[pp] = \progcmd{instance\text{-}of} v_a,v_b,cl & r(v_b) \in \dom(h) \\%
\lnot(assignmentCompatible(h(r(v_b)).\class,cl))}%
{\stateDV \mapsto \stateDV[h,pp+1,r{[v_a \mapsto 0]}]}\\
\\%
\inference[rNewInstance]%
{m[pp] = \progcmd{new\text{-}instance} v_a,cl & h\in \dom(nextFreeLocation) \\%
l = nextFreeLocation(h)}%
{\stateDV \redRelSt \stateDV[{h[l\mapsto defaultObject(cl)], pp+1, r[v_a \mapsto l]}]}\\
\\%
\inference[rConstString]%
{m[pp] = \progcmd{const\text{-}string} v_a,s & s\in \dom(nameToReference)}%
{\stateDV \redRelSt \stateDV[{h,pp+1,r[v_a \mapsto nameToReference(s)]}]}\\
\\%
\inference[rConstClass]%
{m[pp] = \progcmd{const\text{-}class} v_a,cl & cl\in \dom(nameToReference)}%
{\stateDV \redRelSt \stateDV[{h,pp+1,r[v_a \mapsto nameToReference(cl)]}]}\\
\\%
\inference[rIget]{m[pp] = \progcmd{iget} v_a,v_b,fid & fid \in \dom(lookup\text{-}field_\proto) \\%
r(v_b) \in \dom(h) & o = h(r(v_b)) \\%
f = lookup\text{-}field_\proto(fid) & f \in \dom(o.fields)}%
{\stateDV \redRelSt \stateDV[{h,pp+1, r[v_a\mapsto o.f]}]}\\
\\%
\inference[rIput]%
{m[pp] = \progcmd{iput} v_a,v_b,fid & fid \in \dom(lookup\text{-}field_\proto) \\%
r(v_b) \in \dom(h) & o = h(r(v_b))\\%
f=lookup\text{-}field_\proto(fid) & f\in \dom(o.fields)}%
{\stateDV \redRelSt \stateDV[{h[r(v_b) \mapsto o[f\mapsto r(v_a)]],pp+1,r}]}\\
\\%
\inference[rSget]%
{m[pp] = \progcmd{sget} v_a,fid & fid\in \dom(nameToReference) \\%
l=nameToReference(fid) & fid\in \dom(lookup\text{-}field_\proto)\\%
f = lookup\text{-}field_\proto(fid) %
& f\in \dom(h(l).fields) %
& u = h(l).f}%
{\stateDV \redRelSt \stateDV[{h,pp+1,r[v_a \mapsto u]}]}\\
\\%
\inference[rSput]%
{m[pp] = \progcmd{sput} v_a,fid & fid \in \dom(nameToReference) \\%
l = nameToRefrence(fid) & fid \in \dom(lookup\text{-}field_\proto)\\%
f=lookup\text{-}field_\proto(fid) & o=h(l) & f\in \dom(o.fields)}%
{\stateDV \redRelSt \stateDV[{h[l \mapsto o[f\mapsto r(v_a)]],pp+1,r}]}
\end{inferenceRules}
\end{framedFigure}

\begin{framedFigure}{Semantics of array-related instructions}{fig:array-related-instructions}
\begin{inferenceRules}
\inference[rArrayLength]%
{m[pp] = \progcmd{array\text{-}length} v_a,v_b & r(v_b) \in \dom(h)}%
{\stateDV \redRelSt \stateDV[h,pp+1,r{[v_a \mapsto h(r(v_b)).length]}]}\\
\\%
\inference[rNewArray]%
{m[pp] = \progcmd{new\text{-}array} v_a,v_b & h \in \dom(nextFreeLocation)\\%
l = nextFreeLocation(h) & 0\le r(v_b)}%
{\stateDV \redRelSt \stateDV[{h[l \mapsto defaultArray(r(v_b))],pp+1,r[v_a \mapsto l]}]}\\
\\%
\inference[rFilledNewArrayR]%
{m[pp] = \progcmd{filled\text{-}new\text{-}array\text{-}range} v_k,n & h \in \dom(nextFreeLocation)\\%
l = nextFreeLocation(h) & x = defaultArray(n)\\%
ar = x[0\mapsto r(v_k), \dots, n-1 \mapsto r(v_{k+n-1})]}%
{\stateDV \redRelSt \stateDV[{h[l \mapsto ar],pp+1,r[\resultLowerDV \mapsto l]}]}\\
\\%
\inference[rAget]%
{m[pp] = \progcmd{aget} v_a,v_b,v_c & r(v_b) \in \dom(h) & ar = h(r(v_b))\\%
u = ar[r(v_c)] & 0 \le r(v_c) < ar.length}%
{\stateDV \redRelSt \stateDV[{h,pp+1,r[v_a \mapsto u]}]}\\
\\%
\inference[rAput]%
{m[pp] = \progcmd{aput} v_a,v_b,v_c & r(v_b) \in \dom(h) & ar = h(r(v_b))\\%
x = ar[r(v_c) \mapsto r(v_a)] & 0 \le r(v_c) < ar.length}%
{\stateDV \redRelSt \stateDV[{h[r(v_b) \mapsto x],pp+1,r}]}%
\end{inferenceRules}
\end{framedFigure}

\begin{framedFigure}{Semantics of method-related instructions}{fig:method-related-instructions}
\begin{inferenceRules}
\inference[rIVR]%
{m[pp] = \progcmd{invoke\text{-}virtual\text{-}range} v_k,n,mid & r(v_k) \in \dom(h)\\%
(mid,h(r(v_k)).\class) \in \dom(\lookupVirtualDV_\proto) & m' = \lookupVirtualDV_\proto(mid,h(r(v_k)).\class)\\%
\stateDV[{h,0,defaultRegisters([r(v_k), \dots, r(v_{k+n-1})])}] \Downarrow^{(n')}_{P,m'} \stateDV[{u,h'}]}%
{\stateDV \redRel{n'+1}{P}{m} \stateDV[{h',pp+1,r[\resultLowerDV \mapsto lower(u), \resultUpperDV \mapsto upper(u)]}]}\\
\\%
\inference[rIStR]%
{m[pp] = \progcmd{invoke\text{-}static\text{-}range} v_k,n,mid \\%
mid \in \dom(\lookupStaticDV_\proto) & m' = \lookupStaticDV_\proto(mid)\\%
\stateDV[{h,0,defaultRegisters([r(v_k), \dots, r(v_{k+n-1})])}] \Downarrow^{(n')}_{P,m'} \stateDV[{u,h'}]}%
{\stateDV \redRel{n'+1}{P}{m} \stateDV[{h',pp+1,r[\resultLowerDV \mapsto lower(u), \resultUpperDV \mapsto upper(u)]}]}\\
%
\end{inferenceRules}
\end{framedFigure}

\begin{framedFigureSingle}{Semantics of conversion instructions for 64 bit values}{fig:conversion-sixty-four-instructions}
\begin{inferenceRules}
\inference[rUnopWideS]%
{m[pp] = \progcmd{unop\text{-}wideS} v_a,v_b,uop \\%
u=\underline{uop}(r(v_b) \bullet r(v_{b+1}))}%
{\stateDV \redRelSt \stateDV[{h,pp+1,r[v_a\mapsto u]}]}\\
\\%
\inference[rUnopWideT]%
{m[pp] = \progcmd{unop\text{-}wideT} v_a,v_b,uop \\%
u=\underline{uop}(r(v_b))}%
{\parbox{5cm}{$\stateDV \redRelSt$ \\ $\stateDV[{h,pp+1,r[v_a\mapsto lower(u),v_{a+1}\mapsto upper(u)]}]$}}
\end{inferenceRules}
\end{framedFigureSingle}


\section{Extendability}\label{app:extendability}

\newcommand{\transparent}{\mathcmd{trp}}
\newcommand{\restrictionDefined}[2]{\mathcmd{{\cal P}(#1,#2)}}

\begin{definition}[Efficient transparent function]
An n-ary constructor $f$ is \emph{transparent} if for every argument $i\in\set{1,\dots,n}$ there is a destructor $d_{inverse,i}$, called \emph{the $i$-th inverse function of $f$}, such that for all terms $M_1, \dots, M_n$, we have $\eval_{d_{inverse,i}}{f(M_1, \dots, M_n)} = M_i$.
\end{definition}
\NOTEE{The inverse function could be extended from destructors to terms, but then the proof of Claim 1 becomes for involved.}

\begin{definition}[Finitely generated message types]
A message type $\T$ is \emph{finitely generated} if there exists a grammar with rules $r_1, \dots, r_n$ such that $\T$ equals the set that is generated by $r_1, \dots, r_n$.\footnote{The notion of a set that is generated by a grammar is a common notion. It refers to the smallest fixpoint that obeys the rules from the grammar.}
\end{definition}

\begin{definition}[Combination of two finitely generated message types]
Let $\T$ be a finitely generated message types over $\C$ and $\N$ with a grammar $r_1, \dots, r_n$, and let $\T'$ be a 
finitely generated message types over $\C'$ and $\N'$ with a grammar $r_1', \dots, r_m'$ such that $\C \cap \C' = 
\emptyset$ and $\N \cap \N' = \emptyset$. Then, the \emph{combination $\T \sqcup \T'$ of $\T$ and $\T'$} is the set that 
is generated by the rules $r_1, \dots, r_n, r_1', \dots, r_m'$.
\end{definition}

\begin{definition}[Combination of two symbolic models]
Let a symbolic model $\Model = (\C,\NE \uplus \NP,\T,\D)$ and another symbolic model $\Model' = (\C',\NE' \uplus \NP',\T',\D')$ with $\C\cap \C' = \emptyset$, $\D \cap \D' = \emptyset$, $\NE \cap \NE' = \emptyset$, $\NP \cap \NP' = \emptyset$, $\T \cap \T' = \emptyset$ be given. Then, the \emph{combination of $\Model$ and $\Model'$} is defined as $\Model \sqcup \Model' := (\C\cup\C', \N = \NE \cup \NE' \uplus \NP \cup \NP', \T \sqcup \T', \D \cup \D')$.
\end{definition}

\begin{definition}[Restriction-defined protocol class]\label{def:restriction-defined-class}
Let a symbolic model $\Model = (\C,\N,\T,\D)$ be given. We call a set $R := \set{c_1, \dots, c_n}$ of functions from CoSP protocols to $\set{\top,\bot}$, i.e., true or false, \emph{protocol restrictions}. A protocol class $\protoClass$ is \emph{restriction-defined} by $R$ on $\Model$, written as $\protoClass := \restrictionDefined{R}{\Model}$, if $\protoClass$ is the largest set of protocols for $\Model$ (see Definition~\ref{def:cosp-protocol}) such that $\forall \proto \in \protoClass. \forall i\in\set{1,\dots,n}. c_i(\proto) = \top$.
\end{definition}

\begin{lemma}\label{lemma:extendability}
Given a symbolic model $\Model$ with a implementation $\implementation$ of $\Model$. Let $\Model_\transparent = (\C',\N',\T',\D')$ be a symbolic model with an implementation $\implementation_{\transparent}$ such that all constructors $f\in\C'$ are transparent functions. Let $R$ be protocol restrictions (see Definition~\ref{def:restriction-defined-class}).

If the implementation $\implementation$ is computationally sound for $\Model$ and the protocol class $\restrictionDefined{R}{\Model}$, then the implementation $\implementation \cup \implementation_{\transparent}$ is computationally sound for the symbolic model $\Model \sqcup \Model_{\transparent}$ and the protocol class $\restrictionDefined{R}{\Model \sqcup \Model_{\transparent}}$.
\end{lemma}

\begin{proof}
By contraposition it suffices to show that for any pair of protocols $\proto_1',\proto_2' \in \restrictionDefined{R}{\Model \sqcup \Model_{\transparent}}$, the following holds
$$\proto_1' \notCoSPIndistinguishable \proto_2' \implies \proto_1' \notCoSPEquivalent \proto_2'$$
Towards a contradiction, assume that
$$\proto_1' \notCoSPIndistinguishable \proto_2' \land \proto_1' \CoSPEquivalent \proto_2'$$
holds.
Our strategy is to show that there are protocols $\proto_1, \proto_2 \in \restrictionDefined{R}{\Model}$ such that
\begin{enumerate}
\item[] $\proto_1' \notCoSPIndistinguishable \proto_2' \implies \proto_1 \notCoSPIndistinguishable \proto_2$ (Claim 1)
\item[] $\proto_1 \notCoSPEquivalent \proto_2 \implies \proto_1' \notCoSPEquivalent \proto_2'$ (Claim 2)
\end{enumerate}
holds. These two statements imply that
$$\proto_1 \notCoSPIndistinguishable \proto_2 \land \proto_1 \CoSPEquivalent \proto_2$$
holds, which contradicts the computational soundness of $\implementation$ for $\Model$ and $\restrictionDefined{R}{\Model}$.

\begin{Claim}{1}
$\proto_1' \notCoSPIndistinguishable \proto_2' \implies \proto_1 \notCoSPIndistinguishable \proto_2$
\end{Claim}
\begin{claimproof}{1}
First, we define the \emph{encoding of an n-ary transparent constructor $f$} as follows:
$$\mathsymb{encoding-f}(\nu_1,\dots,\nu_n) := \pair(\mathstring{constructor-f},\pair(\nu_1, \pair(, \dots, \pair(\nu_{n-1},\nu_n) )))$$
Analogously, we define the \emph{encoding of the $i$-th inverse function $d_{inverse,i}$ of $f$} as the sequence of the $\fst$ and $\snd$ destructor that projects to the $i$-th argument. We denote this encoding as $\mathsymb{encoding-d-inverse-i}(\nu_1)$.

We inductively define $\proto_i$ to equal $\proto_i'$ except for occurrences of computation nodes $\nu$ that are labelled with a transparent constructor or an inverse function, which are in $\proto_i$ replaced by the sequence of computation nodes that represents the encoding of the transparent constructor or its inverse function.

The claim follows since for all terms $t_1, \dots, t_n$ the term $\mathsymb{encoding-d-inverse-i}(\mathsymb{encoding-f}(t_1,\dots,t_n))$ evaluates to $t_i$.
\TODOE{Do we have to require that no other destructor modifies $f$?}
\end{claimproof}

\begin{Claim}{2}
$\proto_1 \notCoSPEquivalent \proto_2 \implies \proto_1' \notCoSPEquivalent \proto_2'$
\end{Claim}
\begin{claimproof}{2}
Analogous to the proof of Claim 1, Claim 2 follows since for all terms $t_1, \dots, t_n$ the term $\mathsymb{encoding-d-inverse-i}(\mathsymb{encoding-f}(t_1,\dots,t_n))$ evaluates to $t_i$.
\end{claimproof}

\end{proof}

\subsection{Arbitrary bit-operations on bitstrings}
We can extend any computationally sound symbolic model that contains the basic bitstring operations $\payloadString_0/1,\payloadString_1/1,\payloadUnstring_0/1,\payloadUnstring_1/1,\payloadEmpty/0$ with
$$\forall b\in\set{0,1}. \payloadUnstring_b(\payloadString_b(x)) = x$$
to a symbolic model that contains all bitstring operations $op: \set{0,1}^* \rightarrow \set{0,1}^*$ as long as the respective destructor is solely defined on symbolic string. The proof encodes any operation on bitstring to the corresponding circuit. Hence, we will restrict ourselves to illustrating how $\land, \lor, \lnot$ gates are encoded on single bits:

\begin{myalgorithm}{$\land(\payloadString_{b_1}(\payloadEmpty),\payloadString_{b_2}(\payloadEmpty))$}
\IF{$\payloadString_{b_1}(\payloadEmpty) = \payloadString_{0}(\payloadEmpty)\land\payloadString_{b_2}(\payloadEmpty) = \payloadString_{0}(\payloadEmpty)$}
	\RETURN $\payloadString_{0}(\payloadEmpty)$
\ELSIF {$\payloadString_{b_1}(\payloadEmpty) = \payloadString_{1}(\payloadEmpty)\land\payloadString_{b_2}(\payloadEmpty) = \payloadString_{1}(\payloadEmpty)$}
	\RETURN $\payloadString_{1}(\payloadEmpty)$
\ELSIF {$\payloadString_{b_1}(\payloadEmpty) = \payloadString_{0}(\payloadEmpty)\land\payloadString_{b_2}(\payloadEmpty) = \payloadString_{1}(\payloadEmpty)$}
	\RETURN $\payloadString_{0}(\payloadEmpty)$
\ELSIF {$\payloadString_{b_1}(\payloadEmpty) = \payloadString_{1}(\payloadEmpty)\land\payloadString_{b_2}(\payloadEmpty) = \payloadString_{0}(\payloadEmpty)$}
	\RETURN $\payloadString_{0}(\payloadEmpty)$
\ENDIF
\end{myalgorithm}

\begin{myalgorithm}{$\lor(\payloadString_{b_1}(\payloadEmpty),\payloadString_{b_2}(\payloadEmpty))$}
\IF{$\payloadString_{b_1}(\payloadEmpty) = \payloadString_{0}(\payloadEmpty)\land\payloadString_{b_2}(\payloadEmpty) = \payloadString_{0}(\payloadEmpty)$}
	\RETURN $\payloadString_{0}(\payloadEmpty)$
\ELSIF {$\payloadString_{b_1}(\payloadEmpty) = \payloadString_{1}(\payloadEmpty)\land\payloadString_{b_2}(\payloadEmpty) = \payloadString_{1}(\payloadEmpty)$}
	\RETURN $\payloadString_{1}(\payloadEmpty)$
\ELSIF {$\payloadString_{b_1}(\payloadEmpty) = \payloadString_{0}(\payloadEmpty)\land\payloadString_{b_2}(\payloadEmpty) = \payloadString_{1}(\payloadEmpty)$}
	\RETURN $\payloadString_{1}(\payloadEmpty)$
\ELSIF {$\payloadString_{b_1}(\payloadEmpty) = \payloadString_{1}(\payloadEmpty)\land\payloadString_{b_2}(\payloadEmpty) = \payloadString_{0}(\payloadEmpty)$}
	\RETURN $\payloadString_{1}(\payloadEmpty)$
\ENDIF
\end{myalgorithm}

\begin{myalgorithm}{$\lnot(\payloadString_{b}(\payloadEmpty))$}
\IF{$\payloadString_{b}(\payloadEmpty) = \payloadString_{0}(\payloadEmpty)$}
	\RETURN $\payloadString_{1}(\payloadEmpty)$
\ELSIF {$\payloadString_{b}(\payloadEmpty) = \payloadString_{1}(\payloadEmpty)$}
	\RETURN $\payloadString_{0}(\payloadEmpty)$
\ENDIF
\end{myalgorithm}

These encodings can be easily extended to bitstrings and used in order to encode arbitrary circuits, hence any computable function $op:\set{0,1}^* \rightarrow \set{0,1}^*$, on bitstrings. Given a bijection $\iota$ from bitstrings to symbolic bitstrings as above, a symbolic binary operation $d_{op}/n$ is a destructor that is solely defined on symbolic bitstrings as
$$d_{op}(x_1,\dots,x_n) = \iota(op(\iota^{-1}(x_1),\dots,\iota^{-1}(x_n)))$$
As a corollary, we get

\begin{corollary}\label{corollary:bitstring-operation-extension}
Given a symbolic model $\Model$ with a implementation $\implementation$ of $\Model$. Let $\Model_{bitstringOp} = (\emptyset,\N',\T',\D')$ be a symbolic model with an implementation $\implementation_{bitstringOp}$ such that all destructors $f\in\D'$ are binary operations. Let $R$ be protocol restrictions (see Definition~\ref{def:restriction-defined-class}).

If the implementation $\implementation$ is computationally sound for $\Model$ and the protocol class $\restrictionDefined{R}{\Model}$, then the implementation $\implementation \cup \implementation_{bitstringOp}$ is computationally sound for the symbolic model $\Model \sqcup \Model_{bitstringOp}$ and the protocol class $\restrictionDefined{R}{\Model \sqcup \Model_{\transparent}}$.
\end{corollary}

\subsection{Derived destructors with symbolic operations}
Similar to transparent functions and bitstring operations on symbolic bitstrings, all destructors that can be represented as a symbolic operation can be added to a computationally sound symbolic model as destructors applied to nonces, i.e., destructor and nonce computation nodes, without loosing computational soundness. With the formal parameters as holes, a symbolic operation $O$, with say $n$ variables, can be interpreted as a context over terms. Such a context naturally defines an $n$-ary destructor that is applied to nonces $r_{k+1}$, \dots, $r_{n}$ via $\eval_O(t_1, \dots, t_k, r_{k+1}, \dots, r_n)$.\footnote{Without loss of generality, we assumed here an order on the parameters.} We call such a destructor a \emph{derived destructor}.

Such derived destructors can be encoded using constructors and destructors from the original symbolic model, we can use the same proof techniques as above and conclude computational soundness for the combined model.

\begin{corollary}\label{corollary:derived-destructors}
Given a symbolic model $\Model$ with a implementation $\implementation$ of $\Model$. Let $\Model_{derDes} = (\emptyset,\N',\T',\D')$ be a symbolic model with an implementation $\implementation_{derDes}$ such that all destructors $f\in\D'$ are derived destructors and the implementation is the evaluation of the symbolic operation, using the implementations of the constructors and destructors that occur in the symbolic operation. Let $R$ be protocol restrictions (see Definition~\ref{def:restriction-defined-class}).

If the implementation $\implementation$ is computationally sound for $\Model$ and the protocol class $\restrictionDefined{R}{\Model}$, then the implementation $\implementation \cup \implementation_{derDes}$ is computationally sound for the symbolic model $\Model \sqcup \Model_{derDes}$ and the protocol class $\restrictionDefined{R}{\Model \sqcup \Model_{\transparent}}$.
\end{corollary}

\end{document}